\documentclass[11pt,reqno]{amsart}
\usepackage[T1]{fontenc}
\usepackage{lmodern}
\usepackage[expansion=false]{microtype}
\usepackage[a4paper,margin=30mm,headheight=14pt]{geometry}
\usepackage{amsmath,amssymb,mathtools}
\providecommand{\coloneq}{\coloneqq}
\usepackage{mathrsfs,aliascnt}
\usepackage{graphicx}
\usepackage{tikz}
\usepackage{etoolbox}
\makeatletter
\patchcmd{\@maketitle}{\topskip42\p@}{\topskip18\p@}{}{}
\patchcmd{\@maketitle}{\dimen@34\p@}{\dimen@14\p@}{}{}
\patchcmd{\@setauthors}{\@topsep30\p@}{\@topsep20\p@}{}{}
\makeatother

\usepackage[hidelinks,unicode]{hyperref}
\usepackage[nameinlink,noabbrev]{cleveref}
\numberwithin{equation}{section}
\newtheorem{theorem}{Theorem}[section]
\newaliascnt{proposition}{theorem}
\newtheorem{proposition}[proposition]{Proposition}
\aliascntresetthe{proposition}
\crefname{proposition}{Proposition}{Propositions}
\Crefname{proposition}{Proposition}{Propositions}
\newaliascnt{lemma}{theorem}
\newtheorem{lemma}[lemma]{Lemma}
\aliascntresetthe{lemma}
\crefname{lemma}{Lemma}{Lemmas}
\Crefname{lemma}{Lemma}{Lemmas}
\newaliascnt{corollary}{theorem}
\newtheorem{corollary}[corollary]{Corollary}
\aliascntresetthe{corollary}
\crefname{corollary}{Corollary}{Corollaries}
\Crefname{corollary}{Corollary}{Corollaries}
\theoremstyle{definition}
\newaliascnt{definition}{theorem}
\newtheorem{definition}[definition]{Definition}
\aliascntresetthe{definition}
\crefname{definition}{Definition}{Definitions}
\Crefname{definition}{Definition}{Definitions}
\crefname{theorem}{Theorem}{Theorems}
\Crefname{theorem}{Theorem}{Theorems}
\crefname{section}{Section}{Sections}
\Crefname{section}{Section}{Sections}
\crefname{figure}{Figure}{Figures}
\Crefname{figure}{Figure}{Figures}
\crefname{equation}{Equation}{Equations}
\Crefname{equation}{Equation}{Equations}
\newcommand{\R}{\mathbb R}
\newcommand{\C}{\mathbb C}
\newcommand{\N}{\mathbb N}
\newcommand{\E}{\mathbb E}
\newcommand{\PP}{\mathbb P}
\newcommand{\G}{\mathcal G}
\newcommand{\D}{\mathbb D}
\newcommand{\one}{\mathbf 1}
\newcommand{\ac}{\mathrm{ac}}
\newcommand{\TV}{\mathrm{TV}}
\DeclareMathOperator{\supp}{supp}
\DeclareMathOperator{\Tr}{Tr}
\DeclareMathOperator{\diag}{diag}
\DeclareMathOperator{\lcap}{cap}
\DeclareMathOperator{\ImPart}{Im}
\DeclareMathOperator{\RePart}{Re}
\newcommand{\ip}[2]{\langle #1,#2\rangle}
\newcommand*{\dd}{\mathop{}\!\mathrm{d}}
\newcommand{\norm}[1]{\lVert #1\rVert}

\newcommand{\eref}[1]{\eqref{#1}}
\newcommand{\arxiv}[1]{\href{https://arxiv.org/abs/#1}{arXiv:#1}}
\newcommand{\doi}[1]{\href{https://doi.org/#1}{doi:\nolinkurl{#1}}}
\allowdisplaybreaks[2]
\makeatletter
\renewcommand{\paragraph}{\@startsection{paragraph}{4}%
  {\z@}{.35\linespacing\@plus.15\linespacing}{-\fontdimen2\font}%
  {\normalfont\bfseries}}
\makeatother

\newcommand{\sing}{\mathrm{s}}
\DeclareMathOperator{\Var}{Var}
\DeclareMathOperator{\adj}{adj}
\newcommand{\Ereg}{\mathcal E_{\mathrm{reg}}}

\newcommand{\Z}{\mathbb Z}
\newcommand{\eps}{\varepsilon}
\hypersetup{pdftitle={Spectral delocalization on the hyperbolic square lattice and its Euclidean products},pdfauthor={Simon Becker and Izak Oltman},pdfsubject={Spectral delocalization for the Anderson model with bounded potentials on the hyperbolic square lattice and its Euclidean products},pdfkeywords={Anderson model, bounded disorder, hyperbolic square lattice, absolutely continuous spectrum, truncated Cauchy distribution, stability under perturbations}}
\title[Disorder on the hyperbolic square lattice I: AC spectrum]{Disorder on the hyperbolic square lattice I: Anderson delocalization and absolutely continuous spectrum}
\author{Simon Becker}
\address{Department of Decision Sciences, Bocconi University, Via Roentgen 1, 20136 Milan, Italy}
\email{simon.becker@unibocconi.it}
\author{Izak Oltman}
\address{Department of Mathematics, Northwestern University, Evanston, Illinois 60208, USA}
\email{ioltman@northwestern.edu}
\begin{document}
\begin{abstract}
We study the Anderson model $A+\lambda\diag(\omega_x)$, where $A$ is the
unit-weight adjacency operator, $\lambda>0$, and the $\omega_x$ are bounded independent
and identically distributed random variables. The graph is
$\mathbb Z^d\square\G_{4,5}$, $d\ge0$: the Cartesian product of the Euclidean
lattice with the regular hyperbolic square lattice having five squares at
each vertex. We prove the existence of absolutely continuous spectrum and
the absence of singular spectrum on a deterministic set of positive
measure at weak disorder, and for $0<\lambda\le2$ when the single-site
density is bounded below on $[-1,1]$. Truncated Cauchy distributions
and controlled perturbations give purely absolutely continuous spectrum on intervals.
\end{abstract}
\maketitle

\section{Introduction}\label{sec:intro}

How does the geometry of a graph affect Anderson localization?
On $\Z$, localization occurs at every disorder strength, while on $\mathbb Z^2$ such a result is still open (but conjecturally believed to be true \cite{AWBook}). 
On the other hand, regular trees of degree at least three support absolutely continuous
spectrum at weak disorder \cite{Klein}.
Hyperbolic lattices combine the exponential volume growth of
trees with the polygonal loops of Euclidean lattices.
In this paper, we study the regular hyperbolic square lattice $\G_{4,5}$:
five squares meet at each vertex, compared with four on
$\mathbb Z^2=\G_{4,4}$.\footnote{More generally, $\G_{p,q}$ denotes the graph of a regular
tessellation by $p$-gons with $q$ faces meeting at each vertex.
The tessellation is hyperbolic when $(p-2)(q-2)>4$
\cite{DG,BridsonHaefliger}.
For square faces, this means $q>4$, so $\G_{4,5}$ is the
first hyperbolic member of the regular square family.}
This change permits weak-disorder delocalization while
preserving the square faces.
\Cref{fig:anderson_novelty} summarizes this comparison, and
\Cref{fig:square-lattice2} illustrates $\G_{4,5}$.

\begin{figure}[!htbp]
 \setlength{\abovecaptionskip}{6pt}
 \setlength{\belowcaptionskip}{6pt}
 \centering
 \includegraphics[width=0.6\linewidth]{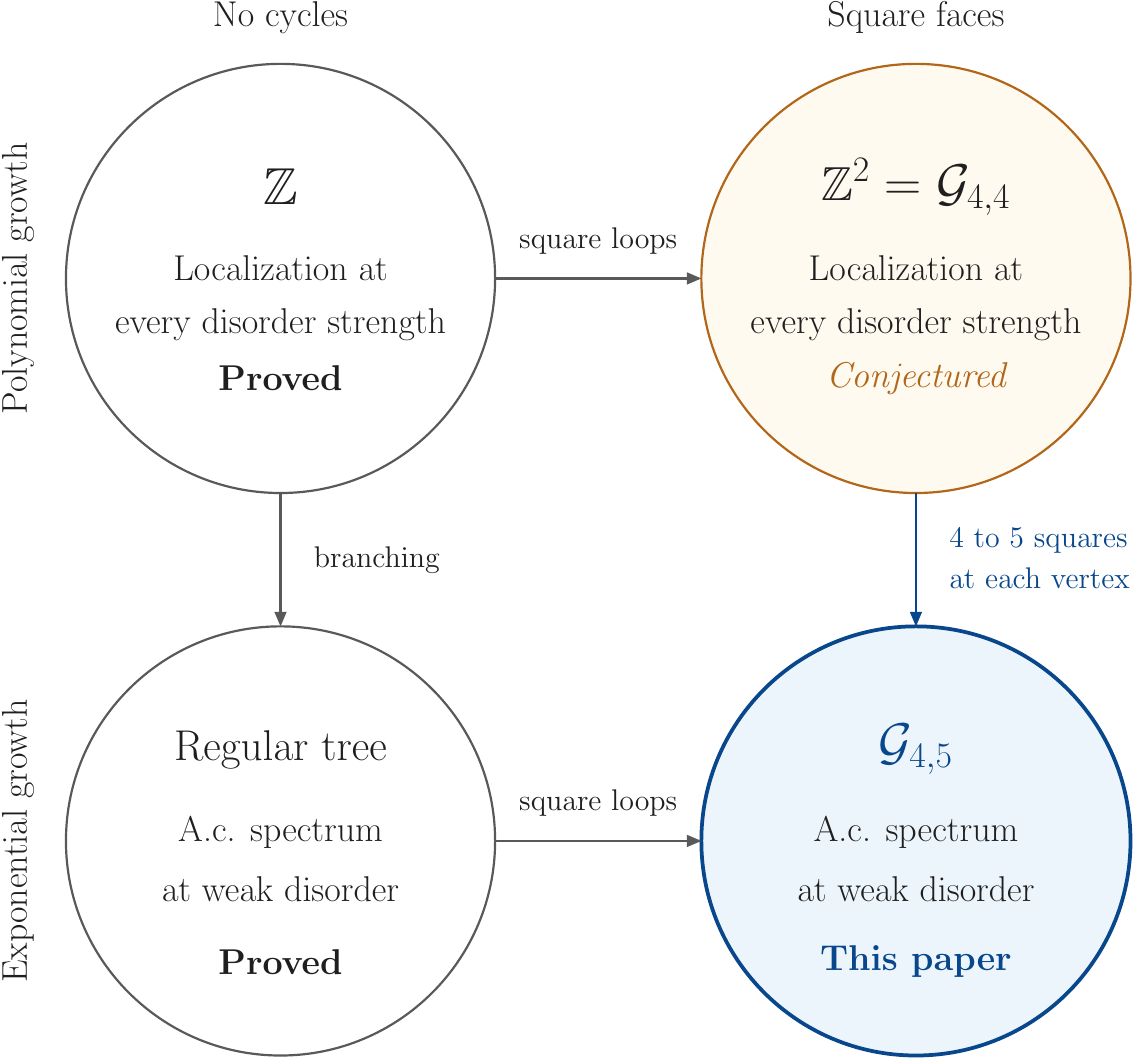}
 \caption{Spectral behavior for Anderson models \cite{AWBook,Klein} in relation to \Cref{thm:square-bounded}.}
 \label{fig:anderson_novelty}
\end{figure}

\subsection{Main result}

For a finite integer $d\ge0$, let
\[
 \mathcal X_d=\mathbb Z^d\mathbin{\square}\G_{4,5}.
\]
Here $\square$ denotes the Cartesian product of graphs:
two vertices are adjacent when one coordinate is unchanged
and the other moves along an edge.
We regard $\mathbb Z^0$ as a single vertex, so
$\mathcal X_0=\G_{4,5}$.

Let $X$ be the vertex set of $\mathcal X_d$.
Its adjacency operator $A_d$ on $\ell^2(X)$ is defined by
\[
 (A_d\psi)(x)\coloneq\sum_{y\sim x}\psi(y),
\]
where $y\sim x$ means that $x$ and $y$ are connected by an edge.
We consider the Anderson operator
\[
 (H_{\lambda,\omega}\psi)(x)
 =(A_d\psi)(x)+\lambda\omega_x\psi(x),
 \qquad \lambda>0,
\]
where the random variables $\omega_x$ are independent and
identically distributed with probability density $\rho$.
We assume that $\rho$ is supported in $[-1,1]$, is bounded
above, and is bounded below by a positive constant on
some interval $[-b,b]$ with $0<b\le1$.
The parameter $\lambda$ measures the disorder strength.

For each vertex $x\in X$, let $\delta_x\in\ell^2(X)$ be the
unit vector supported at $x$ and define its spectral measure by
\[
 \mu_x^\omega(J)
 =\langle\delta_x,\mathbf 1_J(H_{\lambda,\omega})\delta_x\rangle,
 \qquad J\subset\mathbb R\ \text{Borel},
\]
where $\mathbf 1_J$ is the indicator function of $J$ and
$\mathbf 1_J(H_{\lambda,\omega})$ is the corresponding spectral
projection, defined by the spectral theorem.
We write
$\mu_x^\omega=(\mu_x^\omega)^{\mathrm{ac}}
+(\mu_x^\omega)^{\mathrm{sing}}$
for its decomposition into absolutely continuous and singular
parts with respect to Lebesgue measure.
Our main result gives a deterministic energy set of positive
measure on which, almost surely, every local spectral measure
has an almost everywhere positive density and no singular part.

\begin{theorem}\label{thm:square-bounded}
Under the preceding assumptions on $\rho$, suppose that
\[
 0<\lambda\le
 \begin{cases}
  10^{-4},&d=0,\\
  1/5,&d\ge1.
 \end{cases}
\]
There is a deterministic Borel set
$D_{d,\lambda}\subset\sigma(A_d)$ of positive Lebesgue measure
such that, almost surely, simultaneously for every $x\in X$,
\[
 \frac{d(\mu_x^\omega)^{\mathrm{ac}}}{dE}(E)>0
 \quad\text{for almost every }E\in D_{d,\lambda},
 \qquad
 (\mu_x^\omega)^{\mathrm{sing}}(D_{d,\lambda})=0.
\]
In particular, $H_{\lambda,\omega}$ almost surely has
a nonzero absolutely continuous spectral subspace.
\end{theorem}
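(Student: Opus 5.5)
Our plan is to show that the Green function has boundary values that are uniformly bounded, in a fractional or second-moment sense, as $\eta\downarrow0$. We then convert this into absolutely continuous spectrum through the standard criterion. If $\liminf_{\eta\downarrow0}\E\int_I|G_\omega(x,x;E+i\eta)|^p\,dE<\infty$ for some $p>1$, then $\mu_x^\omega$ has no singular part in $I$, almost surely; this is Simon--Klein type reasoning via Fatou and the de la Vallée-Poussin criterion. In addition, $\ImPart G_\omega(x,x;E+i0)>0$ on a set of positive measure gives positivity of the density.

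The main structural input is the geometry of $\G_{4,5}$. Fix a root and use combinatorial layers (spheres). Each vertex then has a bounded number (at most $2$) of neighbours in the previous layer and at least $2$ "forward" neighbours. The squares that close loops only connect vertices within a layer or across two consecutive layers in a controlled, cocompact pattern. This gives a finite-type (substitution) structure, as in the tree-like descriptions of regular tessellations by Cannon/Bartholdi. We would then write the Green function recursively, layer by layer, via the Schur complement. The forward Green functions of a vertex (or of a finite "cone-type block") satisfy a matrix recursion over finitely many cone types, with a bounded-dimensional block size. For the product $\mathbb Z^d\square\G_{4,5}$, we take a partial Fourier transform in the free direction for $A_d$. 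Because the disorder breaks translation invariance, we instead treat $\mathbb Z^d$ fibres as part of the block and argue that the extra degree ($2d$ more neighbours) enlarges the spectrum and gives the larger admissible $\lambda\le1/5$.

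The key steps, in order, are as follows.
\begin{enumerate}
\item At $\lambda=0$, identify an energy set $D_{d,0}\subset\sigma(A_d)$ of positive measure on which the deterministic cone-type recursion has an attracting fixed point with strictly positive imaginary part. The map is then a strict contraction in a hyperbolic (Poincaré-type) metric on the product of upper half-planes.
\item Show that this contraction is quantitative: the contraction factor and the positivity of $\ImPart$ are bounded on compact subintervals.
\item Perturb in $\lambda$. The random recursion is a random iteration of maps close to the unperturbed one, uniformly for $|\lambda\omega|\le\lambda$. Contraction in the hyperbolic metric then keeps the random Green functions in a fixed compact subset of the upper half-plane(s), uniformly in $\eta\ge0$, with probability one. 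This gives an almost sure bound $\sup_{\eta}|G(x,x;E+i\eta)|\le C$ for $E\in D_{d,\lambda}$, a closed subset of $D_{d,0}$ shrunk by $O(\lambda)$.
\item Pass from forward Green functions to the full diagonal $G(x,x)$ by one last Schur complement at $x$. This also gives $\ImPart G>0$, hence positive density and no singular part on $D_{d,\lambda}$ for all $x$ simultaneously (countably many $x$).
\end{enumerate}

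The main obstacle is step 1--3: the loops. Unlike on a tree, the forward subgraphs of different children overlap through shared squares. The recursion therefore acts on matrix-valued Green functions of boundary blocks, not on scalars. Contraction must be established for a matrix Möbius-type (Schur complement) map on Siegel upper half-spaces, with an explicit margin. For $d=0$ this margin is tiny, which explains $\lambda\le10^{-4}$. We would first try to choose the blocks as pairs of adjacent same-layer vertices sharing a square. We would then verify contraction numerically-rigorously at a few energies near the spectral edge-free region, with interval arithmetic, and extend to a neighbourhood by continuity.
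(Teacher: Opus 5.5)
Your steps 1--3 have nothing to act on: on $\G_{4,5}$ there is no closed, finite-dimensional Schur-complement recursion. The cone-type recursions of Klein, Keller--Lenz--Warzel and Sadel use the fact that deleting a vertex, or a finite fibre, splits the forward region into independent branches. The forward Green function at a block is then a function of finitely many forward Green functions of its children. But $\G_{4,5}$ is the $1$-skeleton of a cocompact tessellation of the hyperbolic plane, so it is one-ended, and so is $\mathbb Z^d\square\G_{4,5}$: removing any finite set leaves a single infinite component. Concretely, $s_is_{i+1}$ has the two predecessors $s_i$ and $s_{i+1}$, so the cones of the five vertices of $S_1$ already overlap cyclically. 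The same happens on every sphere, since $wst$ is a common child of $ws$ and $wt$. Pairing adjacent same-layer vertices only moves the overlap to the neighbouring pairs. Finiteness of cone types gives rational growth, but it does not decompose $A$ restricted to a cone as a direct sum over children. A layer-by-layer Schur complement therefore acts on $|S_n|\times|S_n|$ blocks, and $|S_n|$ grows exponentially. Even at $\lambda=0$ you have no computable fixed point. The spectrum of $A_{4,5}$ is not explicitly known; the paper only has $4\le r_\square\le\sqrt{22}$ and at most four bands. For $d\ge1$, putting the $\mathbb Z^d$ fibres into blocks makes the blocks infinite-dimensional.

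Steps 1 and 3 are also false as stated even on a tree. At a real energy inside the spectrum the free recursion is neutral, not attracting. On the Bethe lattice, $\Gamma\mapsto-1/(E+K\Gamma)$ lies in $\mathrm{PSL}(2,\mathbb R)$ and is elliptic exactly when $|E|<2\sqrt K$, so it is a hyperbolic rotation about its fixed point. Contraction appears only because independent children carry different values. This is why the known weak-disorder proofs give moment or averaged bounds, not almost-sure confinement of $G(x,x;E+i\eta)$ to a fixed compact set uniformly in $\eta$. There is therefore no deterministic margin to perturb from, and $\lambda\le10^{-4}$ is not a contraction margin.

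The paper avoids recursion altogether and uses the Aizenman--Warzel resonance mechanism. The two-site coupling $\tau_{ox}$ has leading term $-N_{ox}z^{-(r-1)}$ with a fixed sign. Jensen's formula after the Joukowski map then bounds $\int\log|\tau_{ox}|\,d\nu_B$ from below by shortest-path counts. Exponentially growing endpoint sets with comparable multiplicities, a Poincar\'e inequality and rapid decay give a second-moment count of distant resonances. That count contradicts tightness of $\log W_o$, and one-site spectral averaging removes the singular spectrum. The disorder thresholds come from capacity and arcsine-mass bounds for $\sigma(A_{4,5})$ when $d=0$, and from the Euclidean interleaving entropy when $d\ge1$.
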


The theorem applies, for example, to the uniform distribution
on $[-1,1]$.
For each fixed density $\rho$, dimension $d$, and disorder
strength $\lambda$, the energy set is independent of the
realization of the potential.
The theorem does not assert that this set contains an interval.

We additionally obtain a larger disorder range when the
single-site density is bounded below throughout $[-1,1]$.
For suitable truncated Cauchy distributions, we strengthen
the positive-measure conclusion to purely absolutely continuous
spectrum on whole intervals, with positive local spectral
densities almost everywhere.
These interval conclusions persist under the perturbations
of the distribution specified below.
Precise statements appear in
\Cref{thm:full-support-bounded,thm:bounded-cauchy,thm:stability_under_change_density}.

\begin{figure}[tbp]
 \centering
 \includegraphics[width=0.42\textwidth]{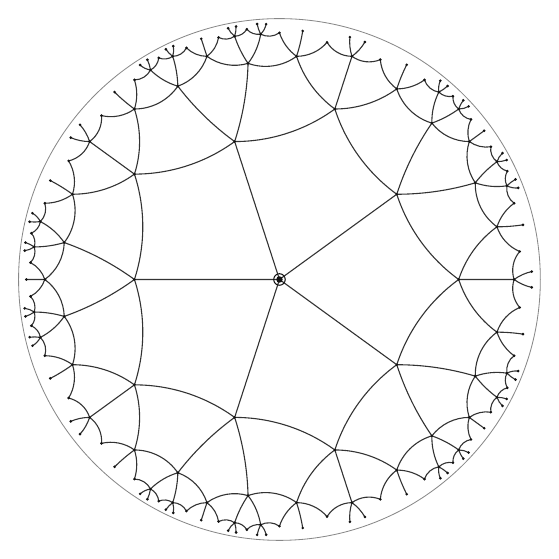}
 \caption{A finite portion of the graph $\G_{4,5}$. Each vertex is the corner of exactly five squares.}
 \label{fig:square-lattice2}
\end{figure}

\subsection{Connection to the literature}\label{sec:literature}

The closest rigorous results concern weak disorder on trees.
Klein proved purely absolutely continuous spectrum on energy
intervals for the Anderson model on the Bethe lattice
\cite{Klein}.
Aizenman, Sims, and Warzel established stability of the
absolutely continuous spectrum under weak disorder, including
certain correlations between the site potentials
\cite{ASWstability}.
Further results treat trees of finite cone type
\cite{KellerLenzWarzel} and products of trees with finite
graphs \cite{KleinSadel,SadelTreeStrip}.
Froese, Hasler, and Spitzer developed methods based on transfer
matrices and hyperbolic geometry for related graphs, including
graphs obtained by adding edges to trees, under additional
structural assumptions on the potentials
\cite{FroeseHaslerSpitzer}.

Hyperbolic lattices combine features of Euclidean lattices
and regular trees: they contain polygonal loops and have
exponential volume growth.
This combination changes the balance between interference
and the number of sites available to a propagating wave.
Numerical studies have found evidence of Anderson transitions
at nonzero disorder strengths on several regular hyperbolic
lattices \cite{ChenMaciejkoBoettcher,LiPengWangHu}.
These transitions separate extended states at weak disorder
from localized states at strong disorder.
The computations also identify mobility edges, the energies
separating localized and extended states
\cite{LiPengWangHu}.
Our results establish delocalization in the spectral sense,
through positive absolutely continuous densities and absence
of singular spectrum on the stated energy sets.
They do not establish dynamical transport.

Our argument builds on the resonance mechanism developed
by Aizenman and Warzel \cite{AWtree,AW}.
The idea is that the number of distant sites can compensate
for the small probability that any particular site resonates
with a fixed starting site.
On a tree, deleting a vertex separates the forward branches,
giving exact recursive identities for the resolvent.
On $\G_{4,5}$, square loops connect these branches, so the
same recursive argument does not apply.

\subsection{Further results}\label{sec:further-results}

We obtain a larger disorder range when the density is
bounded below throughout $[-1,1]$.
We also obtain interval results for truncated Cauchy
distributions and stability under changes of density.
Define
\[
 r_\square\coloneq\|A_0\|.
\]
By \Cref{bare:lem:radius}, $4\le r_\square\le\sqrt{22}$.
The Cartesian product structure gives $\|A_d\|=r_\square+2d$.
For a self-adjoint operator $T$, let $P_{\mathrm{sing}}(T)$
be the projection onto its singular spectral subspace and
let $\sigma_{\mathrm{ac}}(T)$ denote its absolutely continuous
spectrum.

\paragraph{A larger disorder range.}
\Cref{thm:square-bounded} only requires the density $\rho$ to be
bounded below near zero.
If this lower bound holds throughout $[-1,1]$, we obtain
the following stronger conclusion.

\begin{theorem}
\label{thm:full-support-bounded}
Let $\rho$ be a probability density supported in $[-1,1]$
such that, for some constants $0<\rho_-\le\rho_+<\infty$,
\[
 \rho_-\le\rho(u)\le\rho_+
 \quad\text{for almost every }u\in[-1,1].
\]
For every finite integer $d\ge0$ and
\begin{equation}\label{eq:full-support-window}
 0<\lambda<4+2\sqrt2-r_\square,
\end{equation}
there is a deterministic Borel set
\[
 D_{d,\lambda}\subset
 \Sigma_{d,\lambda}\coloneq\sigma(A_d)+[-\lambda,\lambda]
\]
of positive Lebesgue measure such that, almost surely,
simultaneously for every $x\in X$,
\begin{equation*}
 \begin{gathered}
 \frac{d(\mu_x^\omega)^{\mathrm{ac}}}{dE}(E)>0
 \quad\text{for almost every }E\in D_{d,\lambda},\\
 \mathbf 1_{D_{d,\lambda}}(H_{\lambda,\omega})
 P_{\mathrm{sing}}(H_{\lambda,\omega})=0.
 \end{gathered}
\end{equation*}
\end{theorem}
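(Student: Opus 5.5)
Our plan follows the Aizenman--Warzel resonance-delocalization strategy \cite{AWtree,AW}, adapted to the loops of $\G_{4,5}$, with the full-support lower bound on $\rho$ used to reach energies near the edge of $\Sigma_{d,\lambda}$.

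The guiding heuristic is the Lyapunov comparison. On a tree of branching number $K$ the AW criterion reads: if the free Lyapunov exponent $L(E)$ is strictly smaller than $\log K$, there is ac spectrum near $E$ at weak disorder. In place of this, we will use the exponential growth rate of $\G_{4,5}$ together with a lower bound on the decay rate of the free Green function of $A_d$. The constant $4+2\sqrt2$ suggests the following mechanism: $2+\sqrt2$ is a known spectral radius, attained at the energy edge of a branching structure of degree about $4$ sitting inside $\G_{4,5}$. Hence the window \eqref{eq:full-support-window} says that $r_\square+\lambda$, the edge of $\Sigma_{0,\lambda}$, stays below $4+2\sqrt2$. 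We take the Cartesian factor $\Z^d$ into account by Fourier decomposition, or by noting that it only shifts the energy by $\sigma(\Delta_{\Z^d})=[-2d,2d]$.

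The steps, in order, are as follows.

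\textbf{Step 1.} We fix a forward-branching structure on $\G_{4,5}$, namely horocyclic layers from a boundary point or word-geodesic cones. This is the structure underlying \Cref{bare:lem:radius}. On it we prove a local recursion inequality for the diagonal resolvent $G_\omega(x,x;E+i0)$. It is not an exact recursion as on trees. Instead we obtain a bound $\E\,|G(x,x)|^s\le C\sum_{y\text{ forward}}\E|G(y,y)|^s\,|\text{loop correction}|$, obtained by using the rank-two (square-face) perturbations to control the loop terms by fractional-moment decoupling. Here $\rho\ge\rho_-$ on all of $[-1,1]$ gives a priori bounds $\E|G|^{s}\le C(s,\rho_\pm,\lambda)$ that hold uniformly up to $\partial\Sigma_{d,\lambda}$, not just near the center.

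\textbf{Step 2.} We introduce the free energy $\varphi_\lambda(s;E)=\lim_n n^{-1}\log\E\sum_{|y|=n}|G(x,y;E)|^s$. We show it is convex in $s$ and that $\varphi_\lambda(1;E)$ is bounded below by $-\log(\text{growth})$ at the band edge. We also obtain the AW dichotomy: either $\varphi_\lambda(1;E)>0$, in which case there are resonances at many distant sites, or the singular spectrum has zero weight at $E$ almost surely.

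\textbf{Step 3.} From the dichotomy, we show, as in \cite{AW}, that for a.e.\ $E$ in the set $D=\{E\in\Sigma_{d,\lambda}:\varphi_\lambda(1;E)>0\}$ the imaginary part $\ImPart G(x,x;E+i0)$ is positive almost surely. This gives ac density and no singular spectrum there. Simultaneity over $x$ follows from countability of $X$ and ergodicity, and $D$ is deterministic by ergodicity. The identity $\mathbf 1_D P_{\mathrm{sing}}=0$ follows since $\{\delta_x\}$ is cyclic.

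\textbf{Step 4.} It remains to prove $|D|>0$. Here we compare at the spectral edge. Near $E\approx r_\square+2d+\lambda$, large potentials $\lambda\omega_x\approx\lambda$ (available since $\rho\ge\rho_-$ near $1$) make the effective local energy lie in the interior of $\sigma(A_d)$. Meanwhile, the bound $r_\square+\lambda<4+2\sqrt2$ ensures that the decay rate of the Green function is smaller than the volume growth rate. This makes $\varphi_\lambda(1;E)>0$ on a set of positive measure, and by continuity in $E$ that set persists.

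The main obstacle will be Step 1: controlling the square-loop corrections so that the recursion inequality loses only a factor compatible with the precise threshold $4+2\sqrt2$. We would first try conditioning on the two potentials at each square's far corner and using Krein's rank-two formula to bound the loop term by a fractional moment with an explicit constant.
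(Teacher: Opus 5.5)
Your overall resonance-delocalization framing matches the paper's, but Step 4 carries the entire quantitative content of the theorem and gives no mechanism for it. It asserts that the Green function decays more slowly than the volume grows once $r_\square+\lambda<4+2\sqrt2$, and your reading of that constant is not the right one.

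\textbf{Where the threshold comes from.} In the paper, $2+\sqrt2$ is the exponential growth rate of \emph{shortest paths}: reduced words in $W_5$, the Perron root of $\bigl(\begin{smallmatrix}2&2\\1&2\end{smallmatrix}\bigr)$. So $\eta_{\rm geo}=\log(2+\sqrt2)$, and the extra factor $2$ is $\lcap[-B,B]=B/2$.
\begin{itemize}
\item At a fixed energy, paths around the square faces can cancel, so no pointwise lower bound on decay is available.
\item What survives is that, for every realization, $G_{xy}$ is the Cauchy transform of a real measure on the finite-gap set $\Sigma_{d,\lambda}\supset\sigma(H)$. Its leading term $-N_{xy}z^{-R-1}$ has a fixed sign.
\item Jensen's formula on the universal cover of $\widehat{\mathbb C}\setminus\Sigma_{d,\lambda}$ (\Cref{lem:finite-band-jensen}) then gives only an energy-averaged bound: $\int\log|G_{xy}(E+i0)|\,d\omega_{\Sigma}(E)\ge\log N_{xy}-(R+1)\log\lcap\Sigma_{d,\lambda}$, with $\lcap\Sigma_{d,\lambda}\le(r_\square+2d+\lambda)/2$.
\item Comparing with volume growth alone (no multiplicity gain, $\kappa=0$) would only yield $r_\square+\lambda<3+\sqrt5$.
\item To reach the stated window you must bin spheres by $N_{ox}$, so that endpoint growth plus multiplicity gain equals $\eta_{\rm geo}$ (\Cref{bare:lem:selection}). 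For $d\ge1$ you must also interleave Euclidean steps, whose multinomial entropy exactly cancels the extra $2d$ in the capacity.
\item The output is only $\omega_\Sigma(D)\ge(\eta_{\rm geo}-\log\frac{r_\square+\lambda}{2})/\gamma>0$. Nothing is proved at individual energies, so your appeal to ``continuity in $E$'' has nothing to stand on; the theorem does not even claim an interval.
\end{itemize}

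\textbf{Steps 1--3 do not supply the missing ingredients either.}
\begin{itemize}
\item A recursion bounding $\E|G|^s$ from above is a localization-type estimate. Delocalization needs a \emph{lower} bound on transmission. Here that is a bound on the tunneling amplitude $\tau_{ox}$, obtained from the $G_{xy}$ bound via $G_{xy}=\tau_{xy}G_{xx}G^{(x)}_{yy}$.
\item The a priori bounds $\E|G|^s\le C$ come from the upper bound $\rho_+$, not from $\rho_-$. The full-support lower bound is used instead to get $n_\lambda>0$ a.e.\ on $\sigma(A_d)+(-\lambda,\lambda)$ by shifting a block of potentials (\Cref{cor:shifted-dos}). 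This is what allows taking $I=S=\Sigma_{d,\lambda}$, with full equilibrium mass.
\item Invoking \cite{AW} ``as is'' imports tree features that fail on $\G_{4,5}$: exact factorization along paths and independent forward branches. On $\G_{4,5}$, resonance events at distant sites are correlated through the loops, and your proposal does not address these correlations.
\item The paper handles them with three tools:
\begin{itemize}
\item tightness of $\log\ImPart G_{oo}$, from a Poincar\'e inequality (spectral gap $r_\square/5<1$ of the averaged translations);
\item second-moment control of resonance counts, via the Bernoulli--Koopman embedding into the regular representation plus rapid decay of $W_5$, with only polynomial loss;
\item a fixed-radius cylinder resampling that decouples good tunneling from a resonant exterior.
\end{itemize}
\end{itemize}
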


Since $r_\square\le\sqrt{22}$, the permitted disorder range
includes $0<\lambda\le2$ in every finite dimension.

\paragraph{Intervals for truncated Cauchy distributions.}
For the next result, we specify the distribution of the site
potentials directly.
For $0<\eps\le1$, let
\[
 c_\eps(v)\coloneq\frac{\eps}{\pi(v^2+\eps^2)}
\]
be the Cauchy density of scale $\eps$.
Conditioning this distribution on $[-M,M]$, where $M\ge\eps$,
gives the probability density
\begin{equation*}
 \varrho_{\eps,M}(v)
 \coloneq
 \frac{\eps\,\mathbf 1_{[-M,M]}(v)}
 {2\arctan(M/\eps)(v^2+\eps^2)}.
\end{equation*}
Let the variables $V_x$ be independent with common density
$\varrho_{\eps,M}$ and define
\begin{equation}\label{bc:eq:model}
 (H^{(d)}_{\eps,M}\psi)(x)
 \coloneq(A_d\psi)(x)+V_x\psi(x).
\end{equation}
Thus $|V_x|\le M$ at every vertex.
We denote the local spectral measures of this operator
by $\mu_{x,\eps,M}$.

\begin{theorem}
\label{thm:bounded-cauchy}
Fix a finite integer $d\ge0$ and an interval $J=[a,b]$ with
\begin{equation*}
 r_\square+2d<a<b<5+2d.
\end{equation*}
There exists $\eps_0=\eps_0(d,J)\in(0,1]$ such that,
for each fixed $0<\eps\le\eps_0$, there is a finite
$M_0=M_0(\eps,d,J)\ge\eps$ with the following property.
For every fixed finite $M\ge M_0$, almost surely,
\begin{equation}\label{bc:eq:conclusion}
 \begin{gathered}
 J\cup(-J)\subset\sigma_{\mathrm{ac}}(H^{(d)}_{\eps,M}),\\
 \mathbf 1_{J\cup(-J)}(H^{(d)}_{\eps,M})
 P_{\mathrm{sing}}(H^{(d)}_{\eps,M})=0.
 \end{gathered}
\end{equation}
Moreover, almost surely, simultaneously for every $x\in X$,
\begin{equation}\label{bc:eq:local-density}
 \frac{d\mu_{x,\eps,M}^{\mathrm{ac}}}{dE}(E)>0
 \quad\text{for almost every }E\in J\cup(-J).
\end{equation}
\end{theorem}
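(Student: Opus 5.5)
The plan is to compare the truncated Cauchy model with the untruncated Cauchy model and run a Simon--Wolff / Klein-type argument on $J\cup(-J)$. For Cauchy disorder of scale $\eps$, the averaged resolvent is explicit: $\E\langle\delta_x,(A_d+V-z)^{-1}\delta_x\rangle=\langle\delta_x,(A_d-z-i\eps\,\mathrm{sgn}\ImPart z)^{-1}\delta_x\rangle$. Since $a>r_\square+2d=\|A_d\|$, the energies in $J$ lie outside $\sigma(A_d)$. The deterministic operator $A_d-E-i\eps$ therefore has a bounded inverse whose off-diagonal entries decay with a rate fixed by $\mathrm{dist}(E,\sigma(A_d))$, uniformly for $E\in J$.

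The substitute for the tree recursion is to work on the forward structure of $\G_{4,5}$ given by horocyclic layers (presumably the layer decomposition used earlier for \Cref{bare:lem:radius}). Whether a Green function component is exponentially growing or decaying is compared with the branching rate; the upper bound $b<5+2d$ should be exactly the condition ensuring $J$ lies in the region where the expected number of resonant distant sites exceeds $1$. This is the Aizenman--Warzel resonance mechanism, which yields absolutely continuous spectrum even outside the spectrum of the unperturbed operator.

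Concretely, I would establish the conclusions in three steps.
\begin{enumerate}
\item For the Cauchy model (no truncation), prove that for $E\in J\cup(-J)$ and $\eta\downarrow0$, the quantities $\E|G_\eta(x,x;E)|^{s}$ and $\E|\ImPart G_\eta(x,x;E)|^{-s}$ stay bounded uniformly in $\eta$ for some $s>1$ small.
    \begin{itemize}
    \item This uses the Lyapunov-exponent versus volume-growth comparison from the resonance argument.
    \item It exploits the exact Cauchy averaging in a single site variable, which makes the fractional-moment bounds on $\ImPart G$ computable.
    \item By the symmetry $V\mapsto-V$ together with bipartiteness, $A_d\mapsto -A_d$ via $\psi(x)\mapsto(-1)^{|x|}\psi(x)$, so $-J$ follows from $J$.
    \end{itemize}
\item Pass to the truncation $\varrho_{\eps,M}$. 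Its total variation distance from $c_\eps$ is $O(\eps/M)$, and the single-site rank-one perturbation formula expresses $G(x,x)$ as a Möbius function of $V_x$. The moment bounds therefore transfer, provided the estimates only use finitely many sites conditionally, or a coupling argument plus stability of the recursive inequalities. This should be done along the lines of \Cref{thm:stability_under_change_density}, which is exactly a stability statement under change of density; large $M$ makes the perturbation small. This fixes $\eps_0$ first and then $M_0(\eps)$.
\item Conclude. Uniform $L^{1+}$ bounds on $G_{E+i\eta}(x,x)$ give, via Fatou and de la Vallée-Poussin, that $\mu_x$ restricted to $J$ has no singular part for a.e.\ $\omega$, for all $x$ simultaneously since $X$ is countable. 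Negative moment bounds on $\ImPart G$ give positivity of the density a.e. Positive density on all of $J$ gives $J\subset\sigma_{\mathrm{ac}}$, and absence of singular part for each $\delta_x$, which span $\ell^2(X)$, gives $\mathbf 1_{J\cup(-J)}P_{\mathrm{sing}}=0$.
\end{enumerate}

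The main obstacle is step 1: producing uniform moment bounds on a graph with square loops where the resolvent recursion is not exact. I would first try to get an approximate recursion across horocycle layers, controlling the loop corrections by smallness in $\eps$ and the separation $a-\|A_d\|>0$.
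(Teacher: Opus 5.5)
\textbf{There is a genuine gap.} Your Step 1 is the entire content of the theorem, and nothing in the proposal supplies it.

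\emph{Why Step 1 fails as proposed.} Uniform-in-$\eta$ bounds on $\E|G_{xx}(E+i\eta)|^{s}$ with $s>1$, and on negative moments of $\ImPart G_{xx}$, are Klein-type estimates. On trees they come from an exact forward recursion. On $\mathcal X_d$ every vertex lies on square faces, so deleting a vertex does not separate forward branches. No layer structure gives a closed recursion or independence between branches, and \Cref{bare:lem:radius} uses no such decomposition.
\begin{itemize}
\item Controlling loop corrections ``by smallness in $\eps$'' cannot work. These corrections involve cavity entries $G^{(x)}_{ab}$ between neighbours of $x$ joined through a square, and they carry no factor of $\eps$.
\item The effect is non-perturbative. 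Since $J\cap\sigma(A_d)=\varnothing$, the $\eps=0$ operator has no spectrum in $J$ at all.
\item After the one-site integration, $\int c_\eps(v)|v-\Sigma_x|^{-s}\,dv$ is of order $c_\eps(\RePart\Sigma_x)(\ImPart\Sigma_x)^{1-s}$ when $\ImPart\Sigma_x$ is small. So the $s>1$ bound presupposes negative moments of $\ImPart\Sigma_x$, which is essentially the conclusion.
\item The Lloyd identity $\E G(z)=(A_d-z-i\eps)^{-1}$ controls only the mean of $G$. It says nothing about the typical size of $|G_{ox}|$, which is what a resonance argument needs.
\item The Aizenman--Warzel mechanism does not produce uniform $L^s$ bounds. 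It proves $\PP\{\ImPart G_{oo}(E+i0)>0\}=1$ by contradiction.
\end{itemize}
Step 2 also fails as formulated. The truncated and untruncated laws differ at infinitely many sites almost surely, so an infinite-volume moment bound cannot be transferred by a total-variation coupling. Only a finite-volume inequality with a strict margin can be transferred this way.

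\emph{What the paper does instead.} It never touches the untruncated model. The theorem is a direct corollary of \Cref{thm:law-stability} (equivalently \Cref{thm:stability_under_change_density}) with $\nu=\delta_0$, $u=v=0$, $K=2$. For $M\ge\eps$ one has $\varrho_{\eps,M}\le 2c_\eps$ and $d_{\rm TV}(\varrho_{\eps,M},c_\eps)=\frac2\pi\arctan(\eps/M)\le 2\eps/(\pi M)$, and $\varrho_{\eps,M}$ is bounded below on $[-M,M]$. So for $\eps\le\eps_0$ one takes $M_0\ge\eps$ with $2\eps/(\pi M_0)<t_*$ and $M_0>b-r_\square-2d$.

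The stability theorem rests on the idea missing from your Step 1: a strict finite-block inequality.
\begin{itemize}
\item On a finite ball, subharmonicity of $\log|G_{ox}|$ in each Cauchy variable (\Cref{bc:lem:finite-jensen}) replaces those variables by $-i\eps$. This gives a deterministic lower bound, not a statement about $\E G$.
\item The positive walk expansion at $E>\|A_d\|$ then yields, for the hybrid law, $\E\log|G_{ox}(E+i\eta)|\ge\log p_m(o,x)+m\log(Q/b)-\log(4b)$ for endpoints of $m$-step walks and $0<\eta\le\eps$.
\item Averaged over endpoints, the resulting tunnelling cost lies below the random-walk entropy $mh_{\rm rw}$ by a margin of order $m$, precisely because $b<Q=5+2d$.
\item This margin survives the finite-ball coupling to $\varrho_{\eps,M}$, with error $2L\sqrt{|B_R(o)|\,t}$. \Cref{bc:prop:block} then turns it into the hypothesis $\mathcal L_F<\gamma$ of \Cref{prop:selected-ac-point}, the resonance and Poincar\'e argument.
\item Positivity of the averaged density comes from \Cref{cor:shifted-dos}, and absence of singular spectrum from rank-one averaging (\Cref{lem:no-singular}).
\end{itemize}
Your bipartite reflection for $-J$ is exactly what the paper uses.
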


The bound $r_\square\le\sqrt{22}$ gives the explicit choices
\begin{equation*}
 J_d\coloneq 2d+[4.75,4.9]
 \subset(2d+\sqrt{22},2d+5).
\end{equation*}
In particular, suitable bounded independent potentials on
$\G_{4,5}$ give purely absolutely continuous spectrum on
$[4.75,4.9]\cup[-4.9,-4.75]$.
These intervals lie outside the spectrum of the free operator.

\paragraph{Stability under changes of distribution.}
The interval conclusion persists when the Cauchy density is
replaced by a sufficiently close compactly supported density
satisfying the bounds below.

For probability densities $f,g$, define the total variation distance
\[
 d_{\mathrm{TV}}(f,g)
 \coloneq\frac12\int_{\mathbb R}|f(t)-g(t)|\,\dd t.
\]
For independent random variables $V_x$ with density $f$, we define the Anderson operator $H_f$ by
\[
 (H_f \psi)(x)=(A_d\psi)(x)+V_x\psi(x)
\]
and denote its local spectral measures by $\mu_{x,f}$.

We impose three conditions on the perturbed density $f$.
It must be bounded above by twice the Cauchy density
$c_\eps$ and sufficiently close to $c_\eps$ in total variation.
It must also be bounded below by a positive constant on
an interval large enough to ensure positivity of the averaged
local spectral density throughout the target energy intervals.
The theorem makes this last requirement precise.

\begin{theorem}\label{thm:stability_under_change_density}
Fix a finite integer $d\ge0$ and an interval $J=[a,b]$ with
\[
 \|A_d\|<a<b<5+2d.
\]
For every sufficiently small $\eps>0$, there exists $t_*>0$
such that the following holds.

Let $f$ be a compactly supported probability density satisfying
\[
 f\le2c_\eps\quad\text{almost everywhere},
 \qquad
 d_{\mathrm{TV}}(f,c_\eps)<t_*.
\]
Assume also that $f$ is bounded below by a positive constant
on some interval $[-\ell,\ell]$ with $\ell>b-\|A_d\|$.

Then $H_f$ almost surely has purely absolutely continuous
spectrum on $J\cup(-J)$, with every local spectral density
positive almost everywhere on these intervals, simultaneously
at every vertex.
\end{theorem}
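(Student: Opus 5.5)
The plan is to transfer the argument behind \Cref{thm:bounded-cauchy} from the truncated Cauchy law to the density $f$. We work with a fractional-moment (or second-moment) criterion for absolutely continuous spectrum. The input is a uniform bound
\[
 \sup_{\eta\in(0,1]}\ \sup_{E\in J\cup(-J)}\ \E_f\bigl[|G_{xx}(E+i\eta)|^{s}\bigr]<\infty
\]
for some $s>1$, uniformly in $x$; the criterion then says that the singular part of $\mu_{x,f}$ vanishes on $J\cup(-J)$ almost surely. For the untruncated Cauchy density $c_\eps$ this averaged resolvent is explicit. By the Cauchy averaging identity, $\E_{c_\eps}[G(E+i\eta)]=(A_d-E-i(\eta+\eps))^{-1}$. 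For $|E|>\|A_d\|$ this is a free resolvent at positive distance from $\sigma(A_d)$, with off-diagonal decay of rate roughly $\log$ of (distance to spectrum over the growth rate). The hypothesis $b<5+2d$ is presumably what keeps the square-lattice branching (degree $5$ hyperbolic part plus $2d$ Euclidean) from swamping this decay.

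\textbf{Step 1: comparison under the change of density.} I would not redo the Cauchy computation. I would reuse the intermediate estimate from the proof of \Cref{thm:bounded-cauchy} in the form of a bound on a finite-volume or resonance-counting functional $\Phi$, such as $\E[|G_{xy}|^{s}]$ summed over spheres with the exponential weight. The point is that $\Phi$ depends on the single-site law only through expectations of functions of finitely many $V_y$ at a time. Two facts then control these expectations. First, the upper bound $f\le 2c_\eps$ gives, for each single-site integration in a rank-one (Krein) decomposition $G_{yy}=1/(V_y-\Gamma_y)$, the estimate $\E_f[\cdot]\le 2\,\E_{c_\eps}[\cdot]$ for any nonnegative integrand. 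This yields a priori fractional-moment bounds with constants independent of $\eta$. Second, $d_{\TV}(f,c_\eps)<t_*$ gives $|\E_f g-\E_{c_\eps}g|\le 2t_*\|g\|_\infty$ for bounded $g$, applied to truncated versions of the resolvent. Interpolating these two facts, with Hölder to pass from truncated to fractional moments, should give
\[
 \bigl|\E_f[h(G_{yy})]-\E_{c_\eps}[h(G_{yy})]\bigr|\le C\,t_*^{\theta}
\]
for some $\theta>0$, and this perturbation is then iterated along the recursive (non-tree, square-loop) resolvent expansion used in the bounded-Cauchy proof. Since that proof closes a contraction or fixed-point inequality with a strict margin depending on $\eps$ and $J$, a sufficiently small $t_*=t_*(\eps,d,J)$ keeps the margin positive. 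Compact support of $f$ ensures $H_f$ is bounded and self-adjoint, so finite-volume approximation and the Simon--Wolff/Aizenman-type criteria apply.

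\textbf{Step 2: positivity of densities.} The lower bound $f\ge c>0$ on $[-\ell,\ell]$ with $\ell>b-\|A_d\|$ is used to show the averaged density $\E_f[\ImPart G_{xx}(E+i0)]$ is positive on $J\cup(-J)$. The spectrum of $H_f$ almost surely contains $\sigma(A_d)+\supp f$, which covers $[-b,b]$. Conditioning on all sites but $x$ and using the rank-one formula, $\ImPart G_{xx}>0$ whenever $E-V_x$ hits the support of the conditional density of $\Gamma_x$. The lower bound on $f$ shows this happens with positive probability for every $E\in J\cup(-J)$. From the absence of singular spectrum and this averaged positivity, almost-sure positivity of the density follows by the same ergodicity/zero-one argument used in \Cref{thm:bounded-cauchy}. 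That argument applies the lemma that $\{E:\ImPart G_{xx}(E+i0)=0\}$ is a tail-type event, together with the connectivity of the graph to pass from one vertex to all vertices simultaneously.

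\textbf{Main obstacle.} The hard step is Step 1: making the total-variation perturbation quantitative inside the resolvent recursion. Total variation controls only bounded test functions, while resolvents are unbounded near resonances. I would first handle this by splitting on $\{|G|\le K\}$. On this event the TV estimate applies. On its complement I would use the $2c_\eps$ domination together with the Cauchy tail bound $\PP(|G|>K)\lesssim K^{-1}$ uniformly in $\eta$, and then optimize $K$ against $t_*$.
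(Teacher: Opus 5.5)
\textbf{There is a genuine gap in Step 1, which carries the whole proof.} The tools you list cannot feed the criterion you aim for.

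\emph{The moment bound is out of reach.}
\begin{itemize}
\item The Cauchy identity controls only the mean, $\E_{c_\eps}G(E+i\eta)=(A_d-E-i(\eta+\eps))^{-1}$. For $s\ge1$, Jensen gives $\E|G_{xx}|^s\ge|\E G_{xx}|^s$, which is the wrong direction for an upper bound.
\item The domination $f\le2c_\eps$ bounds $\int|v-\Sigma_x|^{-s}f(v)\,dv$ uniformly in $\eta$ only for $s<1$.
\item For $s>1$ this integral grows like $(\ImPart\Sigma_x)^{1-s}$ wherever $f$ is positive near $\RePart\Sigma_x$. A uniform $s>1$ moment bound is therefore essentially the absolutely continuous spectrum you are trying to prove, and no interpolation with total variation produces it.
\end{itemize}

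\emph{The law comparison cannot be done globally.}
\begin{itemize}
\item $f^{\otimes X}$ and $c_\eps^{\otimes X}$ are mutually singular whenever $f\ne c_\eps$. Per-site total-variation errors would have to be absorbed by a contraction.
\item On $\G_{4,5}$ the square loops destroy the tree recursion that could supply one.
\item You cannot borrow a ``strict-margin fixed-point inequality'' from \Cref{thm:bounded-cauchy}. The paper deduces that theorem from \Cref{thm:law-stability} with $\nu=\delta_0$, so this is circular, and no such recursion appears anywhere in it.
\end{itemize}

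\textbf{The missing idea.} You should bound the typical size of the off-diagonal Green function from below, through $T^f_E(o,x)=\E\log^-|G^f_{ox}(E+i0)|$, rather than bound upper moments. You should also change the law only on a fixed finite ball. The paper proceeds as follows.
\begin{itemize}
\item A maximal coupling on $\Lambda=B_R(o)$ makes the two operators differ with probability at most $Nt$, where $N=|\Lambda|$. The uniform $L^2$ bounds on $\log|G_{ox}|$ from \Cref{st:lem:logs} turn this into an error $2L\sqrt{Nt}$ (\Cref{st:prop:comparison}).
\item On the ball, subharmonicity of $\log|\cdot|$ under Cauchy averaging (\Cref{bc:lem:finite-jensen}) replaces the Cauchy variables by $-i\eps$. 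The shielding estimate \Cref{st:lem:shield} makes the resulting lower bound uniform in the exterior potential.
\item The walk expansion then gives $T^f_E(o,x)\le-\log p_m(o,x)-m\delta+O(\log(1/\eps))$ for $x$ in a block of odd length $m$. Here $\delta=\log((5+2d)/b)>0$, which is where $b<5+2d$ enters.
\item The gluing inequality and the strict block inequality against random-walk entropy (\Cref{bc:prop:block}) give endpoint sets with $\mathcal L_F<\gamma$.
\item The resonance--Poincar\'e argument of \Cref{prop:selected-ac-point}, together with \Cref{lem:no-singular}, gives almost sure positivity of $\ImPart G_{xx}(E+i0)$ and absence of singular spectrum.
\end{itemize}

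\textbf{On Step 2.} The floor of $f$ on $[-\ell,\ell]$ is the right ingredient. It enters through \Cref{cor:shifted-dos} as positivity of the averaged density. That density includes singular contributions, so it is not the same as $\E\ImPart G_{xx}(E+i0)$. In the paper it is an input to the resonance probability in \Cref{lem:alignment}. The almost sure upgrade at every vertex comes from the resonance mechanism and \Cref{lem:boundary-dichotomy}, not from a tail-event argument.
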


A more general version of \Cref{thm:stability_under_change_density}, allowing bounded independent noise
to be added to the Cauchy variable, is stated and proved in
\Cref{thm:law-stability} in \Cref{sec:law-stability}.

\subsection{Proof outline}\label{sec:proof-outline}

The argument compares the number of distant resonant sites with the decay of
the coupling to each site. Exponential volume growth provides many candidates;
the main issue is to control interference between paths and correlations
between resonance events in the presence of loops.

\paragraph{Tunneling.}
Eliminating all vertices except two gives an effective two-site operator. Its
off-diagonal entry, denoted by $\tau_{xy}$ below, measures the coupling between
the two sites. Although contributions from different paths may cancel at a
fixed energy, the leading term of $\tau_{xy}(z)$ as $|z|\to\infty$ counts
shortest paths and has a fixed sign. Averaging $\log|\tau_{xy}|$ over energy
and applying Jensen's formula after a conformal change of variable gives the
required lower bound; see \Cref{sec:resolvent,sec:log}.

\paragraph{Resonant sites.}
We select distant vertices with comparable numbers of shortest paths. The
selected sets grow exponentially, while the path multiplicities improve the
tunneling bound. At energies where the averaged local spectral density is
positive, one-site resampling gives a uniform lower bound on the resonance
probability. Correlations between different sites are controlled by a
rapid-decay estimate for translations of the random environment, with only a
polynomial loss in the distance. A second-moment argument then yields a
positive probability of finding a resonant site; see
\Cref{sec:averaging,sec:geometry}.

\paragraph{Absolutely continuous density.}
For $H=H_{\lambda,\omega}$, write
\[
 G_{xx}(z)=\langle\delta_x,(H-z)^{-1}\delta_x\rangle.
\]
At almost every energy, the local absolutely continuous density is
$\pi^{-1}\operatorname{Im}G_{xx}(E+i0)$. If this imaginary part vanished
almost surely, a distant resonance would force a large fluctuation of the
response at the root. The Poincar\'e inequality rules out such fluctuations
when the tunneling decay rate is smaller than the growth rate of the selected
sites; this is \Cref{prop:selected-ac-point}.

\paragraph{Singular spectrum.}
We condition on all potentials except one and apply spectral averaging in the
remaining site variable. Since the exceptional energy set is then fixed, the
boundary-value identities imply that the local spectral measure has no
singular mass on the deterministic energy set under consideration. Applying
this at every vertex gives \Cref{lem:no-singular}.

\paragraph{Bounded disorder.}
Shortest-path counts, together with estimates on the free spectrum, verify the
resonance criterion on a deterministic set of positive Lebesgue measure. For
$\G_{4,5}$ this uses the gaps and logarithmic capacity of the free spectrum;
for $\mathbb Z^d\square\G_{4,5}$ the Euclidean directions fill the free gaps
and give a larger weak-disorder range. The details are in
\Cref{sec:bounded-applications}.

\paragraph{Interval results.}
For the interval theorem we use the Cauchy identity
\[
 \int_{\mathbb R}\frac{c_\eps(v)}{v-\zeta}\,\dd v
 =\frac{1}{-i\eps-\zeta},
 \qquad \operatorname{Im}\zeta>0.
\]
On finite balls, this replaces Cauchy averaging with a fixed imaginary potential.
Combined with random-walk estimates, it gives a strict finite-block inequality
comparing logarithmic propagation cost with endpoint entropy. The strictness
allows truncation of the Cauchy law and the perturbations in
\Cref{thm:law-stability}; see \Cref{sec:interval-stability}.

\textbf{Statement on the use of artificial intelligence.}
This project originated in an attempt to extend results for tree graphs to lattices which retain exponential volume growth while, unlike trees, containing nontrivial loops. Our initial objective was to use the Simon--Wolff criterion to exclude point spectrum for the Anderson model on $\mathcal G_{8,8}$, and this approach was successful. During the subsequent development of the project, ideas of the authors were combined with assistance from ChatGPT 5.6 Sol, leading to a substantial strengthening of the original result and, ultimately, to a proof of the main theorem. The resulting argument was subsequently extensively revised, reorganized, and independently verified by the authors, with further assistance from large language models during this process. In particular, LLM-assisted combinatorial analysis helped identify a route by which the argument could be sharpened from $\mathcal G_{8,8}$ to the geometrically more Euclidean lattice $\mathcal G_{4,5}$.

Further interactions with ChatGPT 6 suggested possible extensions of the method to non-amenable geometries even closer to Euclidean lattices, including certain non-amenable deformations of $\mathbb Z^d$. These directions are not pursued in the present work.

The authors have independently checked the entire final manuscript, including all mathematical arguments, computations, references, and conclusions, and take full responsibility for its contents.

\section{Analytic framework}\label{sec:analytic-framework}
\subsection{Model and notation}\label{sec:model}
We collect the notation used in the proofs.
For a graph $Y$, let $A_Y$ be its unit-weight adjacency operator
and write $A_{4,5}=A_{\G_{4,5}}=A_0$.
The operator on the Cartesian product is
\begin{equation}\label{eq:square-product-model}
 A_d=A_{\mathbb Z^d}\otimes I+I\otimes A_{4,5},
 \qquad \mathcal X_d=\mathbb Z^d\square\G_{4,5}.
\end{equation}
Here $I$ denotes the identity on the appropriate factor.
For real sets, $S+T=\{s+t:s\in S,\ t\in T\}$.

We fix $\mathcal X_d$, write $X=V(\mathcal X_d)$ and $Q=5+2d$, and choose a
root $o$. We let $d_X(x,y)$ be graph distance, abbreviated to $d(x,y)$ in local
arguments, and set
\[
 N(x)=\{y:y\sim x\},\qquad
 B_R(x)=\{y:d(x,y)\le R\},\qquad S_R(x)=\{y:d(x,y)=R\}.
\]
A shortest path has length $d(x,y)$; $N_{xy}$ denotes the number of such
paths. We write $|F|$ for the size of a finite vertex set and $|J|$ for the
Lebesgue measure of an energy set. The symbols $F\Subset X$ and
$I\Subset U\subset\mathbb R$ mean, respectively, that $F$ is finite and
that $\overline I$ is a compact subset of $U$.

We let $\delta_x$ be the standard basis of $\ell^2(X)$, $\one_J$ an indicator
or the corresponding spectral projection, and $P_C$ the coordinate
projection onto $\ell^2(C)$ for $C\subset X$. Complements of vertex sets are
taken in $X$; $\operatorname{dist}(t,S)=\inf_{s\in S}|t-s|$ for real sets.
Constants denoted $C$ may vary between estimates; subscripts record their
allowed dependence. Inner products are conjugate-linear in the first variable.
We use $\|\cdot\|$ for Hilbert-space or operator norms, $\Tr$ for the
finite-dimensional trace, and $\diag(v)$ for multiplication by the
coordinates of $v$. Logarithms are natural, with
$\log^\pm t=\max\{\pm\log t,0\}$ for $t>0$.

For the bounded-disorder model, we let $\rho$ be a probability density and
assume, for constants $0<\rho_-\le\rho_+<\infty$, that
\begin{equation}\label{eq:density-upper}
 \supp\rho\subset[-1,1],\qquad 0\le\rho\le\rho_+<\infty,
 \qquad \int_{\mathbb R}\rho(u)\,\dd u=1,
\end{equation}
and, for some $b\in(0,1]$,
\begin{equation}\label{eq:density-lower}
 \rho(u)\ge\rho_->0\quad\text{for almost every }u\in[-b,b].
\end{equation}
We use the probability space $\Omega=[-1,1]^X$ with product law
$\mathbb P=\bigotimes_{x\in X}\rho(u)\,\dd u$.
The variables $\omega_x$ are the coordinate functions, and
$\mathbb E$ denotes expectation. The variables are independent
and identically distributed (iid); a.s. means with probability one, and
a.e. refers to Lebesgue measure in a real variable. For $C\subset X$, we write
$\Omega_C=[-1,1]^C$, $\mathbb P_C$ for its product law, and $\mathbb E_C$
for integration in those coordinates with the others fixed. A subscript
$\ne x$ means $C=X\setminus\{x\}$. An event depending on finitely many
coordinates is called a \emph{cylinder event}. For a disorder amplitude
$\lambda>0$, put
\begin{equation}\label{eq:model}
 H_{\lambda,\omega}=A_d+V_\omega,
 \qquad (V_\omega\psi)(x)=v_x\psi(x),\qquad v_x=\lambda\omega_x.
\end{equation}
We call $A_d$ the free operator and also write $A=A_d$, $H=H_{\lambda,\omega}$, and
$H^{\square,d}_{\lambda,\omega}=H_{\lambda,\omega}$ when recording
its dependence on $d$. The density of $v_x$ is
\begin{equation}\label{eq:scaled-density}
 \rho_\lambda(v)=\lambda^{-1}\rho(v/\lambda),\qquad
 \|\rho_\lambda\|_\infty\le\rho_+/\lambda.
\end{equation}
For a Borel set $J\subset\mathbb R$, we define
\begin{equation*}
 \mu_x^\omega(J)=\langle\delta_x,\one_J(H)\delta_x\rangle.
\end{equation*}
We suppress $\omega$ in $\mu_x=\mu_x^\omega$ and write
$\mu_x=\mu_x^{\ac}+\mu_x^{\sing}$ for its Lebesgue decomposition.
We let $P_{\ac}(H)$ project onto the subspace of vectors with a.c. spectral
measures, and set $P_{\sing}(H)=I-P_{\ac}(H)$. The singular part includes
pure-point and singular-continuous measures. We define $\sigma_{\ac}(H)$
as the spectrum of $H$ restricted to $P_{\ac}(H)\ell^2(X)$
\cite{Teschl,CarmonaLacroix}. For other operators below, subscripts on
$\mu_x$ specify the operator and its parameters in the same definition.

For a compact real set $S$, we let $\mathcal P(S)$ be its probability
measures. We define logarithmic energy and capacity by
\begin{equation}\label{eq:equilibrium-definition}
 \mathcal I(\vartheta)=\iint\log\frac1{|E-F|}\,
       \dd\vartheta(E)\,\dd\vartheta(F),\qquad
 \lcap S=\exp\!\left(-\inf_{\vartheta\in\mathcal P(S)}\mathcal I(\vartheta)\right).
\end{equation}
When $\lcap S>0$, the unique minimizing probability measure is the
equilibrium measure $\omega_S$. For a finite union of nondegenerate
intervals, $\omega_S\ll\dd E$ \cite[Appendix~A]{SimonCapacity}.
For $B>2$, the arcsine probability measure is
\begin{equation}\label{eq:arcsine}
 \dd\nu_B(E)=\frac{\one_{(-B,B)}(E)}{\pi\sqrt{B^2-E^2}}\,\dd E,
 \qquad a_B=\log(B/2).
\end{equation}
It satisfies $\omega_{[-B,B]}=\nu_B$ and $\lcap[-B,B]=B/2$
\cite{SaffTotik,SaffSurvey}.

\subsection{Resolvents and tunneling}\label{sec:resolvent}

A graph automorphism is an adjacency-preserving permutation of $X$.
For a group $\Gamma$ of such permutations and $g\in\Gamma$, we define
\begin{equation}\label{eq:environment-action}
 (g\cdot\omega)_x=\omega_{g^{-1}x},\qquad
 U_gf(\omega)=f(g^{-1}\cdot\omega).
\end{equation}
The spatial action $\mathcal U_g\delta_x=\delta_{gx}$ is unitary on
$\ell^2(X)$. Covariance means
$H_{g\cdot\omega}=\mathcal U_gH_\omega\mathcal U_g^{-1}$.
The operators $U_g$ are unitary on $L^2(\Omega)$. We write
$L^2_0(\Omega)=\{f:\mathbb Ef=0\}$ and
$\Var(f)=\mathbb E|f-\mathbb Ef|^2$. We require a finite symmetric
set $T=T^{-1}$ of automorphisms, each moving the root by one edge, and a
constant $C_{\rm P}<\infty$ such that
\begin{equation}\label{eq:Poincare}
 \Var(f)\le C_{\rm P}\sum_{t\in T}\|f-U_tf\|_2^2.
\end{equation}
For $\mathcal X_d$ the five translations within its hyperbolic factor
suffice; \Cref{lem:product-geometry} proves this estimate and the
selected-set bound below for both single-site laws.

Throughout \Cref{sec:analytic-framework} the graph $\mathcal X_d$ is fixed.
The balls and spheres are defined in \Cref{sec:model}; $\nu_B$ and $a_B$
are defined in \eqref{eq:arcsine}.

For $C\subseteq X$, we define the Dirichlet restriction
$H_C=P_CHP_C|_{\ell^2(C)}$; write $I_C$ for the identity on $\ell^2(C)$. For $x\notin C$, we put
\begin{equation}\label{eq:dirichlet-data}
 (H_C\psi)(u)=\sum_{\substack{w\in C\\w\sim u}}\psi(w)+v_u\psi(u),
 \qquad
 b_x^C=\sum_{\substack{u\sim x\\u\in C}}\delta_u.
\end{equation}
For $\ImPart z>0$, we set $G_{uv}(z)=\langle\delta_u,(H-z)^{-1}\delta_v\rangle.$
We define the one-site self-energy by
\begin{equation}\label{eq:one-site-self-energy}
 \Sigma_x(z)=z+
 \left\langle b_x^{X\setminus\{x\}},
 (H_{X\setminus\{x\}}-z)^{-1}b_x^{X\setminus\{x\}}\right\rangle.
\end{equation}
For distinct $x,y$, we write $C_{xy}=X\setminus\{x,y\}$ and
$A_{xy}=\langle\delta_x,A\delta_y\rangle$. We define
\begin{equation}\label{eq:two-site-self-energy}
 \sigma_u^{(xy)}(z)=z+\left\langle b_u^{C_{xy}},
 (H_{C_{xy}}-z)^{-1}b_u^{C_{xy}}\right\rangle,
 \qquad u\in\{x,y\},
\end{equation}
and
\begin{equation}\label{eq:tau}
 \tau_{xy}(z)=-A_{xy}
 +\left\langle b_x^{C_{xy}},(H_{C_{xy}}-z)^{-1}b_y^{C_{xy}}\right\rangle.
\end{equation}
The two-site formula \eqref{eq:two-site} expresses the resolvent in terms
of these coefficients.

For matrices and operators, we put $\ImPart M=(M-M^*)/(2i)$. We let
$R(z)=(H-z)^{-1}$ for $z\in\C\setminus\R$, and set
\begin{equation*}
 G_{CD}(z)=\one_C R(z)\one_D:\ell^2(D)\to\ell^2(C).
\end{equation*}
Here $C,D\subseteq X$. Unless stated otherwise,
$z=E+i\eta$ with $E\in\R$ and $\eta>0$. For the Dirichlet restriction
$H_C$ and neighbor vector $b_x^C$ defined in \eref{eq:dirichlet-data}, we
abbreviate $b_x^C$ to $b_x$ when the set $C$ is clear and write
$G^C_{uv}(z)=\langle\delta_u,(H_C-z)^{-1}\delta_v\rangle$
for $u,v\in C$. Thus, $\|H_C\|\le\|H\|$.

For a scalar resolvent quantity $F$, we write
$F(E+i0):=\lim_{\eta\downarrow0}F(E+i\eta)$ whenever this limit exists.
The Poisson kernel is $P_a(t):=a/[\pi(t^2+a^2)]$ for $a>0$.
The spectral theorem \cite{Teschl,CarmonaLacroix} gives
\begin{equation}\label{eq:resolvent-spectral}
 G_{xx}(z)=\int_\R\frac{d\mu_x(s)}{s-z},\qquad
 \ImPart G_{xx}(E+i\eta)=\int_\R\frac{\eta}{(s-E)^2+\eta^2}\,d\mu_x(s).
\end{equation}
Dividing the imaginary-part formula in \eref{eq:resolvent-spectral}
by $\pi$ and letting $\eta\downarrow0$ recovers the
a.c.\ density; see \cite[Theorem~3.27]{Teschl}:
\begin{equation*}
 \frac{d\mu_x^{\ac}}{dE}(E)
 =
 \frac1\pi\ImPart G_{xx}(E+i0)
\end{equation*}
for almost every $E\in\mathbb R$.
The same limit holds along every nontangential approach
$|\Re z-E|\le C\Im z$, with $C>0$ fixed, by the boundary-value theorem
\cite[Chapter~II]{Duren}. An analytic function with positive
imaginary part in the upper half-plane, such as $G_{xx}$, is called a
Herglotz function.
\subsubsection{Grushin problems and effective Hamiltonians}\label{sec:grushin-effective}

For a finite set $F\subset X$, we let $\iota_F:\ell^2(F)\to\ell^2(X)$
extend a vector by zero, and let $\pi_F=\iota_F^*$ restrict it to $F$.
Thus $\pi_F\iota_F=I_F$ and $\iota_F\pi_F=P_F$.
With $C=X\setminus F$, we introduce the Grushin operator
\begin{equation*}
 \mathcal P_F(z)=
 \begin{pmatrix}
   H-z & \iota_F\\
   \pi_F & 0
 \end{pmatrix}:
 \ell^2(X)\oplus\ell^2(F)\longrightarrow
 \ell^2(X)\oplus\ell^2(F).
\end{equation*}
For $\ImPart z>0$ this operator is invertible.  We write
\begin{equation*}
 \mathcal P_F(z)^{-1}=
 \begin{pmatrix}
  E_F(z)&E_{F,+}(z)\\
  E_{F,-}(z)&E_{F,-+}(z)
 \end{pmatrix}.
\end{equation*}
The finite-dimensional block $E_{F,-+}$ is the effective Hamiltonian.

For $C,D\subseteq X$, we write $A_{CD}=\one_C A\one_D:\ell^2(D)\to\ell^2(C),$ and $A_C=A_{CC}.$

\begin{lemma}\label{lem:effective-resolvents}
We let $F\subset X$ be finite and $C=X\setminus F$.  Then
\begin{equation}\label{eq:grushin-effective}
 E_{F,-+}(z)
 =zI_F-H_F+A_{FC}(H_C-z)^{-1}A_{CF},
 \qquad
 G_{FF}(z)=-E_{F,-+}(z)^{-1}.
\end{equation}
For one retained vertex $x$,
\begin{equation}\label{eq:one-site}
 E_{\{x\},-+}(z)=\Sigma_x(z)-v_x,
 \qquad
 G_{xx}(z)=\frac1{v_x-\Sigma_x(z)},
\end{equation}

For two retained vertices $x\ne y$,
\begin{equation}\label{eq:two-site}
 E_{\{x,y\},-+}(z)=
 \begin{pmatrix}
  \sigma_x^{(xy)}(z)-v_x & \tau_{xy}(z)\\
  \tau_{xy}(z) & \sigma_y^{(xy)}(z)-v_y
 \end{pmatrix},
 \qquad
 G_{\{x,y\},\{x,y\}}(z)=-E_{\{x,y\},-+}(z)^{-1}.
\end{equation}
Hence,
\begin{equation}\label{eq:pair-identities}
 \frac{G_{xy}}{G_{xx}}
 =\frac{\tau_{xy}}{v_y-\sigma_y^{(xy)}},
 \qquad
 \Sigma_y
 =\sigma_y^{(xy)}+
 \frac{\tau_{xy}^2}{v_x-\sigma_x^{(xy)}}.
\end{equation}
\end{lemma}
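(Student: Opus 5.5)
The plan is to compute $\mathcal P_F(z)^{-1}$ directly by block elimination, working in the orthogonal decomposition $\ell^2(X)=\ell^2(F)\oplus\ell^2(C)$. In this splitting,
\[
 H-z=\begin{pmatrix} H_F-z & A_{FC}\\ A_{CF} & H_C-z\end{pmatrix}.
\]
The potential is diagonal, so the off-diagonal blocks are exactly the adjacency blocks. The operators $\iota_F$ and $\pi_F$ become the inclusion of the first component and the projection onto it.

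For $\ImPart z>0$, the operator $H_C$ is self-adjoint, so $H_C-z$ is invertible. I will solve
\[
 \mathcal P_F(z)\begin{pmatrix}u\\ u_-\end{pmatrix}=\begin{pmatrix}0\\ v_+\end{pmatrix}.
\]
Write $u=(u_F,u_C)$. The second row gives $u_F=v_+$. The $C$-component of the first row gives $u_C=-(H_C-z)^{-1}A_{CF}v_+$. The $F$-component then gives
\[
 u_-=-\bigl(H_F-z-A_{FC}(H_C-z)^{-1}A_{CF}\bigr)v_+ .
\]
This is exactly the formula for $E_{F,-+}$ in \eqref{eq:grushin-effective}. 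Invertibility of $\mathcal P_F(z)$ is given, but it also follows from this construction.

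For the identity $G_{FF}=-E_{F,-+}^{-1}$, I will use the Schur complement formula. The $FF$ block of $(H-z)^{-1}$ is the inverse of the Schur complement $H_F-z-A_{FC}(H_C-z)^{-1}A_{CF}=-E_{F,-+}$. This Schur complement is invertible: its imaginary part is the compression of $\ImPart (H-z)^{-1}$-type positivity, or more simply, $(H-z)$ is invertible and $H_C-z$ is invertible, so the standard block-inversion argument applies. Alternatively, I can use the Grushin identity $E_{F,-+}^{-1}=-\pi_F(H-z)^{-1}\iota_F$ directly.

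For one retained site, $F=\{x\}$. Here $H_F=v_x$, and $A_{CF}1=b_x^{X\setminus\{x\}}$. So
\[
 E=z-v_x+\langle b_x,(H_C-z)^{-1}b_x\rangle=\Sigma_x-v_x .
\]
The quadratic form is bilinear, with no conjugation, because the adjacency entries are real. I must be careful about the conjugate-linear convention here: $A_{FC}=A_{CF}^*$, and $b_x$ is real, so the pairing $\langle b_x,(H_C-z)^{-1}b_x\rangle$ is the correct matrix entry. Inverting gives $G_{xx}=1/(v_x-\Sigma_x)$.

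For two retained sites, $F=\{x,y\}$. The diagonal entries come out the same way, with $C_{xy}$ in place of $X\setminus\{x\}$. The off-diagonal entry is
\[
 -\langle\delta_x,H\delta_y\rangle+\langle b_x^{C_{xy}},(H_{C_{xy}}-z)^{-1}b_y^{C_{xy}}\rangle=\tau_{xy}.
\]
The $yx$ entry equals $\tau_{yx}$. Since $H_{C_{xy}}$ is real symmetric, its resolvent is complex symmetric, so $\tau_{yx}=\tau_{xy}$.

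For \eqref{eq:pair-identities}, I will invert the $2\times2$ matrix explicitly. Write $a=\sigma_x-v_x$, $d=\sigma_y-v_y$ and $\Delta=ad-\tau^2$. Then $G=-E^{-1}$ gives
\[
 G_{xx}=-d/\Delta,\qquad G_{xy}=\tau/\Delta .
\]
Hence $G_{xy}/G_{xx}=-\tau/d=\tau/(v_y-\sigma_y)$. For the second identity, I will compare the two expressions for $G_{yy}$. The two-site formula gives $G_{yy}=-a/\Delta$, while \eqref{eq:one-site} gives $G_{yy}=1/(v_y-\Sigma_y)$. Solving,
\[
 \Sigma_y-v_y=\Delta/a=d-\tau^2/a ,
\]
which yields $\Sigma_y=\sigma_y+\tau^2/(v_x-\sigma_x)$.

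The only genuine obstacle is division by zero. I need $d\neq0$, $a\neq0$ and $G_{xx}\neq0$. Each of $G_{xx}$, $\sigma_u-v_u$ and $\Sigma_y-v_y$ is, up to sign, a Herglotz function or a reciprocal of one. For $\ImPart z>0$ their imaginary parts are strictly positive; for example $\ImPart\sigma_u\ge\ImPart z>0$. Hence none of them vanishes, and the identities hold for all $\ImPart z>0$.
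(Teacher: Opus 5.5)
Your proposal is correct and follows essentially the same route as the paper: solve the Grushin system by block elimination to read off $E_{F,-+}$ and $G_{FF}=-E_{F,-+}^{-1}$, specialize to one and two retained sites, and invert the $2\times2$ matrix. The only differences are cosmetic. You get the second pair identity by comparing two formulas for $G_{yy}$ rather than by eliminating the $x$-component directly, which is the same Schur-complement computation. Your explicit checks of $\tau_{yx}=\tau_{xy}$ and of the nonvanishing denominators fill in details the paper leaves implicit.
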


\begin{proof}
We use the standard Grushin inversion; see Sj\"ostrand--Zworski \cite[equation~(1.1)]{SjostrandZworski}.
For $\mathcal P_F(z)(u,u_-)=(f,f_+)$, the constraint fixes $u_F=f_+$, and
\[
 \begin{aligned}
 u_C&=(H_C-z)^{-1}(f_C-A_{CF}f_+),\\
 u_-&=f_F-A_{FC}(H_C-z)^{-1}f_C
      +\bigl[zI_F-H_F+A_{FC}(H_C-z)^{-1}A_{CF}\bigr]f_+.
 \end{aligned}
\]
The coefficient of $f_+$ is $E_{F,-+}$, and Grushin inversion gives
$G_{FF}=-E_{F,-+}^{-1}$, proving \eqref{eq:grushin-effective}.
Taking $F=\{x\}$ and $F=\{x,y\}$ gives \eqref{eq:one-site} and
\eqref{eq:two-site} from \eqref{eq:one-site-self-energy}--\eqref{eq:tau}.
Inverting the two-site matrix gives the first identity in
\eqref{eq:pair-identities}; eliminating its $x$-component gives
\[
 \Sigma_y-v_y=\sigma_y^{(xy)}-v_y
             -\frac{\tau_{xy}^2}{\sigma_x^{(xy)}-v_x},
\]
which is the second identity.
\end{proof}

The first identity in \eqref{eq:pair-identities} transmits a resonance at
$y$ to the root through $\tau_{xy}$; the second describes the change in the
one-site effective Hamiltonian when $x$ is restored.

We write
\begin{equation}\label{eq:responses}
 W_x(z):=\ImPart G_{xx}(z),\qquad Y_x(z):=\ImPart\Sigma_x(z).
\end{equation}
For existing boundary values we suppress $+i0$ and write
\[
 G_{xy}(E):=G_{xy}(E+i0),\quad
 \Sigma_x(E):=\Sigma_x(E+i0),\quad
 \tau_{xy}(E):=\tau_{xy}(E+i0),
\]
and similarly for $W_x(E)$ and $Y_x(E)$.

\begin{lemma}\label{lem:ward-identities}
We let $z=E+i\eta$ with $\eta>0$.  Then, for every vertex $x$,
\begin{equation}\label{eq:gamma-eta}
 \gamma_x(E,\eta):=\sum_y|G_{xy}(z)|^2
 =\frac{\ImPart G_{xx}(z)}{\eta},
\end{equation}
and
\begin{equation}\label{eq:kappa-definition}
 \kappa_x(E,\eta):=
 \frac{\gamma_x(E,\eta)}{|G_{xx}(z)|^2}
 =\sum_y\left|\frac{G_{xy}(z)}{G_{xx}(z)}\right|^2.
\end{equation}
Moreover,
\begin{equation}\label{eq:kappa-eta}
 \kappa_x(E,\eta)
 =\frac{\ImPart\Sigma_x(z)}{\eta}
 =1+\bigl\|(H_{X\setminus\{x\}}-z)^{-1}b_x\bigr\|^2,
\end{equation}
and hence, with \eref{eq:responses},
\begin{equation}\label{eq:ward}
 Y_x=\frac{W_x}{|G_{xx}|^2}=\eta\kappa_x(E,\eta).
\end{equation}
\end{lemma}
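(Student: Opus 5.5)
The plan is to obtain all three identities from the resolvent identity and the one-site formula \eqref{eq:one-site} of \Cref{lem:effective-resolvents}. For \eqref{eq:gamma-eta}, write $R=R(z)$. The first resolvent identity gives $R-R^*=(z-\bar z)RR^*$, and since $H$ is self-adjoint, $R^*=R(\bar z)$ commutes with $R$. Taking the $(x,x)$ entry yields
\[
 2i\,\ImPart G_{xx}(z)=2i\eta\,\|R^*\delta_x\|^2 .
\]
Because $H$ has real matrix entries, $G_{yx}=G_{xy}$, so $\|R^*\delta_x\|^2=\sum_y|G_{xy}(z)|^2$, which is \eqref{eq:gamma-eta}. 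The second equality in \eqref{eq:kappa-definition} is the definition of $\kappa_x$ rewritten termwise, using $G_{xx}\neq0$ (it has positive imaginary part).

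For \eqref{eq:kappa-eta}, we use $G_{xx}=1/(v_x-\Sigma_x)$. Since $v_x$ is real,
\[
 \ImPart G_{xx}=\frac{\ImPart\Sigma_x}{|v_x-\Sigma_x|^2}=|G_{xx}|^2\,\ImPart\Sigma_x .
\]
Dividing by $\eta|G_{xx}|^2$ and invoking \eqref{eq:gamma-eta} gives $\kappa_x=\ImPart\Sigma_x/\eta$. For the last expression, write $C=X\setminus\{x\}$ and $R_C=(H_C-z)^{-1}$. The same resolvent identity on $\ell^2(C)$ gives
\[
 \ImPart\langle b_x,R_Cb_x\rangle=\eta\,\|R_C b_x\|^2 .
\]
Together with $\ImPart z=\eta$, the definition \eqref{eq:one-site-self-energy} then gives $\ImPart\Sigma_x=\eta\bigl(1+\|R_Cb_x\|^2\bigr)$.

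Alternatively, one can check this expression directly. The Schur complement shows that $G_{Cx}=-R_Cb_x\,G_{xx}$, so $\sum_{y\ne x}|G_{xy}/G_{xx}|^2=\|R_Cb_x\|^2$, and the $y=x$ term contributes $1$; this is a useful consistency check. Finally, \eqref{eq:ward} is a restatement: $Y_x=\ImPart\Sigma_x=W_x/|G_{xx}|^2=\eta\kappa_x$.

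There is no real obstacle here. The only points needing care are the symmetry $G_{xy}=G_{yx}$, which holds because the matrix entries of $H$ are real, and the fact that $b_x\in\ell^2(C)$. The latter holds because $b_x$ is a finite sum of basis vectors, the graph having bounded degree.
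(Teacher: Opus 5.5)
Your proof is correct and follows essentially the same route as the paper: the Ward identity $\ImPart\langle f,(K-z)^{-1}f\rangle=\eta\|(K-z)^{-1}f\|^2$ is applied once with $K=H$, $f=\delta_x$ and once with $K=H_{X\setminus\{x\}}$, $f=b_x$, and then combined with $\ImPart G_{xx}=|G_{xx}|^2\ImPart\Sigma_x$ from the one-site formula. The symmetry $G_{xy}=G_{yx}$ you invoke is harmless but not needed, since $\langle\delta_y,R^*\delta_x\rangle=\overline{G_{xy}}$ already gives $\|R^*\delta_x\|^2=\sum_y|G_{xy}|^2$.
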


\begin{proof}
For $R_K(z)=(K-z)^{-1}$ with $K$ selfadjoint, the resolvent identity gives
\[
 R_K(z)-R_K(z)^*=(z-\bar z)R_K(z)^*R_K(z)=2i\eta R_K(z)^*R_K(z).
\]
Taking its matrix element at $f$ gives
\begin{equation}\label{eq:Ward}
 \ImPart\langle f,(K-z)^{-1}f\rangle
 =\eta\|(K-z)^{-1}f\|^2.
\end{equation}
We use the identity in this form; see, for example, \cite{Teschl}.  We apply it first with $K=H$ and
$f=\delta_x$ to obtain \eqref{eq:gamma-eta} and
\eqref{eq:kappa-definition}.  We next use \eqref{eq:one-site} and apply the
same identity with $K=H_{X\setminus\{x\}}$ and $f=b_x$ to get
\[
 \ImPart\Sigma_x(z)
 =\eta\Bigl(1+\|(H_{X\setminus\{x\}}-z)^{-1}b_x\|^2\Bigr).
\]
Together with $\ImPart G_{xx}=|G_{xx}|^2\ImPart\Sigma_x$, this gives
\eqref{eq:kappa-eta} and \eqref{eq:ward}.
\end{proof}

\subsubsection{The logarithmic tunneling estimate}\label{sec:log}

Paths on a graph with loops interfere. This distinguishes our setting from tree-like structures. We control the energy average of
$\log|\tau_{xy}|$ by mapping the spectral complement to a disk. The leading
Taylor coefficient counts shortest paths, all with the same sign.
Jensen's formula turns this coefficient into the logarithmic bound.
For the analytic background, see \cite{Duren,Garnett}; for the relation
between interval capacity, conformal maps, and equilibrium measure,
see \cite{SaffTotik,SaffSurvey}. We fix $B>2$ and let
$\D=\{w\in\C:|w|<1\}$. We use the map
\begin{equation}\label{eq:joukowski-map}
 z(w)=\frac B2\,(w+w^{-1}),\qquad 0<|w|<1,
\end{equation}
from the punctured disk onto $\C\setminus[-B,B]$. On the boundary,
$z(e^{i\theta})=B\cos\theta$, so normalized circle measure becomes
the equilibrium measure $\nu_B$ in \eref{eq:arcsine}. We first show
that the transformed tunneling amplitude has enough Hardy regularity
to pass Jensen's formula to the boundary. For a finite measure $\mu$,
write $\|\mu\|_{\TV}=|\mu|(\R)$.

We write $dm(\theta)=d\theta/(2\pi)$ for normalized circle measure. For $p>0$,
the Hardy class $H^p(\D)$ consists of analytic functions with
\begin{equation*}
 \|f\|_{H^p}^p:=\sup_{0<r<1}\int_0^{2\pi}|f(re^{i\theta})|^p\,dm(\theta)<\infty.
\end{equation*}
The boundary-value theorem \cite[Sections~2.1--2.3]{Duren} gives
\[
 f^*(e^{i\theta})=\lim_{r\uparrow1}f(re^{i\theta})\quad\text{a.e.},
 \qquad \|f\|_{H^p}^p=\int_0^{2\pi}|f^*(e^{i\theta})|^p\,dm(\theta).
\]
The same boundary limit holds along paths satisfying
$|w-e^{i\theta}|\le C(1-|w|)$.
We also write $f(e^{i\theta})=f^*(e^{i\theta})$. The following estimate
places the transformed Cauchy integrals in $H^{1/4}$.

\begin{lemma}\label{lem:hardy}
We let $B>0$, let $\mu$ be a finite signed or complex measure supported on
$[-B,B]$, let $c=B/2$, and let $z(w)=c(w+w^{-1})$ as in
\eref{eq:joukowski-map}. For fixed $d\in\C$, the analytic function
\[
 F(w)=d+\int\frac{d\mu(t)}{z(w)-t},
\]
extended at $w=0$, belongs to $H^{1/4}(\D)$. In detail,
\[
 \sup_{0<r<1}
 \int_0^{2\pi}|F(re^{i\theta})|^{1/4}\,\frac{d\theta}{2\pi}
 \le C\!\left(B,|d|,\|\mu\|_{\TV}\right).
\]
\end{lemma}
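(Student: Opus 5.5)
The plan is to reduce the estimate to a pointwise bound on $|F(w)|$ by a singular kernel, integrated against $|\mu|$, and then to use the subadditivity of $t\mapsto t^{1/4}$.

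\textbf{Step 1: the kernel.} For $t\in[-B,B]$, factor
\[
 z(w)-t=\frac{c}{w}\bigl(w^2-(t/c)w+1\bigr)=\frac{c}{w}(w-\alpha_t)(w-\bar\alpha_t),
\]
where $\alpha_t=e^{i\theta_t}$ with $t=B\cos\theta_t$. The two roots lie on the unit circle because $|t/c|\le2$. Hence
\[
 \frac1{|z(w)-t|}=\frac{|w|}{c\,|w-\alpha_t|\,|w-\bar\alpha_t|}
 \le \frac1{2c}\Bigl(\frac1{|w-\alpha_t|^2}+\frac1{|w-\bar\alpha_t|^2}\Bigr),
\]
using $|w|<1$ and AM--GM. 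The product form alone would also do. This gives
\[
 |F(w)|\le |d|+\int K(w,t)\,d|\mu|(t),\qquad K(w,t)=\frac{1}{c\,|w-\alpha_t||w-\bar\alpha_t|}.
\]

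\textbf{Step 2: passing the power $1/4$ inside the integral.} Since $t\mapsto t^{1/4}$ is concave, we cannot push it through the integral directly. I would use Jensen's inequality with respect to the probability measure $|\mu|/\|\mu\|_{\TV}$:
\[
 \Bigl(\int K\,d|\mu|\Bigr)^{1/4}
 =\|\mu\|_{\TV}^{1/4}\Bigl(\int K\,\tfrac{d|\mu|}{\|\mu\|_{\TV}}\Bigr)^{1/4}.
\]
Concavity goes the wrong way for an upper bound here, so instead I would use the exponent in the other direction: raise to the power $1/4<1$ after applying H\"older in $\theta$ first. Concretely, define $g(w)=\int K(w,t)\,d|\mu|(t)$. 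By Fubini and Minkowski's integral inequality in $L^{1/2}$, which is a quasi-norm (alternatively use $L^{p}$ for some $p<1/2$ via the weak-$L^1$ bound below), it suffices to show that for each fixed $t$
\[
 \int_0^{2\pi} K(re^{i\theta},t)^{1/2}\,dm(\theta)\le C_B
\]
uniformly in $r$ and $t$. Then Cauchy--Schwarz/H\"older gives
\[
 \int g^{1/4}\,dm\le\Bigl(\int g^{1/2}\,dm\Bigr)^{1/2}.
\]

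A cleaner route is to show that $g(re^{i\theta})$ belongs to weak-$L^{1}$ uniformly. Since $K(w,t)\lesssim |w-\alpha_t|^{-1}|w-\bar\alpha_t|^{-1}$, and $K$ is only $|\theta-\theta_t|^{-2}$ when $\alpha_t$ is near $\pm1$, where the two roots merge, we have $K^{1/2}\lesssim |e^{i\theta}-\alpha_t|^{-1/2}|e^{i\theta}-\bar\alpha_t|^{-1/2}$. This is integrable in $\theta$ uniformly in $t$ and $r$, because the worst case $|\theta-\theta_0|^{-1}$ occurs only at a double root. That is borderline non-integrable, which is exactly why the exponent is taken below $1/2$. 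So I would use $K^{1/4}\lesssim|\cdot|^{-1/4}|\cdot|^{-1/4}$, which is integrable uniformly, with bound $\le C_B$. For $g^{1/4}$, I would then use the quasi-triangle bound $\bigl(\int K\,d|\mu|\bigr)^{1/4}$. This is again not subadditive in integral form, so the rigorous replacement is the Kolmogorov/weak-type inequality: $\int K(\cdot,t)\,d|\mu|(t)$ has weak-$L^{1/2}$ quasi-norm $\le C\|\mu\|_{\TV}$. Since $L^{1/2,\infty}$ is normable up to constants (it is a quasi-Banach lattice with a $p$-norm for $p<1/2$, by Kalton's theorem), this reduces to the single-kernel case, and weak-$L^{1/2,\infty}\subset L^{1/4}$ on a probability space.

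\textbf{Step 3: conclusion and main obstacle.} Combining $|F|^{1/4}\le|d|^{1/4}+g^{1/4}$ with Step 2 bounds the $H^{1/4}$ integral uniformly in $r$ by $C(B,|d|,\|\mu\|_{\TV})$. Analyticity on $\D$ including $w=0$ is clear, since $z(w)\to\infty$ and the integral tends to $0$. The main obstacle is Step 2: handling the superposition over $t$ despite $t\mapsto t^{1/4}$ not being integrally subadditive. I would first try the weak-$L^{1/2}$ lattice argument. If that fails, I would fall back on the classical fact that a Cauchy transform of a finite measure lies in $H^p$ for $p<1$ (Kolmogorov). Here $F(w)$ equals $w$ times the sum of two Cauchy-type transforms in $w$ of the pushforward of $\mu$ to the circle, obtained by partial fractions
\[
 \frac{w}{(w-\alpha)(w-\bar\alpha)}=\frac{1}{\alpha-\bar\alpha}\Bigl(\frac{\alpha}{w-\alpha}-\frac{\bar\alpha}{w-\bar\alpha}\Bigr).
\]
Away from $t=\pm B$ this gives $H^p$ for all $p<1$. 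Near the endpoints the factor $(\alpha-\bar\alpha)^{-1}$ blows up, and there I would instead use derivative-of-Cauchy-kernel bounds, which lie in $H^p$ for $p<1/2$. This explains the choice $1/4$.
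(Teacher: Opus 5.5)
\textbf{There is a genuine gap, and it starts in Step 1.} Once you move absolute values inside the integral, the majorant $g(w)=\int K(w,t)\,d|\mu|(t)$ is in general not bounded in $L^{1/4}(dm)$ on the circles $|w|=r$, so no version of Step 2 can close.

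Take $\mu=dt$ on $[-B,B]$. Since $z+B=\frac{c}{w}(1+w)^2$ and $z-B=\frac{c}{w}(1-w)^2$, you get $F-d=\log\frac{z+B}{z-B}=2\log\frac{1+w}{1-w}$, which lies in every $H^p$ with $p<\infty$. But for each $\theta\notin\{0,\pi\}$, Fatou in $t$ gives $g(re^{i\theta})\to\int_{-B}^B|B\cos\theta-t|^{-1}\,dt=+\infty$ as $r\uparrow1$; quantitatively, $g(re^{i\theta})=2\log\frac1{1-r}+O(1)$. Fatou in $\theta$ then gives $\int g(re^{i\theta})^{1/4}\,dm(\theta)\to\infty$.

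Your single-kernel estimates are correct: $K(\cdot,t)$ is uniformly in weak $L^{1/2}$, hence uniformly in $L^{1/4}$. They do not survive continuous superposition when the exponent is below one, for three reasons:
\begin{itemize}
\item Minkowski's integral inequality reverses in $L^{1/2}$ for nonnegative integrands.
\item Weak $L^{1/2}$ is not normable.
\item A $p$-norm with $p<1$ only gives $\|\sum_jf_j\|^p\le\sum_j\|f_j\|^p$. After cutting $|\mu|$ into $n$ pieces of mass $\|\mu\|_{\TV}/n$, this yields a bound growing like $n^{1/p-1}$.
\end{itemize}
The lemma genuinely depends on cancellation in the complex kernel.

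Your fallback points the right way, since Kolmogorov-type $H^p$ bounds for Cauchy transforms use exactly this cancellation, through positivity of a Herglotz function. However, your partial-fraction normalization is the wrong one. The factor $(\alpha_t-\bar\alpha_t)^{-1}=(2i\sin\theta_t)^{-1}$ makes the transported measure $|\mu|(dt)/|\sin\theta_t|$ infinite for some finite $\mu$, for instance density $(B-t)^{-1/2}$ near $B$, and it is undefined for atoms at $\pm B$. Treating the endpoint part by ``derivative-of-Cauchy-kernel bounds'' is again a superposition of single-kernel $H^p$ estimates, which is exactly what fails.

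The missing idea, used in the paper, is to pull the degeneracy out as a fixed function of $w$. For $t=B\cos\vartheta$,
\[
 \frac{1}{z(w)-t}=\frac{w}{c(1-w^2)}\cdot\frac{1-w^2}{1-2(t/B)w+w^2},
 \qquad
 \frac{1-w^2}{1-2(t/B)w+w^2}=\frac12\Bigl(\frac{1+we^{i\vartheta}}{1-we^{i\vartheta}}+\frac{1+we^{-i\vartheta}}{1-we^{-i\vartheta}}\Bigr).
\]
For $\mu\ge0$, the function $K_\mu(w)=\int\frac{1-w^2}{1-2(t/B)w+w^2}\,d\mu(t)$ therefore has positive real part. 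Hence $|K_\mu|^{1/2}\le\sqrt2\,\RePart K_\mu^{1/2}$, and the mean-value property gives $\int|K_\mu(re^{i\theta})|^{1/2}\,dm\le\sqrt{2\mu(\R)}$, uniformly in $r$ and in where $\mu$ sits.

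To finish:
\begin{itemize}
\item Split a complex $\mu$ into four positive parts; this is a finite sum, so subadditivity of $s\mapsto s^{1/2}$ is harmless.
\item Use $w/(1-w^2)\in H^{1/2}$.
\item Apply Cauchy--Schwarz to get the $H^{1/4}$ bound.
\end{itemize}
The exponent $1/4$ comes from multiplying two $H^{1/2}$ functions, not from the integrability of single kernels near double roots.
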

\begin{proof}
We set
\begin{equation}\label{eq:hardy-kernel}
 K_\mu(w):=\int\frac{1-w^2}{1-2(t/B)w+w^2}\,d\mu(t),
 \qquad F(w)-d=\frac{w}{c(1-w^2)}K_\mu(w).
\end{equation}
For $t/B=\cos\vartheta$ and $\zeta\in\D$,
\[
 \frac{1-w^2}{1-2(t/B)w+w^2}
 =\frac12\left(\frac{1+we^{i\vartheta}}{1-we^{i\vartheta}}
              +\frac{1+we^{-i\vartheta}}{1-we^{-i\vartheta}}\right),
 \qquad \RePart\frac{1+\zeta}{1-\zeta}
 =\frac{1-|\zeta|^2}{|1-\zeta|^2}>0.
\]
Thus $\RePart K_\mu>0$ when $\mu\ge0$ is nonzero. Its analytic principal
square root satisfies $|K_\mu|^{1/2}\le\sqrt2\RePart K_\mu^{1/2}$, so the
mean-value property for harmonic functions gives
\[
 \int_0^{2\pi}|K_\mu(re^{i\theta})|^{1/2}\,dm(\theta)
 \le\sqrt2\RePart K_\mu(0)^{1/2}=\sqrt{2\mu(\R)}.
\]
Decomposing a complex measure into four positive measures gives the same
bound up to an absolute constant with $\mu(\R)$ replaced by $\|\mu\|_{\TV}$.
Since $w/(1-w^2)\in H^{1/2}$, \eqref{eq:hardy-kernel} and
Cauchy--Schwarz yield
\[
 \begin{aligned}
 \int|F(re^{i\theta})-d|^{1/4}\,dm(\theta)
 &\le c^{-1/4}
 \left(\int\left|\frac{re^{i\theta}}{1-r^2e^{2i\theta}}\right|^{1/2}
 dm(\theta)\right)^{1/2}\\
 &\quad\times\left(\int|K_\mu(re^{i\theta})|^{1/2}\,dm(\theta)\right)^{1/2}
 \le C_B\|\mu\|_{\TV}^{1/4}.
 \end{aligned}
\]
Adding $|d|^{1/4}$ proves the claim.
\end{proof}

\begin{lemma}\label{lem:boundary-jensen}
We let $p>0$ and $f\in H^p(\D)$ with $f(0)\ne0$. Then
\[
 \log|f(e^{i\theta})|\in L^1(d\theta),\qquad
 \int_0^{2\pi}\log|f(e^{i\theta})|\,\frac{d\theta}{2\pi}
 \ge \log|f(0)|.
\]
In particular, every nonzero $H^p$ function has nonzero boundary values almost everywhere.
\end{lemma}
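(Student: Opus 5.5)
The plan is to prove the lemma in two steps. First, apply Jensen's inequality on circles of radius $r<1$. Second, pass to the boundary using the $H^p$ boundary theory quoted before \Cref{lem:hardy}.

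For the first step, fix $0<r<1$. Since $f$ is analytic on the closed disk of radius $r$ with $f(0)\ne0$, Jensen's formula gives
\[
 \int_0^{2\pi}\log|f(re^{i\theta})|\,dm(\theta)
 =\log|f(0)|+\sum_{|a_k|<r}\log\frac{r}{|a_k|}
 \ge\log|f(0)|,
\]
where the $a_k$ are the zeros of $f$. If zeros lie on the circle $|w|=r$, avoid that radius, since such radii form a countable set.

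For the second step, we need integrability of $\log|f^*|$ and the passage to the limit. Write $\log=\log^+-\log^-$.

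For the positive part, use $\log^+ t\le p^{-1}t^p$. This gives
\[
 \int\log^+|f(re^{i\theta})|\,dm\le p^{-1}\|f\|_{H^p}^p,
\]
uniformly in $r$. The same bound holds for $f^*$, because $\|f\|_{H^p}^p=\int|f^*|^p\,dm$. Hence $\log^+|f^*|\in L^1$.

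Moreover, $|f(re^{i\theta})|^p\to|f^*(e^{i\theta})|^p$ almost everywhere, and the integrals converge to $\|f\|_{H^p}^p$. This holds because the integrals are nondecreasing in $r$ (subharmonicity of $|f|^p$) and have supremum equal to the boundary integral. By the Riesz/Vitali-type lemma (almost-everywhere convergence plus convergence of norms), $|f_r|^p\to|f^*|^p$ in $L^1(dm)$. The family $|f_r|^p$ is therefore uniformly integrable. Since $\log^+t\le p^{-1}t^p$, the family $\log^+|f_r|$ is uniformly integrable too, and Vitali's theorem gives
\[
 \int\log^+|f_r|\,dm\to\int\log^+|f^*|\,dm .
\]

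For the negative part, apply Fatou's lemma to $\log^-|f_r|\ge0$, which converges almost everywhere to $\log^-|f^*|$ (this is valid in $[0,\infty]$). It gives
\[
 \int\log^-|f^*|\,dm\le\liminf_{r\uparrow1}\int\log^-|f_r|\,dm .
\]
Combining the two parts with the Jensen bound yields
\[
 \int\log^-|f^*|\,dm
 \le\liminf_{r\uparrow1}\Bigl(\int\log^+|f_r|\,dm-\log|f(0)|\Bigr)
 =\int\log^+|f^*|\,dm-\log|f(0)|<\infty .
\]
This shows $\log|f^*|\in L^1$. Rearranging the same inequality gives $\int\log|f^*|\,dm\ge\log|f(0)|$.

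For the final claim, let $g$ be a nonzero $H^p$ function. Write $g(w)=w^m h(w)$ with $h(0)\ne0$. Then $h\in H^p$, because $|h|=|g|/r^m$ on $|w|=r$ and the bound is uniform for $r\ge1/2$, while for smaller $r$ the integrals are controlled by the maximum principle on the larger circle. Since $|h^*|=|g^*|$ on the boundary, we have $\log|g^*|\in L^1$. In particular $g^*\ne0$ almost everywhere.

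The main obstacle is the uniform integrability of $\log^+|f_r|$. It is settled by the $L^1$ convergence $|f_r|^p\to|f^*|^p$, which I would quote from Duren (Theorem 2.6) rather than reprove.
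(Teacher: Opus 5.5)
Your proof is correct, but it takes a different route from the paper's.

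The paper cites the canonical factorization $f=IO$ (Duren, Theorem~2.8), which already gives $\log|f^*|\in L^1(dm)$. It then evaluates the Poisson representation of $\log|O|$ at the origin:
\[
 \int\log|f^*|\,dm=\log|O(0)|\ge\log|I(0)O(0)|=\log|f(0)|,
\]
using $|I(0)|\le1$.

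You instead apply Jensen's inequality on interior circles and pass to the boundary by hand:
\begin{itemize}
\item Fatou's lemma handles $\log^-$.
\item Vitali's theorem handles $\log^+$. The uniform integrability comes from $\log^+t\le p^{-1}t^p$ together with the mean convergence $|f_r|^p\to|f^*|^p$ in $L^1$.
\end{itemize}
Every step checks out, and your reduction $g=w^mh$ for the final claim matches the paper's.

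What your route buys is economy of input and transparency. It uses only the $L^p$ mean-convergence theorem, not the full inner--outer factorization. It also shows exactly where the $H^p$ hypothesis enters. For the Nevanlinna class, $\log^+|f_r|$ need not be uniformly integrable, and the inequality then fails. For example, take $f=1/S$ with $S(w)=\exp\bigl(-\tfrac{1+w}{1-w}\bigr)$: then $|f^*|=1$ a.e., but $\log|f(0)|=1$.

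The paper's route is shorter once the factorization is cited. It also explains the deficit structurally: the gap $\int\log|f^*|\,dm-\log|f(0)|=-\log|I(0)|\ge0$ comes from the Blaschke zeros and the singular inner factor.
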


\begin{proof}
The canonical factorization \cite[Theorem~2.8]{Duren} gives $f=IO$ and
$\log|f^*|\in L^1(dm)$. Here $I$ is \emph{inner}, meaning
\[
 |I(w)|\le1\quad(w\in\D),\qquad |I^*(e^{i\theta})|=1\quad\text{a.e.},
\]
and $O$ is \emph{outer}: it is zero-free and its logarithm is recovered
from its boundary modulus by
\begin{equation}\label{eq:outer-poisson}
 \log|O(w)|=\int_0^{2\pi}\frac{1-|w|^2}{|e^{i\theta}-w|^2}
             \log|f^*(e^{i\theta})|\,dm(\theta).
\end{equation}
At $w=0$, \eqref{eq:outer-poisson} gives
\[
 \int\log|f^*|\,dm=\log|O(0)|
 \ge\log|I(0)O(0)|=\log|f(0)|.
\]
For a nonzero function with a zero of order $k$ at the origin,
$f(w)=w^kg(w)$, where $g\in H^p$ and $g(0)\ne0$. Since
$|f^*|=|g^*|$, the same logarithmic integrability proves nonvanishing a.e.
\end{proof}

\begin{proposition}\label{prop:log}
We let $H=A+V$ on a connected graph of degree at most $Q$, where $A$ has
unit edge weights, $V$ is real, and $\norm H\le B$ for some $B>2$.
If $r=d(x,y)$ and $N_{xy}$ is the number of shortest paths between
$x$ and $y$, then there is a constant $C_{B,Q}<\infty$ such that
\begin{align}
 \int\log|\tau_{xy}(E)|\,d\nu_B(E)
 &\ge\log N_{xy}-(r-1)\log(B/2),\label{eq:log-lower}\\
 \int\log^+|\tau_{xy}(E)|\,d\nu_B(E)
 &\le C_{B,Q}.\label{eq:log-positive}
\end{align}
Thus,
\begin{equation}\label{eq:log-cost}
 \int-\log\min\{|\tau_{xy}(E)|,1\}\,d\nu_B(E)
 \le(r-1)\log(B/2)-\log N_{xy}+C_{B,Q}.
\end{equation}
The constant is uniform in the graph, potential, and vertices.
\end{proposition}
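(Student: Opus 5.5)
Compose $\tau_{xy}$ with the Joukowski map \eqref{eq:joukowski-map} and set $f(w)=\tau_{xy}(z(w))$ for $0<|w|<1$. Since $\|H_{C_{xy}}\|\le\|H\|\le B$, the spectral theorem writes $\langle b_x,(H_{C_{xy}}-z)^{-1}b_y\rangle=\int d\mu(t)/(t-z)$. Here $\mu$ is a complex measure supported in $[-B,B]$ with $\|\mu\|_{\TV}\le\|b_x\|\|b_y\|\le Q$. Thus $f(w)=-A_{xy}-\int d\mu(t)/(z(w)-t)$ has exactly the form treated in \Cref{lem:hardy}, with $d=-A_{xy}\in\{0,-1\}$. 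Hence $f$ extends analytically to $w=0$ and lies in $H^{1/4}(\D)$, with norm bounded by $C(B,Q)$.

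\emph{Step 1: Taylor coefficients.} As $|z|\to\infty$, expand $(H_C-z)^{-1}=-\sum_{k\ge0}z^{-k-1}H_C^k$, where $C=C_{xy}$. Then
\[
 \tau_{xy}(z)=-A_{xy}-\sum_{k\ge0}z^{-k-1}\langle b_x,H_C^kb_y\rangle .
\]
The quantity $\langle b_x,H_C^k b_y\rangle$ is a weighted sum over walks $x\to u_0\to\cdots\to u_k\to y$ whose interior vertices lie in $C$. Each step either follows an edge (weight $1$) or stays put (weight $v_u$). Any such walk uses at least $r$ edges in total, so its coefficient vanishes for $k+1<r-1$.

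For $k=r-2$, all $r$ steps must be edges. The contributing walks are exactly the shortest paths, and their interior vertices automatically avoid $x$ and $y$. So the coefficient equals $N_{xy}$ with no potential factors, and all terms carry the same sign.

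The degenerate case $r=1$ needs separate care. There the leading term is $-A_{xy}=-1$, with $N_{xy}=1$. For $r\ge2$ we have $A_{xy}=0$, so $\tau_{xy}(z)=-N_{xy}z^{-(r-1)}+O(z^{-r})$.

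Since $z(w)^{-1}=(2/B)w(1+w^2)^{-1}$, it follows that $f(w)=\mp N_{xy}(2/B)^{r-1}w^{r-1}(1+O(w))$. Write $f(w)=w^{r-1}g(w)$. Then $g\in H^{1/4}$ and $|g(0)|=N_{xy}(B/2)^{-(r-1)}$.

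\emph{Step 2: Jensen.} Apply \Cref{lem:boundary-jensen} to $g$. On the boundary $|f^*|=|g^*|$. Moreover, $f^*(e^{i\theta})=\tau_{xy}(B\cos\theta+i0)$ for a.e. $\theta$. The reason is that the radial approach $w=re^{i\theta}$ maps to an approach to $B\cos\theta$ from the upper half-plane (for $\theta\in(-\pi,0)$; use the symmetry $\tau(\bar z)=\overline{\tau(z)}$ for the other half). This approach is nontangential for $\theta$ away from $0,\pi$, so the boundary-value theorems identify the two limits.

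Pushing $dm$ forward under $\theta\mapsto B\cos\theta$ gives $\nu_B$. Hence
\[
 \int\log|\tau_{xy}|\,d\nu_B\ge\log|g(0)|=\log N_{xy}-(r-1)\log(B/2),
\]
which is \eqref{eq:log-lower}.

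\emph{Step 3: upper bound on $\log^+$.} Use $\log^+ s\le 4s^{1/4}$. Then
\[
 \int\log^+|\tau_{xy}|\,d\nu_B\le4\|f\|_{H^{1/4}}^{1/4}\le C_{B,Q},
\]
with the constant supplied by \Cref{lem:hardy}. This bound depends only on $B$, $|d|\le1$ and $\|\mu\|_{\TV}\le Q$, so it is uniform in the graph, the potential and the vertices.

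Finally, \eqref{eq:log-cost} follows from the identity $-\log\min\{s,1\}=\log^+ s-\log s$ combined with the two bounds.

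\emph{Main obstacle.} The delicate step is Step 1, the exact identification of the leading coefficient. One must check that no lower-order walks survive: stationary steps cost one power of $z^{-1}$ without advancing the walk, so they cannot produce a shorter effective path. One must also check that removing $x$ and $y$ deletes no shortest path, which holds because interior vertices of a geodesic are distinct from its endpoints.

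A secondary but real point is the boundary identification in Step 2. It requires justifying that $f^*$ agrees with the upper-half-plane boundary values of $\tau_{xy}$ almost everywhere.
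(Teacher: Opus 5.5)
Your proposal is correct and follows essentially the same route as the paper. Both use the Joukowski substitution with \Cref{lem:hardy} for the uniform $H^{1/4}$ bound, which gives the $\log^+$ estimate via $\log^+s\le4s^{1/4}$. Both then identify the leading term $-N_{xy}z^{-(r-1)}$ from the Neumann series, apply \Cref{lem:boundary-jensen} to $w^{-(r-1)}\tau_{xy}(z(w))$, and use conjugate symmetry on the two half-circles to convert circle averages into $\nu_B$-averages. Your remarks on the nontangential radial approach and on $g\in H^{1/4}$ spell out points the paper leaves implicit.
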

\begin{proof}
We let $C=X\setminus\{x,y\}$ and
\[
 \mu_{xy}(J)=\langle b_x^C,\one_J(H_C)b_y^C\rangle.
\]
By Cauchy--Schwarz,
\[
 \|\mu_{xy}\|_{\TV}\le\|b_x^C\|\,\|b_y^C\|\le Q,
 \qquad \supp\mu_{xy}\subset[-B,B],
\]
and \eref{eq:tau} gives
\[
 \tau_{xy}(z)=-A_{xy}-\int_{[-B,B]}\frac{d\mu_{xy}(t)}{z-t}.
\]
With $c=B/2$ and
$\mathcal T(w)=\tau_{xy}(c(w+w^{-1}))$, \Cref{lem:hardy} gives a
uniform $H^{1/4}$ bound.  Hence Fatou's lemma and
$\log^+s\le4s^{1/4}$ yield
\[
 \int_0^{2\pi}\log^+|\mathcal T(e^{i\theta})|
 \,\frac{d\theta}{2\pi}\le C_{B,Q}.
\]

For $|z|>B$, since $\|H_C\|\le B$,
\[
 (H_C-z)^{-1}
 =
 -\frac1z\sum_{n=0}^\infty \left(\frac{H_C}{z}\right)^n .
\]
Hence, by \eqref{eq:tau},
\begin{equation}\label{eq:tau-neumann}
 \tau_{xy}(z)
 =
 -A_{xy}
 -
 \sum_{n=0}^\infty
 z^{-(n+1)}
 \left\langle b_x^C,H_C^n b_y^C\right\rangle .
\end{equation}
We let $r=d(x,y)$.  If $r\ge2$, then $A_{xy}=0$.  Since
$b_x^C$ and $b_y^C$ are supported on the neighbors of $x$ and $y$,
respectively,
\[
 \left\langle b_x^C,H_C^n b_y^C\right\rangle=0,
 \qquad 0\le n<r-2.
\]
Indeed, each off-diagonal factor in $H_C=A_C+V_C$ traverses one edge,
whereas a diagonal factor from $V_C$ stays at the same vertex.
For $n=r-2$, every contributing factor follows a shortest path.
Consequently the surviving terms are in one-to-one correspondence
with the shortest paths
\[
 x=u_0\sim u_1\sim\cdots\sim u_{r-1}\sim u_r=y.
\]
Since the edge weights are one,
\[
 \left\langle b_x^C,H_C^{r-2}b_y^C\right\rangle=N_{xy}.
\]
Substitution into \eqref{eq:tau-neumann} gives
\begin{equation*}
 \tau_{xy}(z)
 =
 -N_{xy}z^{-(r-1)}+O(z^{-r}).
\end{equation*}
For $r=1$, the same formula follows from
$A_{xy}=N_{xy}=1$.
Hence
\[
 \mathcal T(w)
 =-N_{xy}c^{-(r-1)}w^{r-1}+O(w^r).
\]
The function
\[
 f(w)=w^{-(r-1)}\mathcal T(w)
\]
belongs to $H^{1/4}$ and satisfies
$f(0)=-N_{xy}c^{-(r-1)}$.  By \Cref{lem:boundary-jensen},
\[
 \int_0^{2\pi}\log|\mathcal T(e^{i\theta})|
 \,\frac{d\theta}{2\pi}
 \ge \log N_{xy}-(r-1)\log c.
\]

For $w=e^{i\theta}$, $z(w)=B\cos\theta$. The two half-circles approach
opposite sides of the slit, and
$\tau_{xy}(E-i0)=\overline{\tau_{xy}(E+i0)}$, so their moduli agree.
Therefore the change of variables $E=B\cos\theta$,
\[
 \frac1{2\pi}\int_0^{2\pi}\Phi(B\cos\theta)\,d\theta
 =\int_{-B}^B\frac{\Phi(E)}{\pi\sqrt{B^2-E^2}}\,dE,
\]
gives \eref{eq:log-positive} and \eref{eq:log-lower}. Finally,
\[
 -\log\min\{|\tau_{xy}|,1\}
 =\log^+|\tau_{xy}|-\log|\tau_{xy}|,
\]
which proves \eref{eq:log-cost}.
\end{proof}

\subsubsection{Logarithmic averages}
\label{sec:finite-band-log}
For one interval, the map in \Cref{prop:log} turns the boundary average
into arcsine measure. For several intervals we use a covering map from the
disk and obtain equilibrium measure. We apply the estimate to the full
Green function, whose spectral measure lies in the chosen set, then use
\eqref{eq:pair-identities} to bound the effective hopping.

Two disk facts explain the change of variables. If $h:\D\to\D$ is
analytic with $h(0)=0$, then Hardy subordination \cite[Section~1.5]{Duren} means
\begin{equation}\label{eq:hardy-subordination}
 g\in H^p(\D)\quad\Longrightarrow\quad
 g\circ h\in H^p(\D),\qquad \|g\circ h\|_{H^p}\le\|g\|_{H^p}.
\end{equation}
Indeed, $|g|^p$ has a harmonic majorant $u\ge|g|^p$ with
$u(0)=\|g\|_{H^p}^p$ \cite[Section~2.6]{Duren}; since $u\circ h$ is harmonic,
\[
 \int|g(h(re^{i\theta}))|^p\,dm(\theta)
 \le\int u(h(re^{i\theta}))\,dm(\theta)
 =u(h(0))=\|g\|_{H^p}^p.
\]
For $\alpha\in\D$, we put
\[
 b_\alpha(w)=
 \begin{cases}
 w,&\alpha=0,\\[2pt]
 \displaystyle\frac{|\alpha|}{\alpha}
              \frac{\alpha-w}{1-\overline\alpha w},&\alpha\ne0.
 \end{cases}
 \qquad
 1-|b_\alpha(w)|^2
 =\frac{(1-|\alpha|^2)(1-|w|^2)}{|1-\overline\alpha w|^2}.
\]
For $\beta\in\R$, a \emph{Blaschke product} is a finite or infinite product
\begin{equation*}
 B(w)=e^{i\beta}\prod_j b_{\alpha_j}(w),\qquad
 \sum_j(1-|\alpha_j|)<\infty,\qquad |B^*|=1\quad\text{a.e.}
\end{equation*}
It converges locally uniformly and has precisely the prescribed zeros,
with multiplicity. If these zeros are also zeros of $g\in H^p$, division
removes them while preserving the boundary modulus
\cite[Sections~2.2--2.4]{Duren}:
\begin{equation}\label{eq:blaschke-division}
 g/B\text{ analytic in }\D
 \quad\Longrightarrow\quad
 g/B\in H^p(\D),\qquad \|g/B\|_{H^p}=\|g\|_{H^p}.
\end{equation}

\begin{lemma}\label{lem:finite-band-jensen}
We let $S$ be a finite union of nondegenerate compact real intervals,
and let $\mu$ be a finite real signed measure supported on $S$.
Suppose its Cauchy transform satisfies
\begin{equation}\label{eq:finite-band-transform}
 f(z)=\int_S\frac{d\mu(t)}{t-z}
      =a z^{-q}+O(z^{-q-1}),\qquad a\ne0,\quad q\in\N.
\end{equation}
Then $\log|f(E+i0)|\in L^1(\omega_S)$ and
\begin{equation*}
 \int_S\log|f(E+i0)|\,d\omega_S(E)
 \ge\log|a|-q\log\lcap S.
\end{equation*}
\end{lemma}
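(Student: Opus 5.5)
The plan is to generalize the proof of \Cref{prop:log} from one interval to several by replacing the Joukowski map with the universal covering map of the complement of $S$. Let $\Omega_S=\widehat{\C}\setminus S$, which is connected because $S$ is a finite union of real intervals. Since $\lcap S>0$, this domain is hyperbolic.

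Let $\Phi:\D\to\Omega_S$ be the universal covering map normalized by $\Phi(0)=\infty$. Near $0$ we have $\Phi(w)=\lcap(S)\,w^{-1}+O(1)$ after a rotation. The coefficient $\lcap S$ comes from the Green function $g_S(z,\infty)=\log|z|-\log\lcap S+o(1)$. The key facts about $\Phi$ that I would use are the following.
\begin{enumerate}
\item $\log|\Phi^{-1}|$, made well defined through $-g_S(\cdot,\infty)$, gives $g_S(\Phi(w),\infty)=\sum_{\Phi(\alpha)=\infty}\log(1/|b_\alpha(w)|)$. Thus the preimages of $\infty$ are the zeros $\{\alpha_j\}$ of a Blaschke product $B_\infty$, and $B_\infty(w)=w\cdot(\text{unit})$ with $|B_\infty'(0)|$ tied to $\lcap S$.
\item Radial boundary values of $\Phi$ lie on $S$ for a.e.\ $\theta$.
\item The pushforward of $dm$ under $\Phi^*$ is the harmonic measure at $\infty$, namely $\omega_S$.
\end{enumerate}
These are standard for Fuchsian covering maps (the Widom and Pommerenke theory of finitely connected domains), and I would cite them.

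Next consider $g=f\circ\Phi$ on $\D$. It is analytic: $f$ is analytic on $\Omega_S$ and vanishes at $\infty$ to order $q$. First I need $g\in H^p$ for some $p>0$. Writing $\mu=\mu_+-\mu_-$, each $\pm\int d\mu_\pm(t)/(t-z)$ has imaginary part of fixed sign on each half-plane. To avoid sign issues under the covering, I would instead show that $\log^+|g|$ has a harmonic majorant, using $|f(z)|\le\|\mu\|_{\TV}/\operatorname{dist}(z,S)$. A cleaner route is to note that $|f|^{1/2}\le C\,\RePart$ of a suitable analytic function on $\Omega_S$. Alternatively, I would use the $H^{1/4}$ estimate of \Cref{lem:hardy} for the enclosing interval $[-B,B]\supset S$, together with the subordination \eqref{eq:hardy-subordination}. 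The composition needs care here: $\Phi$ maps into $\widehat{\C}\setminus S$, which is larger than $\C\setminus[-B,B]$, so I would use the harmonic-majorant version directly. The function $|f|^{1/4}$ is dominated on $\Omega_S$ by a positive harmonic function $u$. Indeed, $|f|^{1/4}\le\sum_\pm|f_\pm|^{1/4}$, and for positive $\mu_\pm$ each $f_\pm$ maps $\C\setminus\R$ to a half-plane with $|f_\pm|\to0$ at $\infty$, which should give a majorant with $u(\infty)$ finite. Then $u\circ\Phi$ is a harmonic majorant of $|g|^{1/4}$, so $g\in H^{1/4}$. \textbf{This Hardy membership is the step I expect to be the main obstacle.}

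With $g\in H^{1/4}$ in hand, the zeros of $g$ include each $\alpha_j$ with multiplicity $q$, so $B_\infty^q$ divides $g$. By \eqref{eq:blaschke-division}, $h=g/B_\infty^q\in H^{1/4}$ and $|h^*|=|g^*|$ a.e. Computing at $0$ gives $|h(0)|=|a|\,(\lcap S)^{-q}$. This uses $f(\Phi(w))\sim a\,\Phi(w)^{-q}\sim a(\lcap S)^{-q}w^q$, where the normalization $\prod_{j\neq0}|\alpha_j|$ is absorbed into the Green-function identity and its constant term fixes $|B_\infty(w)/w|\to$ the right value. \Cref{lem:boundary-jensen} then gives $\int\log|g^*|\,dm\ge\log|a|-q\log\lcap S$, with $\log|g^*|\in L^1$.

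Finally, I transfer the estimate to $S$. For a.e.\ $\theta$, $\Phi$ tends nontangentially to a point $E\in S$ from one side. Since $\mu$ is real, $f(\bar z)=\overline{f(z)}$, so $|f(E\pm i0)|$ agree. The non-tangential boundary limits of $f$, which exist $\omega_S$-a.e.\ because $f$ is a difference of Herglotz functions, therefore match $|g^*|$. The pushforward identity then converts $\int\log|g^*|\,dm$ into $\int_S\log|f(E+i0)|\,d\omega_S$, which gives both the integrability and the inequality.
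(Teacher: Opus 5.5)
Your architecture is the paper's: a universal covering map $\Phi$ with $\Phi(0)=\infty$, a Blaschke product $\mathcal B$ over $\Phi^{-1}(\infty)$ with $|\mathcal B|=e^{-g_S\circ\Phi}$, division by $\mathcal B^q$, \Cref{lem:boundary-jensen} at the origin, and the pushforward of $dm$ to $\omega_S$. The gap is the step you flagged yourself, Hardy membership of $f\circ\Phi$. None of your sketched routes establishes it.

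A harmonic majorant of $\log^+|f\circ\Phi|$ would only put $f\circ\Phi$ in the Nevanlinna class. There $\int\log|F^*|\,dm\ge\log|F(0)|$ fails; take $F=1/\Theta$ with $\Theta$ a nonconstant singular inner function. Moreover, $|f|\le\|\mu\|_{\TV}/\operatorname{dist}(z,S)$ blows up on $S$ and gives no majorant.

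The sign decomposition $\mu=\mu_+-\mu_-$ does not help either. Each $f_\pm$ maps each half-plane into a half-plane, but $\Omega_S$ also contains the gaps of $S$, where $f_\pm$ is real, and the half-plane majorants do not patch across them. For example, take $U=\RePart(-if_+)^{1/2}$ on the upper half-plane and its mirror image below. These agree on a gap, but the glued function has distributional Laplacian $2\,\partial_yU\,dx$ there. At gap points with $f_+\ne0$ one has $\partial_yU>0$, because $f_+'(x)=\int(t-x)^{-2}\,d\mu_+(t)>0$. So the glued function is subharmonic across the gaps, which is the wrong direction for a majorant. A harmonic majorant of $|f|^{1/4}$ on $\Omega_S$ does exist, but not for this reason.

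The missing idea is to decompose $\mu$ by components rather than by sign. Write $S=\bigsqcup_jI_j$ and $f=\sum_jf_j$, where $f_j$ is the Cauchy transform of $\mu|_{I_j}$. Then $f_j$ is analytic on $\widehat\C\setminus I_j\supset\Omega_S$. Let $J_j$ be the Joukowski map of $I_j$ and $\psi_j=J_j^{-1}$. \Cref{lem:hardy}, which already allows signed measures, gives $f_j\circ J_j\in H^{1/4}$. Since $h_j=\psi_j\circ\Phi$ maps $\D$ into $\D$ with $h_j(0)=0$, subordination \eqref{eq:hardy-subordination} gives $f_j\circ\Phi=(f_j\circ J_j)\circ h_j\in H^{1/4}$. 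Finally, $p$-subadditivity of $\|\cdot\|_{H^{1/4}}^{1/4}$ gives $f\circ\Phi\in H^{1/4}$. Your diagnosis that $\Phi(\D)\not\subset\widehat\C\setminus[-B,B]$ was correct. The cure is to shrink $[-B,B]$ to a single component, whose complement does contain $\Omega_S$.

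Separately, your normalization is wrong when $S$ has at least two components. Then $\Phi(w)=cw^{-1}+O(1)$ with $c=\lcap S\prod_\alpha|\alpha|^{-1}>\lcap S$, the product running over the nonzero poles of $\Phi$, and $\mathcal B'(0)=\lcap S/c$. Your value $|(f\circ\Phi/\mathcal B^q)(0)|=|a|(\lcap S)^{-q}$ is nevertheless right. It should be read off from $|f(\Phi(w))|\,e^{q\,g_S(\Phi(w))}\to|a|(\lcap S)^{-q}$ as $w\to0$, not from $\Phi(w)\sim\lcap S/w$.
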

\begin{proof}
We write $S=\bigsqcup_{j=1}^k I_j$ and $\widehat\C=\C\cup\{\infty\}$.
We choose the universal covering map, locally one-to-one, normalized by
\begin{equation}\label{eq:finite-band-cover}
 \phi:\D\longrightarrow\widehat\C\setminus S,\qquad
 \phi(0)=\infty,\qquad \phi(w)=cw^{-1}+O(1),\quad c>0.
\end{equation}
Write $I_j=[m_j-b_j,m_j+b_j]$ and set
\[
 \begin{aligned}
 J_j(\zeta)&=m_j+\tfrac{b_j}{2}(\zeta+\zeta^{-1}),&
 \psi_j&=J_j^{-1},\quad\psi_j(\infty)=0,\\
 f_j(z)&=\int_{I_j}\frac{d\mu(t)}{t-z},&
 g_j&=f_j\circ J_j,\qquad h_j=\psi_j\circ\phi.
 \end{aligned}
\]
Lemma~\ref{lem:hardy}, after translating $I_j$, gives $g_j\in H^{1/4}$.
Since $h_j:\D\to\D$ and $h_j(0)=0$, \eqref{eq:hardy-subordination} yields
\begin{equation}\label{eq:finite-band-hardy}
 f\circ\phi=\sum_jg_j\circ h_j,\qquad
 \|f\circ\phi\|_{H^{1/4}}^{1/4}
 \le\sum_j\|g_j\|_{H^{1/4}}^{1/4}<\infty.
\end{equation}
The Green function with pole at infinity satisfies, as $|z|\to\infty$,
\[
 g_S(z)=\int_S\log|z-t|\,d\omega_S(t)-\log\lcap S,
 \qquad g_S(z)=\log|z|-\log\lcap S+o(1).
\]
It is harmonic off $S$ and $\infty$ and vanishes on $S$; here $\omega_S$
and $\lcap S$ are defined in \eqref{eq:equilibrium-definition}.
By \cite[Theorems~4.3--4.4]{ChristiansenSimonZinchenko}, the pole set
$\phi^{-1}(\infty)$ satisfies the Blaschke condition. Let $\mathcal B$
be the corresponding Blaschke product, with multiplicities, normalized by
$\mathcal B'(0)>0$. Then
\begin{equation}\label{eq:finite-band-blaschke}
 |\mathcal B(w)|=e^{-g_S(\phi(w))},\qquad
 \mathcal B(w)=\frac{\lcap S}{c}w+O(w^2),\qquad |\mathcal B^*|=1\quad\text{a.e.}
\end{equation}
By \eqref{eq:finite-band-transform} and \eqref{eq:finite-band-cover},
$f\circ\phi$ has a zero of order $q$ at every pole of $\phi$.
Equations \eqref{eq:blaschke-division}, \eqref{eq:finite-band-hardy} and
\eqref{eq:finite-band-blaschke} therefore give
\[
 F:=\frac{f\circ\phi}{\mathcal B^q}\in H^{1/4},\qquad
 F(0)=\frac{a c^{-q}}{(\lcap S/c)^q}=a(\lcap S)^{-q},\qquad
 |F^*|=|(f\circ\phi)^*|.
\]
Finally, the covering pushes circle measure to equilibrium measure
\cite[Corollary~4.6]{ChristiansenSimonZinchenko}: for nonnegative Borel $\Phi$,
\begin{equation}\label{eq:finite-band-pushforward}
 \int_0^{2\pi}\Phi(\phi(e^{i\theta}))\,dm(\theta)
 =\int_S\Phi(E)\,d\omega_S(E).
\end{equation}
By \cite[Section~2 and Theorem~3.1]{ChristiansenSimonZinchenko},
$\phi$ extends analytically across the circle outside a set of arc length
zero, and its derivative is nonzero away from the countably many preimages
of the interval endpoints. Radial paths therefore map to nontangential
approaches to the interval interiors. Since $\omega_S\ll dE$,
\eqref{eq:finite-band-pushforward} and the boundary-value theorem
\cite[Chapter~II]{Duren} identify $(f\circ\phi)^*$ with
$f(\phi(e^{i\theta})\pm i0)$ almost everywhere. As $\mu$ is real,
$f(E-i0)=\overline{f(E+i0)}$, so their moduli agree.
Lemma~\ref{lem:boundary-jensen} and \eqref{eq:finite-band-pushforward} give
\[
 \int_S\log|f(E+i0)|\,d\omega_S(E)
 =\int\log|F^*|\,dm
 \ge\log|F(0)|=\log|a|-q\log\lcap S.
\]
The same lemma gives $\log|F^*|\in L^1(dm)$, hence the required
$L^1(\omega_S)$ integrability.
\end{proof}

We fix a deterministic full-measure set of energies on which all boundary
identities used later are valid. The finite modifications are as follows.
\begin{definition}
Let $D,P\subset X$ be finite and disjoint.  Deleting the vertices in
$D$ and pinning the potential to zero on $P$ gives the operator
\begin{equation*}
 H^{D,P}
 =
 A_{X\setminus D}
 +
 \diag\!\bigl(v_x^{D,P}\bigr)_{x\in X\setminus D},
 \qquad
 v_x^{D,P}
 =
 \begin{cases}
  0, & x\in P,\\
  v_x, & x\in X\setminus(D\cup P).
 \end{cases}
\end{equation*}
Here deletion means the Dirichlet restriction to $X\setminus D$, while
zero-pinning leaves the graph unchanged and changes only the indicated
site potentials.

A \emph{finite resampling} on a finite set $R\subset X$ replaces
$(v_x)_{x\in R}$ by independent variables with the same single-site
law, independent of the unchanged potentials.
\end{definition}
The superscript $D,P$ on a kernel or effective coefficient means that it
is computed from $H^{D,P}$. For $u,v\ne x$, we write $G^{(x)}_{uv}=G^{X\setminus\{x\}}_{uv}$ for a
one-vertex deletion.

\begin{lemma}\label{lem:regular-energies}
There is a deterministic Borel set $\Ereg\subset\R$ of full Lebesgue
measure such that, for every $E\in\Ereg$, almost surely the following
holds simultaneously for all finite disjoint $D,P\subset X$ and all
retained vertices.

All boundary values at $E+i0$ of the scalar Green functions and
effective coefficients occurring in
\eqref{eq:one-site}--\eqref{eq:pair-identities}, for $H^{D,P}$ and the
finite-deletion operators entering their Grushin problems, exist and
are finite. Moreover,
\[
 G^{D,P}_{xx}(E)\ne0,
\]
and, for distinct retained vertices $x,y$,
\[
 \tau^{D,P}_{xy}(E)\ne0
 \quad\text{if $x,y$ lie in the same component of $X\setminus D$},
 \qquad
 \tau^{D,P}_{xy}\equiv0
 \quad\text{otherwise}.
\]
All of
\eqref{eq:one-site}, \eqref{eq:two-site}, and
\eqref{eq:pair-identities} remain valid at $z=E+i0$.

The same set $\Ereg$ may be chosen so that these conclusions hold
simultaneously for any prescribed countable family of finite
resamplings.  In addition, for each fixed $x\in X$, after replacing
$v_x$ by a real parameter $s$, they hold for every $E\in\Ereg$, for
almost every exterior configuration and Lebesgue-almost every
$s\in\R$.
\end{lemma}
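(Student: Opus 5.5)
We obtain $\Ereg$ as a countable intersection of full-measure sets, one for each item in a countable list of statements, and apply Fubini to exchange ``for a.e.\ $E$, a.s.'' with ``a.s., for a.e.\ $E$''. The index set is countable: finite disjoint pairs $(D,P)$, ordered pairs of retained vertices, a prescribed countable family of finite resamplings, and finitely many further Dirichlet deletions in each Grushin problem. It therefore suffices to prove each conclusion for one fixed operator $K=H^{D,P}$, or a finite deletion of it, with one fixed pair of vertices. For such a fixed item we show that, for each fixed $\omega$, the conclusion holds for Lebesgue-a.e.\ $E$. Joint measurability of the exceptional set $\{(\omega,E)\}$ is clear, since all quantities are limits of measurable functions along $\eta=1/n$. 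Fubini then shows that for a.e.\ $E$ the exceptional $\omega$-set is null. Intersecting these countably many full-measure $E$-sets gives $\Ereg$.

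For a fixed $\omega$, existence and finiteness of boundary values follow because every scalar quantity in \eqref{eq:one-site}--\eqref{eq:pair-identities} is a constant plus a Cauchy transform of a finite complex measure. Diagonal entries $G^{C}_{uu}$ and $\Sigma_x-z$ are Herglotz. Off-diagonal entries such as $\langle b_x,(H_C-z)^{-1}b_y\rangle$ are combinations of four Herglotz functions by polarization. Cauchy transforms of finite measures have finite nontangential limits a.e.\ (de la Vall\'ee Poussin / Privalov, or \Cref{lem:hardy} together with the boundary-value theorem). For nonvanishing we again use \Cref{lem:hardy} and \Cref{lem:boundary-jensen} after composing with the Joukowski map. $G^{D,P}_{xx}$ is a nonzero Cauchy transform, with leading term $-1/z$, so it is a nonzero $H^{1/4}$ function and has nonzero boundary values a.e. For $\tau^{D,P}_{xy}$ with $x,y$ in the same component of $X\setminus D$, the expansion in the proof of \Cref{prop:log} gives leading coefficient $-N_{xy}z^{-(r-1)}\neq0$, where $r$ is the distance inside $X\setminus D$. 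Hence $\tau^{D,P}_{xy}\not\equiv0$, and its boundary values are nonzero a.e. If $x,y$ lie in different components, then $A_{xy}=0$ and $H_{C_{xy}}$ is block diagonal with $b_x,b_y$ in different blocks, so $\tau_{xy}\equiv0$. The identities \eqref{eq:one-site}, \eqref{eq:two-site} and \eqref{eq:pair-identities} are algebraic relations among these functions for $\ImPart z>0$. They pass to the limit $\eta\downarrow0$ wherever all entries have finite limits and the denominators ($G_{xx}$, $v_y-\sigma_y^{(xy)}$, $v_x-\sigma_x^{(xy)}$) have nonzero limits. Nonvanishing of the latter is one more countable family of nonzero Herglotz-type functions, handled the same way.

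For finite resamplings, each resampled configuration $\omega'$ is again a random configuration on an enlarged product space. The same Fubini argument on that space adds countably many further null $E$-sets. For the final assertion, fix $x$ and replace $v_x$ by a parameter $s$. Run Fubini on $(\omega_{\neq x},s,E)$ with Lebesgue measure in $s$ in place of $\rho_\lambda$; the per-$(\omega,s)$ a.e.-$E$ argument is unchanged. This requires adding, for each $x$, one more full-measure set to the countable intersection.

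The main obstacle I expect is bookkeeping rather than analysis: checking that the list of ``finite-deletion operators entering the Grushin problems'' is countable, and that every denominator involved is a not-identically-zero function of the right class. The case that needs care is the quantity $v_y-\sigma_y^{(xy)}$, whose nonvanishing I would derive from $G^{C_{xy}\cup\{y\}}_{yy}=1/(v_y-\sigma_y^{(xy)})$ being a nonzero Herglotz function with finite boundary values a.e.
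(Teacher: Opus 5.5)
Your proposal is correct and follows essentially the same route as the paper. Both arguments use a.e.\ finite boundary values of Cauchy transforms, and both get nonvanishing of $G^{D,P}_{xx}$ and $\tau^{D,P}_{xy}$ from their nonzero leading coefficients via \Cref{lem:hardy} and \Cref{lem:boundary-jensen}; both identify the denominators as reciprocals of diagonal Green functions of finite deletions, and both use Fubini plus a countable intersection, which also covers resamplings and the free parameter $s$. The differences are cosmetic: you use $\sigma$-finite Lebesgue measure in $s$ instead of truncating to $|s|\le n$, and for measurability you should take limits over rational $\eta$ rather than only $\eta=1/n$.
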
\begin{proof}
All operators obtained from the bounded model by finite deletion and
zero-pinning have spectrum in a fixed compact interval
$[-B_0,B_0]$.  Every scalar resolvent entry is the Cauchy transform of
a finite complex measure supported in this interval, and every scalar
entry of the effective Hamiltonians in
\eqref{eq:grushin-effective} is an affine function of $z$ plus such a transform.
Hence the boundary values exist and are finite for almost every energy;
see, for example, \cite[Section~3.3]{Duren}.

For a retained vertex,
\[
 G^{D,P}_{xx}(z)=-z^{-1}+O(z^{-2}),
\]
so this analytic function is not identically zero.  Lemmas
\ref{lem:hardy} and \ref{lem:boundary-jensen} therefore imply
\[
 G^{D,P}_{xx}(E+i0)\ne0
\]
for almost every $E\in[-B_0,B_0]$.  Outside this interval the function is
analytic in $E$, and its zeros are discrete.

If retained vertices $x,y$ lie in the same connected component of
$X\setminus D$, we put $r_D=d_{X\setminus D}(x,y)$ and let $N^D_{xy}$
count shortest paths in that graph. The expansion in the proof of
\Cref{prop:log} gives
\[
 \tau^{D,P}_{xy}(z)
 =
 -N^D_{xy}z^{-(r_D-1)}+O(z^{-r_D}),
 \qquad N^D_{xy}\ge1,
\]
so the same argument gives
$\tau^{D,P}_{xy}(E+i0)\ne0$ for almost every $E$.
For vertices in different components,
$\tau^{D,P}_{xy}\equiv0$.

There are only countably many finite sets $D,P$, vertices, pairs, and
scalar coefficients under consideration.  Fubini's theorem followed by
a countable intersection therefore produces a deterministic Borel set
$\Ereg$ of full measure on which all these conclusions hold almost
surely.  Since the relevant denominators are reciprocals of diagonal
Green functions for finite deletions, they are nonzero on $\Ereg$;
hence \eqref{eq:one-site}, \eqref{eq:two-site}, and
\eqref{eq:pair-identities} pass to the boundary.

Every finite resampling has the same product-law marginal distribution.
Thus the same deterministic set $\Ereg$ works for each member of any
prescribed countable family of resamplings, and a further countable
intersection gives simultaneous validity.

Finally, fix $x$ and regard $v_x=s$ as a deterministic parameter.
On each interval $|s|\le n$, the same argument applies with a compact
spectral enclosure depending on $n$. Fubini in $(E,s,\omega_{\ne x})$,
followed by a countable intersection over $n\in\N$ and $x\in X$, gives
the stated conclusion for almost every exterior configuration and
Lebesgue-almost every $s\in\R$.
\end{proof}

We use the full-measure set $\Ereg$ from \Cref{lem:regular-energies} and put
\begin{equation*}
 \ell_{xy}(E)=
 \begin{cases}
 -\log\min\{|\tau_{xy}(E+i0)|,1\},&E\in\Ereg,\\
 +\infty,&E\notin\Ereg,
 \end{cases}
\end{equation*}
We let $\mathscr R\subset\mathbb N$ be any deterministic unbounded set
of radii. For nonempty deterministic sets $F_R\subseteq S_R(o)$,
$R\in\mathscr R$, we put $M_R=|F_R|$ and define
\begin{equation}\label{eq:selected-cost}
 \mathcal L_{F,R}(E)=\frac1{RM_R}\sum_{x\in F_R}\E\ell_{ox}(E),
 \qquad
 \mathcal L_F(E)=\liminf_{\substack{R\to\infty\\R\in\mathscr R}}\mathcal L_{F,R}(E).
\end{equation}
Every limit over $R$ involving these sets is taken through
$\mathscr R$; the selected-endpoint arguments therefore apply along any
deterministic unbounded set of radii.

Henceforth $+i0$ is suppressed at energies in $\Ereg$. We may shrink
$\Ereg$ by further deterministic null sets when applying the
countably many differentiation statements below.

\subsection{Resonant environments and resampling}\label{sec:averaging}
Transitivity makes $\overline\mu_\lambda=\mathbb E\mu_x$ independent
of $x$. Fix the potentials on $X\setminus\{x\}$ and let $H_{x,v}$ denote the
operator obtained by setting the potential at $x$ equal to $v$.  Its
local spectral measure at $x$ is
\begin{equation*}
 \mu_{x,v}(J)
 :=
 \langle\delta_x,\one_J(H_{x,v})\delta_x\rangle,
 \qquad J\subset\R\ \text{Borel}.
\end{equation*}
Conditioning on the exterior of $x$, the one-site identity gives
\[
 G_{xx,v}(E+i\eta)
 =
 \frac{1}{v-\Sigma_x(E+i\eta)}.
\]
Writing
\[
 \Sigma_x(E+i\eta)=S_\eta(E)+iT_\eta(E),
 \qquad T_\eta(E)>0,
\]
we obtain
\begin{equation*}
 \int_{\mathbb R}\frac1\pi\ImPart G_{xx,v}(E+i\eta)\,dv
 =
 \int_{\mathbb R}
 \frac{T_\eta(E)}
 {\pi\bigl((v-S_\eta(E))^2+T_\eta(E)^2\bigr)}\,dv
 =1.
\end{equation*}
By the spectral theorem,
\[
 \frac1\pi\ImPart G_{xx,v}(E+i\eta)
 =
 \int_{\mathbb R}
 \frac{\eta}{\pi((E-t)^2+\eta^2)}\,d\mu_{x,v}(t).
\]
We define $\nu(J)=\int_\R\mu_{x,v}(J)\,dv$. Tonelli and
\eqref{eq:resolvent-spectral}, together with the one-site integral, give
\[
 (P_\eta*\nu)(E)=1,\qquad
 \int_\R\frac{d\nu(t)}{1+t^2}=\pi,
\]
where the second identity follows by taking $E=0$ and $\eta=1$.
In particular, $\nu$ is locally finite. For $\varphi\in C_c(\R)$,
\[
 (P_\eta*\varphi)(t)\longrightarrow\varphi(t),\qquad
 |(P_\eta*\varphi)(t)|\le\frac{C_\varphi}{1+t^2},
 \qquad 0<\eta\le1.
\]
The bound follows from the $L^\infty$ contraction on a fixed compact
set and the $t^{-2}$ decay of the kernel outside that set.
Dominated convergence and $P_\eta*\nu=1$ therefore give
\[
 \begin{aligned}
 \int_\R\varphi\,d\nu
 &=\lim_{\eta\downarrow0}\int_\R(P_\eta*\varphi)(t)\,d\nu(t)\\
 &=\lim_{\eta\downarrow0}\int_\R\varphi(E)(P_\eta*\nu)(E)\,dE
 =\int_\R\varphi(E)\,dE.
 \end{aligned}
\]
Thus $\nu=dE$, and for every Borel set $J$,
\begin{equation}\label{eq:rank-one-average}
 \int_{\R}\mu_{x,v}(J)\,dv=|J|,\qquad
 \overline\mu_\lambda(J)\le\|\rho_\lambda\|_\infty|J|.
\end{equation}
This shows that $\overline\mu_\lambda$ is absolutely continuous. We define the local density of states by
\begin{equation}\label{eq:averaged-density}
 \overline\mu_\lambda(dE)=n_\lambda(E)\,dE,
 \qquad 0\le n_\lambda(E)\le\rho_+/\lambda
 \quad\text{almost everywhere}.
\end{equation}

Lemma~\ref{lem:alignment} turns $n_\lambda(E)>0$ into positive one-site
mass near $\Sigma_x(E)$. Proposition~\ref{prop:positive-dos} proves the
needed density positivity. Lemma~\ref{lem:joint} then imposes this local
event near a distant vertex while preserving the exponential tunneling rate.

\subsubsection{Resonances}
A resonance at $x$ means that $v_x$ is close to $\Sigma_x(E)$.
Weighted rank-one averaging identifies the conditional density at this
level \cite{SimonRankOne,SimonAveraging,Marx,LiawTreil}.

The following weighted form of rank-one spectral averaging is a direct
consequence of the Poisson-transform identity for rank-one perturbations;
compare \cite[Proposition~3.1]{Marx}.

\begin{lemma}\label{lem:weighted-contact}
We fix the exterior of a vertex $x$ and vary only its diagonal value $v$.
We write $\Sigma(z)=\Sigma_x(z)$ and $G_v(z)=\frac1{v-\Sigma(z)}$
and we let $\mu_v$ be the spectral measure at $\delta_x$. For
$f\in L^1(\R)$, we set
\[
 \nu_f(J)=\int_\R f(v)\mu_v(J)\,dv.
\]
Then $\nu_f=q_f(E)\,dE$ with $\|q_f\|_{L^1}\le\|f\|_{L^1}$. We put
\[
 \mathcal R_\Sigma=
 \{E\in\R:\Sigma(E+i0)\text{ exists and belongs to }\R\}.
\]
On $\mathcal R_\Sigma$ one has
\begin{equation}\label{eq:contact-formula}
 q_f(E)=f(\Sigma(E+i0))
 \qquad\text{for almost every }E\in \mathcal R_\Sigma.
\end{equation}
Moreover,
\begin{equation}\label{eq:contact-pushforward}
 \bigl|\{E\in \mathcal R_\Sigma:\Sigma(E+i0)\in J\}\bigr|\le |J|
 \qquad\text{for every Borel }J\subset\R.
\end{equation}
Thus, whenever $\Sigma_x(E)$ is real almost surely, $n_\lambda(E)=\E\rho_\lambda(\Sigma_x(E))$ for almost every $E$.
\end{lemma}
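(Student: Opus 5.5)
We prove \Cref{lem:weighted-contact} in three steps: establish absolute continuity of $\nu_f$, identify its density pointwise on $\mathcal R_\Sigma$, and derive the pushforward bound and the final identity as corollaries.

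\textbf{Step 1 (absolute continuity).} The argument preceding \eqref{eq:rank-one-average} shows that $\int_\R\mu_v(J)\,dv=|J|$ for every Borel $J$. It was stated there for the unweighted average, but it uses only the exterior configuration, which is fixed here. Since $\mu_v(J)\le1$, we get for $f\ge0$
\[
 \nu_f(J)\le\|f\|_\infty|J|\quad\text{if }f\in L^\infty,
 \qquad
 \nu_f(J)\le\int f=\|f\|_{L^1}.
\]
For general $f\in L^1$, we truncate and decompose into positive and negative parts. If $|J|=0$, then $\mu_v(J)=0$ for a.e.\ $v$ by the unweighted identity, so $\nu_f(J)=0$. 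Hence $\nu_f=q_f\,dE$, and $\|q_f\|_{L^1}\le|\nu_f|(\R)\le\|f\|_{L^1}$.

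\textbf{Step 2 (contact formula).} We compute the Poisson smoothing of $\nu_f$. By Tonelli and \eqref{eq:resolvent-spectral},
\[
 (P_\eta*\nu_f)(E)=\int f(v)\,\frac1\pi\ImPart\frac1{v-\Sigma(E+i\eta)}\,dv
 =\int f(v)\,P_{T_\eta(E)}\bigl(v-S_\eta(E)\bigr)\,dv ,
\]
that is, $(P_{T_\eta}*f)(S_\eta)$. Fix $E\in\mathcal R_\Sigma$. Then $S_\eta(E)\to\Sigma(E+i0)=:s$ and $T_\eta(E)\to0$.

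First take $f$ continuous with compact support. The right-hand side converges to $f(s)$, while the left-hand side converges to $q_f(E)$ for a.e.\ $E$ by the boundary-value theorem for Poisson integrals of $L^1$ densities. This gives \eqref{eq:contact-formula} for such $f$.

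To pass to general $f\in L^1$, we approximate $f$ in $L^1$ by $f_n\in C_c$. The left-hand sides satisfy $\|q_f-q_{f_n}\|_{L^1}\le\|f-f_n\|_{L^1}$ by Step 1. For the right-hand sides we need control of $E\mapsto f(\Sigma(E+i0))$ in $L^1(\mathcal R_\Sigma)$, and this is exactly the pushforward bound \eqref{eq:contact-pushforward}.

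\textbf{Step 3 (pushforward bound and the final identity).} We prove \eqref{eq:contact-pushforward} first, using only continuous compactly supported test functions. Let $0\le f\in C_c$. By Step 2,
\[
 \int_{\mathcal R_\Sigma}f(\Sigma(E+i0))\,dE=\int_{\mathcal R_\Sigma}q_f\le\|f\|_{L^1}.
\]
A monotone class argument then gives $\int_{\mathcal R_\Sigma}g(\Sigma(E+i0))\,dE\le\|g\|_{L^1}$ for all Borel $g\ge0$. Taking $g=\one_J$ yields \eqref{eq:contact-pushforward}. This bound shows that the map $f\mapsto f\circ\Sigma$ is contractive from $L^1$ into $L^1(\mathcal R_\Sigma)$, so the approximation argument of Step 2 closes and \eqref{eq:contact-formula} holds for every $f\in L^1$.

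For the last assertion, take $f=\rho_\lambda$ and average over the exterior configuration. This gives $\overline\mu_\lambda=\E\nu_{\rho_\lambda}$. When $\Sigma_x(E)$ is real almost surely, Fubini together with \eqref{eq:contact-formula} yields $n_\lambda(E)=\E\rho_\lambda(\Sigma_x(E))$ for almost every $E$.

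\textbf{Main obstacle.} The delicate point is the order of quantifiers in Step 2. The exceptional null set in the Poisson boundary-value theorem depends on $f$, while the convergence of the right-hand side holds at every $E\in\mathcal R_\Sigma$. We handle this by first working with a countable dense family in $C_c$, intersecting the corresponding full-measure sets, and only then passing to general $f$ through the pushforward bound.
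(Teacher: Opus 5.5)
Your proposal is correct and follows essentially the same route as the paper: absolute continuity and the $L^1$ contraction from the rank-one average \eqref{eq:rank-one-average}, the Poisson identity $(P_\eta*q_f)(E)=(P_{T_\eta(E)}*f)(S_\eta(E))$ for $f\in C_c(\R)$, the pushforward bound obtained by integrating the nonnegative continuous case, and then $L^1$ approximation. Your countable-dense-family device for the quantifier issue does no harm but is not needed, since each step involves only countably many fixed test functions (a single approximating sequence $f_j$).
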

\begin{proof}
For a Lebesgue-null set $N$, \eqref{eq:rank-one-average} implies
$\mu_v(N)=0$ for a.e. $v$. Since $\mu_v(\R)=1$,
\begin{equation}\label{eq:contact-L1-contraction}
 \begin{aligned}
 |\nu_f|(N)&\le\int|f(v)|\mu_v(N)\,dv=0,\\
 \|q_f-q_g\|_1&\le\int|f(v)-g(v)|\mu_v(\R)\,dv=\|f-g\|_1.
 \end{aligned}
\end{equation}
This defines $q_f\in L^1$ for every $f\in L^1$ and proves its norm bound.

We first take $f\in C_c(\R)$, the continuous compactly supported functions.
We put $S_\eta(E)=\RePart\Sigma(E+i\eta)$, $T_\eta(E)=\ImPart\Sigma(E+i\eta)>0$
and $S(E)=\Sigma(E+i0)$ on $\mathcal R_\Sigma$. By \eqref{eq:resolvent-spectral}, Fubini,
and $G_v=(v-\Sigma)^{-1}$,
\begin{equation}\label{eq:contact-poisson}
 \begin{aligned}
 (P_\eta*q_f)(E)
 &=\int_\R f(v)\,\frac1\pi\ImPart G_v(E+i\eta)\,dv\\
 &=\int_\R\frac{T_\eta(E)f(v)}{\pi((v-S_\eta(E))^2+T_\eta(E)^2)}\,dv
   =(P_{T_\eta(E)}*f)(S_\eta(E)).
 \end{aligned}
\end{equation}
On $\mathcal R_\Sigma$, $S_\eta(E)\to S(E)$ and $T_\eta(E)\to0$. Uniform convergence of
Poisson integrals of $f\in C_c$ gives
\[
 |(P_{T_\eta(E)}*f)(S_\eta(E))-f(S(E))|
 \le\|P_{T_\eta(E)}*f-f\|_\infty+|f(S_\eta(E))-f(S(E))|
 \longrightarrow0.
\]
The left side of \eqref{eq:contact-poisson} tends to $q_f(E)$ at its
Lebesgue points, proving \eqref{eq:contact-formula} for continuous $f$.
For $f\ge0$, integration and \eqref{eq:contact-L1-contraction} yield
\begin{equation}\label{eq:contact-test-domination}
 \int_{\mathcal R_\Sigma}f(S(E))\,dE=\int_{\mathcal R_\Sigma}q_f(E)\,dE
 \le\int_\R f(v)\,dv.
\end{equation}
Domination of measures in \eqref{eq:contact-test-domination} gives
\eqref{eq:contact-pushforward} for all Borel sets.

For general $f\in L^1$, we choose $f_j\in C_c(\R)$ with
$\|f_j-f\|_1\to0$. The pushforward bound makes $f\circ S$ independent of
the chosen representative, up to an energy-null set. Together with
\eqref{eq:contact-L1-contraction}, it gives
\[
 \begin{aligned}
 \|q_f-f\circ S\|_{L^1(\mathcal R_\Sigma)}
 &\le\|q_f-q_{f_j}\|_1+\|(f_j-f)\circ S\|_{L^1(\mathcal R_\Sigma)}\le2\|f_j-f\|_1\longrightarrow0.
 \end{aligned}
\]
This proves \eqref{eq:contact-formula} for $L^1$ weights. For each exterior
configuration write $q^{\omega_{\ne x}}_{\rho_\lambda}$ for the conditional
density. Conditioning in \eqref{eq:averaged-density} gives
\[
 n_\lambda(E)=\E_{\ne x}q^{\omega_{\ne x}}_{\rho_\lambda}(E)
 =\E_{\ne x}\rho_\lambda(\Sigma_x(E))
\]
for a.e. energy at which $\Sigma_x(E)$ is real almost surely, by Fubini and
\eqref{eq:contact-formula}.
\end{proof}

For the conversion of positive averaged density into a one-site resonance
probability, compare \cite[Lemma~3.4 and Appendix~A]{AW}.

\begin{lemma}\label{lem:alignment}
For almost every $E\in\Ereg$ satisfying
\[
 n_\lambda(E)>0,
 \qquad
 \PP\{W_o(E)=0\}=1,
\]
one has, for every $x\in X$,
\begin{equation}\label{eq:DOS-alignment}
 \Sigma_x(E)\in\R\quad\text{a.s.},
 \qquad
 n_\lambda(E)=\E\rho_\lambda(\Sigma_x(E)).
\end{equation}
Moreover, there exist $a,t_0,q>0$, depending on $E$ but not on $x$,
such that the exterior-measurable event
\begin{equation}\label{eq:B}
 B_x^0
 =
 \left\{
 \frac1{2t}
 \int_{\Sigma_x(E)-t}^{\Sigma_x(E)+t}
 \rho_\lambda(v)\,dv
 \ge a
 \quad\text{for every }0<t<t_0
 \right\}
\end{equation}
satisfies $\PP(B_x^0)=q>0.$
\end{lemma}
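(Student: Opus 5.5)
The plan has two parts: first show that the self-energy is real at the boundary, then turn positive density into a uniform resonance probability.

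\emph{Step 1: reality of $\Sigma_x(E)$ for every $x$.} Transitivity of the automorphism group (the covariance $H_{g\cdot\omega}=\mathcal U_gH_\omega\mathcal U_g^{-1}$ and the invariance of $\mathbb P$) shows that $W_x(E)$ has the same law as $W_o(E)$. Hence $W_x(E)=0$ almost surely for every $x$. On $\Ereg$ we have $G_{xx}(E)\ne0$ and the one-site identity \eqref{eq:one-site} holds at $E+i0$, so
\[
 W_x=\ImPart G_{xx}=|G_{xx}|^2\,\ImPart\Sigma_x .
\]
Therefore $\ImPart\Sigma_x(E)=0$ almost surely, that is, $\Sigma_x(E)\in\R$ almost surely.

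For the density identity, note that $\Sigma_x(E)\in\R$ almost surely on the set of such energies. The last statement of \Cref{lem:weighted-contact} then gives $n_\lambda(E)=\E\rho_\lambda(\Sigma_x(E))$ for almost every such $E$. The exceptional null set must be chosen independently of $x$. Since $X$ is countable, I intersect over $x$, and this is why the lemma is stated for almost every $E$.

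\emph{Step 2: the event $B_x^0$.} I use the Hardy--Littlewood maximal-type lower envelope
\[
 m_t(s)=\frac1{2t}\int_{s-t}^{s+t}\rho_\lambda(v)\,dv,\qquad
 \underline m(s)=\inf_{0<t<t_0}m_t(s).
\]
Lebesgue differentiation gives $m_t(s)\to\rho_\lambda(s)$ as $t\to0$ for almost every $s$. To pass this to $s=\Sigma_x(E)$, I need the law of $\Sigma_x(E)$ to charge no Lebesgue-null set. I handle this by restricting to the good set of $s$ and using the pushforward bound \eqref{eq:contact-pushforward}: conditionally on the exterior, and after integrating in $E$, the set of energies with $\Sigma_x(E)$ in a null set $N$ has measure zero. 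A Fubini argument over $(E,\omega)$ then shows that, for almost every $E$, $\Sigma_x(E)$ avoids $N$ almost surely.

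Next, since $\E\rho_\lambda(\Sigma_x(E))=n_\lambda(E)>0$, there is $a>0$ with $\PP\{\rho_\lambda(\Sigma_x(E))>2a\}>0$. Monotone convergence as $t_0\downarrow0$ gives
\[
 \PP\{\rho_\lambda(\Sigma_x)>2a,\ \underline m(\Sigma_x)\ge a\}\to\PP\{\rho_\lambda(\Sigma_x)>2a\},
\]
because $\underline m(s)\uparrow\rho_\lambda(s)$ at Lebesgue points. So some $t_0$ makes $\PP(B_x^0)>0$.

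The event $B_x^0$ is exterior-measurable because $\Sigma_x$ depends only on the potentials off $x$. Independence of $a,t_0,q$ from $x$ follows again from transitivity: the law of $\Sigma_x(E)$ equals that of $\Sigma_o(E)$, so I fix $a,t_0$ using $x=o$, and then $\PP(B_x^0)=\PP(B_o^0)=:q$.

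The main obstacle is Step 2's transfer of Lebesgue differentiation to the random point $\Sigma_x(E)$. This requires the absolute-continuity property of the distribution of $\Sigma_x(E)$ for almost every $E$, obtained from \eqref{eq:contact-pushforward} and Fubini. I also need measurability of the supremum over $t$, which I handle by restricting $t$ to rationals using continuity in $t$.
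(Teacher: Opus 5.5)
Your proposal is correct and follows the paper's proof essentially step for step. The steps match: reality of $\Sigma_x(E)$ from transitivity and the one-site identity at $E+i0$, the density identity from \Cref{lem:weighted-contact}, the pushforward bound plus Tonelli to keep $\Sigma_x(E)$ off the null set of non-Lebesgue points of $\rho_\lambda$, then an increasing limit in $t_0$ with rational $t$ for measurability and transitivity for uniformity in $x$. One small caveat: Fubini gives avoidance of that one fixed null set for almost every $E$, not absolute continuity of the law of $\Sigma_x(E)$, but avoidance of that single set is all your argument actually uses.
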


\begin{proof}
By transitivity,
\[
 \PP\{W_x(E)=0\}=1
\]
for every $x$.  Since $G_{xx}(E)$ is finite and nonzero by
\Cref{lem:regular-energies}, \eqref{eq:one-site} gives
\[
 \Sigma_x(E)
 =
 v_x-G_{xx}(E)^{-1}\in\R
 \qquad\text{a.s.}
\]
The contact formula \eqref{eq:contact-formula} and
\eqref{eq:averaged-density} therefore give
\[
 n_\lambda(E)
 =
 \E\rho_\lambda(\Sigma_x(E))
\]
for almost every such $E$.

Let $N_\rho$ be the null set of non-Lebesgue points of $\rho_\lambda$.
For each fixed exterior configuration, \eqref{eq:contact-pushforward}
with $\Sigma=\Sigma_x$ gives
\[
 \bigl|
 \{E\in\mathcal R_{\Sigma_x}:
   \Sigma_x(E)\in N_\rho\}
 \bigr|
 =0.
\]
Tonelli's theorem, followed by an intersection over the countable
vertex set, shows that for almost every $E$ under consideration, $\Sigma_x(E)\notin N_\rho$ a.s. for every $x.$
Set $X_x:=\rho_\lambda(\Sigma_x(E)).$
By \eqref{eq:DOS-alignment}, $\E X_x=n_\lambda(E)>0.$
Hence there is $a>0$ such that $\PP\{X_x>2a\}>0.$
For
\[
 A_t(s):=
 \frac1{2t}\int_{s-t}^{s+t}\rho_\lambda(v)\,dv,
\]
Lebesgue differentiation gives
\[
 A_t(\Sigma_x(E))
 \longrightarrow
 \rho_\lambda(\Sigma_x(E))=X_x
 \qquad (t\downarrow0)
\]
almost surely.  Consequently,
\[
 \{X_x>2a\}
 \subset
 \bigcup_{m\ge1}
 \left\{
 A_t(\Sigma_x(E))\ge a
 \text{ for every }0<t<1/m
 \right\}
\]
up to a null set.  Since the set on the left has positive probability,
there is some $m$ for which
\[
 q:=
 \PP\left\{
 A_t(\Sigma_x(E))\ge a
 \text{ for every }0<t<1/m
 \right\}>0.
\]
We set $t_0=1/m$.

Since $t\mapsto A_t(s)$ is continuous for $t>0$, it is enough in
\eqref{eq:B} to impose the inequalities for rational
$t\in(0,t_0)$.  Thus $B_x^0$ is measurable with respect to the
exterior of $x$, and hence independent of $v_x$.  Transitivity makes
$a,t_0,q$ independent of $x$.
\end{proof}

\subsubsection{Positivity criteria for averaged DOS}\label{sec:dos-positive}
To prove $n_\lambda>0$, shift all potentials in one finite block together.
If its effective matrix has nonzero imaginary part, the averaged imaginary
resolvent is positive. If it is Hermitian, a finitely supported approximate
eigenvector puts an eigenvalue in the shift interval. We compare
\cite{Wegner,HislopMuller}.

\begin{proposition}\label{prop:positive-dos}
Assume \eref{eq:density-upper}--\eref{eq:density-lower}.
Let $I\subset\R$ be bounded. Suppose there is a finite set
$\Lambda\subset X$, $|\Lambda|=m$, such that for every $E\in I$
there is a unit vector $\psi_E\in\ell^2(\Lambda)$, extended by zero to
$X$, satisfying
\begin{equation*}
 A_{\Lambda^c\Lambda}\psi_E=0,\qquad
 \|(A-E)\psi_E\|\le\frac{\lambda b}{16}.
\end{equation*}
Then $n_\lambda(E)>0$ for almost every $E\in I$.
\end{proposition}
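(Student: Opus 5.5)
The plan is to prove a \emph{reverse Wegner estimate}: there is a constant $c_0>0$ such that
\[
 \overline\mu_\lambda(J)\ge c_0|J|
\]
for every interval $J\subset I$. Lebesgue differentiation of $\overline\mu_\lambda=n_\lambda\,dE$ then gives $n_\lambda\ge c_0>0$ almost everywhere on $I$. By transitivity, $\overline\mu_\lambda(J)=\frac1m\E\Tr\bigl(P_\Lambda\one_J(H)P_\Lambda\bigr)$, so it suffices to bound $\E\Tr(P_\Lambda\one_J(H)P_\Lambda)$ from below.

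\textbf{Step 1: reparametrize the block.} Write the potentials on $\Lambda$ as $v_x=\lambda(c+s_x)$, with $c\in\R$ a common shift and $s=(s_x)$ a vector in a complementary $(m-1)$-dimensional coordinate system, for example $\sum_x s_x=0$. By \eqref{eq:density-lower}, the joint density of $(v_x)_{x\in\Lambda}$ is at least $(\rho_-/\lambda)^m$ on $\lambda[-b,b]^\Lambda$. After the linear change of variables, the law of $(c,s)$, conditionally on the exterior of $\Lambda$, therefore has density bounded below by a constant $\kappa_m>0$ on the box
\[
 \{|c|\le b/2,\ \max_x|s_x|\le b/16\}.
\]
Hence
\[
 \E\Tr\bigl(P_\Lambda\one_J(H)P_\Lambda\bigr)\ge\kappa_m\,\E_{\Lambda^c}\int_{|s|\le b/16}\int_{-b/2}^{b/2}\Tr\bigl(P_\Lambda\one_J(H_c)P_\Lambda\bigr)\,dc\,ds,
\]
where $H_c=H_0+\lambda cP_\Lambda$ and $H_0$ carries the potentials $\lambda s_x$ on $\Lambda$.

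\textbf{Step 2: integrate the common shift with the spectral shift function.} Since $\partial_cH_c=\lambda P_\Lambda\ge0$ has rank $m$, the Birman--Solomyak formula gives
\[
 \lambda\int_{c_1}^{c_2}\Tr\bigl(P_\Lambda\one_J(H_c)P_\Lambda\bigr)\,dc=\int_J\xi(e)\,de .
\]
Here $\xi(e)=\xi(e;H_{c_2},H_{c_1})\in[0,m]$ is the spectral shift function of the finite-rank, nonnegative perturbation. It can be verified by finite-volume approximation using only trace-class rank-$m$ resolvent differences. It then suffices to show $\xi(e)\ge1$ for every $e\in J\subset I$ when $c_1=-b/2$ and $c_2=b/2$.

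\textbf{Step 3: lower bound on $\xi$ from the approximate eigenvector.} Fix $e\in J$ and take $\psi=\psi_e$. Since $\supp\psi\subset\Lambda$ and $A_{\Lambda^c\Lambda}\psi=0$,
\[
 (H_c-e)\psi=(A-e)\psi+\lambda(c+s)\psi .
\]
Because $|s_x|\le b/16$, this gives $\|(H_c-e-\lambda c)\psi\|\le\lambda b/8$. The quadratic form of $H_{c_1}-e$ at $\psi$ is therefore $\le-\lambda b/2+\lambda b/8<0$, and that of $H_{c_2}-e$ is $\ge\lambda b/2-\lambda b/8>0$. As $c$ increases, the perturbation is monotone and the form $\langle\psi,(H_c-e)\psi\rangle$ crosses zero. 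The key identity is
\[
 \xi(e)=\operatorname{rank}\one_{(-\infty,e)}(H_{c_1})-\operatorname{rank}\one_{(-\infty,e)}(H_{c_2}),
\]
interpreted as the index of the pair of spectral projections, or as a limit of eigenvalue-count differences on exhausting finite boxes $\Lambda\subset\Lambda_n$. By min--max on these boxes, a vector whose $H_{c_1}$-form lies below $e$ and whose $H_{c_2}$-form lies above $e$, with the perturbation supported in $\Lambda$, should force at least one eigenvalue of the box operator to cross $e$. That would give $\xi_n(e)\ge1$ for all large $n$, and hence $\xi(e)\ge1$ after passing to the limit in the weak sense on $J$.

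\textbf{Conclusion and main obstacle.} Combining the three steps yields
\[
 \E\Tr\bigl(P_\Lambda\one_J(H)P_\Lambda\bigr)\ge\kappa_m\lambda^{-1}(b/8)^{m-1}|J|
\]
for every interval $J\subset I$, which proves the claim. I expect Step~3 to be the main obstacle: turning a single approximate-eigenvector test into $\xi\ge1$ requires care. A sharp form inequality alone does not force an eigenvalue crossing for a one-dimensional test vector, so I would first try to use the small residual $\|(H_c-e-\lambda c)\psi\|\le\lambda b/8$. This residual puts spectral mass at least $1/2$ of $\psi$ within distance $\lambda b/4$ of $e+\lambda c$ for each $c$. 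As a fallback that avoids the index identity, I would integrate this directly: averaging $\langle\psi,\one_J(H_c)\psi\rangle$ in $c$, together with $\Tr(P_\Lambda\one_J P_\Lambda)\ge\langle\psi,\one_J\psi\rangle$, should still produce a bound of order $|J|/\lambda$ uniformly in $J$.
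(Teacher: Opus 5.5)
Steps 1 and 2 are sound. Transitivity gives $\overline\mu_\lambda(J)=\frac1m\E\Tr(P_\Lambda\one_J(H)P_\Lambda)$, the common-shift coordinates have density bounded below on the box, and the Birman--Solomyak identity for the nonnegative rank-$m$ perturbation $\lambda P_\Lambda$ is standard. The gap is Step~3, which carries the entire content of the proposition and which you leave unproved.

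\textbf{The min--max heuristic fails.} A test vector whose quadratic form changes sign, with the perturbation supported where the vector lives, need not produce a crossing. Take $H_c=\diag(-1,1)+c\,|\psi\rangle\langle\psi|$ with $\psi=(1,1)/\sqrt2$. Then $\langle\psi,H_c\psi\rangle=c$, yet $\det H_c=-1$ for every $c$, so no eigenvalue ever crosses $0$.

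\textbf{The fallback does not close the gap.} Knowing that most of the spectral mass of $\psi$ for $H_c$ lies within $O(\lambda b)$ of $e+\lambda c$ says nothing about its mass in an interval $J$ with $|J|\ll\lambda b$. For a single vector $\psi$ under the rank-$m$ family $H_c$, there is no spectral-averaging identity giving $\int\langle\psi,\one_J(H_c)\psi\rangle\,dc\gtrsim|J|/\lambda$.

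\textbf{The missing idea is the Schur complement on $\Lambda$,} which is also the core of the paper's proof. Take a box $\Lambda_n\supset\Lambda$, let $H^{(n)}_c$ and $H^{(n)}_{C_n}$ be the Dirichlet restrictions to $\Lambda_n$ and $C_n=\Lambda_n\setminus\Lambda$, and fix $e\notin\sigma(H^{(n)}_{C_n})$. Put
\[
 S_n(e)=A_\Lambda+\lambda\diag(s)-e-A_{\Lambda C_n}(H^{(n)}_{C_n}-e)^{-1}A_{C_n\Lambda}.
\]
This matrix has three useful properties.
\begin{itemize}
\item It is Hermitian.
\item The common shift enters only as $S_n(e)+\lambda cI$.
\item Since $A_{C_n\Lambda}\psi_e=0$, one has $\|S_n(e)\psi_e\|\le\lambda b/8$, so $S_n(e)$ has an eigenvalue in $[-\lambda b/8,\lambda b/8]$.
\end{itemize}
Let $N_c(e)$ count eigenvalues of $H^{(n)}_c$ below $e$ and $n_-$ count negative eigenvalues. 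Haynsworth's inertia formula gives $N_c(e)=n_-(H^{(n)}_{C_n}-e)+n_-(S_n(e)+\lambda c)$. Hence
\[
 \xi_n(e)=\#\bigl(\sigma(S_n(e))\cap[-\lambda b/2,\lambda b/2)\bigr)\ge1
\]
for all but finitely many $e\in I$. Your weak-limit step then yields $\xi\ge\one_I$ almost everywhere. That step uses the trace of the resolvent difference, which reduces to a rank-$m$ compression and converges by strong resolvent convergence. You should work with Borel sets $S$ rather than intervals $J\subset I$, since $I$ need not contain intervals.

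With this repair your route is complete, and it gives more than the paper: a uniform bound $n_\lambda\ge c_m\rho_-^m b^{m-1}/\lambda$ almost everywhere on $I$. The paper instead stays in infinite volume at a fixed regular energy and writes $G_{\Lambda\Lambda}=(M_{u,\eta}+sI)^{-1}$. It shows the shift integral of $\ImPart\Tr G_{\Lambda\Lambda}$ has a positive lower limit in one of two ways. Either the boundary value $M_u$ has nonzero imaginary part, or $M_u$ is Hermitian and $\psi_E$ forces an eigenvalue into the window. Fatou then gives only $n_\lambda(E)>0$, not a uniform bound.
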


\begin{proof}
By \eref{eq:averaged-density} and \eref{eq:resolvent-spectral},
\begin{equation}\label{eq:dos-boundary}
 n_\lambda(E)
 =
 \lim_{\eta\downarrow0}\frac1\pi\E W_x(E+i\eta)
 \qquad\text{for almost every }E.
\end{equation}
Fix such an $E\in I$ for which \Cref{lem:regular-energies} also applies
and put $C=\Lambda^c$.

We first use an elementary finite-dimensional observation.  If
$M_\eta\to M$, $\ImPart M_\eta<0$, and $\ImPart M\le0$, write the
eigenvalues of $M_\eta$ as
$\alpha_j(\eta)-i\beta_j(\eta)$, $\beta_j(\eta)>0$.  Schur
triangularization gives
\[
 \ImPart\Tr(M_\eta+sI)^{-1}
 =
 \sum_{j=1}^m
 \frac{\beta_j(\eta)}
 {(s+\alpha_j(\eta))^2+\beta_j(\eta)^2}\ge0.
\]
Since
\[
 \int_{-a}^a
 \frac{\beta\,ds}{(s+\alpha)^2+\beta^2}
 =
 \arctan\frac{a+\alpha}{\beta}
 -
 \arctan\frac{-a+\alpha}{\beta},
\]
one obtains
\begin{equation}\label{eq:matrix-shift-positive}
 \liminf_{\eta\downarrow0}
 \int_{-a}^a\ImPart\Tr(M_\eta+sI)^{-1}\,ds>0
\end{equation}
whenever either $\ImPart M\ne0$, or $M=M^*$ and
$\sigma(M)\cap(-a/2,a/2)\ne\varnothing$.  Indeed, in the first case
$\Tr(\ImPart M)<0$, so some limiting eigenvalue has negative imaginary
part; in the second case the contribution of an eigenvalue converging
to $(-a/2,a/2)$ tends to $\pi$.

Choose $x_0\in\Lambda$ and write
$v_{x_0}=s$ and $v_x=s+u_x$ for $x\ne x_0$, with $u_{x_0}=0$.
This change of variables has Jacobian one.  Restrict to
\begin{equation}\label{eq:rectangle}
 |s|\le\frac{\lambda b}{2},\qquad
 |u_x|\le\frac{\lambda b}{8}\quad(x\ne x_0).
\end{equation}
Then $|v_x|<\lambda b$ for every $x\in\Lambda$, and hence
\[
 \prod_{x\in\Lambda}\rho_\lambda(v_x)
 \ge\left(\frac{\rho_-}{\lambda}\right)^m
 \qquad\text{a.e.}
\]

For fixed $u$, set
\begin{equation}\label{eq:common-shift-matrix}
 M_{u,\eta}
 =
 A_\Lambda+\diag u-E-i\eta
 -
 A_{\Lambda C}(H_C-E-i\eta)^{-1}A_{C\Lambda}.
\end{equation}
By \eref{eq:grushin-effective},
$G_{\Lambda\Lambda}(E+i\eta)=(M_{u,\eta}+sI)^{-1}$.
The Ward identity gives $\ImPart M_{u,\eta}<0$, while its boundary value
$M_u$ satisfies $\ImPart M_u\le0$.

If $\ImPart M_u\ne0$, \eqref{eq:matrix-shift-positive} applies.
Suppose $M_u=M_u^*$.  Since
$A_{C\Lambda}\psi_E=0$, the exterior term in
\eqref{eq:common-shift-matrix} annihilates $\psi_E$, and
\[
 \|M_u\psi_E\|
 \le
 \|(A-E)\psi_E\|+\|\diag u\,\psi_E\|
 \le\frac{\lambda b}{16}+\frac{\lambda b}{8}
 =\frac{3\lambda b}{16}
 <\frac{\lambda b}{4}.
\]
Hence
\begin{equation*}
 \operatorname{dist}(0,\sigma(M_u))
 \le \|M_u\psi_E\|
 <\frac{\lambda b}{4}.
\end{equation*}
Taking
$a=\lambda b/2$ in \eqref{eq:matrix-shift-positive} therefore yields
\begin{equation}\label{eq:block-shift-positive}
 \liminf_{\eta\downarrow0}
 \int_{-\lambda b/2}^{\lambda b/2}
 \ImPart\Tr(M_{u,\eta}+sI)^{-1}\,ds>0
\end{equation}
for every $u$ in \eqref{eq:rectangle} and almost every exterior
configuration.

Let $\mathcal U$ denote the $u$-rectangle and put
\[
 J_\eta(u,\omega_C)
 =
 \int_{-\lambda b/2}^{\lambda b/2}
 \ImPart\Tr(M_{u,\eta}+sI)^{-1}\,ds\ge0.
\]
Disintegration, the preceding change of variables, and the lower bound
on the joint density give
\[
 \E\ImPart\Tr G_{\Lambda\Lambda}(E+i\eta)
 \ge
 \left(\frac{\rho_-}{\lambda}\right)^m
 \E_C\int_{\mathcal U}J_\eta(u,\omega_C)\,du.
\]
With $J_*=\liminf_{k\to\infty}J_{1/k}$,
\eqref{eq:block-shift-positive} gives $J_*>0$ almost everywhere.
Thus Fatou's lemma and transitivity, together with
\eqref{eq:dos-boundary}, imply
\[
 \begin{aligned}
 m\pi n_\lambda(E)
 &=
 \lim_{k\to\infty}
 \E\ImPart\Tr G_{\Lambda\Lambda}(E+i/k)\ge
 \left(\frac{\rho_-}{\lambda}\right)^m
 \E_C\int_{\mathcal U}J_*(u,\omega_C)\,du
 >0.
 \end{aligned}
\]
Hence $n_\lambda(E)>0$ for almost every $E\in I$.
\end{proof}
This gives the following two corollaries:
\begin{corollary}\label{cor:free-criterion}
Assume \eref{eq:density-lower} in addition to the model hypotheses.
Then
\[
 n_\lambda(E)>0
 \qquad\text{for almost every }E\in\sigma(A).
\]
\end{corollary}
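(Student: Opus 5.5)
The plan is to apply \Cref{prop:positive-dos} on small subintervals of $\sigma(A)$, building the approximate eigenvectors it requires from Weyl sequences of the free operator $A=A_d$. The proposition needs $\psi_E$ supported in a finite set $\Lambda$ with $A_{\Lambda^c\Lambda}\psi_E=0$. That condition says $\psi_E$ vanishes on the inner boundary of $\Lambda$, i.e.\ on every vertex of $\Lambda$ adjacent to $\Lambda^c$. So the construction is: take an approximate eigenvector of $A$ that is finitely supported, and enlarge $\Lambda$ to include all neighbours of its support. Then every vertex of $\Lambda$ adjacent to $\Lambda^c$ lies outside $\supp\psi_E$, and the condition holds automatically.

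The argument runs in four steps.
\begin{enumerate}
\item Fix $E_0\in\sigma(A)$. Since $A$ is bounded and self-adjoint, there is a unit vector $\phi\in\ell^2(X)$ with $\|(A-E_0)\phi\|\le\lambda b/100$.
\item Finitely supported vectors are dense and $A$ is bounded, so after truncating and renormalizing we may take $\phi$ finitely supported, with $\|(A-E_0)\phi\|\le\lambda b/50$.
\item Let $\Lambda_{E_0}=\supp\phi\cup\bigcup_{u\in\supp\phi}N(u)$. This set is finite, and $A_{\Lambda_{E_0}^c\Lambda_{E_0}}\phi=0$ because any edge from $\Lambda_{E_0}^c$ lands outside $\supp\phi$.
\item For $|E-E_0|\le\lambda b/32$, keep $\psi_E=\phi$. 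Then $\|(A-E)\psi_E\|\le\lambda b/50+\lambda b/32<\lambda b/16$.
\end{enumerate}
Hence \Cref{prop:positive-dos} applies on $I_{E_0}=[E_0-\lambda b/32,E_0+\lambda b/32]$ with the fixed finite set $\Lambda=\Lambda_{E_0}$, and gives $n_\lambda>0$ a.e.\ on $I_{E_0}$.

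The compact set $\sigma(A)$ is covered by finitely many such intervals $I_{E_0}$; countably many would also suffice. Taking the union of the exceptional null sets gives $n_\lambda(E)>0$ for a.e.\ $E\in\sigma(A)$.

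The proposition is applied separately on each interval, each with its own finite $\Lambda$. This is legitimate because \Cref{prop:positive-dos} allows $\Lambda$ to be an arbitrary finite set, and the variable $m=|\Lambda|$ only affects the size of the positive lower bound, not its positivity.

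There is no serious obstacle. The only point needing care is the support condition $A_{\Lambda^c\Lambda}\psi_E=0$, which is handled by the neighbourhood enlargement in step 3, and the requirement that one $\psi$ serve for a whole interval of energies, which step 4 provides.
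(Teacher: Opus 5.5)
Your proposal is correct and follows essentially the same route as the paper: Weyl quasimodes of $A$ truncated to finite support, $\Lambda$ enlarged by all neighbours of the support so that $A_{\Lambda^c\Lambda}\psi_E=0$, one quasimode reused for a whole window of energies, and a finite cover of the compact set $\sigma(A)$. The only difference is cosmetic. The paper merges the finitely many quasimode supports into one common $\Lambda$ and applies \Cref{prop:positive-dos} once to all of $\sigma(A)$, whereas you apply it interval by interval with separate sets $\Lambda_{E_0}$ and take a finite union of null sets; this is equally valid. One small correction: vanishing on the inner boundary of $\Lambda$ is sufficient for $A_{\Lambda^c\Lambda}\psi_E=0$ but not equivalent to it, though you only use the sufficient direction.
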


\begin{proof}
We put $I=\sigma(A)$, which is compact, fix \(\lambda>0\), and set
$\varepsilon=\frac{\lambda b}{32}.$
For each \(E_0\in I\), Weyl's criterion and finite-support approximation
give a finitely supported unit vector \(\psi_{E_0}\) such that $\|(A-E_0)\psi_{E_0}\|<\varepsilon.$
Whenever \(|E-E_0|<\varepsilon\),
\begin{equation}\label{eq:free-criterion-local-quasimode}
 \|(A-E)\psi_{E_0}\|
 \le\|(A-E_0)\psi_{E_0}\|+|E-E_0|
 <2\varepsilon=\frac{\lambda b}{16}.
\end{equation}
We choose \(E_1,\dots,E_N\in I\) with
\[
 I\subset\bigcup_{j=1}^N(E_j-\varepsilon,E_j+\varepsilon),
 \qquad
 S=\bigcup_{j=1}^N\supp\psi_{E_j}.
\]
We let \(\Lambda\) contain \(S\) and all neighbors of \(S\).  Then for every
\(j\),
\begin{equation}\label{eq:free-criterion-common-block}
 \supp\psi_{E_j}\subset\Lambda,
 \qquad
 \supp(A\psi_{E_j})\subset\Lambda,
 \qquad
 A_{\Lambda^c\Lambda}\psi_{E_j}=0.
\end{equation}
For \(E\in I\), we choose \(j\) with \(|E-E_j|<\varepsilon\) and put
\(\psi_E=\psi_{E_j}\).  Equations
\eqref{eq:free-criterion-local-quasimode}--\eqref{eq:free-criterion-common-block}
are exactly the hypotheses of \Cref{prop:positive-dos}; hence
\(n_\lambda(E)>0\) for almost every \(E\in I\).
\end{proof}

\begin{corollary}\label{cor:shifted-dos}
Suppose the independent potentials have a bounded, compactly supported
common density $\varrho$, and
$\varrho(v)\ge c_0>0$ for almost every $v\in[v_0-a,v_0+a]$, where
$a>0$. We write $\mathbb E\mu_x(dE)=n(E)\,dE$. Then $n(E)>0$ almost
everywhere on $\sigma(A)+v_0+(-a,a)$.
\end{corollary}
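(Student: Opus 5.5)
The plan is to rerun the block-shift argument of \Cref{prop:positive-dos}, but centered at $v_0$ instead of $0$, and then do the covering step of \Cref{cor:free-criterion}. Write $\varrho$ for the common density of the $v_x$ directly, so no $\lambda$-scaling is needed. The one structural change is that the block shift will now range over $s\in v_0+(-a/2,a/2)$ rather than over an interval centered at $0$.

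First, the covering and quasimodes. Set $\varepsilon=a/32$ and let $E$ range over a compact subset $I\Subset\sigma(A)+v_0+(-a,a)$; writing the open target set as a countable union of such $I$ then suffices. For each $E$ write $E=E'+v_0+\theta$ with $E'\in\sigma(A)$ and $|\theta|<a$. The difficulty is that the quasimode must work for the shifted energy $E-v_0-\theta$, and $\theta$ is not small. I plan to absorb $\theta$ into the shift variable.

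To do so, exhaust the target set by compact sets of the form $\sigma(A)+v_0+[-a',a']$ with $a'<a$. Cover $\sigma(A)$ by finitely many $(E_j-\varepsilon,E_j+\varepsilon)$ with finitely supported quasimodes $\|(A-E_j)\psi_j\|<\varepsilon$. As in \Cref{cor:free-criterion}, take $\Lambda$ to be the union of the supports together with their neighbors, so that $A_{\Lambda^c\Lambda}\psi_j=0$.

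Next, the change of variables. Fix $x_0\in\Lambda$, put $v_{x_0}=s$ and $v_x=s+u_x$, and restrict to
\[
 |s-v_0|<a,\qquad |u_x|\le\delta
\]
with $\delta$ small, for instance $\delta=(a-a')/4$ together with $\delta\le a/8$. Then every $v_x$ with $x\in\Lambda$ lies in $[v_0-a,v_0+a]$. On this region the joint density is at least $c_0^m$, and the Jacobian is one.

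Take $M_{u,\eta}$ as in \eqref{eq:common-shift-matrix}, so that $G_{\Lambda\Lambda}=(M_{u,\eta}+sI)^{-1}$. Two cases arise.

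\textbf{Non-Hermitian case.} If $\ImPart M_u\neq0$, then positivity of $\liminf\int\ImPart\Tr(M_{u,\eta}+sI)^{-1}\,ds$ over any nondegenerate $s$-interval follows exactly as in \eqref{eq:matrix-shift-positive}.

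\textbf{Hermitian case.} If $M_u=M_u^*$, choose $j$ with $E-v_0-\theta_0\in(E_j-\varepsilon,E_j+\varepsilon)$ for some $|\theta_0|\le a'$. Then $\psi_j$ gives
\[
 \|(M_u+(v_0+\theta_0)I)\psi_j\|\le 2\varepsilon+\delta .
\]
Hence $M_u$ has an eigenvalue within $2\varepsilon+\delta$ of $-(v_0+\theta_0)$. Since $s$ sweeps a window of radius $a-a'\ge 4\delta$ around $v_0+\theta_0$, shrinking $\varepsilon$ so that $2\varepsilon\le\delta$ places that eigenvalue strictly inside the window. The argument of \eqref{eq:matrix-shift-positive}, applied with shifted center, then gives a contribution tending to $\pi$.

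Finally, integrate. By Fatou's lemma and transitivity, exactly as in the last display of \Cref{prop:positive-dos},
\[
 m\pi n(E)\ge c_0^m\,\E_C\int J_*\,du>0
\]
for a.e. $E\in I$, where \eqref{eq:dos-boundary} holds and \Cref{lem:regular-energies} applies. Boundedness and compact support of $\varrho$ are needed only so that \Cref{lem:regular-energies} and \eqref{eq:dos-boundary} remain valid.

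I expect the main obstacle to be bookkeeping rather than substance. The quasimode is attached to the free energy $E_j$, while the target energy sits at an arbitrary offset $\theta\in(-a,a)$. The fix is to let the shift interval for $s$ be centered at $v_0+\theta_0$ instead of $v_0$, while the admissible $(s,u)$-region stays inside the support of the lower bound on $\varrho$. This is what forces working on the compact pieces $|\theta|\le a'<a$ and then exhausting.
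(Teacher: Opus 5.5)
Your proof uses the same mechanism as the paper: a common block shift, the Hermitian/non-Hermitian dichotomy of \Cref{prop:positive-dos}, and Fatou's lemma with transitivity. The difference is only in how the covering is organized.
\begin{itemize}
\item The paper argues locally. For each $t\in(v_0-a,v_0+a)$ and $E_0\in\sigma(A)$ it takes $[t-\zeta,t+\zeta]\subset(v_0-a,v_0+a)$ and a quasimode of quality $\zeta/32$. It writes $v_x=t+s+u_x$ with $|s|\le\zeta/2$ and $|u_x|\le\zeta/8$, obtains $n>0$ a.e.\ near $E_0+t$, and passes to a countable subcover.
\item You use one block $\Lambda$ and one fixed $s$-window per compact piece $\sigma(A)+v_0+[-a',a']$. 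The energy-dependent offset $\theta_0$ then locates the pole inside that window.
\end{itemize}
Both organizations work.

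There is one slip to fix. With $|s-v_0|<a$ and $|u_x|\le\delta$, the site values $v_x=s+u_x$ can lie up to $\delta$ outside $[v_0-a,v_0+a]$. So the floor $c_0^m$ on the joint density is not available on your whole rectangle, contrary to what you claim. Restrict instead to $|s-v_0|\le a-\delta$. This window still contains the pole: from $|-\lambda-(v_0+\theta_0)|<2\varepsilon+\delta\le2\delta$ you get
\[
|-\lambda-v_0|<a'+2\delta<a-\delta\qquad\text{when }\delta=(a-a')/4,
\]
since the difference $(a-\delta)-(a'+2\delta)$ equals $(a-a')/4>0$. With that change the rest of your argument goes through as written.
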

\begin{proof}
We fix $t\in(v_0-a,v_0+a)$.  We choose $\zeta>0$ so small that
\begin{equation}\label{eq:shifted-dos-core}
 [t-\zeta,t+\zeta]\subset(v_0-a,v_0+a).
\end{equation}
For $E_0\in\sigma(A)$, Weyl's criterion gives a finitely supported unit
vector $\psi$ such that
\begin{equation*}
 \|(A-E_0)\psi\|<\frac{\zeta}{32}.
\end{equation*}
We let $S=\supp\psi$ and choose a finite $\Lambda$ containing $S$ and every
neighbor of $S$.  Every edge incident to $S$ then has both endpoints in
$\Lambda$, hence
\begin{equation}\label{eq:shifted-dos-decoupling}
 A_{\Lambda^c\Lambda}\psi=0.
\end{equation}
For every $E$ with $|E-E_0|<\zeta/32$, the same vector satisfies
\begin{align}
 \|(A-E)\psi\|
 &\le \|(A-E_0)\psi\|+|E-E_0|\|\psi\| <\frac{\zeta}{16}.\label{eq:shifted-dos-nearby}
\end{align}

We choose one vertex $x_0\in\Lambda$ and parametrize a rectangular set of
potentials by
\begin{equation}\label{eq:shifted-dos-rectangle}
 v_x=t+s+u_x,\qquad
 u_{x_0}=0,\quad |s|\le\frac\zeta2,
 \quad |u_x|\le\frac\zeta8\ (x\ne x_0).
\end{equation}
By \eqref{eq:shifted-dos-core}, every coordinate in this rectangle lies
in the region on which $\varrho\ge c_0$.  Thus the joint density of the
coordinates in $\Lambda$ is bounded below there by
\begin{equation}\label{eq:shifted-dos-density-floor}
 \prod_{x\in\Lambda}\varrho(v_x)\ge c_0^{|\Lambda|}>0.
\end{equation}
At spectral parameter $E+t+i\eta$, subtracting the common scalar $tI$
from the retained block gives
\[
 A_\Lambda-E-i\eta+sI+\diag(u),
\]
with the same exterior contribution in the Grushin effective Hamiltonian
as in the proof of \Cref{prop:positive-dos}, evaluated at $E+t+i\eta$.
The vector $\psi$ has zero coupling to the exterior by
\eqref{eq:shifted-dos-decoupling}.

If the boundary value of this finite effective Hamiltonian has nonzero
imaginary part, the shift integral from \Cref{prop:positive-dos} is
strictly positive.  In the
Hermitian case the diagonal perturbation costs at most $\zeta/8$, and
\eqref{eq:shifted-dos-nearby} gives
\begin{equation*}
 \|(A-E+\diag(u))\psi\|
 \le\frac\zeta{16}+\frac\zeta8
 =\frac{3\zeta}{16}<\frac\zeta4.
\end{equation*}
Thus an eigenvalue lies within $\zeta/4$ of the origin, while the common
shift $s$ ranges over $[-\zeta/2,\zeta/2]$; the same shift integral has a
positive lower limit.  Integrating over the rectangle
\eqref{eq:shifted-dos-rectangle}, using the lower density bound
\eqref{eq:shifted-dos-density-floor}, and then applying Fatou's lemma as
in \Cref{prop:positive-dos}, we obtain
\[
 n(E+t)>0
 \quad\text{for almost every }E\in(E_0-\zeta/32,E_0+\zeta/32).
\]
Equivalently, $n(E')>0$ almost everywhere in a neighborhood of
$E_0+t$.  The sets obtained as $E_0\in\sigma(A)$ and
$t\in(v_0-a,v_0+a)$ vary form an open cover of
$\sigma(A)+v_0+(-a,a)$.  Passing to a countable subcover and taking the
union of the corresponding null exceptional sets proves the claim.
\end{proof}

\subsubsection{Finite resampling and decoupling}

Good tunneling and a resonant exterior are correlated events. We
approximate the exterior event by a cylinder event, then resample its
coordinates. The logarithmic change is $o(R)$ for $R=d(o,x)$, by the
one-site rank-one resolvent formula \cite{SimonRankOne,LiawTreil}.

\begin{lemma}\label{lem:finite-resampling}
We fix $E\in\Ereg$ and distinct vertices $o,x\in X$. We let
\[
 F\subset X\setminus\{o,x\},\qquad |F|=m\ge1,
\]
and let $\omega'$ be obtained from $\omega$ by resampling the potentials on $F$. Then
\begin{equation}\label{eq:finite-resampling}
 \PP\left\{\left|\log|\tau_{ox}(E,\omega')|
 -\log|\tau_{ox}(E,\omega)|\right|>s\right\}
 \leq24\rho_+m\exp\!\left(-\frac{s}{2m}\right).
\end{equation}
The bound depends only on $m$ and $\rho_+$.
\end{lemma}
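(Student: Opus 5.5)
The plan is to telescope over the $m$ resampled sites and prove a one-site estimate. Enumerate $F=\{u_1,\dots,u_m\}$. Let $\omega^{(0)}=\omega$, and let $\omega^{(k)}$ agree with $\omega'$ on $u_1,\dots,u_k$ and with $\omega$ elsewhere. Each $\omega^{(k)}$ is a finite resampling of $\omega$, so it has law $\PP$. By \Cref{lem:regular-energies}, applied to this countable family of resamplings, all boundary values $\tau_{ox}(E,\omega^{(k)})$ exist, are finite and are nonzero almost surely. Since
\[
 \log|\tau_{ox}(E,\omega')|-\log|\tau_{ox}(E,\omega)|
 =\sum_{k=1}^m\Bigl(\log|\tau_{ox}(E,\omega^{(k)})|-\log|\tau_{ox}(E,\omega^{(k-1)})|\Bigr),
\]
a union bound reduces \eqref{eq:finite-resampling} to the one-site claim
\[
 \PP\bigl\{|\log|\tau(\omega^{(k)})|-\log|\tau(\omega^{(k-1)})||>s/m\bigr\}
 \le 24\rho_+e^{-s/(2m)} .
\]
The two configurations in the $k$-th step differ only at $u=u_k$. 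There the old value $\omega_u$ and the new value $\omega'_u$ are independent of each other and of all other coordinates, each with density $\rho\le\rho_+$.

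\emph{Rank-one structure.} Fix everything except $v_u=\lambda\theta$ with $u\notin\{o,x\}$. The coefficient $\tau_{ox}$ is built from $(H_{C_{ox}}-z)^{-1}$, and $v_u$ enters $H_{C_{ox}}$ as a rank-one diagonal perturbation. The rank-one resolvent formula (equivalently the Schur complement for the Grushin problem with $F=\{o,x,u\}$ in \Cref{lem:effective-resolvents}) then gives
\[
 \tau_{ox}(z)=\tau^{(u)}_{ox}(z)-\frac{\tau_{ou}^{(x)}(z)\,\tau_{ux}^{(o)}(z)}{\sigma(z)-v_u}
 =\frac{\alpha v_u+\beta}{v_u-\gamma}.
\]
Here the first term and the numerator of the second are effective coefficients with $u$ deleted, $\sigma$ is the corresponding self-energy of $u$, and $\alpha,\beta,\gamma\in\C$ depend only on the exterior of $u$. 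Boundary values at $E+i0$ exist on $\Ereg$ by \Cref{lem:regular-energies}, again for almost every exterior. Substituting $v_u=\lambda\theta$ and dividing by $\lambda$, we get
\[
 \log|\tau(\theta')|-\log|\tau(\theta)|
 =\bigl(\log|\theta'-w_1|-\log|\theta-w_1|\bigr)-\bigl(\log|\theta'-w_2|-\log|\theta-w_2|\bigr),
\]
with $w_1=-\beta/(\lambda\alpha)$ and $w_2=\gamma/\lambda$. The factors of $\lambda$ cancel, which is why only $\rho_+$, and not $\rho_+/\lambda$, appears. The degenerate case $\alpha=0$ (the numerator is constant in $\theta$) simply drops the first bracket, and the nondegeneracy $\tau\ne0$ excludes the remaining trivial case.

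\emph{Tail bound.} Set $t=s/m$ and bound each bracket exceeding $t/2$ in absolute value. For $w\in\C$ and $\theta,\theta'\in[-1,1]$, the quantity $|\log|\theta'-w|-\log|\theta-w||$ can exceed $t/2$ only if one of the distances is small. If $|w|$ is large, both distances are comparable to $|w|$, with ratio at most $3$; otherwise, the ratio exceeds $e^{t/2}$ only when $\min\{|\theta-w|,|\theta'-w|\}\le 3e^{-t/2}$. Since $\theta$ has density at most $\rho_+$,
\[
 \PP\{|\theta-w|\le r\}\le 2\rho_+ r ,
\]
conditionally on the exterior. Hence each bracket exceeds $t/2$ with probability at most $2\cdot 2\rho_+\cdot 3e^{-t/2}=12\rho_+e^{-t/2}$. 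The two brackets give $24\rho_+e^{-t/2}$, and summing over the $m$ steps yields \eqref{eq:finite-resampling}.

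The main obstacle is the careful justification of the Möbius form at the boundary value $E+i0$. One must check that the coefficients $\alpha,\beta,\gamma$ are exterior-measurable, that the representation survives $\eta\downarrow0$ for almost every $\theta$ and $\theta'$ (this is the last clause of \Cref{lem:regular-energies}), and that the constant $3$ in the distance comparison is uniform.
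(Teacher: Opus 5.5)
Your proposal is correct and follows the paper's proof essentially step by step. It uses the same telescoping over the $m$ resampled sites with a union bound at level $s/m$, and the same fractional-linear dependence of $\tau_{ox}(E)$ on $v_u$; your three-site Schur complement is equivalent to the paper's rank-one resolvent formula. It also uses the same two-factor tail bound $2\cdot 12\rho_+e^{-t/2}$, and excludes the identically vanishing numerator through $\PP\{\tau_{ox}(E)=0\}=0$.

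One point should be made explicit: in the regime $|w|>2$, $t/2\le\log3$, a bracket can exceed $t/2$ without either distance being small. There the claimed bound is trivial, because $\rho_+\ge1/2$ gives $12\rho_+e^{-t/2}\ge4\rho_+\ge2$, exactly as the paper notes.
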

\begin{proof}
We first record the one-site estimate. We let $U,U'$ be independent with
density $\rho_\lambda$ and let
\[
 f(v)=\frac{a_1v+a_0}{c_1v+c_0}\not\equiv0
\]
be a quotient of affine functions, with nonzero denominator. Then
\begin{equation}\label{eq:one-resampling}
 \PP\!\left\{\left|\log|f(U)|-\log|f(U')|\right|>s\right\}
 \le 24\rho_+e^{-s/2}.
\end{equation}
Indeed, for one nonconstant affine factor the logarithmic difference is
$\Delta_z=\log|U-z|-\log|U'-z|$. If $|z|>2\lambda$, then
$|\Delta_z|\le\log3$. If $|z|\le2\lambda$, then
\[
 |\Delta_z|>s
 \quad\Longrightarrow\quad
 \min\{|U-z|,|U'-z|\}<3\lambda e^{-s},
\]
and the density bound \eref{eq:scaled-density} gives
$\PP\{|\Delta_z|>s\}\le12\rho_+e^{-s}$. Since $f$ has at most two
nonconstant affine factors, a union bound gives
\eref{eq:one-resampling}.
For the omitted small values of $s$, we use the trivial probability bound
one. Indeed, normalization of a density on $[-1,1]$ gives
$\rho_+\ge1/2$, and therefore
\[
 12\rho_+e^{-s}\ge12\rho_+/3\ge2
 \qquad(0\le s\le\log3).
\]
Thus the displayed affine-factor bound holds for all $s\ge0$, including
when the root lies outside $|z|\le2\lambda$. Constant affine factors
contribute zero to the logarithmic difference.

We fix one resampled vertex $u\notin\{o,x\}$ and hold all other
potentials fixed. We let $R_v$ be the resolvent on
$C_{ox}=X\setminus\{o,x\}$ with the potential at $u$ set to $v$.
The rank-one identity \cite[Lemma~3.1 and its proof]{LiawTreil}
gives, for $\ImPart z>0$,
\begin{equation}\label{eq:rank-one-resolvent}
 R_v=R_0-\frac{v\,R_0|\delta_u\rangle\langle\delta_u|R_0}
 {1+v\langle\delta_u,R_0\delta_u\rangle}.
\end{equation}
By \Cref{lem:regular-energies}, \eref{eq:rank-one-resolvent}
also holds at the boundary for almost every $v$. Substitution in
\eref{eq:tau} shows that
$\tau_{ox}(E;v_u)=(a_1v_u+a_0)/(c_1v_u+c_0)$, with coefficients
independent of $v_u$. The denominator in the rank-one formula has
constant term one and is therefore a nonzero polynomial.

In fact, with all quantities on the right computed at $v_u=0$, we set
\[
 a=\tau_{ox}(E;0),\quad
 c=\langle\delta_u,R_0\delta_u\rangle,\quad
 b=\langle b_o^{C_{ox}},R_0\delta_u\rangle
   \langle\delta_u,R_0b_x^{C_{ox}}\rangle.
\]
Then the fractional-linear dependence is exactly
\begin{equation}\label{eq:tau-fractional-linear}
 \tau_{ox}(E;v_u)=a-\frac{v_ub}{1+v_uc}
 =\frac{a+(ac-b)v_u}{1+cv_u}.
\end{equation}
The numerator in \eqref{eq:tau-fractional-linear} is also a nonzero
polynomial for almost every fixed exterior. We let
$\mathcal F_{\ne u}=\sigma(v_w:w\ne u)$ and let $Z_u$ be the
exterior event that both of its coefficients vanish. On
$Z_u$, the amplitude is zero for all $v_u$ except possibly the single
pole, which has conditional probability zero. Hence
\[
 0=\PP\{\tau_{ox}(E)=0\}
  =\E_{\ne u}\PP\{\tau_{ox}(E)=0\mid\mathcal F_{\ne u}\}
  \ge\PP_{\ne u}(Z_u).
\]
The first equality is the nonvanishing statement in
\Cref{lem:regular-energies} on the original connected graph.
The possible pole has conditional probability zero, so
\eqref{eq:one-resampling} applies.

We now order the $m$ sites of $F$. We let $\omega^{(j)}$ use the first
$j$ resampled values and retain the remaining old values, and write
$\tau(\xi)=\tau_{ox}(E,\xi)$. Thus $\omega^{(0)}=\omega$ and
$\omega^{(m)}=\omega'$, and the total change is
\[
 \log|\tau(\omega')|-\log|\tau(\omega)|
 =\sum_{j=1}^m
 \left(\log|\tau(\omega^{(j)})|-\log|\tau(\omega^{(j-1)})|\right).
\]
Each step satisfies \eref{eq:one-resampling}, conditionally on
all unchanged coordinates. If the sum exceeds
$s$ in modulus, one of its $m$ terms exceeds $s/m$. A union bound
gives \eref{eq:finite-resampling}.
\end{proof}

Thus, for every fixed resampling block and every
$\varepsilon>0$,
\[
 \PP\{\text{logarithmic change}>\varepsilon R\}
 \le24\rho_+m e^{-\varepsilon R/(2m)}\longrightarrow0.
\]
Thus finite local resampling changes $\log|\tau_{ox}|$ by $o(R)$ with
probability tending to one.

The following lemma combines a tunneling event with a favorable exterior event.
A family $(B_x)_{x\in X}$ is \emph{covariant} if
$\one_{B_{\phi x}}(\phi\cdot\omega)=\one_{B_x}(\omega)$ for every
graph automorphism $\phi$. We approximate $B_x$ by a cylinder event and
resample its finitely many coordinates.

\begin{lemma}\label{lem:joint}
We let $F_R\subseteq S_R(o)$ be nonempty deterministic endpoint sets for $R\in\mathscr R$ tending to infinity, put $M_R=|F_R|$, and suppose
\[
 \frac{\log M_R}{R}\longrightarrow\gamma>0.
\]
We fix $E\in\Ereg$ with $\mathcal L_F(E)<\gamma$, where
$\mathcal L_F$ is defined in \eref{eq:selected-cost}. We let
$(B_x)_{x\in X}$ be a covariant family of events, measurable with
respect to all potentials except $v_x$, with $\PP(B_x)=q>0$.
For every
\[
 \mathcal L_F(E)<\beta_0<\beta_1<\gamma
\]
there are $c_E>0$ and infinitely many radii $R$ such that
\begin{equation}\label{eq:joint-good}
 \frac1{M_R}\sum_{x\in F_R}
 \PP\{B_x,\ |\tau_{ox}(E)|\ge e^{-\beta_1R}\}\ge c_E.
\end{equation}
The local approximation uses a fixed-radius resampling block whose size is
independent of $R$ and $x$.
\end{lemma}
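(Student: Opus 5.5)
The plan is to combine three ingredients. First, a Markov bound converts the averaged logarithmic cost $\mathcal L_F(E)<\beta_0$ into a positive fraction of good tunneling events. Second, a fixed-radius cylinder approximation of the covariant events $B_x$. Third, \Cref{lem:finite-resampling}, which decouples the two at a logarithmic cost of $o(R)$. Throughout, fix an intermediate exponent $\beta$ with $\beta_0<\beta<\beta_1$.

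\emph{Step 1 (tunneling fraction).} Since $\mathcal L_F(E)<\beta_0$ is a $\liminf$ along $\mathscr R$, there are infinitely many $R\in\mathscr R$ with $\mathcal L_{F,R}(E)\le\beta_0$. Because $E\in\Ereg$, each $\ell_{ox}(E)$ is an a.s.\ finite nonnegative random variable. Markov's inequality gives $\PP\{\ell_{ox}(E)>\beta R\}\le \E\ell_{ox}(E)/(\beta R)$. Averaging over $x\in F_R$ then yields, for these $R$,
\[
 \frac1{M_R}\sum_{x\in F_R}\PP\{|\tau_{ox}(E)|\ge e^{-\beta R}\}\ge 1-\frac{\beta_0}{\beta}=:2\delta>0 .
\]
(Here $e^{-\beta R}<1$, so $\ell_{ox}\le\beta R$ is equivalent to $|\tau_{ox}|\ge e^{-\beta R}$.)

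\emph{Step 2 (cylinder approximation).} Fix a base vertex $x_0$. The event $B_{x_0}$ is measurable with respect to $\sigma(v_w:w\ne x_0)$, which is generated by cylinder events. Hence, for any $\epsilon>0$, there are a radius $r$ and an event $B'_{x_0}$ depending only on the coordinates in $B_r(x_0)\setminus\{x_0\}$ with $\PP(B_{x_0}\triangle B'_{x_0})\le\epsilon$. Here I take $\epsilon=q\delta/4$. Since $\mathcal X_d$ is vertex-transitive, for each $x$ choose an automorphism $\phi$ with $\phi x_0=x$ and set $\one_{B'_x}(\omega)=\one_{B'_{x_0}}(\phi^{-1}\cdot\omega)$. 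Covariance of $(B_x)$ and invariance of the product law then give $\PP(B_x\triangle B'_x)\le\epsilon$ and $\PP(B'_x)=\PP(B'_{x_0})\ge q-\epsilon$, uniformly in $x$. Moreover, $B'_x$ depends only on the block $F_x=B_r(x)\setminus\{x\}$, which has size $m\le |B_r(x_0)|$, independent of $R$ and $x$. For $R>r$ this block avoids $o$ and $x$.

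\emph{Step 3 (resampling and decoupling).} Let $\omega'$ be obtained by resampling $\omega$ on $F_x$. Then $\omega'$ has law $\PP$. The event $B'_x(\omega')$ depends only on the fresh coordinates, so it is independent of $\omega$, and in particular of $\{|\tau_{ox}(E,\omega)|\ge e^{-\beta R}\}$. On the intersection of these two events, either $|\tau_{ox}(E,\omega')|\ge e^{-\beta_1 R}$ holds, or the logarithmic change exceeds $(\beta_1-\beta)R$. By \Cref{lem:finite-resampling}, the latter has probability at most $24\rho_+m\,e^{-(\beta_1-\beta)R/(2m)}$. Therefore
\[
 \PP\{B_x,\ |\tau_{ox}(E)|\ge e^{-\beta_1R}\}
 \ge (q-\epsilon)\,\PP\{|\tau_{ox}(E)|\ge e^{-\beta R}\}-\epsilon-24\rho_+m\,e^{-(\beta_1-\beta)R/(2m)}.
\]
This uses $\omega'\overset{d}{=}\omega$ to return from $(B'_x(\omega'),\tau(\omega'))$ to $(B_x,\tau)$ at the cost of the symmetric difference. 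Averaging over $F_R$ with Step 1, and using $q-\epsilon\ge q/2$ (valid since $\delta\le 1/2$), gives at least $q\delta-q\delta/4-o(1)\ge q\delta/2$ for the infinitely many large $R$ from Step 1. So $c_E=q\delta/2$ works.

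I expect the only delicate point to be Step 3: checking that the independence claim is legitimate, and that \Cref{lem:finite-resampling} applies with a block size $m$ fixed independently of $R$. This is exactly what the uniform, covariant choice of $r$ in Step 2 provides, and it is why the error term decays in $R$ while the main term does not.
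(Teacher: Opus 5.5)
Your proposal is correct and follows the paper's proof essentially step by step. The ingredients are the same: Markov's inequality on the averaged cost along radii where $\mathcal L_{F,R}(E)$ is small, then a fixed-radius cylinder approximation of $B_o$ transported by automorphisms, then resampling of that block via \Cref{lem:finite-resampling} together with equality in law. The differences are cosmetic: the paper builds the cylinder event by martingale convergence and thresholding $\E[\one_{B_o}\mid\mathscr F_{r_0}]$ at $1/2$ rather than by the generic algebra-approximation argument, and it places the intermediate exponent between $\mathcal L_F(E)$ and $\beta_0$ rather than between $\beta_0$ and $\beta_1$.
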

\begin{proof}
We choose $\mathcal L_F(E)<\alpha<\beta_0$. Along an infinite sequence of
radii, $\mathcal L_{F,R}(E)\le\alpha$. For those radii we set
\[
 T_{x,R}=\{|\tau_{ox}(E)|\ge e^{-\beta_0R}\},\qquad
 \delta=1-\alpha/\beta_0>0.
\]
Since $\ell_{ox}(E)>\beta_0R$ on $T_{x,R}^c$, Markov's inequality gives
\begin{equation}\label{eq:many-tunneling}
 \frac1{M_R}\sum_{x\in F_R}\PP(T_{x,R})\ge\delta.
\end{equation}

We choose $\varepsilon>0$ with
$\delta(q-\varepsilon)-\varepsilon>\delta q/2$.  We let
\[
 \mathscr F_r=\sigma(v_y:y\in B_r(o)\setminus\{o\}).
\]
The sigma-algebras $\mathscr F_r$ increase to the exterior sigma-algebra
of $o$.  Martingale convergence gives
\[
 \E\bigl[\one_{B_o}\mid\mathscr F_r\bigr]\longrightarrow\one_{B_o}
 \quad\text{in }L^1.
\]
We choose $r_0\ge1$ so that the $L^1$ distance is less than
$\varepsilon/2$, and set
\[
 C_o=\left\{\E[\one_{B_o}\mid\mathscr F_{r_0}]>\frac12\right\}.
\]
On $B_o\triangle C_o$ the pointwise difference between
$\E[\one_{B_o}\mid\mathscr F_{r_0}]$ and $\one_{B_o}$ is at least
$1/2$.  Hence
\[
 \PP(B_o\triangle C_o)
 \le2\E\left|\E[\one_{B_o}\mid\mathscr F_{r_0}]-\one_{B_o}\right|
 <\varepsilon,
 \qquad
 \PP(C_o)\ge q-\varepsilon.
\]
The event $C_o$ is measurable with respect to
$B_{r_0}(o)\setminus\{o\}$.
Transporting by automorphisms gives events $C_x$ with these bounds.
For $R>r_0$ and $x\in F_R\subset S_R(o)$, the supporting coordinates of
$C_x$ exclude both endpoints. Resample those coordinates to obtain
$\omega'$. Then
\[
 \PP\bigl(C_x(\omega')\cap T_{x,R}(\omega)\bigr)
 =\PP(C_x)\PP(T_{x,R}).
\]
If $T_{x,R}(\omega)$ holds but
$|\tau_{ox}(E,\omega')|<e^{-\beta_1R}$, the logarithmic change is larger
than $(\beta_1-\beta_0)R$. With
$m=|B_{r_0}(o)\setminus\{o\}|$, \Cref{lem:finite-resampling} bounds this
failure by
\[
 \varepsilon_R=24\rho_+m
 \exp\!\left(-\frac{(\beta_1-\beta_0)R}{2m}\right).
\]
Using equality in law of $\omega$ and $\omega'$ and replacing $C_x$ by
$B_x$ therefore gives
\[
 \PP\{B_x,|\tau_{ox}(E)|\ge e^{-\beta_1R}\}
 \ge(q-\varepsilon)\PP(T_{x,R})-\varepsilon-\varepsilon_R.
\]
We average over $F_R$ and use \eref{eq:many-tunneling}. For all sufficiently
large radii in the selected sequence, the right side is at least
$\delta q/3$, proving \eref{eq:joint-good}.
\end{proof}

The cylinder approximation, and hence the resampling block, is chosen
before $R$ and remains fixed as $R\to\infty$.

Combining \Cref{lem:alignment} and \Cref{lem:joint}, favorable exterior
environments can therefore be imposed while preserving the exponential
tunneling scale used in \Cref{sec:ac}.

\subsection{Absolutely continuous spectrum}\label{sec:ac}

We combine \eqref{eq:Poincare} with a translation bound to count distant
resonances. If the boundary imaginary part vanished, these resonances would
force root responses too large for the uniform logarithmic bounds below.
Conditional averaging then removes singular spectrum.

We formulate the resonance step for selected parts of spheres. We let $F_R\subseteq S_R(o)$ be nonempty deterministic
sets for $R\in\mathscr R$ tending to infinity, with the convention in
\eqref{eq:selected-cost}. We put $M_R=|F_R|$ and assume
\begin{equation}\label{eq:selected-growth}
 \frac{\log M_R}{R}\longrightarrow\gamma>0.
\end{equation}
For representatives $g_xo=x$, suppose that
\begin{equation}\label{eq:selected-mixing}
 \left\|\sum_{x\in F_R}U_{g_x}f\right\|_2
 \le C_F(1+R)^{s_F}\sqrt{M_R}\,\|f\|_2,
 \qquad f\in L^2_0(\Omega).
\end{equation}

\begin{proposition}\label{prop:selected-ac-point}
Assume the bounded iid model \eqref{eq:model} with
\eqref{eq:density-upper}, the Poincar\'e inequality \eref{eq:Poincare}, and
\eref{eq:selected-growth}--\eref{eq:selected-mixing}. For almost every
$E\in\Ereg$ satisfying
\[
 n_\lambda(E)>0,\qquad \mathcal L_F(E)<\gamma,
\]
one has
\begin{equation}\label{eq:selected-positive-boundary}
 \PP\{\ImPart G_{oo}(E+i0)>0\}=1.
\end{equation}
\end{proposition}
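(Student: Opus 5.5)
The plan is to argue by contradiction at a fixed energy and then upgrade to probability one by a zero--one law. Fix $E\in\Ereg$ with $n_\lambda(E)>0$ and $\mathcal L_F(E)<\gamma$, outside the countably many null sets used below. \emph{First step (zero--one law).} The event $\{W_o(E)>0\}$ should be shown to have probability $0$ or $1$. By \eqref{eq:ward}, $W_x>0$ iff $Y_x=\ImPart\Sigma_x>0$. By the second identity in \eqref{eq:pair-identities} at $z=E+i0$ and \Cref{lem:regular-energies} ($\tau_{xy}\neq0$, finite boundary values), positivity propagates along edges up to null sets, so the event is invariant under the automorphisms in $T$. Then \eqref{eq:Poincare}, applied to its indicator, forces zero variance, hence probability $0$ or $1$. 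It therefore suffices to rule out $\PP\{W_o(E)=0\}=1$, and we assume this for contradiction.

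\emph{Second step (resonant sites).} Under this assumption, \Cref{lem:alignment} gives $\Sigma_x(E)\in\R$ a.s. and the covariant exterior events $B_x^0$ with $\PP(B_x^0)=q>0$. Fix $\mathcal L_F(E)<\beta_0<\beta_1<\gamma$ and apply \Cref{lem:joint} with $B_x=B_x^0$: along infinitely many radii, an average fraction $\ge c_E$ of $x\in F_R$ satisfy $B_x^0$ and $|\tau_{ox}(E)|\ge e^{-\beta_1R}$. For a large constant $K$, call $x$ \emph{resonant} if, in addition, $|v_x-\sigma^{(ox)}_x(E)|\le|\tau_{ox}(E)|^2/K$. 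By \eqref{eq:pair-identities}, resonance forces $|\Sigma_o(E)-\sigma_o^{(ox)}(E)|\ge K$.
\begin{itemize}
\item Conditioning on the exterior of $x$ and using $B_x^0$ with $t=|\tau_{ox}|^2/K<t_0$ should give a conditional resonance probability $\gtrsim a|\tau_{ox}|^2/K$.
\item The window is centred at $\sigma_x^{(ox)}$ rather than $\Sigma_x$, but the two differ by $\tau^2/(v_o-\sigma_o^{(ox)})$. I would handle this by a one-site resampling of $v_o$ as in \Cref{lem:finite-resampling}, losing only a constant factor on an event of probability bounded below.
\end{itemize}
Hence the expected number $\E N_R$ of resonant $x\in F_R$ is $\gtrsim M_Re^{-2\beta_1R}$.

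\emph{Obstacle: the exponent.} This bound is only useful if $2\beta_1<\gamma$, while the hypothesis gives only $\beta_1<\gamma$. I expect this to be the main obstacle, and I would first try to fix it by choosing the window so that the cost is $|\tau|$ instead of $|\tau|^2$. Concretely, I would use the Ward sum \eqref{eq:kappa-definition}: each $x$ with $|v_x-\sigma_x^{(ox)}|\le|\tau_{ox}|$ contributes at least $1$ to $|G_{ox}/G_{oo}|^2$, at probability cost $\sim|\tau_{ox}|$. I would then combine many such moderate resonances with the rigidity $\ImPart\Sigma_o=0$ in place of a single strong one.

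\emph{Third step (second moment and contradiction).} Write $N_R=\sum_{x\in F_R}U_{g_x}f_R$ with $f_R$ the indicator of resonance at $o$ for the transported pair, approximated by a cylinder event as in \Cref{lem:joint}. Then \eqref{eq:selected-mixing} applied to $f_R-\E f_R$ gives $\Var N_R\le C(1+R)^{2s_F}M_R\E f_R$. Paley--Zygmund yields
\[
\PP\{N_R>0\}\ \ge\ \Bigl(1+\frac{C(1+R)^{2s_F}}{\E N_R}\Bigr)^{-1}\longrightarrow1,
\]
since $\E N_R$ grows exponentially by the exponent comparison. On $\{N_R>0\}$ one has $|\Sigma_o(E)-\sigma_o^{(ox)}(E)|\ge K$. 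To turn this into smallness of $G_{oo}=(v_o-\Sigma_o)^{-1}$, I need tightness of $\sigma_o^{(ox)}(E)$ uniformly in $x$ and $R$. I would obtain it from the distributional identity $\Sigma_o\overset{d}{=}$ const together with the fractional-linear dependence on $v_x$ in \eqref{eq:tau-fractional-linear}, and restrict to a tight event of probability close to one. This gives $\PP\{|G_{oo}(E)|<2/K\}\ge 1-o_K(1)$ for every $K$, hence $G_{oo}(E)=0$ a.s. That contradicts \Cref{lem:regular-energies}, so $\PP\{W_o(E)=0\}<1$, and the first step yields \eqref{eq:selected-positive-boundary}.
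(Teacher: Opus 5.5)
You correctly spot the factor-two obstacle, but the proposal never overcomes it. Your third step quietly uses both of your resonance notions at once.

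\emph{Strong resonances.} The condition $|v_x-\sigma_x^{(ox)}(E)|\le|\tau_{ox}(E)|^2/K$ has probability of order $|\tau_{ox}|^2$. So $\E N_R\to\infty$ needs $M_Re^{-2\beta_1R}\to\infty$, i.e.\ essentially $2\mathcal L_F(E)<\gamma$, which the hypothesis does not provide. This loss is intrinsic to measuring a single resonance through the real shift of $\Sigma_o$. By \eqref{eq:pair-identities} that shift is $\tau_{ox}^2/(v_x-\sigma_x^{(ox)})$, and making it of order one always costs $|\tau_{ox}|^2$.

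\emph{Moderate resonances.} The window $|v_x-\sigma_x^{(ox)}|\le|\tau_{ox}|$ has the right cost. But then you only get $|\Sigma_o-\sigma_o^{(ox)}|\ge|\tau_{ox}|$, which is exponentially small, so the conclusion $|G_{oo}(E)|<2/K$ is lost. A large Ward sum $\sum_y|G_{oy}(E)/G_{oo}(E)|^2$ is a Simon--Wolff quantity. It can at best exclude eigenvalues and is perfectly compatible with $\ImPart G_{oo}(E+i0)=0$.

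There are two smaller problems. First, Paley--Zygmund cannot give $\PP\{N_R>0\}\to1$: your events depend on the pair (through $\tau_{ox}$, $\sigma^{(ox)}$, and the merely averaged bound of \Cref{lem:joint}), so they are not translates of one function $f_R$. Second, $B_x^0$ controls $\rho_\lambda$ only in windows centred at $\Sigma_x(E)$. Its distance $|\tau_{ox}|^2/|v_o-\sigma_o^{(ox)}|$ from $\sigma_x^{(ox)}(E)$ typically exceeds your window width $|\tau_{ox}|^2/K$.

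The missing idea is to detect the resonance in the imaginary part at $z=E+i\eta$, on the scale of $m_\eta(E)=\exp\E\log W_o(E+i\eta)$. This requires using \eqref{eq:Poincare} quantitatively rather than only for a zero--one law.

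\emph{Tightness.} With the edge bound of \Cref{lem:edge}, \eqref{eq:Poincare} gives $\E|\log(W_o/m_\eta)|^2\le C$ uniformly in $\eta$ (\Cref{prop:tightness}). This is the rigidity you were after, and \Cref{lem:boundary-dichotomy} yields your zero--one law as a by-product.

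\emph{Amplification.} The identity \eqref{eq:vector} gives \Cref{lem:amplification}: $\sqrt{W_o}\ge|\tau_{ox}||a_o^{(ox)}|^{-1}|G_{xx}|\sqrt{Y_x}-\sqrt{W_o^x}$. On the window $|v_x-\RePart\Sigma_x(E+i\eta)|\le t$ with $Y_x\le t$ one has $|G_{xx}|\ge(\sqrt2t)^{-1}$. Outside a small exceptional set, \Cref{lem:cavity} makes $Y_x\ge cm_\eta$, $W_o^x\le C_0m_\eta$ and $|a_o^{(ox)}|\le A_0$. Hence $W_o/m_\eta\gtrsim|\tau_{ox}|^2/t^2$. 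The gain is quadratic in $1/t$ while the cost is linear in $t$, because $W_x=|G_{xx}|^2Y_x$ is amplified relative to the tiny but log-tight scale $m_\eta$.

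\emph{Counting and contradiction.} Choose $t_R=e^{-\beta_2R}$ with $\beta_1<\beta_2<\gamma$, and centre the window near $\Sigma_x(E)$ so that $B_x^0$ applies. This gives $\E Z\gtrsim M_Rt_R\to\infty$ and a gain $e^{2(\beta_2-\beta_1)R}$. Dominate the count by translates of the single diagonal event $\{|G_{oo}(z)|\ge(\sqrt2t_R)^{-1}\}$ (\Cref{lem:RD-count}) and apply Paley--Zygmund. This gives $\PP\{\log(W_o/m_\eta)\ge\chi R-C_*\}\ge p_*>0$, which contradicts the Chebyshev bound $O(R^{-2})$ as $R\to\infty$.
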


To apply \eqref{eq:Poincare}, we bound $\log W_x-\log W_y$ for $x\sim y$.
We fix $E_0<\infty$, $|E|\le E_0$ and $0<\eta\le1$. Constants may depend on
$Q,E_0,\lambda,\rho_+$, but are uniform in $E,\eta,x,y$.

One-site and two-site spectral averaging give the a priori fractional-moment estimate \cite{AizenmanMolchanov,ASFH}
\begin{equation}\label{eq:apriori}
 \E|G^C_{uv}(E+i\eta)|^\theta\le C_{\theta,\lambda},
 \qquad 0<\theta<1,
\end{equation}
for $u,v\in C$ and every finite or cofinite $C$ used below \cite[Lemma~3.1]{Tautenhahn}. To match the adjacency convention, note that Tautenhahn's compressed Laplacian on our $Q$-regular graph is $-\Delta_C=QI-A_C$, and
\[
 A_C+\lambda\omega-(E+i\eta)
 =-\bigl((-\Delta_C)-\lambda\omega-(Q-E-i\eta)\bigr).
\]
Reflection of the potential preserves its density bound, and conjugation gives the same absolute resolvent entries in either half-plane. Thus the cited estimate applies here. The bound \eqref{eq:apriori} is uniform in $u,v$.

The one-site formula \eref{eq:one-site}, applied to $H_C$ with
the neighbor vectors \eref{eq:dirichlet-data}, reads
\[
 \frac1{G^C_{uu}}
 =v_u-z-\sum_{\substack{a\sim u,\ b\sim u\\a,b\in C\setminus\{u\}}}
 G^{C\setminus\{u\}}_{ab}.
\]
The sum has at most $Q^2$ terms. Since the potential is bounded, \eqref{eq:apriori} and $|\sum_j a_j|^\theta\le\sum_j|a_j|^\theta$ imply
\begin{equation}\label{eq:inverse-log}
 \E|G^C_{uu}|^{-\theta}\le C_{\theta,Q,\lambda,E_0}.
\end{equation}

Indeed, \eqref{eq:apriori} gives
\[
 \begin{aligned}
 \E|G^C_{uu}|^{-\theta}
 &\le\lambda^\theta+(E_0+1)^\theta
   +\sum_{a,b}\E|G^{C\setminus\{u\}}_{ab}|^\theta \le\lambda^\theta+(E_0+1)^\theta+Q^2C_{\theta,\lambda}.
 \end{aligned}
\]
The bound is uniform in $C$ and $\eta$.

Applying \eqref{eq:apriori} with $u=v$, combining it with
\eref{eq:inverse-log}, and using
\[
 (\log t)^2\le C_\theta(t^\theta+t^{-\theta}),\qquad t>0,
\]
we obtain
\begin{equation}\label{eq:diagonal-log}
 \E\bigl|\log|G^C_{uu}|\bigr|^2\le C.
\end{equation}

To compare $W_x$ and $W_y$, we keep the finite star around one vertex and
write the relevant numerators as vector-valued polynomials in the potentials
on that star. Each variable appears with degree at most one; such a
polynomial is called \emph{multiaffine}. The determinant cancels in the
ratio $W_x/W_y$. Dividing each cofactor
polynomial by its largest coefficient then gives a bound uniform in the
exterior variables and in $\eta$. The next lemma supplies this bound.

The next estimate is a multiaffine sublevel bound; compare
\cite{CarberyWright}. Its constant is uniform in the coefficients and
the dimension of the target space.

\begin{lemma}\label{lem:poly}
We let $t_1,\ldots,t_m$ be independent variables on $[-1,1]$, each with a density bounded by $M_\rho<\infty$. If a nonzero complex multiaffine polynomial $p(t)$ has largest coefficient modulus one, then
\[
 \E(\log^-|p(t)|)^2\le C_{m,M_\rho}.
\]
More generally, let $P$ be a nonzero multiaffine polynomial with values in a finite-dimensional Hilbert space, and let $c_*$ be its largest coefficient norm. The same bound holds with $|p|$ replaced by $\norm P/c_*$. The constant is independent of the coefficients and the dimension of the
Hilbert space.
\end{lemma}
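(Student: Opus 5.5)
We argue by induction on the number of variables $m$, using only the sublevel-set estimate
\[
 \PP\{\|P(t)\|/c_*<\varepsilon\}\le C_{m,M_\rho}\,\varepsilon^{\alpha_m}
\]
for some exponent $\alpha_m>0$ depending only on $m$. Such a polynomial bound on small values gives the claim. Indeed,
\[
 \E(\log^-Z)^2=\int_0^\infty 2s\,\PP\{Z<e^{-s}\}\,ds\le 2C\int_0^\infty s e^{-\alpha_m s}\,ds,
\]
which is finite and depends only on $m$ and $M_\rho$. So everything reduces to a uniform small-ball estimate.

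\textbf{Scalar case.} For $m=1$, write $p(t)=a_0+a_1t$ with $\max|a_i|=1$.

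If $|a_1|\ge 1/2$, then $\{|p|<\varepsilon\}$ is contained in a disk of radius $2\varepsilon$ around the root $-a_0/a_1$. Intersected with the real line, this set has length at most $4\varepsilon$, so its probability is at most $4M_\rho\varepsilon$.

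If $|a_1|<1/2$, then $|a_0|=1$ and $|p(t)|\ge 1-|a_1|>1/2$ on $[-1,1]$. So the sublevel set is empty for $\varepsilon\le1/2$.

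Hence $\alpha_1=1$.

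\textbf{Inductive step, scalar case.} Write $p(t)=q_0(t')+t_m q_1(t')$, where $t'=(t_1,\dots,t_{m-1})$. Every coefficient of $p$ is a coefficient of $q_0$ or of $q_1$, so one of them, say $q_j$, has largest coefficient modulus one. The induction hypothesis gives
\[
 \PP\{|q_j(t')|<\varepsilon^{1/2}\}\le C\varepsilon^{\alpha_{m-1}/2}.
\]
On the complementary event, condition on $t'$. As a function of $t_m$, $p$ is affine with coefficients $(q_0,q_1)$ of maximal modulus at least $\varepsilon^{1/2}$. Rescaling by that maximum and applying the $m=1$ case gives conditional probability at most
\[
 4M_\rho\,\varepsilon/\varepsilon^{1/2}=4M_\rho\,\varepsilon^{1/2}.
\]
Summing the two contributions yields $\alpha_m=\min\{\alpha_{m-1},1\}/2=2^{-(m-1)}$ (for $m\ge2$), with $C_m$ depending only on $m$ and $M_\rho$. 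Since the target is complex, take the moduli of the complex coefficients throughout. The root argument above already handles a complex affine function of one real variable.

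\textbf{Vector-valued case.} Here the main concern is dimension independence. Write $P(t)=\sum_S c_S t^S$ with $c_S$ in the Hilbert space $\mathcal H$, and let $S_*$ attain $\|c_{S_*}\|=c_*$. Put $e=c_{S_*}/c_*$ and consider the scalar multiaffine polynomial
\[
 p(t)=\langle e,P(t)\rangle/c_* .
\]
Its coefficients $\langle e,c_S\rangle/c_*$ have modulus at most one, and the coefficient at $S_*$ equals one. So $p$ has largest coefficient modulus exactly one, and $\|P(t)\|/c_*\ge|p(t)|$ by Cauchy--Schwarz.

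Therefore
\[
 \log^-(\|P\|/c_*)\le\log^-|p|,
\]
and the scalar bound transfers with the same constant, independent of $\dim\mathcal H$ and of the coefficients. The only point needing care is that the coefficient norms indexing $c_*$ are taken in the monomial basis $t^S$, $S\subset\{1,\dots,m\}$, matching the scalar normalization. No step appears seriously obstructive; the induction's exponent bookkeeping is the main thing to get right.
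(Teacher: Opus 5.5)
Your proof is correct. It shares the paper's skeleton: you split off $t_m$, condition on $t'$, use that one of the two coefficient polynomials $q_0,q_1$ is again normalized, and reduce the vector case to the scalar polynomial $\langle e,P\rangle/c_*$ exactly as the paper does. The difference is the quantity you carry through the induction.

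The paper inducts directly on $\|\log^-|p|\|_2$. With $M(t')=\max\{|q_0|,|q_1|\}$, it uses the additive bound $\log^-|p|\le\log^-M+\log^-(|p|/M)$. It controls the second term conditionally by the one-variable affine estimate, and the first term by $M\ge|q_j|$ and the induction hypothesis. It then closes with the $L^2$ triangle inequality.

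You instead induct on a power-law small-ball bound. You split the threshold at $\varepsilon^{1/2}$, which halves the exponent at each step and gives $\PP\{\|P\|/c_*<\varepsilon\}\le C_{m,M_\rho}\varepsilon^{2^{1-m}}$. Only at the end do you convert this to a second-moment bound via the layer-cake formula.

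The paper's route is shorter and loses nothing in the exponent: its $L^2$ bound grows only polynomially in $m$, whereas yours picks up the factor $\alpha_m^{-2}=4^{m-1}$. That is immaterial here because $m\le Q+1$ is fixed. Your route gives a stronger intermediate statement, a uniform tail bound for $\log^-$ and hence control of all its moments. It is also exactly the scheme the paper itself uses later in \Cref{st:lem:logs} (see \eqref{st:eq:poly-small-value}) for Cauchy-type variables, where the one-variable estimate degrades to $\sqrt{s}$.
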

\begin{proof}
We induct on $m$; for $m=0$, the normalized polynomial is a constant of
modulus one. We write $t'=(t_1,\ldots,t_{m-1})$ and
\[
 p(t)=a(t')t_m+b(t'),\qquad M(t')=\max\{|a(t')|,|b(t')|\}.
\]
We fix $t'$ with $M(t')>0$. If $|a|<M/2$, then $|b|=M$ and
$|p|\ge |b|-|a|\ge M/2$ on $[-1,1]$. Otherwise, for
$0<\varepsilon<1/2$,
\[
 \PP\{|p|<\varepsilon M\mid t'\}\le C\varepsilon:
\]
the relevant disk has radius $\varepsilon M/|a|\le2\varepsilon$ and meets the real axis in an interval of length at most $4\varepsilon$. We set $L=\log^-(|p|/M)$. Increasing the constant also covers
$\varepsilon\ge1/2$, so
\[
 \PP\{L>s\mid t'\}\le\min\{1,C_0e^{-s}\},\qquad s\ge0.
\]
The layer-cake formula then gives
\[
 \E(L^2\mid t')
 =2\int_0^\infty s\,\PP(L>s\mid t')\,ds
 \le2\int_0^\infty s\min\{1,C_0e^{-s}\}\,ds<\infty.
\]
The last integral depends only on the density bound.
One of $a,b$, say $q$, has a coefficient of modulus one, and none has
a larger coefficient. The induction hypothesis gives $q\ne0$ almost surely
and bounds $\|\log^-|q|\|_2$. Since $M\ge|q|$, the omitted event
$M=0$ is null and $\|\log^-M\|_2$ has the same bound. We now use
\[
 \log^-|p|\le\log^-M+\log^-(|p|/M)
\]
and the $L^2$ triangle inequality:
\[
 \|\log^-|p|\|_2
 \le\|\log^-M\|_2+\|\log^-(|p|/M)\|_2
 \le C_{m-1,M_\rho}^{1/2}+C^{1/2}.
\]
This closes the induction and proves the scalar assertion.

For the vector assertion, we write
\[
 P(t)=\sum_{J\subseteq\{1,\ldots,m\}}c_J\prod_{j\in J}t_j,
 \qquad \|c_{J_*}\|=c_*,\qquad e=c_{J_*}/c_*.
\]
The scalar polynomial $p(t)=\langle e,P(t)\rangle/c_*$ has a
coefficient equal to one, while all its other coefficients have modulus
at most one. Also,
\[
 |p(t)|\le\|P(t)\|/c_*,\qquad
 \log^-\bigl(\|P(t)\|/c_*\bigr)\le\log^-|p(t)|.
\]
The scalar bound also bounds the vector expression, uniformly in dimension.
\end{proof}

There is also a deterministic upper bound in the bounded-variable case:
if $P(t)=\sum_Jc_J\prod_{j\in J}t_j$ and $|t_j|\le1$, then
\[
 \|P(t)\|\le\sum_J\|c_J\|\le2^m c_*.
\]
Thus the positive logarithm of the normalized polynomial is at most
$m\log2$, while \Cref{lem:poly} bounds the negative logarithm.

\begin{lemma}\label{lem:edge}
We let $H=A+\diag(v_u)$ have unit adjacency weights on a graph of degree at most $Q$, where $v_u=\lambda t_u$ and the real variables $t_u$ are independent with common law $\vartheta$. Its domain is
\begin{equation}\label{eq:multiplication-domain}
 \mathcal D(H)=\{\psi\in\ell^2(X):\sum_u|v_u\psi(u)|^2<\infty\}.
\end{equation}
We write
$G_{xy}(z)=\langle\delta_x,(H-z)^{-1}\delta_y\rangle$ and $W_x(z)=\ImPart G_{xx}(z)$. Assume $\E\log^2(1+|t_u|)<\infty$, the uniform bound \eqref{eq:apriori} for some $0<\theta<1$, and the following estimate: every nonzero Hilbert-space-valued multiaffine polynomial $P$ in $m\le Q+1$ variables, with largest coefficient norm $c_*$, satisfies
\begin{equation}\label{eq:poly-input}
 \E\left(\log^-\frac{\|P(t)\|}{c_*}\right)^2\le C_m,
\end{equation}
where $C_m$ is independent of the coefficients and target dimension. Then, for every $E_0<\infty$, there is $C<\infty$ such that
\begin{equation}\label{eq:edge}
 \sup_{\substack{|E|\le E_0,\ 0<\eta\le1\\x\sim y}}
 \E\left|\log W_x(E+i\eta)-\log W_y(E+i\eta)\right|^2\le C.
\end{equation}
The bounded law \eqref{eq:density-upper} in the model
\eqref{eq:model} satisfies these hypotheses by \Cref{lem:poly} and \eqref{eq:apriori}. 
\end{lemma}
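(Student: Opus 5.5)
The plan is to use the Ward identity to express each $W_v$ as a squared norm, and to use the Grushin reduction onto a finite star to make the $\omega$-dependence on that star explicit. By \eqref{eq:Ward} with $K=H$ and $f=\delta_v$,
\[
 \log W_v(z)=\log\eta+2\log\|(H-z)^{-1}\delta_v\|.
\]
The factor $\log\eta$ cancels in $\log W_x-\log W_y$, so it suffices to compare $\log\|R\delta_x\|$ and $\log\|R\delta_y\|$ for $x\sim y$.

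Fix the edge and let $F=\{x\}\cup N(x)$, so that $y\in F$ and $m=|F|\le Q+1$. Put $C=X\setminus F$. The Grushin inversion in \Cref{lem:effective-resolvents} gives
\[
 R\iota_F=\mathcal K\,G_{FF},\qquad \mathcal K=\iota_F-\iota_C(H_C-z)^{-1}A_{CF},
\]
where $\mathcal K$ is measurable with respect to the exterior $\omega_C$. We also have $G_{FF}=-\adj(E_{F,-+})/\det E_{F,-+}$. The matrix $E_{F,-+}=zI_F-A_F+A_{FC}(H_C-z)^{-1}A_{CF}-\diag(v_F)$ depends on $t_F$ only through its diagonal, and each $t_u$ enters affinely. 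Hence $P_v(t_F):=\mathcal K\adj(E_{F,-+})e_v$ is a Hilbert-space-valued multiaffine polynomial in the $m$ variables $t_F$, with coefficients depending only on $\omega_C$. The determinant cancels, giving
\[
 \log W_x-\log W_y=2\log\|P_x\|-2\log\|P_y\|.
\]
Write $c_*^{(v)}$ for the largest coefficient norm of $P_v$ and split $\log\|P_v\|=\log c_*^{(v)}+\log(\|P_v\|/c_*^{(v)})$.

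For the normalized parts I would argue conditionally on $\omega_C$, where the $t_F$ are independent. The negative logarithm is controlled in $L^2$ by \eqref{eq:poly-input}, uniformly in the exterior and in $\eta$. The positive logarithm satisfies $\|P_v\|\le c_*^{(v)}\prod_{u\in F}(1+|t_u|)$, so it is controlled by the assumption $\E\log^2(1+|t_u|)<\infty$. Integrating over $\omega_C$ then bounds both normalized parts in $L^2$.

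The main obstacle is comparing the exterior-measurable scales $\log c_*^{(x)}$ and $\log c_*^{(y)}$. Neither is individually bounded as $\eta\downarrow0$, since $\|\mathcal K e_u\|^2=1+\|(H_C-z)^{-1}A_{CF}e_u\|^2$ is a Ward quantity of size $\ImPart(\cdot)/\eta$. For the lower bound I would use the top monomial $\prod_{u\ne v}t_u$ of the $(v,v)$ cofactor. It has coefficient $\pm\lambda^{m-1}$, so $c_*^{(v)}\ge\lambda^{m-1}\|\mathcal K e_v\|$. For the upper bound, every coefficient is $\mathcal K$ applied to a vector whose entries are polynomials of bounded degree in $z$ and in the entries $G^C_{ab}$, $a,b\in N(F)\cap C$. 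Therefore $c_*^{(v)}\le(\max_u\|\mathcal K e_u\|)\cdot\Pi$, where $\Pi$ is a polynomial in these entries. The expectation $\E\log^2(1+\Pi)$ is finite by \eqref{eq:apriori} together with $(\log^+ s)^2\le C_\theta s^\theta$.

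It remains to compare $\|\mathcal K e_u\|$ across $u\in F$. I would first try to avoid this by normalizing both $P_x$ and $P_y$ by the common exterior scale $\|\mathcal K\|$ rather than by $c_*^{(v)}$. This requires a lower bound of the form $c_*^{(v)}\ge\lambda^{m-1}\|\mathcal K\|/\Pi'$. To obtain it, I would express $\mathcal K e_u$ through $\mathcal K e_x$ and the neighbour vectors $b_u^C$. Here I would use that the columns of $\mathcal K$ are $e_u\oplus\bigl(-(H_C-z)^{-1}b_u^C\bigr)$. I would then use the one-site relation $1/G^{C'}_{uu}=v_u-z-\sum G^{C'\setminus\{u\}}_{ab}$. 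Both this step and the logarithmic factors coming from $\Pi'$ should be controlled by \eqref{eq:inverse-log} and \eqref{eq:diagonal-log}. If this fails, the fallback is to enlarge $F$ to a ball of radius two, so that the exterior couples to $F$ only through a set not containing $x$ or $y$. I expect this comparison of exterior scales to be the hardest step.
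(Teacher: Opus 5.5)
Your reduction is sound and is in substance the paper's. Since $\eta\,\mathcal K^*\mathcal K=\eta I+A_{FC}\ImPart(H_C-z)^{-1}A_{CF}=\mathcal B_F$, your $\sqrt\eta\,\|P_v\|$ is the paper's $\|\mathcal B_F^{1/2}\adj(\mathcal M)e_v\|$. The determinant cancels, and your upper bound $c_*^{(v)}\le\|\mathcal K\|\cdot\Pi$ matches the paper's factor $(1+\|\mathcal D_F\|)^{m-1}\sqrt{\Tr\mathcal B_F}$.

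The gap is exactly the step you flag, and the routes you propose do not close it. Your lower bound $c_*^{(x)}\ge\lambda^{m-1}\|\mathcal K e_x\|$ is useless at the center. All neighbours of $x$ lie in $F$, so $A_{Cx}=0$, hence $\mathcal K e_x=e_x$ and $\|\mathcal K e_x\|=1$. By contrast, $\|\mathcal K e_j\|^2=(\mathcal B_F)_{jj}/\eta$ for a neighbour $j$ is typically of order $\eta^{-1}$. For the same reason, $\mathcal K e_j$ cannot be controlled through $\mathcal K e_x$: their exterior components are $-(H_C-z)^{-1}b^C_j$ and $0$, and the one-site relation does not change that. The fallback $F=B_2(x)$ has the same scale mismatch. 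It also needs more than $Q+1$ variables, which is outside the range of \eqref{eq:poly-input}.

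The missing idea is to read off lower-order monomials of the center cofactor column. The $x$-row of the effective matrix has no exterior correction and has unit entries $(\mathcal D_F)_{xj}=-1$. Therefore the coefficient of $\prod_{k\ne x,j}t_k$ in that column is, up to sign, $\lambda^{m-2}\bigl((\mathcal D_F)_{jj}e_x+e_j\bigr)$. Since $\mathcal K e_x\perp\mathcal K e_j$ (equivalently $(\mathcal B_F)_{xj}=0$), its image under $\mathcal K$ has norm at least $\lambda^{m-2}\|\mathcal K e_j\|$.

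Together with the top monomial, this gives $c_*^{(x)}\ge c_{m,\lambda}\|\mathcal K\|$, the analogue of \eqref{eq:coefficient-lower} in your normalization. This common exterior scale cancels against your upper bound on $\|P_y\|$. Then \eqref{eq:poly-input} and the logarithmic moments give an $\eta$-uniform $L^2$ bound on $\log^+(W_y/W_x)$.

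You also should not try to compare $c_*^{(x)}$ and $c_*^{(y)}$ in both directions inside one star. The vertex $y$ has exterior neighbours, and its cofactor column mixes several neighbour directions on which $\mathcal B_F$ need not be diagonal. So this structure gives no comparable lower bound on $c_*^{(y)}$. The paper proves only the one-sided bound from the star centered at $x$, and obtains $\log^+(W_x/W_y)$ by re-centering the star at $y$.
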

\begin{proof}
We fix $x\sim y$ and retain the star $F=\{x\}\cup N(x)$. We put $C=F^c$
and $m=|F|\le Q+1$. Keeping all neighbors of $x$ preserves its unit couplings. We condition on the potentials in $C$ and set
\begin{equation}\label{eq:star-data}
 \mathcal D_F(z)=zI-A_F+A_{FC}(H_C-z)^{-1}A_{CF},
 \qquad \mathcal B_F=\ImPart\mathcal D_F.
\end{equation}
Then
\[
 \mathcal B_F=\eta I+A_{FC}\ImPart(H_C-z)^{-1}A_{CF}>0.
\]
The Grushin formula \eqref{eq:grushin-effective} gives
\begin{equation}\label{eq:star-reduction}
 E_{F,-+}(z)=\mathcal D_F(z)-\diag(v_F),\qquad
 \mathcal M(v_F):=-E_{F,-+}(z)=\diag(v_F)-\mathcal D_F(z),
\end{equation}
and hence
\[
 G_F(z)\coloneq \one_F(H-z)^{-1}\one_F=\mathcal M(v_F)^{-1}.
\]
Only $v_u$, $u\in F$, remain random. We number the center $x$ by $0$ and
its neighbors by $1,\ldots,m-1$. All neighbors of the center lie in $F$, so the
exterior correction has zero row and column there. Thus,
\begin{equation}\label{eq:star-row}
 (\mathcal D_F)_{00}=z,\quad
 (\mathcal D_F)_{0j}=(\mathcal D_F)_{j0}=-1,\quad
 (\mathcal B_F)_{00}=\eta,\quad
 (\mathcal B_F)_{0j}=0\quad(j\ne0).
\end{equation}
For an invertible matrix $M$, we write $M^{-*}=(M^*)^{-1}$.
The imaginary part of its inverse is computed as follows:
\begin{equation}\label{eq:inverse-imaginary-part}
 \begin{aligned}
 M^{-1}-M^{-*}
 &=M^{-*}(M^*-M)M^{-1},\\
 \ImPart(M^{-1})
 &=\frac{M^{-1}-M^{-*}}{2i}
 =-M^{-*}(\ImPart M)M^{-1}.
 \end{aligned}
\end{equation}
Applying \eref{eq:inverse-imaginary-part} to
\eref{eq:star-reduction} gives

\[
 \ImPart G_F=\mathcal M(v_F)^{-*}\mathcal B_F\mathcal M(v_F)^{-1}.
\]
We let $\adj(M)$ denote the transposed cofactor matrix, so that $M^{-1}=\adj(M)/\det M$. For each standard basis vector $e_u\in\C^F$, we put
\begin{equation}\label{eq:star-polynomial}
 P_u(v_F)=\mathcal B_F^{1/2}\adj(\mathcal M(v_F))e_u.
\end{equation}
Taking the $u$th diagonal entry of
\eref{eq:inverse-imaginary-part}, using \eref{eq:star-polynomial}, gives
\begin{equation}\label{eq:star-poly}
 W_u=\frac{\|P_u(v_F)\|^2}{|\det\mathcal M(v_F)|^2}.
\end{equation}
Each $P_u$ is multiaffine because the random variables occupy distinct diagonal entries of $\mathcal M$.

For the center cofactor column, we write $N=F\setminus\{x\}$, let $\mathbf 1\in\C^N$ have all entries
one, and put
\[
 \mathcal N=\diag(v_N)-(\mathcal D_F)_{NN},\qquad
 \mathcal M=
 \begin{pmatrix}
 v_0-z&\mathbf 1^{\mathsf T}\\
 \mathbf 1&\mathcal N
 \end{pmatrix}.
\]
Using the unit entries in \eqref{eq:star-row}, the adjugate identity gives
\[
 \adj(\mathcal M)e_0
 =\begin{pmatrix}
       \det\mathcal N\\
       -\adj(\mathcal N)\mathbf 1
   \end{pmatrix}.
\]
For example, multiplying this column by $\mathcal M$ gives zero in
the lower block, since
$\mathcal N\adj(\mathcal N)=(\det\mathcal N)I$. Its top entry is
\[
 (v_0-z)\det\mathcal N
      -\mathbf 1^{\mathsf T}\adj(\mathcal N)\mathbf 1
 =\det\mathcal M.
\]
Polynomial continuation covers singular $\mathcal N$. The center column
is independent of $v_0$.

The coefficient of $\prod_{j=1}^{m-1}v_j$ in $P_0$ is $\mathcal B_F^{1/2}e_0$. For $1\le j\le m-1$, the coefficient of $\prod_{k\ne j,\ k\ge1}v_k$ is
\begin{equation}\label{eq:cofactor}
 \mathcal B_F^{1/2}\bigl(-(\mathcal D_F)_{jj}e_0-e_j\bigr).
\end{equation}
We expand the column $\adj(\mathcal M)e_0$ in cofactors. In its $e_0$
component, we choose every diagonal variable except $v_j$ in the determinant
on the neighbors. The remaining factor is $-(\mathcal D_F)_{jj}$.
In its $e_j$ component, the same monomial leaves the signed coupling from
the center to $j$, equal to $-1$ by \eqref{eq:star-row}. Every other
component omits a different diagonal variable, so the coefficient of this
monomial is zero. Multiplying by $\mathcal B_F^{1/2}$ gives
\eqref{eq:cofactor}.

The orthogonality in \eqref{eq:star-row} gives
\begin{gather*}
 \left\|\mathcal B_F^{1/2}
 \bigl(-(\mathcal D_F)_{jj}e_0-e_j\bigr)\right\|^2
 =\eta |(\mathcal D_F)_{jj}|^2+(\mathcal B_F)_{jj}
 \ge(\mathcal B_F)_{jj},\text{ and } \|\mathcal B_F^{1/2}e_0\|^2=\eta.
\end{gather*}
We substitute $v_u=\lambda t_u$ and let $c_*$ be the largest coefficient
norm of $P_0(\lambda t)$. The two coefficients just computed acquire
factors $\lambda^{m-1}$ and $\lambda^{m-2}$, respectively. They control
every diagonal entry of $\mathcal B_F$. Since at least one of those
entries is at least $m^{-1}\Tr\mathcal B_F$, we obtain
\begin{equation}\label{eq:coefficient-lower}
 c_*\ge c_{m,\lambda}\sqrt{\Tr\mathcal B_F},
 \qquad c_{m,\lambda}>0.
\end{equation}
Here $m\ge2$, and we may take $c_{m,\lambda}=m^{-1/2}\min\{\lambda^{m-1},\lambda^{m-2}\}$.

Indeed the individual coefficient bounds are
\[
 c_*^2\ge\lambda^{2m-2}(\mathcal B_F)_{00},\qquad
 c_*^2\ge\lambda^{2m-4}(\mathcal B_F)_{jj}\quad(1\le j<m).
\]
Adding after dividing by the smaller of the two powers gives
\[
 \Tr\mathcal B_F
 \le\frac{m c_*^2}{\min\{\lambda^{2m-2},\lambda^{2m-4}\}},
\]
which is \eref{eq:coefficient-lower}. This comparison remains useful even
when $\mathcal B_F$ itself becomes small as $\eta\downarrow0$.

Every cofactor has degree at most $m-1$, and each coefficient is a sum of at most $(m-1)!$ products of entries of $\mathcal D_F$. Thus, for $y\in F$ and real $t_u$,
\begin{equation}\label{eq:star-numerator-upper}
 \|P_y(\lambda t)\|
 \le C_{m,\lambda}(1+\|\mathcal D_F\|)^{m-1}
       \sqrt{\Tr\mathcal B_F}\prod_{u\in F}(1+|t_u|).
\end{equation}
In \eqref{eq:star-poly}, the denominator $|\det\mathcal M|^2$ is the same
for $W_x$ and $W_y$,
so it cancels in their ratio. Combining \eref{eq:star-numerator-upper}
with \eqref{eq:coefficient-lower}, we bound $\log^+(W_y/W_x)$ by a constant times
\[
 1+\log(1+\|\mathcal D_F\|)+\sum_{u\in F}\log(1+|t_u|)
       +\log^-\frac{\|P_0(\lambda t)\|}{c_*}.
\]

The cancellation can be written as a ratio before taking logarithms:
\begin{equation}\label{eq:star-response-ratio}
 \begin{aligned}
 \frac{W_y}{W_x}
 &=\frac{\|P_y(\lambda t)\|^2}{\|P_0(\lambda t)\|^2}\le C_{m,\lambda}(1+\|\mathcal D_F\|)^{2(m-1)}
   \prod_{u\in F}(1+|t_u|)^2
   \left(\frac{c_*}{\|P_0(\lambda t)\|}\right)^2.
 \end{aligned}
\end{equation}
The cancellation of $\Tr\mathcal B_F$ makes the logarithmic comparison
uniform in $\eta$, even as both responses tend to zero.

We let $\E_F$ denote conditional expectation over the potentials in $F$.
Taking logarithms in \eref{eq:star-response-ratio}, then using
\eqref{eq:poly-input} and the logarithmic moment of $t_u$, gives
\[
 \E_F\left(\log^+\frac{W_y}{W_x}\right)^2
 \le C_{m,\lambda,\vartheta}
       \bigl(1+\log^2(1+\|\mathcal D_F\|)\bigr).
\]
To average this conditional bound, we expand an entry of
\eref{eq:star-data}:
\[
 (\mathcal D_F)_{ij}
 =z\delta_{ij}-(A_F)_{ij}
   +\sum_{a,b\in C}A_{ia}G^C_{ab}(z)A_{bj}.
\]
There are at most $Q^2$ nonzero terms in the sum. Since
$\|\mathcal D_F\|\le\sum_{i,j}|(\mathcal D_F)_{ij}|$, subadditivity
for $0<\theta<1$ and \eref{eq:apriori} give
\[
 \begin{aligned}
 \E\|\mathcal D_F\|^\theta
 &\le\sum_{i,j\in F}\E|(\mathcal D_F)_{ij}|^\theta\le m^2\bigl((E_0+1)^\theta+1+Q^2C_{\theta,\lambda}\bigr).
 \end{aligned}
\]
Applying $(\log(1+t))^2\le C_\theta(1+t^\theta)$ yields
\[
 \E\log^2(1+\|\mathcal D_F\|)\le C.
\]
Averaging over the exterior bounds $\E(\log^+(W_y/W_x))^2$. Repeating
the argument with the star centered at $y$ bounds
$\E(\log^+(W_x/W_y))^2$ and proves \eqref{eq:edge}.

For positive $W_x,W_y$, the final combination uses the identity
\[
 |\log W_y-\log W_x|^2
 =\left(\log^+\frac{W_y}{W_x}\right)^2
  +\left(\log^+\frac{W_x}{W_y}\right)^2.
\]
Exactly one of the two positive logarithms can be nonzero.

\end{proof}

Following \cite[Sections~4.1--4.2]{AWtree}, we bound the centered logarithms.
The translation estimate \eqref{eq:selected-mixing} will then control
second moments of resonance counts.

\begin{proposition}\label{prop:tightness}
For a covariant selfadjoint operator $H=A+\diag(v)$, assume \eqref{eq:Poincare}, the edge bound \eqref{eq:edge}, and the uniform diagonal logarithmic bound \eqref{eq:diagonal-log}. Then
\begin{equation}\label{eq:geometric-response}
 m_\eta(E)=\exp(\E\log W_o(E+i\eta))
\end{equation}
is well defined and positive. For every $E_0<\infty$, there is $C<\infty$ such that, for $|E|\le E_0$, $0<\eta\le1$, and every $x\in X$,
\begin{align}
 \E\left|\log\frac{W_x(E+i\eta)}{m_\eta(E)}\right|^2&\le C,\label{eq:W-tight}\\
 \E\left|\log\frac{Y_x(E+i\eta)}{m_\eta(E)}\right|^2&\le C.\label{eq:Y-tight}
\end{align}
Thus, for $T>0$,
\begin{equation}\label{eq:tail}
 \PP\left\{\left|\log\frac{W_o}{m_\eta}\right|>T\right\}\le\frac C{T^2}.
\end{equation}
\end{proposition}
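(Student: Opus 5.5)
The plan is to apply the Poincaré inequality \eqref{eq:Poincare} to the function $f(\omega)=\log W_o(E+i\eta)$, after checking that $f\in L^2(\Omega)$. For square integrability, I write $W_o=|G_{oo}|^2Y_o$ with $Y_o=\eta\kappa_o\ge\eta$ by \eqref{eq:ward}. This gives $\log W_o\ge 2\log|G_{oo}|+\log\eta$, and the upper bound $W_o\le|G_{oo}|$ gives $\log W_o\le\log|G_{oo}|$. Combined with \eqref{eq:diagonal-log}, this shows $\log W_o\in L^2$ for each fixed $\eta>0$. Hence $\E\log W_o$ is finite and $m_\eta(E)$ is well defined and positive. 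The $\eta$-dependent constant here is harmless, because this step only establishes integrability, not uniformity.

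For \eqref{eq:W-tight} I use covariance: $U_tf(\omega)=f(t^{-1}\cdot\omega)=\log W_{to}(E+i\eta)$ up to the orientation convention, since $H_{g\cdot\omega}=\mathcal U_gH_\omega\mathcal U_g^{-1}$. Every $t\in T$ moves $o$ by one edge, so
\[
 \|f-U_tf\|_2^2=\E|\log W_o-\log W_{to}|^2\le C
\]
by the edge bound \eqref{eq:edge}, and this is uniform in $|E|\le E_0$ and $0<\eta\le1$. The Poincaré inequality then gives $\Var(\log W_o)\le C_{\rm P}|T|C$. This is exactly \eqref{eq:W-tight} at $x=o$. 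For general $x$, the law of $W_x$ equals that of $W_o$ by transitivity, which uses the automorphism group acting transitively, as built into the standing assumptions. The tail bound \eqref{eq:tail} is Chebyshev's inequality.

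For \eqref{eq:Y-tight} I write $\log Y_x=\log W_x-2\log|G_{xx}|$ from \eqref{eq:ward}. Then
\[
 \E|\log(Y_x/m_\eta)|^2\le 2\E|\log(W_x/m_\eta)|^2+8\E|\log|G_{xx}||^2,
\]
and both terms are bounded uniformly by \eqref{eq:W-tight} and by \eqref{eq:diagonal-log} with $C=X$.

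The main obstacle is justifying that the centering constant is uniform. The Poincaré inequality gives variance bounds with constants independent of $\eta$, but one must make sure the a priori $\eta$-dependent integrability, needed only to apply \eqref{eq:Poincare}, does not enter those constants. It does not, since only the differences $\|f-U_tf\|_2$ enter. A second point to check is the correct identification $U_t\log W_o=\log W_{t^{\pm1}o}$; either sign is fine because $T=T^{-1}$.
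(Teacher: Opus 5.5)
Your proof is correct and follows the paper's argument. You get integrability of $\log W_o$ from \eqref{eq:diagonal-log}, apply the Poincar\'e inequality to $\log W_o$ with the translation differences controlled by \eqref{eq:edge}, use covariance to pass to all $x$, and write $\log Y_x=\log W_x-2\log|G_{xx}|$ for \eqref{eq:Y-tight}. The only differences are cosmetic: you bound $W_o\le|G_{oo}|$ where the paper uses $W_o\le\eta^{-1}$, and you apply the edge bound directly because each $t\in T$ moves $o$ by one edge, whereas the paper sums along a shortest path of length $d(o,to)$.
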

\begin{proof}
The Ward identity \eqref{eq:Ward} and the resolvent norm bound give
\[
 \eta|G_{oo}|^2\le W_o\le\eta^{-1},\qquad
 |\log W_o|\le|\log\eta|+2|\log|G_{oo}||.
\]
Thus \eqref{eq:diagonal-log} gives $\log W_o\in L^2$ for each fixed
$\eta>0$, and \eqref{eq:geometric-response} is well defined.
For $t\in T$, we choose a shortest path $o=x_0,\ldots,x_{d_t}=to$.
The edge bound \eqref{eq:edge} and Cauchy--Schwarz give
\[
 \begin{aligned}
 \E|\log W_o-\log W_{to}|^2
 &=\E\left|\sum_{j=1}^{d_t}(\log W_{x_{j-1}}-\log W_{x_j})\right|^2\\
 &\le d_t\sum_{j=1}^{d_t}\E|\log W_{x_{j-1}}-\log W_{x_j}|^2
 \le C_{\rm edge}d_t^2.
 \end{aligned}
\]
Since $\log m_\eta=\E\log W_o$, covariance and \eqref{eq:Poincare} imply
\[
 \E\left|\log\frac{W_o}{m_\eta}\right|^2
 =\Var(\log W_o)
 \le C_{\rm P}C_{\rm edge}\sum_{t\in T}d(o,to)^2.
\]
The sum is finite, proving \eqref{eq:W-tight}; covariance transfers it
to every $x$. Equation~\eqref{eq:ward} gives
$\log Y_x=\log W_x-2\log|G_{xx}|$, hence
\[
 \E\left|\log\frac{Y_x}{m_\eta}\right|^2
 \le2\E\left|\log\frac{W_x}{m_\eta}\right|^2+8\E|\log|G_{xx}||^2
 \le C
\]
by \eqref{eq:W-tight} and \eqref{eq:diagonal-log}. This is
\eqref{eq:Y-tight}; Chebyshev's inequality gives \eqref{eq:tail}.
\end{proof}

The bounds \eqref{eq:W-tight}--\eqref{eq:Y-tight} imply that $W_o(E+i0)$
either vanishes almost surely or is positive almost surely. In the positive
case they also bound $m_\eta$ away from zero.

\begin{lemma}\label{lem:boundary-dichotomy}
We fix a regular energy $E$. Under \eqref{eq:W-tight},
\[
 \PP\{W_o(E+i0)>0\}\in\{0,1\}.
\]
If this probability is one, then $m_\eta(E)$ is bounded above and bounded away from zero for all sufficiently small $\eta$.
\end{lemma}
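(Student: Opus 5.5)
The plan is to pass from $\eta>0$ to the boundary value using the uniform tightness \eqref{eq:W-tight}, which controls fluctuations of $\log W_o$ around the deterministic scale $\log m_\eta(E)$ uniformly in $\eta\in(0,1]$. Since $E\in\Ereg$, the boundary value $W_o(E+i0)=\lim_{\eta\downarrow0}W_o(E+i\eta)$ exists and is finite almost surely. Set $L_\eta=\log m_\eta(E)$, a deterministic real number. Along a sequence $\eta_k\downarrow0$, we may pass to a subsequence along which $L_{\eta_k}$ converges in $[-\infty,+\infty]$ to a limit $L_*$. We then split into cases according to $L_*$.

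\emph{Case $L_*=-\infty$.} By Chebyshev, \eqref{eq:tail} gives
\[
 \PP\{\log W_o(E+i\eta_k)>L_{\eta_k}+T\}\le C/T^2 .
\]
Fix $T$ and $M$. For large $k$ we have $L_{\eta_k}+T<-M$, so $\PP\{\log W_o(E+i\eta_k)>-M\}\le C/T^2$. Almost sure convergence implies convergence in probability, so by a Fatou/portmanteau step $\PP\{\log W_o(E+i0)>-M\}\le C/T^2$. Letting $T\to\infty$ and then $M\to\infty$ gives $W_o(E+i0)=0$ almost surely.

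\emph{Case $L_*$ finite.} The family $\log W_o(E+i\eta_k)-L_{\eta_k}$ is bounded in $L^2$ uniformly in $k$. Hence
\[
 \PP\{W_o(E+i\eta_k)<e^{L_{\eta_k}-T}\}\le C/T^2 .
\]
Passing to the limit gives $\PP\{W_o(E+i0)\le e^{L_*-T}\}\le C/T^2$, and letting $T\to\infty$ gives $W_o(E+i0)>0$ almost surely.

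\emph{Case $L_*=+\infty$.} This is impossible: the lower tail bound would force $W_o(E+i\eta_k)\to\infty$ in probability, contradicting the almost sure finiteness of the boundary value.

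The conclusion follows from this case analysis. Every subsequential limit of $L_\eta$ is either $-\infty$, giving probability $0$, or finite, giving probability $1$. Since $\PP\{W_o(E+i0)>0\}$ does not depend on the subsequence, it lies in $\{0,1\}$.

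If this probability is $1$, no subsequential limit of $L_\eta$ can equal $-\infty$, since that would force the probability to be $0$, nor $+\infty$. Therefore $L_\eta$ stays in a compact interval as $\eta\downarrow0$. This means $m_\eta(E)$ is bounded above and bounded away from zero for all sufficiently small $\eta$; if it failed, some sequence would violate the subsequential dichotomy.

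The main obstacle is the limit passage in probability. The events $\{W_o>\text{threshold}\}$ involve thresholds that move with $\eta$, and the estimates must be uniform in $\eta$. This is handled by fixing the threshold first and then using that almost sure convergence implies $\PP\{\lim>a\}\le\liminf\PP\{\cdot>a'\}$ for any $a'<a$. One must also ensure that $W_o(E+i0)$ is finite almost surely at regular energies, which follows from \Cref{lem:regular-energies} since $G_{oo}(E+i0)$ exists and is finite.
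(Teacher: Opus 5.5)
Your argument is correct and is essentially the paper's proof. Both combine the uniform bound \eqref{eq:W-tight} with the almost sure convergence $W_o(E+i\eta)\to W_o(E+i0)<\infty$ at a regular energy to show three things: $m_\eta\to\infty$ is impossible, $m_\eta\to0$ along a sequence forces $W_o(E+i0)=0$ almost surely, and bounded $m_\eta$ forces $W_o(E+i0)>0$ almost surely. The only difference is bookkeeping: the paper passes to the limit with Fatou's lemma applied to $|\log(W_o/m_\eta)|^2$, whereas you use Chebyshev tails and the event form of Fatou along subsequences with limits in $[-\infty,\infty]$.
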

\begin{proof}
We let $W_0=W_o(E+i0)$, which is finite almost surely by
\Cref{lem:regular-energies}.
Suppose $\PP\{W_0>0\}>0$. If $m_{\eta_j}\to0$ along a sequence
$\eta_j\downarrow0$, then
\[
 \log\frac{W_o(E+i\eta_j)}{m_{\eta_j}}\longrightarrow+\infty
 \quad\text{on }\{W_0>0\}.
\]
On an event of positive probability, the squared logarithm tends to
infinity, whereas Fatou and \eqref{eq:W-tight} give
\[
 \E\liminf_j\left|\log\frac{W_o(E+i\eta_j)}{m_{\eta_j}}\right|^2
 \le\liminf_j\E\left|\log\frac{W_o(E+i\eta_j)}{m_{\eta_j}}\right|^2
 \le C.
\]
This contradicts the finite integral, because a nonnegative function that is infinite on a
set of positive probability has infinite integral. If $m_{\eta_j}\to\infty$, the logarithm tends
to $-\infty$ and gives the same contradiction. Thus $m_\eta$ stays between
positive constants for small $\eta$.

On $\{W_0=0\}$, it now follows that $\log(W_o(E+i\eta)/m_\eta)\to-\infty$. Another application of Fatou's lemma and \eqref{eq:W-tight} shows that this event has probability zero. Therefore positive probability of $W_0>0$ implies probability one.
\end{proof}

\subsubsection{Distant resonances}
We compare the root response before and after deleting a distant vertex $x$.
The comparison keeps a subtractive term for possible cancellation and
gives a lower bound when $x$ is resonant.

We fix $x\ne o$. Deleting $x$ and its incident edges gives the Dirichlet resolvent $R^x(z)=(H_{X\setminus\{x\}}-z)^{-1}$. We set $G_{oo}^{(x)}(z)=\langle\delta_o,R^x(z)\delta_o\rangle$ and
\begin{equation*}
 W_o^x(z)=\ImPart G_{oo}^{(x)}(z),\qquad
 a_o^{(ox)}(z)=\bigl(G_{oo}^{(x)}(z)\bigr)^{-1}.
\end{equation*}
These cavity quantities are computed after deleting $x$ and depend only
on its exterior. Equations \eqref{eq:one-site} and
\eqref{eq:two-site-self-energy} give $a_o^{(ox)}=v_o-\sigma_o^{(ox)}$.

\begin{lemma}\label{lem:cavity}
We fix $E_0<\infty$. Uniformly for $|E|\le E_0$, $x\ne o$, and $0<\eta\le1$,
\begin{equation}\label{eq:cavity-log}
 \E\left(\log^+\frac{W_o^x}{W_o}\right)^2\le C,
 \qquad \E|a_o^{(ox)}|^\theta\le C\quad(0<\theta<1).
\end{equation}
Thus, for every $\varepsilon>0$, there are $c>0$, $C_0<\infty$ and $A_0<\infty$ such that
\begin{equation}\label{eq:environment-tail}
 \PP\{Y_x<c m_\eta\}
 +\PP\{W_o^x>C_0m_\eta\}
 +\PP\{|a_o^{(ox)}|>A_0\}<\varepsilon
\end{equation}
throughout the same ranges of $E,x,\eta$.
\end{lemma}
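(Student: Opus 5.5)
We will prove the two bounds in \eqref{eq:cavity-log} separately and then derive \eqref{eq:environment-tail} from them together with \Cref{prop:tightness}.

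\textbf{The bound on $a_o^{(ox)}$.} Since $a_o^{(ox)}=1/G^{(x)}_{oo}$ is the inverse diagonal Green function of the Dirichlet restriction to the cofinite set $X\setminus\{x\}$, we apply \eqref{eq:inverse-log} with $C=X\setminus\{x\}$. That estimate is uniform in $C$, $\eta$ and $|E|\le E_0$, so this part is immediate.

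\textbf{The logarithmic cavity comparison.} We express both $W_o$ and $W_o^x$ through one-site effective quantities at $o$. Write
\[
 G_{oo}=\frac{1}{v_o-\Sigma_o},\qquad G^{(x)}_{oo}=\frac{1}{v_o-\sigma_o^{(ox)}}.
\]
The pair identity \eqref{eq:pair-identities}, with roles exchanged, gives
\[
 \Sigma_o=\sigma_o^{(ox)}+\frac{\tau_{ox}^2}{v_x-\sigma_x^{(ox)}}.
\]
Monotonicity of imaginary parts (the added term is Herglotz in form) gives $\ImPart\Sigma_o\ge\ImPart\sigma_o^{(ox)}$. However, the denominators $|v_o-\Sigma_o|$ and $|v_o-\sigma^{(ox)}_o|$ differ, so a pointwise comparison of $W_o$ and $W_o^x$ fails.

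Instead we condition on everything except $v_o$ and average over $v_o$, in the spirit of \Cref{lem:edge}. We have
\[
 \frac{W_o^x}{W_o}=\frac{\ImPart\sigma_o^{(ox)}}{\ImPart\Sigma_o}\cdot\frac{|v_o-\Sigma_o|^2}{|v_o-\sigma_o^{(ox)}|^2}\le\frac{|v_o-\Sigma_o|^2}{|v_o-\sigma_o^{(ox)}|^2}.
\]
Hence
\[
 \log^+\frac{W_o^x}{W_o}\le 2\log^+|v_o-\Sigma_o|+2\log^-|v_o-\sigma_o^{(ox)}|.
\]
The first term equals $2\log^-|G_{oo}|$ and is controlled by \eqref{eq:diagonal-log}. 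For the second term, $\sigma_o^{(ox)}$ is independent of $v_o$, and $v_o$ has density bounded by $\rho_+/\lambda$. The conditional law of $\log^-|v_o-w|$ therefore has an exponential tail uniformly in $w\in\C$, which bounds its conditional second moment uniformly. This is the same one-variable sublevel estimate as in \Cref{lem:poly} with $m=1$. Averaging over the exterior gives the first part of \eqref{eq:cavity-log}.

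\textbf{The tail bound \eqref{eq:environment-tail}.} We treat the three terms in turn.
\begin{itemize}
\item The term $\PP\{Y_x<cm_\eta\}$ is small for small $c$ by Chebyshev applied to \eqref{eq:Y-tight}.
\item For $\PP\{|a_o^{(ox)}|>A_0\}$, Markov's inequality applied to the fractional moment gives the bound $C A_0^{-\theta}$.
\item For $\PP\{W_o^x>C_0m_\eta\}$, we use
\[
 \log\frac{W_o^x}{m_\eta}\le\log^+\frac{W_o^x}{W_o}+\log\frac{W_o}{m_\eta}.
\]
Then $\PP\{W_o^x>C_0m_\eta\}\le\PP\{\log^+(W_o^x/W_o)>\tfrac12\log C_0\}+\PP\{\log(W_o/m_\eta)>\tfrac12\log C_0\}$. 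Both are $O((\log C_0)^{-2})$ by \eqref{eq:cavity-log} and \eqref{eq:W-tight}.
\end{itemize}
Each of the three terms is made at most $\varepsilon/3$, uniformly in $E,x,\eta$.

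\textbf{Main obstacle.} The step requiring the most care is the cavity comparison. A naive argument via monotonicity of imaginary parts ignores the change of denominator. The cancellation-free route above works only because the term $\log^+|v_o-\Sigma_o|$ is handled by \eqref{eq:diagonal-log}, and the conditional averaging over $v_o$ must be justified by independence of $\sigma_o^{(ox)}$ from $v_o$.
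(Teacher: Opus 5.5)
There is a genuine gap in the cavity comparison. You discard the factor $\ImPart\sigma_o^{(ox)}/\ImPart\Sigma_o$ by appealing to $\ImPart\Sigma_o\ge\ImPart\sigma_o^{(ox)}$, but this inequality is not valid. The correction $\tau_{ox}^2/(v_x-\sigma_x^{(ox)})$ in \eqref{eq:pair-identities} is not Herglotz in form. The reason is that $\tau_{ox}(z)$ is genuinely complex: with $C=C_{ox}$, its imaginary part is $\eta\langle (H_C-z)^{-1}b_o^C,(H_C-z)^{-1}b_x^C\rangle$. So $\tau_{ox}^2$ can point in any direction, and the circle $v_x\mapsto\tau_{ox}^2/(v_x-\sigma_x^{(ox)})$ enters the lower half-plane.

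Already on a single square $o\sim a\sim x\sim b\sim o$ with zero potentials and $z=i\eta$ this happens. One finds $\tau_{ox}=2i/\eta$ and $\ImPart\sigma_o^{(ox)}=\eta+2/\eta$, but $\ImPart\Sigma_o=\eta(\eta^2+4)/(\eta^2+2)\approx2\eta$. The discarded ratio is therefore of order $\eta^{-2}$, and the same sign change occurs on $\mathcal X_d$ whenever the phase of $\tau_{ox}^2$ is unfavourable.

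Conditioning on $v_o$ cannot repair this. Both $\Sigma_o$ and $\sigma_o^{(ox)}$ are independent of $v_o$; the ratio depends on $v_x$. The missing idea is the one the paper uses: average over the potential at the \emph{deleted} vertex $x$. From the vector identity \eqref{eq:vector}, the paper obtains $W_o\ge W_o^x|v_x-c_x|^2/|v_x-\Sigma_x|^2$ with an exterior-measurable complex center $c_x$. This gives $\log^+(W_o^x/W_o)\le2\log^-|G_{xx}|+2\log^-|v_x-c_x|$, with no averaging over $v_o$ at all.

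Your decomposition can be repaired in the same spirit. The function $q(v_x)=\ImPart\Sigma_o\,|v_x-\sigma_x^{(ox)}|^2$ is a real quadratic in $v_x$ with leading coefficient $\ImPart\sigma_o^{(ox)}>0$. It is strictly positive on $\R$, because $\ImPart\Sigma_o\ge\eta$ by \eqref{eq:kappa-eta}. Hence $q(v_x)=\ImPart\sigma_o^{(ox)}|v_x-r|^2$ for a complex root $r$ determined by the potentials on $C_{ox}$. Therefore
\[
 \frac{\ImPart\sigma_o^{(ox)}}{\ImPart\Sigma_o}=\frac{|v_x-\sigma_x^{(ox)}|^2}{|v_x-r|^2},
\]
which adds $2\log^-|G^{(o)}_{xx}|+2\log^-|v_x-r|$ to your bound. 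The first term is controlled by \eqref{eq:diagonal-log}. The second is controlled by the bounded density of $v_x$ after conditioning on all other sites.

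With this correction your argument closes. The remaining parts already coincide with the paper's: the fractional moment of $a_o^{(ox)}$ via \eqref{eq:inverse-log}, and the Chebyshev/Markov tail bounds.
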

\begin{proof}
In the decomposition $\ell^2(X)=\C\delta_x\oplus\ell^2(X\setminus\{x\})$, we let $b_x=b_x^{X\setminus\{x\}}$ and put
\begin{gather*}
 u=(0,R^x\delta_o),\qquad w=(1,-R^x b_x),\\
 \alpha=\langle b_x,R^x\delta_o\rangle,
 \qquad g=(v_x-\Sigma_x)^{-1}=G_{xx}(z).
\end{gather*}
We write $(H-z)^{-1}\delta_o=(\psi_x,\psi_C)$ with
$C=X\setminus\{x\}$. The two block equations are
\[
 (v_x-z)\psi_x+\langle b_x,\psi_C\rangle=0,
 \qquad b_x\psi_x+(H_C-z)\psi_C=\delta_o.
\]
We solve the second equation first and substitute it into the first:
\[
 \psi_C=R^x\delta_o-\psi_xR^xb_x,
 \qquad
 (v_x-z-\langle b_x,R^xb_x\rangle)\psi_x=-\alpha.
\]
Since $z+\langle b_x,R^xb_x\rangle=\Sigma_x$, this gives
\[
 \psi_x=-\alpha g,
 \qquad\psi_C=R^x\delta_o+\alpha gR^xb_x.
\]
Combining the two components proves
\begin{equation}\label{eq:vector}
 (H-z)^{-1}\delta_o=u-\alpha g w.
\end{equation}
We define the exterior-dependent scalar
\[
 c_x=\Sigma_x+\alpha\langle u,w\rangle/\|u\|^2.
\]
Multiplying \eqref{eq:vector} by $v_x-\Sigma_x$ gives
\[
 (v_x-\Sigma_x)(H-z)^{-1}\delta_o=(v_x-\Sigma_x)u-\alpha w.
\]
The vector $u$ is nonzero because $R^x\delta_o\ne0$.
Taking its inner product with the last display makes the scalar
normalization explicit:
\[
 \begin{aligned}
 \left\langle u,(v_x-\Sigma_x)u-\alpha w\right\rangle
 &=(v_x-\Sigma_x)\|u\|^2-\alpha\langle u,w\rangle=\|u\|^2(v_x-c_x).
 \end{aligned}
\]
Cauchy--Schwarz and division by $\|u\|>0$ therefore yield
\[
 |v_x-\Sigma_x|\,\|(H-z)^{-1}\delta_o\|
 \ge\|u\|\,|v_x-c_x|.
\]
The complex center $c_x$ is exterior-measurable. A disk of radius $a$
around it meets the real axis in length at most $2a$.
Applying \eref{eq:Ward} to $H$ and to $H_{X\setminus\{x\}}$
gives $W_o=\eta\|(H-z)^{-1}\delta_o\|^2$ and
$W_o^x=\eta\|u\|^2$, so
\begin{gather*}
 W_o\ge W_o^x\frac{|v_x-c_x|^2}{|v_x-\Sigma_x|^2},\\
 \log^+(W_o^x/W_o)
 \le2\log^+|v_x-\Sigma_x|+2\log^-|v_x-c_x|.
\end{gather*}
The first logarithm has uniformly bounded second moment by \eqref{eq:inverse-log}, because \eref{eq:one-site} gives $v_x-\Sigma_x=G_{xx}^{-1}$. For the second, conditioning outside $x$ and using
\eref{eq:scaled-density} gives
\[
 \PP\{|v_x-c_x|<e^{-t}\mid(v_y)_{y\ne x}\}
 \le2\norm{\rho_\lambda}_\infty e^{-t},\qquad t\ge0.
\]
With $L_x=\log^-|v_x-c_x|$ and $C_\lambda=2\|\rho_\lambda\|_\infty$,
\[
 \E(L_x^2\mid(v_y)_{y\ne x})
 \le2\int_0^\infty t\min\{1,C_\lambda e^{-t}\}\,dt<\infty.
\]
This proves the first bound in \eqref{eq:cavity-log}, uniformly in the
center $c_x$. Applying \eqref{eq:inverse-log} to $G_{oo}^{(x)}$ proves the second.

For \eqref{eq:environment-tail}, we use \eqref{eq:Y-tight} and
\[
 \log^+(W_o^x/m_\eta)
 \le\log^+(W_o^x/W_o)+\log^+(W_o/m_\eta).
\]
By \eqref{eq:cavity-log}, \eqref{eq:W-tight}, and Markov's inequality,
for $0<c<1<C_0$ and $A_0>0$,
\[
 \PP\{Y_x<cm_\eta\}\le\frac{C}{|\log c|^2},\qquad
 \PP\{W_o^x>C_0m_\eta\}\le\frac{C}{(\log C_0)^2},\qquad
 \PP\{|a_o^{(ox)}|>A_0\}\le C_\theta A_0^{-\theta}.
\]
We decrease $c$ and increase $C_0,A_0$ until each bound is below
$\varepsilon/3$. The constants are uniform in the stated ranges.

\end{proof}

The vector identity \eref{eq:vector} also gives a lower bound on
the response at $o$.
The subtracted term is the response with $x$ removed; it allows for
cancellation when $x$ is restored.

\begin{lemma}\label{lem:amplification}
For every $z=E+i\eta$, $\eta>0$, and every $x\ne o$,
\begin{equation}\label{eq:amplification}
 \sqrt{W_o}\ge
 \frac{|\tau_{ox}|}{|a_o^{(ox)}|}\,|G_{xx}|\sqrt{Y_x}
 -\sqrt{W_o^x}.
\end{equation}
\end{lemma}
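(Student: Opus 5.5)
The plan is to read \eqref{eq:amplification} off the vector identity \eqref{eq:vector} from the proof of \Cref{lem:cavity}, combined with the Ward identities and the pair identity \eqref{eq:pair-identities}. I would fix $z=E+i\eta$ with $\eta>0$ and $x\ne o$. I would keep the notation of that proof:
\[
 u=(0,R^x\delta_o),\qquad w=(1,-R^xb_x),\qquad \alpha=\langle b_x,R^x\delta_o\rangle,\qquad g=G_{xx}(z).
\]
Then \eqref{eq:vector} reads $(H-z)^{-1}\delta_o=u-\alpha g w$.

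\textbf{Step 1: triangle inequality and Ward identities.} Applying \eqref{eq:Ward} with $K=H$ and $f=\delta_o$ gives $W_o=\eta\|(H-z)^{-1}\delta_o\|^2$. The reverse triangle inequality then yields
\[
 \sqrt{W_o}\ge\sqrt\eta\,|\alpha|\,|g|\,\|w\|-\sqrt\eta\,\|u\|.
\]
Applying \eqref{eq:Ward} to $H_{X\setminus\{x\}}$ gives $\sqrt\eta\|u\|=\sqrt{W_o^x}$, as already noted in the proof of \Cref{lem:cavity}. Moreover $\|w\|^2=1+\|R^xb_x\|^2=\kappa_x(E,\eta)$. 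By \eqref{eq:kappa-eta} and \eqref{eq:ward} this equals $Y_x/\eta$, so $\sqrt\eta\|w\|=\sqrt{Y_x}$. It therefore remains to show that
\[
 |\alpha|\,|g|=\frac{|\tau_{ox}|}{|a_o^{(ox)}|}\,|G_{xx}|.
\]

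\textbf{Step 2: identify $\alpha g$ with an off-diagonal Green function.} The $x$-component of \eqref{eq:vector} is $\psi_x=-\alpha g$, and by definition $\psi_x=G_{xo}(z)$. So $|\alpha g|=|G_{xo}|$. Because $H$ is real symmetric, $(H-z)^{-1}$ is complex symmetric, and hence $G_{xo}=G_{ox}$.

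Next I would apply the first identity in \eqref{eq:pair-identities} with the roles $(x,y)$ replaced by $(x,o)$. The two-site matrix in \eqref{eq:two-site} is symmetric in the pair, and $\tau_{xo}=\tau_{ox}$ by symmetry of $H_{C_{ox}}$ and the form of \eqref{eq:tau}. This gives
\[
 \frac{G_{xo}}{G_{xx}}=\frac{\tau_{xo}}{v_o-\sigma_o^{(xo)}}.
\]
The paragraph before \Cref{lem:cavity} identifies the denominator as $a_o^{(ox)}=v_o-\sigma_o^{(ox)}=(G^{(x)}_{oo})^{-1}$. Since $\eta>0$, $G^{(x)}_{oo}$ is finite and nonzero, so this denominator is finite and nonzero. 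Hence $|G_{xo}|=|\tau_{ox}|\,|G_{xx}|/|a_o^{(ox)}|$, and substituting into Step 1 gives \eqref{eq:amplification}.

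\textbf{Main obstacle.} The only delicate point is the bookkeeping in Step 2. One must check that the pair identity is applied with the correct orientation and that the cavity coefficient $a_o^{(ox)}$ really coincides with $v_o-\sigma_o^{(ox)}$. I would verify this directly: eliminate $x$ first from the two-site matrix \eqref{eq:two-site}, which shows that the $o$-entry of the one-vertex-deleted problem is $\sigma_o^{(ox)}-v_o=-1/G^{(x)}_{oo}$. This also covers the case $o\sim x$, where $A_{ox}=1$ enters $\tau_{ox}$. Everything in the argument is at $\eta>0$, so no boundary-value issues arise.
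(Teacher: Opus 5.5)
Your proposal is correct and follows the paper's proof essentially line by line. It applies the reverse triangle inequality to \eqref{eq:vector}, identifies $\sqrt{W_o}$, $\sqrt{Y_x}$, $\sqrt{W_o^x}$ via \eqref{eq:Ward}, \eqref{eq:kappa-eta}, \eqref{eq:ward}, and then uses $G_{xo}=-\alpha G_{xx}$ with \eqref{eq:pair-identities} (roles interchanged) and $a_o^{(ox)}=v_o-\sigma_o^{(ox)}$.

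One wording point in your final check: $\sigma_o^{(ox)}-v_o=-1/G^{(x)}_{oo}$ follows from applying \eqref{eq:one-site} to $H_{X\setminus\{x\}}$, i.e.\ from discarding the $x$ row and column of \eqref{eq:two-site}. Schur-complementing $x$ out of \eqref{eq:two-site} would instead give $\Sigma_o-v_o=-1/G_{oo}$.
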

\begin{proof}
The reverse triangle inequality in \eqref{eq:vector} gives
\[
 \sqrt\eta\,\|(H-z)^{-1}\delta_o\|
 \ge |\alpha|\,|g|\sqrt\eta\,\|w\|-\sqrt\eta\,\|u\|.
\]
Equations \eqref{eq:Ward}, \eqref{eq:kappa-eta} and \eqref{eq:ward} identify
\[
 \sqrt\eta\,\|(H-z)^{-1}\delta_o\|=\sqrt{W_o},\qquad
 \sqrt\eta\,\|w\|=\sqrt{Y_x},\qquad
 \sqrt\eta\,\|u\|=\sqrt{W_o^x}.
\]
The $x$ coordinate of \eqref{eq:vector} gives $G_{xo}=-\alpha G_{xx}$.
By symmetry and \eqref{eq:pair-identities}, with $x$ and $o$ interchanged,
\[
 |\alpha|=\left|\frac{G_{ox}}{G_{xx}}\right|
 =\left|\frac{\tau_{ox}}{v_o-\sigma_o^{(ox)}}\right|
 =\frac{|\tau_{ox}|}{|a_o^{(ox)}|}.
\]
The denominators are nonzero at $\eta>0$ by the positive imaginary parts
of the diagonal resolvents. We substitute these identities and $g=G_{xx}$
into the first display to obtain \eqref{eq:amplification}.
\end{proof}

To count resonances, we enlarge the site-dependent events to translates of a single diagonal event. This permits direct use of \eqref{eq:selected-mixing}.

\begin{lemma}\label{lem:RD-count}
We fix $z=E+i\eta$ with $\eta>0$, $t>0$, and a finite set
$F_R\subseteq S_R(o)$ of size $M_R$. Assume \eref{eq:selected-mixing}
for this family. We let
\[
 A_x\subseteq\{|G_{xx}(z)|\ge1/(\sqrt2t)\},\qquad x\in F_R,
\]
and put $Z=\sum_{x\in F_R}\one_{A_x}$. There are constants independent of
$R,t,\eta$ such that
\begin{equation}\label{eq:RD-count}
 \E Z^2\le C_1(M_Rt)^2+C_2(1+R)^{2s_F}M_Rt.
\end{equation}
\end{lemma}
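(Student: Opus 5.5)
The plan is to enlarge each site event $A_x$ to a translate of a single diagonal event at the root, and then expand $Z^2$ into a mean part and a centered fluctuation part. Set
\[
 D=\{|G_{oo}(z)|\ge1/(\sqrt2t)\}
\]
and $f=\one_D$. Covariance $H_{g\cdot\omega}=\mathcal U_gH_\omega\mathcal U_g^{-1}$ gives $G_{xx}(z;\omega)=G_{oo}(z;g_x^{-1}\cdot\omega)$ for the representatives $g_xo=x$. Hence
\[
 \one_{A_x}\le\one_{\{|G_{xx}|\ge1/(\sqrt2t)\}}=U_{g_x}f,
 \qquad Z\le\widetilde Z:=\sum_{x\in F_R}U_{g_x}f .
\]
It therefore suffices to bound $\E\widetilde Z^2$.

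\textbf{Step 1: the mean.} The one-site average \eqref{eq:rank-one-average} bounds $p:=\PP(D)=\E f$. Conditionally on the exterior of $o$, we have $G_{oo}=1/(v_o-\Sigma_o)$ with $\ImPart\Sigma_o>0$. The event $D$ therefore forces $v_o$ into a disk of radius $\sqrt2t$ around $\Sigma_o$. That disk meets the real axis in a set of length at most $2\sqrt2t$. The density bound \eqref{eq:scaled-density} then gives
\[
 p\le2\sqrt2\,\|\rho_\lambda\|_\infty t=C t,
\]
uniformly in $\eta$ and in $E$.

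\textbf{Step 2: split and apply mixing.} Write $f=p+f_0$ with $f_0\in L^2_0(\Omega)$, so that $\widetilde Z=M_Rp+\sum_xU_{g_x}f_0$. The inequality $(a+b)^2\le2a^2+2b^2$ and \eqref{eq:selected-mixing} give
\[
 \E\widetilde Z^2\le2M_R^2p^2+2C_F^2(1+R)^{2s_F}M_R\|f_0\|_2^2 .
\]
The variance term satisfies $\|f_0\|_2^2=p-p^2\le p\le Ct$. Inserting the bound on $p$ yields \eqref{eq:RD-count} with $C_1=2C^2$ and $C_2=2C_F^2C$. Both constants are independent of $R$, $t$ and $\eta$.

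I expect no serious obstacle here. The one point to check is that the single-site disk bound in Step 1 is uniform in $\eta$ and in the complex center $\Sigma_o$. This holds because only the real-axis length of the disk enters, exactly as in the proof of \Cref{lem:cavity}. A second point is that the enlargement $A_x\subseteq g_x$-translate of $D$ requires the events in the statement to be contained in the diagonal events, which is exactly how the hypothesis is stated. The translation estimate then applies to the single mean-zero function $f_0$.
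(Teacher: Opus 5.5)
Your proposal is correct and follows the paper's proof essentially step for step. You enlarge each $A_x$ to the translate $U_{g_x}\one_{\{|G_{oo}(z)|\ge 1/(\sqrt2 t)\}}$, bound $p\le 2\sqrt2\,\|\rho_\lambda\|_\infty t$ by the one-site disk argument, split off the mean, and apply \eqref{eq:selected-mixing} to the centered part. The only differences are cosmetic. The paper uses orthogonality of the constant and centered terms, which gives $\E N_R^2\le (M_Rp)^2+C_F^2(1+R)^{2s_F}M_Rp(1-p)$ without your factor $2$. Also, the bound on $p$ comes from the density bound \eqref{eq:scaled-density} rather than \eqref{eq:rank-one-average}, and that is in fact the argument you carry out.
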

\begin{proof}
We set $\mathcal R_x=\{|G_{xx}(z)|\ge1/(\sqrt2t)\}$ and
$N_R=\sum_{x\in F_R}\one_{\mathcal R_x}$. Conditional on all potentials
except $v_o$, the one-site identity gives
$\mathcal R_o=\{|v_o-\Sigma_o(z)|\le\sqrt2t\}$, hence
\[
 p\coloneq\PP(\mathcal R_o)\le2\sqrt2\,\|\rho_\lambda\|_\infty t.
\]
For $f=\one_{\mathcal R_o}-p$, covariance gives
\[
 N_R=M_Rp+\sum_{x\in F_R}U_{g_x}f,
 \qquad \E f=0,\qquad \|f\|_2^2=p(1-p).
\]
The constant and centered terms are orthogonal. By
\eref{eq:selected-mixing},
\[
 \E N_R^2\le(M_Rp)^2+
 C_F^2(1+R)^{2s_F}M_Rp(1-p).
\]
Since $0\le Z\le N_R$, substitution of the bound for $p$ proves
\eref{eq:RD-count}.
\end{proof}

\subsubsection{Positivity of the boundary response}
Assume the boundary imaginary part vanishes almost surely. We will find,
with probability at least $p_*>0$, a distant resonance such that
\[
 \log(W_o/m_\eta)\ge cR-O(1).
\]
This contradicts \eqref{eq:tail}, which bounds the probability of such a
fluctuation by $O(R^{-2})$.

\begin{proof}[Proof of \Cref{prop:selected-ac-point}]
We fix a regular energy $E$ with $n_\lambda(E)>0$ and
$\mathcal L_F(E)<\gamma$; constants may depend on $E$. By
\Cref{lem:boundary-dichotomy}, it suffices to exclude
$\PP\{W_o(E+i0)=0\}=1$. Assume this vanishing.

By \Cref{lem:alignment}, the events $B_x^0$ in \eref{eq:B} are
independent of $v_x$, have common probability $q>0$, and satisfy
\[
 \int_{\Sigma_x(E)-t}^{\Sigma_x(E)+t}\rho_\lambda(v)\,dv\ge2at
 \qquad(0<t<t_0)
\]
on $B_x^0$, with fixed $a,t_0>0$. We choose
\[
 \mathcal L_F(E)<\alpha<\beta_0<\beta_1<\beta_2<\gamma.
\]
We apply \Cref{lem:joint} to the endpoint sets $F_R$ and the events $B_x^0$.
With $u_R=e^{-\beta_1R}$, along an infinite sequence of radii,
\begin{equation}\label{eq:ac-joint}
 \frac1{M_R}\sum_{x\in F_R}
 \PP\{B_x^0,\ |\tau_{ox}(E)|\ge u_R\}\ge c_0
\end{equation}
for some $c_0>0$. We put
\begin{equation}\label{eq:scales}
 t_R=e^{-\beta_2R},\qquad
 M_Rt_R=e^{(\gamma-\beta_2)R+o(R)}\longrightarrow\infty,
 \qquad
 \frac{u_R}{t_R}=e^{(\beta_2-\beta_1)R}\longrightarrow\infty.
\end{equation}
From now on $R$ belongs to this selected sequence.

We fix such an $R$ with $t_R<t_0$. Boundary regularity and covariance give
$Y_x(E+i\eta)\to0$ almost surely at every $x$.  On the event in
\eref{eq:ac-joint}, the indicators of the additional conditions below have
lower limit one as $\eta\downarrow0$.  Fatou's lemma for each $x$, followed
by the finite average over $F_R$, therefore gives
\begin{equation}\label{eq:finite-env}
 \frac1{M_R}\sum_{x\in F_R}\PP\left\{
 \begin{gathered}
 B_x^0,\quad |\tau_{ox}(E+i\eta)|\ge u_R/2,\quad
 Y_x(E+i\eta)\le t_R,\\
 |\RePart\Sigma_x(E+i\eta)-\Sigma_x(E)|\le t_R/2
 \end{gathered}\right\}\ge c_0/2
\end{equation}
for all sufficiently small $\eta>0$.

We apply \eref{eq:environment-tail} with total exceptional probability below
$c_0/4$. This gives $c>0$, $C_0,A_0<\infty$, independent of
$R,\eta,x$. Intersect the event in \eref{eq:finite-env} with
\begin{equation}\label{eq:extra}
 Y_x\ge cm_\eta,\qquad W_o^x\le C_0m_\eta,
 \qquad |a_o^{(ox)}|\le A_0,
\end{equation}
and call the result $F_{x,R,\eta}^0$. Its average probability over
$x\in F_R$ is at least $c_0/4$, and it is measurable with respect to the
exterior of $x$.

We define
\[
 A_{x,R,\eta}=F_{x,R,\eta}^0\cap
 \{|v_x-\RePart\Sigma_x(E+i\eta)|\le t_R\},\qquad
 Z_{R,\eta}=\sum_{x\in F_R}\one_{A_{x,R,\eta}}.
\]
On $F_{x,R,\eta}^0$, the resonance window contains
$[\Sigma_x(E)-t_R/2,\Sigma_x(E)+t_R/2]$. Conditional integration in
$v_x$ and the alignment bound give
\begin{equation*}
 \E Z_{R,\eta}\ge c_1M_Rt_R,
 \qquad c_1=ac_0/4>0.
\end{equation*}
At each counted site, the one-site identity yields
$|G_{xx}(E+i\eta)|\ge(\sqrt2t_R)^{-1}$. Hence \Cref{lem:RD-count} gives
\[
 \E Z_{R,\eta}^2
 \le C_1(M_Rt_R)^2+C_2(1+R)^{2s_F}M_Rt_R.
\]
By \eref{eq:scales},
\begin{equation*}
 \frac{(1+R)^{2s_F}M_Rt_R}{(M_Rt_R)^2}
 =\frac{(1+R)^{2s_F}}{M_Rt_R}
 =(1+R)^{2s_F}e^{-(\gamma-\beta_2)R+o(R)}\longrightarrow0.
\end{equation*}
Consequently, after increasing $R_0$ if necessary,
\begin{align}
 \PP\{Z_{R,\eta}>0\}
 &\ge\frac{(\E Z_{R,\eta})^2}{\E Z_{R,\eta}^2}\notag\\
 &\ge
 \frac{c_1^2(M_Rt_R)^2}
 {C_1(M_Rt_R)^2+C_2(1+R)^{2s_F}M_Rt_R}
 \ge \frac{c_1^2}{2C_1}=:p_*>0
 \label{eq:ac-paley-zygmund}
\end{align}
for every selected $R\ge R_0$ and all sufficiently small $\eta>0$.

On a counted event, \Cref{lem:amplification}, \eref{eq:finite-env}, and
\eref{eq:extra} imply
\begin{equation*}
 \sqrt{\frac{W_o}{m_\eta}}
 \ge A_*\frac{u_R}{t_R}-\sqrt{C_0}
 =A_*e^{(\beta_2-\beta_1)R}-\sqrt{C_0},
 \qquad
 A_*:=\frac{\sqrt c}{2\sqrt2A_0}>0.
\end{equation*}
For all sufficiently large selected $R$,
\begin{equation*}
 A_*e^{(\beta_2-\beta_1)R}-\sqrt{C_0}
 \ge\frac{A_*}{2}e^{(\beta_2-\beta_1)R},
\end{equation*}
and therefore, with
\begin{equation*}
 \chi=2(\beta_2-\beta_1)>0,
 \qquad C_*=-2\log(A_*/2),
\end{equation*}
we have on $\{Z_{R,\eta}>0\}$
\begin{equation}\label{eq:ac-log-event}
 \log\frac{W_o}{m_\eta}\ge\chi R-C_*.
\end{equation}
The uniform estimate \eqref{eq:W-tight} gives
\begin{equation}\label{eq:ac-chebyshev-tail}
 \PP\left\{\log\frac{W_o}{m_\eta}\ge L\right\}
 \le\frac{\E\left|\log(W_o/m_\eta)\right|^2}{L^2}
 \le\frac{C_{\rm tail}}{L^2},\qquad L>0,
\end{equation}
with $C_{\rm tail}$ independent of $R$ and $\eta$. For each sufficiently
large $R$ in the selected sequence, \eqref{eq:ac-paley-zygmund} and
\eqref{eq:ac-log-event} hold for all sufficiently small $\eta$. Hence
\eqref{eq:ac-chebyshev-tail} gives
\[
 p_*\le\frac{C_{\rm tail}}{(\chi R-C_*)^2}.
\]
Letting $R\to\infty$ contradicts $p_*>0$.
Lemma~\ref{lem:boundary-dichotomy} now proves
\eqref{eq:selected-positive-boundary}.
\end{proof}

\subsubsection{Singular spectrum}
We condition on the exterior of one site. Its exceptional energy set is
then fixed during one-site averaging
\cite{SimonRankOne,SimonAveraging,Marx,LiawTreil}. This averaging and the
boundary theorem remove singular mass on $D$, as in
\cite[Proposition~2.2]{AWtree}.

\begin{lemma}\label{lem:no-singular}
We let $H=A+\diag(v_x)$ be selfadjoint on \eqref{eq:multiplication-domain}, with
independent real diagonal variables of common bounded density
$\rho_\lambda$. We let $D$ be a deterministic Borel set. Assume that,
for every vertex $x$ and almost every exterior configuration,
$\Sigma_x(E+i0)$ is finite for almost every $E\in D$, and that
\begin{equation*}
 \int\rho_\lambda(v)\mu_{x,v}(J)\,dv
 \le\|\rho_\lambda\|_\infty |J|
 \quad\text{for every Borel }J.
\end{equation*}
If, for every $x$ and almost every $E\in D$,
\[
 \PP\{W_x(E+i0)>0\}=1,
\]
then almost surely
\[
 \one_D(H)P_{\sing}(H)=0.
\]
\end{lemma}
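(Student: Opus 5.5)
The plan is to reduce the operator statement to a statement about the local spectral measures $\mu_x$: it suffices to show that almost surely $\mu_x^{\sing}(D)=0$ for every $x\in X$. Indeed, $\{\delta_x\}$ is an orthonormal basis, and $\|\one_D(H)P_{\sing}(H)\delta_x\|^2=\mu_x^{\sing}(D)$ because the singular part of the spectral measure of $\delta_x$ is the spectral measure of $P_{\sing}\delta_x$. Since $X$ is countable, it is enough to fix one vertex $x$ and prove that $\mu_x^{\sing}(D)=0$ almost surely.

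Fix $x$ and condition on the exterior $\omega_{\ne x}$. Let $N\subset D$ be the set of energies in $D$ at which either $\Sigma_x(E+i0)$ fails to exist finitely, or it exists with $\ImPart\Sigma_x(E+i0)=0$. This set depends only on the exterior.

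The first step is to show that $|N|=0$ for almost every exterior. The hypothesis gives finiteness for a.e. $E\in D$, so only the real case needs attention. For $E\in D$ we have $\PP\{W_x(E+i0)>0\}=1$. Since $G_{xx}=1/(v_x-\Sigma_x)$, wherever $\Sigma_x(E+i0)$ is finite and real, $W_x(E+i0)>0$ forces $v_x=\Sigma_x(E+i0)$. This has conditional probability zero because $v_x$ has a density and is independent of the exterior. Hence $\PP\{\Sigma_x(E+i0)\in\R\}=0$ for a.e. $E\in D$, and Fubini (with joint measurability of the boundary values) gives $|N|=0$ for a.e. exterior.

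The second step is one-site averaging. With the exterior fixed, the hypothesis $\int\rho_\lambda(v)\mu_{x,v}(J)\,dv\le\|\rho_\lambda\|_\infty|J|$ applied to $J=N$ gives $\mu_{x,v}(N)=0$ for $\rho_\lambda$-a.e. $v$. On $D\setminus N$, $\Sigma_x(E+i0)$ exists and has strictly positive imaginary part. Therefore $G_{xx,v}(E+i0)=1/(v-\Sigma_x(E+i0))$ is finite for every real $v$. By the boundary-value characterization of singular parts (the de la Vallée Poussin criterion: $\mu^{\sing}$ is supported where $\ImPart G(E+i\eta)\to\infty$), $\mu_{x,v}^{\sing}(D\setminus N)=0$ for every $v$. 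Combining the two, $\mu_{x,v}^{\sing}(D)\le\mu_{x,v}(N)=0$ for a.e. $v$ and a.e. exterior, and integrating over the exterior yields $\mu_x^{\sing}(D)=0$ almost surely.

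The main obstacle is measurability and the quantifier order in the first step. The vanishing-probability statement holds energy by energy, but we need it on a fixed null set for each exterior. I would handle this by using the Borel joint measurability of $(E,\omega)\mapsto\Sigma_x(E+i0)$, defined as a limit along $\eta=1/k$, and applying Tonelli to $\one_N(E,\omega_{\ne x})$ on $D\times\Omega_{\ne x}$.

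A second point is the singular-continuous part. It is covered because the criterion "$\mu^{\sing}$ is concentrated on $\{E:\ImPart G(E+i0)=\infty\}$" controls all singular mass, not just atoms. This criterion is a standard consequence of the Poisson representation, cf. \cite{Teschl}.
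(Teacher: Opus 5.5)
Your proposal is correct and follows the paper's proof essentially step for step. You condition on the exterior of $x$ and use $G_{xx,v}=1/(v-\Sigma_x)$, the hypothesis $\PP\{W_x(E+i0)>0\}=1$, and Tonelli to show that the exterior-measurable set $N_x$ (where $\Sigma_x(E+i0)$ is undefined, infinite, or real) is Lebesgue-null. You then remove $\mu_{x,v}(N_x)$ by the spectral-averaging bound, and off $N_x$ you use the bound $\ImPart G_{xx,v}(E+i0)\le 1/\ImPart\Sigma_x(E+i0)<\infty$ with the boundary-value characterization of the singular part, before passing to $\one_D(H)P_{\sing}(H)=0$ through the basis $\{\delta_x\}$ — all exactly as the paper does.
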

\begin{proof}
We fix $x$ and condition on the exterior configuration.  At every energy
where the finite boundary value $\Sigma_x(E+i0)=S_x(E)+iY_x(E)$ exists,
$Y_x(E)\ge0$ and the one-site identity gives
\begin{equation}\label{eq:no-singular-one-site-boundary}
 G_{xx,v}(E+i0)=\frac1{v-S_x(E)-iY_x(E)},
 \qquad
 W_{x,v}(E)=\frac{Y_x(E)}{(v-S_x(E))^2+Y_x(E)^2}.
\end{equation}
Since the single-site law is absolutely continuous,
\begin{equation*}
 \int_\R\rho_\lambda(v)\,
 \one_{\{W_{x,v}(E)=0\}}\,dv
 =\one_{\{Y_x(E)=0\}}
\end{equation*}
for almost every such $E$: when $Y_x(E)=0$, the exceptional value
$v=S_x(E)$ has zero single-site probability, while for $Y_x(E)>0$ the
right side of \eqref{eq:no-singular-one-site-boundary} is strictly positive
for every $v\in\R$.

The hypothesis $\PP\{W_x(E+i0)>0\}=1$ and Tonelli's theorem therefore give
\begin{align}
 0
 &=\int_D\PP\{W_x(E+i0)=0\}\,dE\notag\\
 &\ge
 \E_{\ne x}\int_D
 \one_{\{\Sigma_x(E+i0)\text{ finite}\}}
 \one_{\{Y_x(E)=0\}}\,dE.
 \notag
\end{align}
By the exterior regularity hypothesis, for almost every exterior
configuration the set
\begin{equation*}
 N_x=
 \{E\in D:\Sigma_x(E+i0)\text{ is undefined or infinite}\}
 \cup\{E\in D:Y_x(E)=0\}
\end{equation*}
has Lebesgue measure zero.  The conditional spectral averaging bound then
implies
\begin{equation}\label{eq:no-singular-null-mass}
 \int_\R\rho_\lambda(v)\mu_{x,v}(N_x)\,dv
 \le\|\rho_\lambda\|_\infty|N_x|=0,
\end{equation}
so $\mu_{x,v}(N_x)=0$ for almost every $v$.

For $E\in D\setminus N_x$, \eqref{eq:no-singular-one-site-boundary} gives
\begin{equation*}
 0<W_{x,v}(E)
 =\frac{Y_x(E)}{(v-S_x(E))^2+Y_x(E)^2}
 \le\frac1{Y_x(E)}<\infty.
\end{equation*}
Hence
\begin{equation*}
 \left\{E\in D\setminus N_x:
 \lim_{\eta\downarrow0}\ImPart G_{xx,v}(E+i\eta)=+\infty\right\}
 =\varnothing.
\end{equation*}
The boundary-value characterization of the singular part
\cite[Theorem~11.6(ii)]{SimonTrace},
\cite[Theorem~2.2(ii)]{Marx}, together with
\eqref{eq:no-singular-null-mass}, yields
\begin{equation*}
 \mu_x^{\sing}(D)=0
 \qquad\text{almost surely.}
\end{equation*}
Finally,
\begin{equation*}
 \|\one_D(H)P_{\sing}(H)\delta_x\|^2
 =\langle\delta_x,\one_D(H)P_{\sing}(H)\delta_x\rangle
 =\mu_x^{\sing}(D)=0.
\end{equation*}
Intersecting the probability-one events over the countable vertex set and
using density of their span gives $\one_D(H)P_{\sing}(H)=0$.
\end{proof}

For the bounded model, the exterior regularity and conditional averaging required in \Cref{lem:no-singular} follow from \Cref{lem:regular-energies} and \eqref{eq:rank-one-average}.

\begin{proposition}\label{prop:selected-endpoints}
Assume the bounded model hypotheses, the Poincar\'e inequality
\eref{eq:Poincare}, and \eref{eq:selected-growth}--\eref{eq:selected-mixing}.
Suppose $\|H\|\le B$ almost surely for some $B>2$ and $I$ is Borel with
$\overline I\subset(-B,B)$, and assume $n_\lambda(E)>0$ for almost every
$E\in I$. If, for some $\kappa\ge0$,
\begin{equation}\label{eq:selected-multiplicity}
 \liminf_{R\to\infty}\frac1{RM_R}
 \sum_{x\in F_R}\log N_{ox}\ge\kappa,
\end{equation}
then
\begin{equation}\label{eq:selected-integral}
 \int\mathcal L_F(E)\,d\nu_B(E)\le\log(B/2)-\kappa.
\end{equation}
For
\begin{equation*}
 D_F=I\cap\Ereg\cap\{E:\mathcal L_F(E)<\gamma\},
\end{equation*}
one has
\begin{equation}\label{eq:selected-mass}
 \nu_B(D_F)\ge
 \nu_B(I)-\frac{\log(B/2)-\kappa}{\gamma}.
\end{equation}
Almost surely, simultaneously at every vertex,
\[
 \frac{d\mu_x^{\ac}}{dE}(E)>0\quad\text{for almost every }E\in D_F,
 \qquad
 \one_{D_F}(H)P_{\sing}(H)=0.
\]
Thus, there is a nonzero a.c. component on $I$ whenever
\begin{equation}\label{eq:selected-test}
 \gamma\nu_B(I)+\kappa>\log(B/2).
\end{equation}
\end{proposition}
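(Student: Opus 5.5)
The argument has four steps: integrate the logarithmic tunneling bound over the selected endpoints, convert the resulting average into a mass bound by Markov's inequality, apply the point criterion and the singular-spectrum lemma on $D_F$, and read off the test \eqref{eq:selected-test}.

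\textbf{Step 1: the integral bound \eqref{eq:selected-integral}.} Apply \eqref{eq:log-cost} of \Cref{prop:log} to each pair $(o,x)$ with $x\in F_R$ and $d(o,x)=R$. The constant $C_{B,Q}$ is uniform in the potential, so taking expectations and using Tonelli (since $\ell_{ox}\ge0$, and $\Ereg$ has full Lebesgue measure while $\nu_B\ll dE$) gives
\[
 \int\E\ell_{ox}(E)\,d\nu_B(E)\le(R-1)\log(B/2)-\E\log N_{ox}+C_{B,Q}.
\]
Here $N_{ox}$ is deterministic. Averaging over $x\in F_R$ and dividing by $R$ yields
\[
 \int\mathcal L_{F,R}\,d\nu_B\le\log(B/2)-\frac1{RM_R}\sum_{x\in F_R}\log N_{ox}+O(1/R).
\]
Fatou's lemma for the $\liminf$ over $R\in\mathscr R$ (the integrands are nonnegative) then gives $\int\mathcal L_F\,d\nu_B\le\log(B/2)-\kappa$, using \eqref{eq:selected-multiplicity} with the $\liminf$ of a sum handled by taking a subsequence along which the multiplicity average converges. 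One small point is that $\mathcal L_F$ should be measurable; I would take this from the Borel dependence of $\E\ell_{ox}$ on $E$.

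\textbf{Step 2: the mass bound \eqref{eq:selected-mass}.} Since $\mathcal L_F\ge0$, Markov's inequality gives $\nu_B\{\mathcal L_F\ge\gamma\}\le(\log(B/2)-\kappa)/\gamma$. Removing the $\nu_B$-null complement of $\Ereg$, we get $\nu_B(D_F)\ge\nu_B(I)-\nu_B\{\mathcal L_F\ge\gamma\}$, which is \eqref{eq:selected-mass}.

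\textbf{Step 3: spectral conclusions on $D_F$.} For almost every $E\in D_F$ we have $n_\lambda(E)>0$ and $\mathcal L_F(E)<\gamma$. \Cref{prop:selected-ac-point} then gives $\PP\{W_o(E+i0)>0\}=1$, and transitivity extends this to every $x$. By Fubini, almost surely $W_x(E+i0)>0$ for a.e. $E\in D_F$, simultaneously for all $x$ (a countable intersection over vertices). The identity $\pi^{-1}\ImPart G_{xx}(E+i0)=d\mu_x^{\ac}/dE$ a.e. then yields the positive-density statement. For the singular part, apply \Cref{lem:no-singular} with $D=D_F$. Its hypotheses hold for the following reasons:
\begin{itemize}
\item exterior regularity of $\Sigma_x(E+i0)$ comes from the final clause of \Cref{lem:regular-energies}, since $D_F\subset\Ereg$;
\item the averaged bound $\int\rho_\lambda(v)\mu_{x,v}(J)\,dv\le\|\rho_\lambda\|_\infty|J|$ comes from \eqref{eq:rank-one-average};
\item positivity of $W_x(E+i0)$ was just established.
\end{itemize}

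\textbf{Step 4: the test \eqref{eq:selected-test}.} If \eqref{eq:selected-test} holds, then $\nu_B(D_F)>0$, and since $\nu_B\ll dE$ this gives $|D_F|>0$. Integrating the positive a.c. density over $D_F$ gives $\mu_o^{\ac}(D_F)>0$, hence a nonzero a.c. component.

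The main obstacle is Step 1, specifically the interchange of $\E$ with the $\nu_B$-integral and the passage from $\mathcal L_{F,R}$ to the $\liminf$. Nonnegativity of $\ell_{ox}$ makes both Tonelli and Fatou applicable in the direction we need. The genuine content is that \eqref{eq:log-cost} holds pointwise in $\omega$ with a uniform constant, which \Cref{prop:log} supplies.
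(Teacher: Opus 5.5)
Your proposal is correct and follows essentially the same route as the paper. You integrate the pointwise bound \eqref{eq:log-cost} via Tonelli, average over $F_R$, apply Fatou along $\mathscr R$, and obtain the mass bound by Chebyshev; you then invoke \Cref{prop:selected-ac-point} (with transitivity, Fubini and countability) and \Cref{lem:no-singular} on $D_F$. The subsequence device in your Step~1 is harmless but unnecessary: writing $b_R$ for the selected multiplicity average and $a=\log(B/2)$, one has $\liminf_R(a-b_R)=a-\limsup_R b_R\le a-\kappa$ directly.
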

\begin{proof}
For $x\in F_R$, \Cref{prop:log} with $d(o,x)=R$ gives
\[
 \int\ell_{ox}(E)\,d\nu_B(E)
 \le(R-1)\log(B/2)-\log N_{ox}+C_{B,Q}.
\]
Tonelli, averaging over $F_R$, Fatou's lemma along $R\in\mathscr R$, and
\eref{eq:selected-multiplicity} yield \eref{eq:selected-integral}.
Since $\mathcal L_F\ge0$ and $\Ereg^c$ is $\nu_B$-null,
\[
 \gamma\bigl(\nu_B(I)-\nu_B(D_F)\bigr)
 \le\int_I\mathcal L_F(E)\,d\nu_B(E),
\]
which proves \eref{eq:selected-mass}.

By \Cref{prop:selected-ac-point}, the boundary imaginary part is positive
almost surely for almost every $E\in D_F$. Transitivity, Fubini, and
countability give this simultaneously at every vertex, and the Stieltjes
boundary formula gives the stated local a.c. densities. Applying
\Cref{lem:no-singular} gives the vanishing singular projection.
If \eref{eq:selected-test} holds, \eref{eq:selected-mass} gives
$\nu_B(D_F)>0$; because $I\Subset(-B,B)$, this implies $|D_F|>0$, and
the positive local density gives a nonzero a.c. component on $I$.
\end{proof}

\subsubsection{Equilibrium-measure criterion}\label{sec:equilibrium-criterion}
We now replace the interval in \Cref{prop:selected-endpoints} by a finite
union of intervals containing $\sigma(H)$.

\begin{proposition}\label{prop:equilibrium-selected}
Assume the bounded model hypotheses, \eqref{eq:Poincare},
\eqref{eq:selected-growth}--\eqref{eq:selected-mixing}, and the
multiplicity bound \eqref{eq:selected-multiplicity}. We let $S$ be a
deterministic finite union of nondegenerate compact real intervals with
$\sigma(H)\subset S$ almost surely. Then
\begin{equation}\label{eq:equilibrium-integral}
 \int_S\mathcal L_F(E)\,d\omega_S(E)\le\log\lcap S-\kappa.
\end{equation}
If $I\subset S$ is a deterministic Borel set on which
$n_\lambda(E)>0$ for Lebesgue-almost every $E$, there is a deterministic
Borel set $D\subset I$ such that
\begin{equation}\label{eq:equilibrium-mass}
 \omega_S(D)\ge\omega_S(I)-\frac{\log\lcap S-\kappa}{\gamma}.
\end{equation}
Almost surely, simultaneously at every vertex,
\[
 \frac{d\mu_x^{\ac}}{dE}(E)>0\quad\text{for a.e. }E\in D,
 \qquad \one_D(H)P_{\sing}(H)=0.
\]
Thus, $D$ has positive Lebesgue measure whenever the
right-hand side of \eqref{eq:equilibrium-mass} is positive.
\end{proposition}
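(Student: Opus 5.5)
The plan is to follow the proof of \Cref{prop:selected-endpoints} and replace the single-interval Jensen bound of \Cref{prop:log} by \Cref{lem:finite-band-jensen}.

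\textbf{Step 1: tunneling bound on $S$.} The paragraph opening \Cref{sec:finite-band-log} suggests working with the full Green function $G_{ox}$, whose spectral measure lies in $\sigma(H)\subset S$, rather than with $\tau_{ox}$ directly. Its measure $\langle\delta_o,\one_J(H)\delta_x\rangle$ is real and supported on $S$. The Neumann expansion shows that its Cauchy transform is $-N_{ox}z^{-(R+1)}+O(z^{-R-2})$ for $d(o,x)=R$, since the first nonvanishing moment counts shortest paths. \Cref{lem:finite-band-jensen} with $q=R+1$ then gives
\[
 \int_S\log|G_{ox}(E)|\,d\omega_S(E)\ge\log N_{ox}-(R+1)\log\lcap S .
\]
To pass to $\tau_{ox}$, use \eqref{eq:pair-identities} in the form $\tau_{ox}=G_{ox}(v_x-\sigma_x^{(ox)})/G_{oo}$. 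This gives
\[
 \log|\tau_{ox}|=\log|G_{ox}|+\log|a_x^{(ox)}|-\log|G_{oo}|,
 \qquad a_x^{(ox)}=v_x-\sigma_x^{(ox)}=1/G^{(o)}_{xx} .
\]
Each correction term is the boundary value of a Herglotz-type quantity.
\begin{itemize}
\item For $-\log|G_{oo}|$: $G_{oo}=-z^{-1}+\dots$ has spectral measure on $S$, so \Cref{lem:finite-band-jensen} with $q=1$ gives $\int\log|G_{oo}|\,d\omega_S\ge-\log\lcap S$. An upper bound on $\int\log|G_{oo}|\,d\omega_S$ is also needed. I would get it from a Hardy $H^{1/4}$ bound via subordination \eqref{eq:finite-band-hardy}, which controls $\int\log^+|G_{oo}|$ by a constant.
\item For $\log|a_x^{(ox)}|$: apply the same Jensen bound to $G^{(o)}_{xx}$, whose spectrum lies in $S$ only if the deletion preserves spectral containment. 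This is a real issue, since Dirichlet restriction can put eigenvalues in gaps of $S$.
\end{itemize}
To avoid the second issue, I would first try taking expectations instead. The integral $\int\E\log^+|a_x^{(ox)}|\,d\omega_S$ is bounded by fractional moments \eqref{eq:apriori} uniformly, since $\omega_S\ll dE$ with a density that is only integrable near endpoints. The same $L^1$-type control bounds $\E\log^+|G_{oo}|$, and symmetrically $\E\log^+|\tau_{ox}|$.

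Combining these, the cost $\ell_{ox}=\log^+|\tau_{ox}|-\log|\tau_{ox}|$ satisfies
\[
 \int_S\E\ell_{ox}\,d\omega_S\le (R+1)\log\lcap S-\log N_{ox}+C,
\]
with $C$ independent of $R$ and $x$. Dividing by $R$, averaging over $F_R$, and applying Fatou along $\mathscr R$ together with \eqref{eq:selected-multiplicity} yields \eqref{eq:equilibrium-integral}.

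\textbf{Step 2: mass bound.} Set
\[
 D=I\cap\Ereg\cap\{\mathcal L_F<\gamma\}.
\]
Because $\mathcal L_F\ge0$ and $\omega_S\ll dE$,
\[
 \gamma\bigl(\omega_S(I)-\omega_S(D)\bigr)\le\int_I\mathcal L_F\,d\omega_S ,
\]
which gives \eqref{eq:equilibrium-mass}.

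\textbf{Step 3: spectral conclusions.} For almost every $E\in D$, \Cref{prop:selected-ac-point} gives $\PP\{\ImPart G_{oo}(E+i0)>0\}=1$. Transitivity and countability make this hold at every vertex, and the Stieltjes formula gives positive a.c. densities. \Cref{lem:no-singular}, with hypotheses supplied by \Cref{lem:regular-energies} and \eqref{eq:rank-one-average}, removes singular mass on $D$. Positivity of the right side of \eqref{eq:equilibrium-mass} gives $\omega_S(D)>0$, hence $|D|>0$.

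\textbf{Main obstacle.} The hardest point is the uniform upper control of the two correction terms when integrated against $\omega_S$. A purely deterministic Jensen argument seems unavailable because $G^{(o)}_{xx}$ may have spectrum outside $S$. I would first try the expectation route: Tonelli, then fractional moments at $E+i0$ via Fatou from \eqref{eq:apriori}, with $\log^+t\le C_\theta t^\theta$, against the integrable density of $\omega_S$. If the uniform-in-$\eta$ moment bounds fail at the boundary, I would fall back on the Hardy-space estimate of \Cref{lem:hardy} for a slightly larger enclosing interval.
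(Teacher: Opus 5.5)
Your proposal is correct and follows the paper's proof. You apply \Cref{lem:finite-band-jensen} with $q=R+1$ to the full Green function $G_{ox}$ and pass to $\tau_{ox}$ through the Grushin identity, bounding the diagonal correction terms in expectation by fractional moments and Fatou. The mass bound and the spectral conclusions are then obtained exactly as in the paper. The paper does the transfer more directly with the one-sided bound $\log^-|\tau_{ox}|\le\log^-|G_{ox}|+\log^+|G_{oo}|+\log^+|G^{(o)}_{xx}|$. That bound shows the term you actually need is $\E\log^-|a_x^{(ox)}|=\E\log^+|G^{(o)}_{xx}|$, not $\E\log^+|a_x^{(ox)}|$ as you wrote (both are bounded, so nothing breaks), and that no Jensen bound for the deleted Green function is required.
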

\begin{proof}
We put $R=d(x,y)$. Since the adjacency and potential are real, the
spectral representation of $G_{xy}$ is the Cauchy transform of a real
signed measure supported on $S$. Finite propagation and the unit
edge weights in \eqref{eq:model} give
\begin{equation}\label{eq:full-green-leading}
 G_{xy}(z)=-N_{xy}z^{-R-1}+O(z^{-R-2}).
\end{equation}
Indeed, all powers $H^j$ with $j<R$ have zero $(x,y)$ entry, and
$(H^R)_{xy}=N_{xy}$. We apply \Cref{lem:finite-band-jensen} to
\eqref{eq:full-green-leading} to obtain
\begin{equation}\label{eq:full-green-equilibrium}
 \int_S\log|G_{xy}(E+i0)|\,d\omega_S(E)
 \ge\log N_{xy}-(R+1)\log\lcap S.
\end{equation}

The a priori estimate \eqref{eq:apriori}, Fatou's lemma, and
$\log^+t\le C_\theta t^\theta$ give a constant $C_+<\infty$ such that
\begin{equation}\label{eq:equilibrium-positive-log}
 \E\log^+|G^C_{uv}(E+i0)|\le C_+
\end{equation}
for almost every $E$ in the bounded range $S$, uniformly in the
vertices and in the full or vertex-deleted graphs used here. The
constant may depend on the fixed law. Since $\omega_S\ll dE$, the
exceptional energy sets are also $\omega_S$-null.
We subtract \eqref{eq:full-green-equilibrium} from the integral of the
positive logarithm, then take expectation and use Tonelli's theorem:
\begin{equation}\label{eq:equilibrium-negative-log}
 \int_S\E\log^-|G_{xy}(E+i0)|\,d\omega_S(E)
 \le(R+1)\log\lcap S-\log N_{xy}+C_+.
\end{equation}

By \eqref{eq:one-site} and \eqref{eq:pair-identities},
\begin{equation*}
 G_{xy}=\tau_{xy}G_{xx}G^{(x)}_{yy},\qquad
 G^{(x)}_{yy}:=\ip{\delta_y}{(H_{X\setminus\{x\}}-z)^{-1}\delta_y}.
\end{equation*}
This is the two-site Grushin identity \eqref{eq:pair-identities} written
with the $x$-deleted one-site Green function.  Its logarithmic consequence,
together
with \eqref{eq:equilibrium-positive-log}, is
\begin{equation}\label{eq:equilibrium-tau-cost}
 \E\log^-|\tau_{xy}|
 \le\E\log^-|G_{xy}|+2C_+.
\end{equation}
We average \eqref{eq:equilibrium-negative-log} and
\eqref{eq:equilibrium-tau-cost} over $F_R$, divide by $R$, and apply
Fatou along the selected radii. The definition
\eqref{eq:selected-cost} and the multiplicity bound
\eqref{eq:selected-multiplicity} give \eqref{eq:equilibrium-integral},
since $(R+1)/R\to1$ and $3C_+/R\to0$.

We define $D=I\cap\Ereg\cap\{\mathcal L_F<\gamma\}$, removing any further
deterministic null set in \Cref{prop:selected-ac-point}.
Nonnegativity and \eqref{eq:equilibrium-integral} give
\[
 \gamma\bigl(\omega_S(I)-\omega_S(D)\bigr)
 \le\int_I\mathcal L_F\,d\omega_S
 \le\log\lcap S-\kappa,
\]
which proves \eqref{eq:equilibrium-mass}. The spectral conclusions follow
from \Cref{prop:selected-ac-point} and \Cref{lem:no-singular}, as in
\Cref{prop:selected-endpoints}. Finally,
\[
 \omega_S\ll dE,
\]
so positive equilibrium measure implies positive Lebesgue measure.
\end{proof}

\section{Geometry and bounded disorder}\label{sec:geometry-bounded}
We use the notation
\[
 \begin{gathered}
 K_\square=\sigma(A_{4,5}),\qquad R_d^\square=r_\square+2d, \qquad
 B_{d,\lambda}^\square=R_d^\square+\lambda, \qquad \text{ and }
 I_d^\square=[-R_d^\square,R_d^\square].
 \end{gathered}
\]
Thus $\sigma(A_d)=K_\square+[-2d,2d]$ and
$\|A_d\|=R_d^\square$.
For a finite vertex set $S$ of a graph $Y$, let
\[
 \partial_E S=\{\{x,y\}\in E(Y):x\in S,\ y\notin S\}.
\]
The graph is nonamenable if
$\inf_{0<|S|<\infty}|\partial_E S|/|S|>0$.
The geometric estimates below show that this is also true for
$\mathcal X_d$.

\subsection{The square lattice and its Euclidean products}\label{sec:geometry}

The graph $\G_{4,5}$ was defined in \Cref{sec:intro} and is illustrated in
\Cref{fig:square-lattice2}. Every vertex has five incident edges. For the
path counting below, we need to label these five edges in a way that is
preserved by translations of the graph.

We obtain such labels from the dual pentagon construction. Start with a
regular right-angled hyperbolic pentagon $P$ and reflect it successively
in its five sides. Since each interior angle of $P$ is $\pi/2$, exactly
four reflected pentagons meet at every corner. Put one vertex at the
center of each reflected pentagon and join two centers when the
pentagons share a side. Each pentagon has five sides, so every vertex of
the dual graph has degree five; each corner is surrounded by four
pentagons, so every face of the dual graph is a square. Hence the dual
$1$-skeleton is precisely $\G_{4,5}$.

Number the sides of $P$ cyclically by $1,\ldots,5$, and let $s_i$ be
reflection in side $i$. Reflections are involutions, and reflections in
two consecutive sides commute because those sides meet at a right angle.
Thus
\begin{equation}\label{eq:square-reflections}
 W_5=\langle s_1,\ldots,s_5\mid s_i^2=1,
       \ s_is_{i+1}=s_{i+1}s_i\ (i\bmod5)\rangle,
 \qquad S_\square=\{s_1,\ldots,s_5\}.
\end{equation}
We write $e$ for the identity of $W_5$ and $|w|$ for the word length with
respect to $S_\square$. Every reflected pentagon is $wP$ for a unique
$w\in W_5$. We identify its center with $w$, and we take the original
pentagon as the root $o=e$.
Crossing side $i$ of $wP$ leads to $ws_iP$; accordingly, the edge
\(
 \{w,ws_i\}
\)
has label $i$. This gives the same five labels at every vertex. Moreover,
left multiplication preserves them, since
\[
 g(ws_i)=(gw)s_i.
\]
Thus $W_5$ acts freely and transitively on the vertices by label-preserving
graph automorphisms.

The square faces are already visible in \eqref{eq:square-reflections}.
For each $i$ (with indices modulo $5$) and every $w\in W_5$,
\[
 w\longrightarrow ws_i\longrightarrow ws_is_{i+1}
   =ws_{i+1}s_i\longleftarrow ws_{i+1}\longleftarrow w
\]
is a four-cycle. These are the five squares incident to $w$. Filling all
such four-cycles gives the simply connected square complex associated
with this right-angled Coxeter group; see
\cite[Sections~5--6]{DavisOkun}. The local pattern at the root is shown in
\Cref{fig:square-edge-labels}.

\begin{figure}[htbp]
\centering
\begin{tikzpicture}[scale=.90, every node/.style={font=\scriptsize}]
  \node[draw,circle,inner sep=1.2pt] (e) at (0,0) {$e$};
  \node[draw,circle,inner sep=1.0pt] (s1) at (90:1.35) {$s_1$};
  \node[draw,circle,inner sep=1.0pt] (s2) at (18:1.35) {$s_2$};
  \node[draw,circle,inner sep=1.0pt] (s3) at (-54:1.35) {$s_3$};
  \node[draw,circle,inner sep=1.0pt] (s4) at (-126:1.35) {$s_4$};
  \node[draw,circle,inner sep=1.0pt] (s5) at (162:1.35) {$s_5$};

  \coordinate (p12) at (54:2.25);
  \coordinate (p23) at (-18:2.25);
  \coordinate (p34) at (-90:2.25);
  \coordinate (p45) at (-162:2.25);
  \coordinate (p51) at (126:2.25);

  \draw (e)--node[pos=.48,fill=white,inner sep=.7pt] {$1$} (s1);
  \draw (e)--node[pos=.48,fill=white,inner sep=.7pt] {$2$} (s2);
  \draw (e)--node[pos=.48,fill=white,inner sep=.7pt] {$3$} (s3);
  \draw (e)--node[pos=.48,fill=white,inner sep=.7pt] {$4$} (s4);
  \draw (e)--node[pos=.48,fill=white,inner sep=.7pt] {$5$} (s5);

  \draw (s1)--node[pos=.55,fill=white,inner sep=.7pt] {$2$} (p12);
  \draw (s2)--node[pos=.55,fill=white,inner sep=.7pt] {$1$} (p12);
  \draw (s2)--(p23)--(s3);
  \draw (s3)--(p34)--(s4);
  \draw (s4)--(p45)--(s5);
  \draw (s5)--(p51)--(s1);

  \fill (p12) circle (1.2pt) node[above right=1pt] {$s_1s_2$};
  \fill (p23) circle (1.2pt) node[right=2pt] {$s_2s_3$};
  \fill (p34) circle (1.2pt) node[below=2pt] {$s_3s_4$};
  \fill (p45) circle (1.2pt) node[left=2pt] {$s_4s_5$};
  \fill (p51) circle (1.2pt) node[above left=1pt] {$s_5s_1$};
\end{tikzpicture}
\caption{The five edge labels at the root. For each consecutive pair
$i,i+1$, the commuting relation $s_is_{i+1}=s_{i+1}s_i$ closes one
four-cycle. Thus exactly five square faces meet at $e$.}
\label{fig:square-edge-labels}
\end{figure}

A word
\(
 s_{i_1}\cdots s_{i_n}=w
\)
records the path from $e$ obtained by following the edge labels
$i_1,\ldots,i_n$. Its word length is therefore exactly the graph distance,
\[
 |w|=d_{\G_{4,5}}(e,w).
\]
In particular, a word for $w$ is reduced precisely when its associated
path from $e$ to $w$ is shortest. The graph-product normal form
\cite[Definition~3.7 and Theorem~3.9]{Green} says that two reduced words
represent the same element precisely when one can be obtained from the
other by repeatedly interchanging neighboring letters that commute. Among
the other generators, $s_i$ commutes only with $s_{i-1}$ and $s_{i+1}$.
Consequently,
if a reduced word for $w$ is followed by $s_i$, then the new $s_i$
cancels an earlier $s_i$ exactly when that earlier occurrence can be
moved to the end of the word through letters commuting with $s_i$; the
final pair $s_is_i$ then disappears. These are the only word rules used
in the shortest-path arguments below.

\begin{lemma}\label{bare:lem:radius}
We put $A=A_{4,5}$, $K=\sigma(A)$, and $r=\|A\|$. The spectrum is symmetric and
\begin{equation}\label{bare:eq:radius}
 K=-K,\qquad 4\le r\le\sqrt{22}.
\end{equation}
\end{lemma}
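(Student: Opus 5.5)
Three separate facts are needed: symmetry of $K$, the lower bound $r\ge4$, and the upper bound $r\le\sqrt{22}$.

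\emph{Symmetry.} I would use bipartiteness. In $W_5$ every relation in \eqref{eq:square-reflections} has even length, so $w\mapsto(-1)^{|w|}$ is a well-defined homomorphism to $\{\pm1\}$, and every edge $\{w,ws_i\}$ changes parity. The unitary $U\delta_w=(-1)^{|w|}\delta_w$ then satisfies $UAU^{-1}=-A$, which gives $K=-K$.

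\emph{Lower bound $r\ge4$.} I would compare with $\mathbb Z^2$ through amenability. The subgroup $\langle s_1,s_2,s_3,s_4\rangle$ is the right-angled Coxeter group of a path on four vertices. I would rather use a simpler subgraph. Since $s_1,s_2$ commute and $s_3,s_4$ commute, the subgroup $\langle s_1s_3,\,s_2s_4\rangle$ is a candidate, but it is cleaner to argue directly with the Cayley graph. The Cayley graph of $\langle s_1,s_3\rangle\cong D_\infty$ is a bi-infinite path, since $s_1,s_3$ do not commute and so the words are alternating. Likewise $\langle s_2,s_4\rangle$ is a path. To get an embedded grid I would instead take $\langle s_1,s_3\rangle\times$-like products. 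These are not direct products, though, so I would fall back on Kesten-type growth: the number of closed walks of length $2n$ at $e$ is at least the number of closed walks in a $4$-regular subtree or subgraph. The cleanest choice is the $2n$-th moment bound $\langle\delta_e,A^{2n}\delta_e\rangle^{1/2n}\to r$, together with counting closed walks that use only the labels $1,2$, or only $3,4$, or more generally any four labels. For example, the subgroup generated by $s_1,s_2,s_3,s_4$ has a Cayley graph that is $4$-regular and infinite, so its spectral radius is at least $2\sqrt{3}$. That is too weak.

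A better route to $r\ge4$ is the test-function argument: $r\ge\langle\psi,A\psi\rangle/\|\psi\|^2$, with $\psi$ built from the product structure. The commuting pair $s_1,s_2$ generates only a finite group, so that fails too. I would therefore use the general fact that for a $q$-regular graph containing a cycle through every edge, the walk count exceeds that of the tree. Since $2\sqrt{q-1}=4$ for $q=5$, this is exactly the claim $r\ge2\sqrt{q-1}$. I expect the intended argument is the Alon--Boppana/Kesten comparison: the universal cover of $\G_{4,5}$ is the $5$-regular tree, and the spectral radius of a quotient is at least that of its cover, because closed non-backtracking-lift counting gives $\langle\delta_e,A^{2n}\delta_e\rangle_{\G}\ge\langle\delta_e,A^{2n}\delta_e\rangle_{T_5}$ (closed walks in the tree project to closed walks). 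Taking $2n$-th roots gives $r\ge2\sqrt4=4$.

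\emph{Upper bound $r\le\sqrt{22}$.} This is the main obstacle. I would use $r^2=\|A^2\|$ and a Schur test, or a positive-supersolution argument, on $A^2=5I+P$, where $P$ counts length-two non-returning walks. For each $w$, there are $20$ such walks; the $5$ commuting pairs give endpoints reached twice, so $P$ has $10$ entries equal to $1$ and $5$ equal to $2$. I would seek a weight $h>0$ with $Ph\le17h$, so that $\|A^2\|\le5+17=22$. The natural choice is $h(w)=\theta^{|w|}$, with $\theta$ tuned and using the word rules. From $w$, the $s_i$ cancelling a final letter decrease length and the others increase it. The bookkeeping of how many of the two-step endpoints go down, stay level, or go up uses the normal-form description. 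I would try $\theta$ near $1/\sqrt{q-1}$-type values and verify the inequality case by case according to the set of letters that can be moved to the end of $w$, which is a clique of size at most $2$. This gives three cases: $|w|=0$, one terminal letter, or two commuting terminal letters. If the radial weight fails, I would fall back on the Schur test with an adapted weight depending on that terminal-letter set.
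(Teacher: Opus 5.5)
Your symmetry argument and the lower bound you finally settle on are the paper's. Closed walks of the $5$-regular tree from a fixed root inject, by unique path lifting, into closed walks of $\G_{4,5}$ at $e$. Only the trivial inequality $\langle\delta_e,A^{2n}\delta_e\rangle\le r^{2n}$ is needed, so the discarded attempts can be deleted.

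The genuine gap is the upper bound $r\le\sqrt{22}$. You reduce it to exhibiting $h>0$ with $Ph\le17h$, but you never exhibit one, and the radial weight you try first cannot work. Take $w=s_3s_1s_2$. Then $D(w)=\{s_1,s_2\}$, and the predecessor $ws_1=s_3s_2$ again has two descents. Of the twenty non-returning two-step walks from $w$:
\begin{itemize}
\item three end at length $|w|-2$ (at $s_3$, $s_2$, and $s_3$ again);
\item seven end at length $|w|$;
\item ten end at length $|w|+2$.
\end{itemize}
So for $h=\theta^{|w|}$ the required inequality reads
\[
 3\theta^{-2}+7+10\theta^{2}\le17,
\]
which fails for every $\theta>0$, since the left side is at least $7+2\sqrt{30}$. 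The margin is thin in any case. The test functions $\theta^{|w|}$ with $\theta^2\uparrow2/(3+\sqrt5)$ have Rayleigh quotients tending to $2\sqrt5$, so $\|P\|=r^2-5\ge15$. An unspecified ``adapted weight'' is therefore not a proof.

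Your fallback does close the gap once it is written down, and it already works for $A$ itself. Put $h(e)=1$ and $h(w)=\theta^{|w|}c_{|D(w)|}$ for $w\ne e$, with $\theta=0.69$, $c_1=1$, $c_2=1.16$.

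By the descent update rule, a vertex with one descent has one predecessor and four successors: two with one descent and two with two. A vertex $w=w's_is_{i+1}$ with $D(w)=\{s_i,s_{i+1}\}$ has three successors, one with one descent and two with two. It has two predecessors, with
\[
 D(w's_{i+1})=\{s_{i+1}\}\cup\bigl(D(w')\cap\{s_{i+2}\}\bigr),\qquad
 D(w's_i)=\{s_i\}\cup\bigl(D(w')\cap\{s_{i-1}\}\bigr).
\]
Since $s_{i-1}$ and $s_{i+2}$ do not commute, at most one predecessor has two descents. At vertices of length at least two, $Ah\le\lambda h$ therefore follows from
\[
 \theta^{-1}c_2+2\theta(c_1+c_2)\le\lambda c_1,\qquad
 \theta^{-1}(c_1+c_2)+\theta(c_1+2c_2)\le\lambda c_2 .
\]
The left sides are $4.662\ldots$ and $5.421\ldots$, below $\lambda=4.68$ and $4.68\cdot1.16=5.4288$ respectively. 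At $e$ and at the $s_i$ one checks $5\theta\le4.68$ and $1+\theta^2(2c_1+2c_2)\le4.68\,\theta$ directly.

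The Schur test then gives $r\le4.68<\sqrt{22}$. It also gives $Ph\le(4.68^2-5)h\le17h$, which is the inequality you were after.

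The paper instead imports the exact edge-isoperimetric constant $\sqrt3$ of H\"aggstr\"om--Jonasson--Lyons and feeds it into the Cheeger-type bound $3\|f\|_2^4\le25\|f\|_2^4-\langle f,Af\rangle^2$. By comparison, this route is self-contained: it uses only the descent-set combinatorics the paper already needs for its path counts. It also gives a marginally better constant, at the price of a small numerical check.
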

\begin{proof}
All relations in \eqref{eq:square-reflections} have even length. Thus the
graph is bipartite: every edge joins an even-length word to an odd-length
word. The map $f(w)\mapsto(-1)^{|w|}f(w)$ conjugates $A$ to $-A$, proving
$K=-K$. Lifting walks to the universal covering tree injects closed
walks of the five-regular tree into closed walks of $X$. Taking large
even return-moment roots gives $r\ge2\sqrt{5-1}=4$; see also \cite[Chapter~II]{Woess}.

For the upper bound, the exact edge-isoperimetric formula of
\cite[Theorem~4.1]{HJL}, applied to vertex degree $5$ and face degree $4$,
gives
\begin{equation}\label{bare:eq:isoperimetry}
 \inf_{\varnothing\ne S\subset X\text{ finite}}
 \frac{|\partial_E S|}{|S|}
 =(5-2)\sqrt{1-\frac4{(5-2)(4-2)}}=\sqrt3.
\end{equation}
For nonnegative finitely supported $f$, the layer-cake formula and
\eqref{bare:eq:isoperimetry} give
\[
 \sqrt3\,\norm f_2^2\le\sum_{\{x,y\}\in E(X)}|f(x)^2-f(y)^2|.
\]
Applying Cauchy--Schwarz to the last sum yields
\begin{align*}
 3\norm f_2^4
 &\le \left(\sum_{\{x,y\}}(f(x)-f(y))^2\right)
       \left(\sum_{\{x,y\}}(f(x)+f(y))^2\right)=25\norm f_2^4-\ip f{Af}^{\,2}.
\end{align*}
The adjacency kernel is nonnegative, so
$|\ip f{Af}|\le\ip{|f|}{A|f|}$ for arbitrary $f$. Taking the supremum
over unit vectors proves $r\le\sqrt{22}$.
\end{proof}

\subsubsection{Translations of the random environment}\label{sec:group-ac}
We prove \eqref{eq:product-poincare} and \eqref{eq:product-mixing} by
comparing translations on the probability space with translations on the
group, then applying a norm bound for the hyperbolic factor.

We let $\Gamma_d=\mathbb Z^d\times W_5$, acting freely and transitively
on $\mathcal X_d$. The product root is $(0,o)$, also written $o$ locally. For finitely supported coefficients
$a:\Gamma_d\to\mathbb C$, we write
\begin{equation}\label{eq:translation-sums}
 U(a)=\sum_g a(g)U_g,\qquad L(a)=\sum_g a(g)L_g,
 \qquad (L_g\xi)(h)=\xi(g^{-1}h).
\end{equation}
The coordinate action $U$ in \eqref{eq:environment-action} is often called
the Bernoulli-Koopman representation; $L$ is the left regular representation.
Both act by the displayed translations. On $\ell^2(W_5)$, the unitary
$(\mathscr J\xi)(w)=\xi(w^{-1})$ identifies the right-labelled adjacency
with left convolution:
\[
 \mathscr J A_{4,5}\mathscr J^{-1}=\sum_{i=1}^5L_{s_i},
 \qquad \mathscr J\delta_o=\delta_o.
\]
The same identification holds on the hyperbolic factor of $\Gamma_d$.
The argument applies to the laws in \eqref{eq:model} and \eqref{bc:eq:model}.
\begin{lemma}\label{lem:Bernoulli}
We let $(\mathsf S,\mathscr S,\vartheta)$ be a probability space with
separable $L^2(\vartheta)$, let
$\Omega=\mathsf S^X$ carry the product law, and let $\Gamma_d$ act
through left translation on $X\simeq\Gamma_d$. For the operators in
\eref{eq:translation-sums} and every finitely supported
$a:\Gamma_d\to\C$,
\begin{equation}\label{eq:embedding-bound}
 \norm{U(a)}_{L^2_0(\Omega)\to L^2_0(\Omega)}
 \le\norm{L(a)}_{\ell^2(\Gamma_d)\to\ell^2(\Gamma_d)}.
\end{equation}
\end{lemma}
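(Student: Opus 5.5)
The plan is the standard chaos (Fock-space) decomposition of $L^2_0$ of a product space. It shows that the Bernoulli--Koopman representation on $L^2_0(\Omega)$ is a direct sum of representations, each unitarily equivalent to a subrepresentation of an amplification $L\otimes I_{\mathcal K}$ of the left regular representation. Operator norms of $L(a)\otimes I_{\mathcal K}$ coincide with those of $L(a)$. Norms of restrictions to invariant subspaces and of direct sums are bounded by the supremum of the pieces. This gives \eqref{eq:embedding-bound}.

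First, fix an orthonormal basis $\{1\}\cup\{e_k\}_{k\ge1}$ of $L^2(\vartheta)$, which is possible by separability. Then $L^2_0(\vartheta)$ has basis $\{e_k\}$. Products $\prod_{x\in F}e_{k_x}(\omega_x)$, over finite nonempty $F\subset X$ and labels $k:F\to\N_{\ge1}$, form an orthonormal basis of $L^2_0(\Omega)$. So $L^2_0(\Omega)=\bigoplus_{n\ge1}\mathcal H_n$, where $\mathcal H_n$ is spanned by products with $|F|=n$. Each $\mathcal H_n$ is $U$-invariant, because $U_g$ maps the basis vector indexed by $(F,k)$ to the one indexed by $(gF,k\circ g^{-1})$. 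Hence $\norm{U(a)}\le\sup_n\norm{U(a)|_{\mathcal H_n}}$. Thus $U|_{\mathcal H_n}$ is the permutation representation on the countable set $\mathcal C_n$ of labelled $n$-subsets, with $\Gamma_d$ acting by $g\cdot(F,k)=(gF,k\circ g^{-1})$.

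Second, decompose $\mathcal C_n$ into $\Gamma_d$-orbits. The permutation representation on $\ell^2(\mathcal C_n)$ is the direct sum of the quasi-regular representations $\ell^2(\Gamma_d/\Gamma_c)$, where $\Gamma_c$ is the stabilizer of a point $c$. The key step is that these stabilizers are finite. An element fixing $c=(F,k)$ permutes the finite set $F$. Since the action on $X\simeq\Gamma_d$ is free, the map $g\mapsto g x_0$ for a fixed $x_0\in F$ is injective into $F$, so $|\Gamma_c|\le n$.

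Third, handle a finite subgroup $K$. The space $\ell^2(\Gamma_d/K)$ is unitarily equivalent to the subspace of right-$K$-invariant functions in $\ell^2(\Gamma_d)$, via $\xi\mapsto|K|^{-1/2}\,\xi\circ\pi$. This subspace is invariant under left translations $L_g$. Therefore $\norm{\lambda_{\Gamma_d/K}(a)}\le\norm{L(a)}$. Summing over orbits, a direct sum, and over $n$ gives \eqref{eq:embedding-bound}.

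I expect no serious obstacle, only bookkeeping. One point is to make sure the equivalence intertwines $U_g$ with $L_g$ and not with $L_{g^{-1}}$. This is settled by the conventions $(g\cdot\omega)_x=\omega_{g^{-1}x}$ and $U_gf(\omega)=f(g^{-1}\omega)$: under them, $U_g$ sends the product indexed by $(F,k)$ to the one indexed by $(gF,k\circ g^{-1})$, which matches $L_g$. The other point is that finiteness of stabilizers uses freeness of the action, which holds since $\Gamma_d$ acts freely and transitively on $X$.
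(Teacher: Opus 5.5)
Your proposal is correct and follows essentially the same route as the paper: expand $L^2_0(\Omega)$ in the basis of finite tensor products, split the resulting permutation action into orbits whose stabilizers are finite because the action is free, and embed each quasi-regular representation $\ell^2(\Gamma_d/K)$ isometrically and equivariantly into $\ell^2(\Gamma_d)$ as the right-$K$-invariant functions. That embedding is exactly the paper's map $J_H$, and your extra grading by $|F|$ is harmless bookkeeping.
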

\begin{proof}
We write
\[
 L^2(\mathsf S,\vartheta)=\C\mathbf1\oplus\mathcal H_0.
\]
If $\mathcal H_0=\{0\}$, then $L^2_0(\Omega)=\{0\}$ and the desired estimate
holds trivially. In the remaining case, we choose an orthonormal basis $(e_j)$ of
$\mathcal H_0$. Finite tensor
products
\[
 \xi_{F,\alpha}=\prod_{x\in F}e_{\alpha(x)}(\omega_x),
 \qquad \varnothing\ne F\Subset\Gamma_d,
\]
form an orthonormal basis of $L^2_0(\Omega)$. The action $U_g$ permutes these
tensors. For one orbit, we let $H$ be the stabilizer of a tensor. Since $F$ is
finite and left translation on $\Gamma_d$ is free,
\[
 H\hookrightarrow\operatorname{Sym}(F),\qquad |H|<\infty.
\]
Thus the orbit representation is the quasi-regular representation on
$\ell^2(\Gamma_d/H)$. Torsion only makes $H$ possibly nontrivial. We define
\[
 J_H\delta_{gH}=|H|^{-1/2}\sum_{h\in H}\delta_{gh}
 \in\ell^2(\Gamma_d).
\]
The cosets are disjoint, so $J_H$ is an isometry, and for every $k\in\Gamma_d$,
\[
 L_kJ_H\delta_{gH}=J_H\delta_{kgH}.
\]
Hence each quasi-regular orbit is a subrepresentation of the left regular
representation and therefore
\[
 \|U(a)\|_{\text{orbit}}\le\|L(a)\|.
\]
Taking the orthogonal direct sum over tensor orbits proves
\eqref{eq:embedding-bound}. This explicit decomposition is the Bernoulli
Koopman representation underlying the general discussion in
\cite[Section~3]{SewardKoopman}.
\end{proof}

The Cayley graph of $W_5$ has uniformly thin geodesic triangles: each
side lies within a fixed distance of the other two. This is word
hyperbolicity \cite{BridsonHaefliger}. Its rapid-decay estimate
\cite{deLaHarpe,Jolissaint} states that, for some $C,s<\infty$,
\begin{equation*}
 \left\|\sum_{w\in W_5}a(w)L_w\right\|
 \le C(1+R)^s\|a\|_2
 \quad\text{if }\supp a\subset\{w:|w|\le R\}.
\end{equation*}
Here $\|a\|_2=(\sum_w|a(w)|^2)^{1/2}$ and $|w|=d(o,w)$.
Lemma~\ref{lem:Bernoulli} transfers this estimate to functions of the
random potential.
\begin{lemma}\label{lem:product-geometry}
For every finite $d\ge0$,
\begin{equation}\label{eq:product-radius}
 4\le r_\square\le\sqrt{22}<5,\qquad
 \|A_{\mathcal X_d}\|=r_\square+2d<5+2d.
\end{equation}
On any iid product probability space with separable single-site
$L^2$, for $T_H=\{(0,s_i):1\le i\le5\}$,
\begin{equation}\label{eq:product-poincare}
 \sum_{t\in T_H}\|f-U_tf\|_2^2
 \ge10\left(1-\frac{r_\square}{5}\right)\Var(f).
\end{equation}
We identify vertices with $\Gamma_d$ and write $U_x=U_g$ when $x=go$.
There are $C<\infty$ and $s<\infty$ such that, for every finite
$F\subset B_R((0,o))$ and every centered $f$ (that is, $\mathbb Ef=0$),
\begin{equation}\label{eq:product-mixing}
 \left\|\sum_{x\in F}U_xf\right\|_2
 \le C(1+R)^s(2R+1)^{d/2}\sqrt{|F|}\,\|f\|_2.
\end{equation}
Each translation in \eqref{eq:product-poincare} moves the root by one edge.
\end{lemma}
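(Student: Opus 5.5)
The plan is to prove the three assertions of \Cref{lem:product-geometry} separately, with the Cartesian product structure reducing everything to the hyperbolic factor and \Cref{lem:Bernoulli} transferring group estimates to the random environment.

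\textbf{Norm identity.} The first display of \eqref{eq:product-radius} is \Cref{bare:lem:radius}. For the second, \eqref{eq:square-product-model} expresses $A_d$ as a sum of commuting self-adjoint tensor factors, so $\sigma(A_d)=\sigma(A_{\mathbb Z^d})+\sigma(A_{4,5})=[-2d,2d]+K_\square$. Since $K_\square=-K_\square$ contains $\pm r_\square$, this gives $\|A_d\|=r_\square+2d$, and $\sqrt{22}<5$ gives the strict bound.

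\textbf{Poincar\'e inequality.} I would expand
\[
\sum_{t\in T_H}\|f-U_tf\|_2^2=10\|f\|_2^2-2\RePart\Bigl\langle f,\sum_iU_{(0,s_i)}f\Bigr\rangle,
\]
and first reduce to centered $f$. This reduction is legitimate because constants are fixed by every $U_t$ and $\Var(f)=\|f-\E f\|_2^2$. By \Cref{lem:Bernoulli}, $\|\sum_iU_{(0,s_i)}\|_{L^2_0}\le\|\sum_iL_{(0,s_i)}\|_{\ell^2(\Gamma_d)}$. On $\ell^2(\mathbb Z^d)\otimes\ell^2(W_5)$ this operator is $I\otimes\sum_iL_{s_i}$, and by the unitary $\mathscr J$ it is conjugate to $I\otimes A_{4,5}$, whose norm is $r_\square$. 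Hence the right side is at least $(10-2r_\square)\|f\|_2^2=10(1-r_\square/5)\Var(f)$. Each $(0,s_i)$ moves the root along one edge labelled $i$.

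\textbf{Mixing bound.} Write $F\subset B_R((0,o))$ as a set of group elements $(n,w)$ with $|n|_1+|w|\le R$, and let $a=\one_F$. By \Cref{lem:Bernoulli} it suffices to show
\[
\|L(\one_F)\|_{\ell^2(\Gamma_d)}\le C(1+R)^s(2R+1)^{d/2}\sqrt{|F|},
\]
because $\|\sum_xU_xf\|_2\le\|U(\one_F)\|\|f\|_2$. I would decompose $\one_F=\sum_{n\in\mathbb Z^d,|n|_\infty\le R}\delta_n\otimes a_n$, where $a_n=\one_{F_n}$ and $F_n=\{w:(n,w)\in F\}$ lies in $\{|w|\le R\}$. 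Then $L(\one_F)=\sum_nL_n\otimes L(a_n)$. The triangle inequality, rapid decay for $W_5$, and Cauchy--Schwarz over the at most $(2R+1)^d$ indices $n$ give
\[
\|L(\one_F)\|\le\sum_nC(1+R)^s\sqrt{|F_n|}\le C(1+R)^s(2R+1)^{d/2}\Bigl(\sum_n|F_n|\Bigr)^{1/2}.
\]
Since $\sum_n|F_n|=|F|$, this is the claimed bound.

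\textbf{Main obstacle.} The only delicate point is the Poincar\'e step. We need $\|\sum_iL_{s_i}\|=r_\square$, meaning that the left-convolution operator has the same norm as the right-labelled adjacency; the intertwiner $\mathscr J$ supplies this. We also need the bound to hold on the whole $\Gamma_d$, not only on $W_5$, which follows from the tensor factorization $\|I\otimes T\|=\|T\|$. Everything else is direct bookkeeping with the cited rapid-decay estimate.
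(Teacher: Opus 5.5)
Your proposal is correct and follows the paper's proof essentially step for step: the tensor-sum spectrum gives the norm identity, \Cref{lem:Bernoulli} together with $\|\sum_i L_{(0,s_i)}\|=r_\square$ (via $\mathscr J$) gives the Poincar\'e bound, and slicing $F$ over Euclidean coordinates, then applying rapid decay and Cauchy--Schwarz, gives the mixing bound. Indexing the slices by $|n|_\infty\le R$ instead of $|n|_1\le R$ is harmless, since either way there are at most $(2R+1)^d$ slices.
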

\begin{proof}
The radius statement is immediate from \eqref{bare:eq:radius} and the
tensor-sum formula:
\[
 \sigma(A_d)=K_\square+[-2d,2d],
 \qquad
 \|A_d\|=r_\square+2d<5+2d.
\]

For the Poincar\'e estimate, we apply \Cref{lem:Bernoulli} to
\[
 a=\frac15\sum_{t\in T_H}\delta_t.
\]
On \(L^2_0(\Omega)\),
\[
 \left\|\frac15\sum_{t\in T_H}U_t\right\|
 \le
 \left\|\frac15\sum_{t\in T_H}L_t\right\|
 =\frac{r_\square}{5}<1.
\]
Therefore, for \(f_0=f-\E f\),
\begin{align}
 \sum_{t\in T_H}\|f-U_tf\|_2^2
 &=10\left\langle f_0,
 \left(I-\frac15\sum_{t\in T_H}U_t\right)f_0\right\rangle\ge10\left(1-\frac{r_\square}{5}\right)\|f_0\|_2^2,
 \notag
\end{align}
which is \eqref{eq:product-poincare}.

For $z\in\mathbb Z^d$, we put $|z|_1=\sum_{i=1}^d|z_i|$.
For $F\subset B_R((0,o))$, we write
\[
 F_z=\{w\in W_5:(z,w)\in F\},
 \qquad
 F=\bigsqcup_{|z|_1\le R}\bigl(\{z\}\times F_z\bigr).
\]
Then rapid decay and Cauchy--Schwarz give
\begin{align}
 \left\|\sum_{(z,w)\in F}L_{(z,w)}\right\|
 &\le \sum_{|z|_1\le R}
       \left\|\sum_{w\in F_z}L_w\right\|\le C(1+R)^s\sum_{|z|_1\le R}|F_z|^{1/2}\notag\\
 &\le C(1+R)^s(2R+1)^{d/2}|F|^{1/2}.
 \notag
\end{align}
Applying \eqref{eq:embedding-bound} to the last convolution operator gives
\eqref{eq:product-mixing}.
\end{proof}

The square faces give more than one shortest path to many vertices.
We count all shortest paths first, and then select vertices carrying
comparable multiplicities. This construction serves both cases of
\Cref{thm:square-bounded}. We seek sets with the two asymptotic rates
\begin{equation}\label{eq:bare-endpoint-rates}
 \frac{\log|F_R|}{R}\longrightarrow\gamma>0,\qquad
 \liminf_{R\to\infty}\frac1{R|F_R|}
      \sum_{x\in F_R}\log N_{ox}\ge\kappa,
 \qquad R\in\mathscr R.
\end{equation}
\begin{lemma}\label{bare:lem:path-growth}
We put
\begin{equation}\label{bare:eq:entropies}
 h=\log\frac{3+\sqrt5}{2},\qquad
 \eta_{\rm geo}=\log(2+\sqrt2).
\end{equation}
Then the number $s_n$ of vertices at distance $n$ and the number $g_n$ of
shortest paths of length $n$, with all endpoints allowed, satisfy
\begin{equation}\label{bare:eq:path-growth}
 \lim_n\frac{\log s_n}{n}=h,
 \qquad g_n=\sum_{d(o,x)=n}N_{ox},\qquad
 \lim_n\frac{\log g_n}{n}=\eta_{\rm geo}.
\end{equation}
Moreover, $1\le N_{ox}\le2^n$ if $d(o,x)=n$.
\end{lemma}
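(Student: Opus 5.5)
We will count reduced words in the right-angled Coxeter group $W_5$ of \eqref{eq:square-reflections}, using the normal-form rules recalled above. Since word length equals graph distance, $s_n$ is the number of elements of length $n$, and $g_n$ is the number of reduced words of length $n$. We will encode both counts by finite automata and read off the growth rates as logarithms of Perron eigenvalues.

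\emph{Vertex growth.} For right-angled Coxeter groups the growth series is $1/p(-1/t)$-type, given by the clique polynomial of the defining graph. Here the defining graph is the $5$-cycle, whose cliques are the empty set, the $5$ vertices and the $5$ edges. Hence
\[
 \sum_n s_nt^n=\frac{1}{1-5t/(1+t)+5t^2/(1+t)^2},
\]
and the denominator simplifies to $(1-3t+t^2)/(1+t)^2$. If citing this formula is undesirable, we will instead use the standard geodesic normal form. Take the lexicographically least reduced word, which is recognized by the automaton whose state is the set of generators that may not follow next. That set consists of the last letter, letters that commute with it and have smaller index, and so on. We then compute the transfer matrix. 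The smallest root of $1-3t+t^2$ is $t=(3-\sqrt5)/2$, which gives $\lim\log s_n/n=\log\frac{3+\sqrt5}2=h$.

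\emph{Geodesic growth.} A reduced word is one in which no letter $s_i$ can be moved, through letters commuting with it, to cancel a later $s_i$. Using the cancellation rule quoted before the lemma, a word $s_{i_1}\cdots s_{i_n}$ is reduced if and only if the following holds for each new letter $s_i$: every earlier occurrence of $s_i$ is separated from the end by some letter not in $\{s_{i-1},s_i,s_{i+1}\}$. So the state needed is the set of "still reachable" letters. After the letter $s_j$ is appended, the reachable set becomes $\{s_j\}$ together with the previously reachable letters that commute with $s_j$. Commuting letters are adjacent on the $5$-cycle, so the reachable sets are cliques: either a single $\{s_j\}$ or a pair $\{s_j,s_{j\pm1}\}$.

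Up to the cyclic $\Z/5$ and reflection symmetry there are two state types. In state $\{s_j\}$ we have $4$ choices. Two of them, $s_{j\pm1}$, lead to a pair state, and two lead to a singleton. In state $\{s_j,s_{j+1}\}$ the forbidden letters are $s_j$ and $s_{j+1}$, and the letters $s_{j-1}$ and $s_{j+2}$ each commute with one of them. This gives 3 choices: $s_{j-1}$ leads to the pair $\{s_{j-1},s_j\}$, $s_{j+2}$ leads to the pair $\{s_{j+1},s_{j+2}\}$, and $s_{j+3}$ leads to a singleton.

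The symmetry-reduced transfer matrix on (singleton, pair) is therefore $\begin{pmatrix}2&2\\1&2\end{pmatrix}$. Its Perron eigenvalue is $2+\sqrt2$. Primitivity of the full automaton, which can be checked by a short cycle argument, then gives $\lim\log g_n/n=\log(2+\sqrt2)=\eta_{\rm geo}$. By construction $g_n=\sum_{d(o,x)=n}N_{ox}$, because shortest paths from $o$ correspond bijectively to reduced words.

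\emph{The bound on $N_{ox}$.} The bound $N_{ox}\ge1$ is trivial. For the upper bound, consider shortest paths to a fixed $x$ at distance $n$ and build the word letter by letter. By the normal-form theorem every reduced word for $x$ is a rearrangement of one fixed word, obtained by commuting swaps. The letters that can come first form a clique in the $5$-cycle, so there are at most $2$ choices at each step. Counting this way also gives at most $2$ choices for every subsequent step, which yields $N_{ox}\le2^n$.

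The main obstacle is justifying the automaton characterization of reducedness rigorously. We will rely on Green's normal form to establish it, in the form "cancellation occurs exactly when an earlier occurrence commutes to the end."
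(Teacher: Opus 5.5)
Your proof is correct. For $g_n$ and for $N_{ox}\le2^n$ it is the paper's argument in other words. Your ``still reachable'' set is exactly the paper's right descent set $D(w)=\{s_i:|ws_i|=|w|-1\}$, and your update rule is $D(ws)=\{s\}\cup(D(w)\cap\operatorname{Link}(s))$. You also get the same matrix $\begin{pmatrix}2&2\\1&2\end{pmatrix}$. That matrix is already strictly positive, so the primitivity check on the full automaton is unnecessary. For $N_{ox}$ you use left descents (possible first letters) where the paper uses predecessors; this is the mirror image of the same argument.

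The genuine difference is how you count $s_n$. You import the growth series of a right-angled Coxeter group,
$\sum_ns_nt^n=\bigl(\sum_\sigma(-t/(1+t))^{|\sigma|}\bigr)^{-1}$, summed over cliques $\sigma$ of the defining graph. Your display has the correct form, though the verbal gloss ``$1/p(-1/t)$'' does not. This route gives the exact generating function $(1+t)^2/(1-3t+t^2)$ and works for any defining graph, at the price of an external theorem (Steinberg's formula).

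The paper instead stays inside the descent automaton you already built. A vertex $z$ at distance $n+1$ receives exactly $|D(z)|\in\{1,2\}$ edges from the previous sphere. Dividing the incoming edge counts by $|D(z)|$ therefore turns the word recursion into the vertex recursion $(u_{n+1},v_{n+1})=(u_n,v_n)\begin{pmatrix}2&1\\1&1\end{pmatrix}$, whose Perron root is $(3+\sqrt5)/2$. This one-line observation is a much cheaper self-contained replacement for your shortlex fallback, which as written is only sketched (``and so on'').
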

\begin{proof}
The commutation graph has vertices $s_i$ and an edge for each commuting
pair; here it is a five-cycle. For $w\ne e$, we define its descent set
\[
 D(w)=\{s_i:|ws_i|=|w|-1\}.
\]
Each descent labels an edge to the previous sphere, hence a predecessor.
The elements of $D(w)$ commute, so
\[
 |D(w)|\in\{1,2\}.
\]
We put $\operatorname{Link}(s_i)=\{s_{i-1},s_{i+1}\}$, with cyclic indices.
For $s\notin D(w)$,
\begin{equation}\label{bare:eq:descent-update}
 D(ws)=\{s\}\cup\bigl(D(w)\cap\operatorname{Link}(s)\bigr).
\end{equation}
Thus the transition data are
\[
\begin{array}{c|cc|c}
 |D(w)|&\#\{|D(ws)|=1\}&\#\{|D(ws)|=2\}&\#\text{ predecessors}\\ \hline
 1&2&2&1\\
 2&1&2&2
\end{array}
\]
(the last column is the number of possible final letters of a reduced
word for the endpoint).

We let \(u_n,v_n\) count vertices in the two states.  Dividing incoming
edge counts by the number of predecessors gives
\begin{equation}\label{bare:eq:vertex-matrix}
 (u_{n+1},v_{n+1})=(u_n,v_n)
 \begin{pmatrix}2&1\\1&1\end{pmatrix},
 \qquad (u_1,v_1)=(5,0).
\end{equation}
If instead \(p_n,q_n\) count reduced words directly, each incoming edge
contributes one reduced word, and we obtain:
\begin{equation*}
 (p_{n+1},q_{n+1})=(p_n,q_n)
 \begin{pmatrix}2&2\\1&2\end{pmatrix},
 \qquad (p_1,q_1)=(5,0).
\end{equation*}
Hence
\[
 s_n=u_n+v_n,
 \qquad
 g_n=p_n+q_n=\sum_{d(o,x)=n}N_{ox},
\]
For a square matrix $M$, we write
$\operatorname{spr}(M)=\max\{|z|:z\in\sigma(M)\}$ for its spectral radius.
The Perron eigenvalue is the spectral radius of a positive matrix. Here
\[
 \operatorname{spr}\!\begin{pmatrix}2&1\\1&1\end{pmatrix}
   =\frac{3+\sqrt5}{2},
 \qquad
 \operatorname{spr}\!\begin{pmatrix}2&2\\1&2\end{pmatrix}
   =2+\sqrt2.
\]
Therefore
\[
 \lim_{n\to\infty}\frac1n\log s_n=h,
 \qquad
 \lim_{n\to\infty}\frac1n\log g_n=\eta_{\rm geo}.
\]
Finally each backward step has at most two choices, so recursively $1\le N_{ox}\le 2^{d(o,x)}.$
\end{proof}

\begin{lemma}\label{bare:lem:selection}
There are deterministic selected radii and endpoint sets satisfying
\eqref{eq:bare-endpoint-rates} with
\begin{equation}\label{bare:eq:selection}
 0<\eta_{\rm geo}-\log2\le\gamma\le h<1,
 \qquad \gamma+\kappa=\eta_{\rm geo}.
\end{equation}
Hence, for every $0\le m\le1$,
\begin{equation}\label{bare:eq:improved-criterion}
 \gamma m+\kappa\ge\eta_{\rm geo}-h(1-m).
\end{equation}
\end{lemma}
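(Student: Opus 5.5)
Throughout, let $\Phi(x)=\log N_{ox}/R$ for $x\in S_R(o)$. By \Cref{bare:lem:path-growth} this takes values in $[0,\log2]$, and $g_R=\sum_{x\in S_R}N_{ox}=e^{\eta_{\rm geo}R+o(R)}$. The plan is a pigeonhole (method of types) selection: choose endpoints with nearly constant multiplicity so that the selected set carries an exponentially non-negligible share of the shortest-path mass.

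\textbf{Binning and selection.} Fix $k$ and partition $[0,\log2]$ into $k$ equal subintervals $J_1,\dots,J_k$. Put $S_R^{(j)}=\{x\in S_R(o):\Phi(x)\in J_j\}$. Since
\[
 g_R=\sum_j\sum_{x\in S_R^{(j)}}N_{ox}\le\sum_j|S_R^{(j)}|e^{R\sup J_j},
\]
some $j=j(R)$ satisfies $|S_R^{(j)}|e^{R\sup J_j}\ge g_R/k$. Let $F_R=S_R^{(j(R))}$ and $\kappa_R=\inf J_{j(R)}$. Then
\[
 \frac{\log|F_R|}{R}+\kappa_R\ge\eta_{\rm geo}-\frac{\log2}{k}-o(1),
 \qquad
 \frac1{R|F_R|}\sum_{x\in F_R}\log N_{ox}\ge\kappa_R .
\]
In the other direction, $\sum_{x\in F_R}N_{ox}\le g_R$ gives $\log|F_R|/R+\kappa_R\le\eta_{\rm geo}+o(1)$, and $|F_R|\le s_R$ gives $\log|F_R|/R\le h+o(1)$. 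To get exact limits, take $k=k_R\to\infty$ slowly (say $k_R\approx\sqrt R$, so that $\log k_R/R\to0$). Then pass to a deterministic subsequence $\mathscr R$ along which both $\log|F_R|/R$ and $\kappa_R$ converge; this uses compactness in $[0,h]\times[0,\log2]$. Call the limits $\gamma$ and $\kappa$. The two-sided bounds force $\gamma+\kappa=\eta_{\rm geo}$, and the multiplicity liminf in \eqref{eq:bare-endpoint-rates} is at least $\kappa$.

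\textbf{Numerical bounds.} From $\kappa\le\log2$ we get $\gamma\ge\eta_{\rm geo}-\log2=\log(1+\sqrt2/2)>0$. From $|F_R|\le s_R$ we get $\gamma\le h$. Finally, $h=\log2.618\ldots\approx0.962<1$. This establishes \eqref{bare:eq:selection}.

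\textbf{The criterion \eqref{bare:eq:improved-criterion}.} Write $\gamma m+\kappa=\eta_{\rm geo}-\gamma(1-m)$ and use $\gamma\le h$ together with $1-m\ge0$.

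\textbf{Main obstacle.} The delicate point is making the selection deterministic and the limits genuine rather than liminfs. Deterministic: $F_R$ depends only on the graph, and ties in the pigeonhole are broken by the smallest $j$. Genuine limits: this is handled by the subsequence extraction, which \eqref{eq:selected-cost} explicitly permits through $\mathscr R$. The remaining care is checking that the $o(1)$ errors from $g_R$ and from the binning vanish uniformly along the subsequence.
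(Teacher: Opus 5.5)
Your proposal is correct and is essentially the paper's argument. The paper pigeonholes the sphere into the dyadic multiplicity bins $2^j\le N_{ox}<2^{j+1}$, $0\le j\le n$; this is your scheme with $k_R=n+1$ bins of width $\log 2/n$, so no auxiliary $k_R$ is needed. It takes the least index whose bin carries at least $g_n/(n+1)$ of the path mass, extracts a subsequence along which $n^{-1}\log|F_n|$ and $n^{-1}j_n\log2$ converge, and then derives $\gamma+\kappa=\eta_{\rm geo}$, $\eta_{\rm geo}-\log2\le\gamma\le h$ and the criterion exactly as you do.
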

\begin{proof}
We partition the sphere of radius $n$ into sets
\[
 F_{n,j}=\{x:d(o,x)=n,\ 2^j\le N_{ox}<2^{j+1}\},
 \qquad 0\le j\le n.
\]
By \Cref{bare:lem:path-growth}, one bin $j_n$ carries at least $g_n/(n+1)$
of the total path count. We choose the least such maximizing index to fix
all choices deterministically, and put $F_n=F_{n,j_n}$. Then
\begin{equation}\label{bare:eq:bin-bounds}
 \frac{g_n}{2(n+1)}\le |F_n|2^{j_n}\le g_n.
\end{equation}
The two bounded sequences $n^{-1}\log|F_n|$ and $n^{-1}j_n\log2$
have a common convergent subsequence. We denote its limits by
$\gamma,\kappa$. Equation \eqref{bare:eq:bin-bounds} and
\eqref{bare:eq:path-growth} imply $\gamma+\kappa=\eta_{\rm geo}$.
Since $|F_n|\le s_n$ and $j_n\le n$, they also imply
$\gamma\le h$ and $\gamma\ge\eta_{\rm geo}-\log2>0$.
Every selected endpoint has $\log N_{ox}\ge j_n\log2$, proving
\eqref{eq:bare-endpoint-rates}. Finally,
$\gamma m+\kappa=\eta_{\rm geo}-\gamma(1-m)\ge\eta_{\rm geo}-h(1-m)$.
\end{proof}

\begin{lemma}\label{lem:square-free-spectrum}
For every finite $d\ge1$, the free adjacency on $\mathcal X_d$ satisfies
\begin{equation}\label{eq:square-free-spectrum}
 \sigma(A_{\mathcal X_d})=I_d^\square,
 \qquad \|A_{\mathcal X_d}\|=R_d^\square.
\end{equation}
Thus, under \eref{eq:density-upper}--\eref{eq:density-lower},
$n_\lambda(E)>0$ for almost every $E\in I_d^\square$ for every
$\lambda>0$, where $n_\lambda$ denotes the averaged local density for
\eref{eq:square-product-model}.
\end{lemma}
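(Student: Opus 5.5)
The plan is to use the tensor-sum structure \eqref{eq:square-product-model} to compute the spectrum exactly, and then to read off the density statement from \Cref{cor:free-criterion}. Since $A_{\mathbb Z^d}\otimes I$ and $I\otimes A_{4,5}$ are commuting bounded self-adjoint operators, the spectrum of their sum is
\[
 \sigma(A_{\mathcal X_d})=\sigma(A_{\mathbb Z^d})+\sigma(A_{4,5})=[-2d,2d]+K_\square .
\]
This identity was already recorded before \Cref{lem:product-geometry}. It therefore suffices to show that $[-2d,2d]+K_\square=[-R_d^\square,R_d^\square]$.

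The inclusion ``$\subset$'' is immediate, because $K_\square\subset[-r_\square,r_\square]$. For ``$\supset$'', I use \Cref{bare:lem:radius}: $K_\square$ is compact and symmetric, and it contains $\pm r_\square$ since $r_\square=\|A_{4,5}\|$ is attained by the spectral radius of a self-adjoint operator.

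The key point concerns the gaps of $K_\square$. Every bounded component of $[-r_\square,r_\square]\setminus K_\square$ is an open interval $(\alpha,\beta)$ with $\alpha,\beta\in K_\square$. We need every such gap to have length at most $4d$, because then the translates $K_\square+[-2d,2d]$ fill it. Here is the argument. Given $t\in[-R_d^\square,R_d^\square]$, let $k=\max\{k'\in K_\square:k'\le t+2d\}$; this set is nonempty because $-r_\square\le t+2d$. If $k\ge t-2d$ we are done. Otherwise $(k,t+2d]$ misses $K_\square$, so either $t+2d\ge r_\square$, which contradicts $k<t-2d\le r_\square-2d$ only when the gap is too long, or there is a gap of length greater than $4d$.

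\textbf{Main obstacle.} I do not see how to bound the gap lengths of $K_\square$ for general $d\ge1$ from the tools stated so far. This is exactly where the claim needs a spectral input.

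The approach I would try first is to show that $K_\square$ is actually an interval, or at least that $0\in K_\square$ and its gaps have length below $4$. Bipartiteness reduces the problem to $A_{4,5}^2$ on even vertices. For gap lengths below $4$, I would rely on the fact that $\G_{4,5}$ contains bi-infinite geodesic lines, namely $\cdots s_1s_3s_1s_3\cdots$. Finitely supported test functions along such a line, of the form $e^{i\theta n}$ cut off, are not exact quasimodes, because the transverse edges leak.

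A more robust option is a convexity argument on the product itself. The spectrum of $A_{\mathcal X_d}$ equals the closure of the set of $E$ with Weyl sequences. Given $t$, pick $k\in K_\square$ and $\theta$ with $2\sum_i\cos\theta_i=t-k$. This requires only $|t-k|\le2d$, which brings us back to the gap question. So the essential fact needed is that every gap of $K_\square$ is shorter than $4$. The cheapest route I see is to use \Cref{bare:lem:radius} together with symmetry, which gives $\pm r_\square\in K_\square$ and $r_\square\ge4$, and then to show via a Weyl sequence that a dense enough set of points lies in $K_\square$: for example $0\in K_\square$, using an alternating sign function on a long geodesic ladder.

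Once the interval identity \eqref{eq:square-free-spectrum} is in hand, $\|A_{\mathcal X_d}\|=R_d^\square$ follows at once. The positivity claim is then a direct application of \Cref{cor:free-criterion}, which gives $n_\lambda>0$ almost everywhere on $\sigma(A)=I_d^\square$ for every $\lambda>0$ under \eqref{eq:density-lower}.
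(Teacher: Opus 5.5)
Your reduction is sound. Since $\pm r_\square\in K_\square$, the identity $K_\square+[-2d,2d]=I_d^\square$ holds exactly when every bounded gap of $K_\square$ has length at most $4d$, and the positivity claim then follows from \Cref{cor:free-criterion}, as you say. But the proposal stops at the one step that actually needs proof. The fallback you call cheapest does not close it, even if $0\in K_\square$ could be shown. Together with $\pm r_\square\in K_\square$, that only bounds the gaps by $r_\square$. Since the paper only gives $4\le r_\square\le\sqrt{22}$, this fails for $d=1$: you would still need a point of $K_\square$ in $[r_\square-4,4]$. Proving that $K_\square$ is an interval, or building exact Weyl sequences along a line, is also unnecessary.

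The missing idea is a norm comparison rather than a quasimode construction. Write $J_i\psi(w)=\psi(ws_i)$ and split $A_{4,5}=A_{\mathrm{lad}}+(J_4+J_5)$ with $A_{\mathrm{lad}}=J_1+J_2+J_3$. The parabolic subgroup $\langle s_1,s_2,s_3\rangle$ is $D_\infty\times\mathbb Z/2$, because $s_2$ commutes with $s_1,s_3$ while $s_1s_3$ has infinite order. The retraction sending $s_4,s_5$ to $e$ confirms this presentation. Each left coset is invariant under $A_{\mathrm{lad}}$ and carries a copy of the ladder $\mathbb Z\square K_2$, so $\sigma(A_{\mathrm{lad}})=[-2,2]+\{-1,1\}=[-3,3]$. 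The ``leakage'' you were worried about is exactly the operator $J_4+J_5$, whose norm is at most $2$. For selfadjoint $S,T$ one has $\sigma(S)\subset\sigma(T)+[-\|S-T\|,\|S-T\|]$, which gives $[-3,3]\subset K_\square+[-2,2]$. Combine this with $[\pm r_\square-2,\pm r_\square+2]\subset K_\square+[-2,2]$ and $r_\square\le5$ to get $K_\square+[-2,2]=[-r_\square-2,r_\square+2]$. Adding $[-2(d-1),2(d-1)]$ then covers every $d\ge1$. In effect, the Euclidean factor supplies exactly the width-$2$ slack that absorbs the leakage, so you never need to control the gap lengths of $K_\square$ directly.
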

\begin{proof}
We let \(J_i\psi(w)=\psi(ws_i)\) and
\begin{equation}\label{eq:square-ladder-comparison}
 A_{\mathrm{lad}}=J_1+J_2+J_3,
 \qquad
 A_{4,5}=A_{\mathrm{lad}}+J_4+J_5,
 \qquad
 \|A_{4,5}-A_{\mathrm{lad}}\|\le2.
\end{equation}
The homomorphism fixing $s_1,s_2,s_3$ and sending $s_4,s_5$ to $e$
restricts to the identity on the subgroup they generate. Hence
\[
 \langle s_1,s_2,s_3\rangle
 \cong
 \langle s_1,s_3\mid s_1^2=s_3^2=e\rangle
 \times\langle s_2\mid s_2^2=e\rangle.
\]
We let $K_2$ be the graph of two vertices joined by one edge. Each left
coset carries the infinite ladder $\mathbb Z\square K_2$, and the restriction
of $A_{\mathrm{lad}}$ to that coset is unitarily equivalent to
\[
 A_{\mathbb Z}\otimes I+I\otimes A_{K_2}.
\]
Therefore
\begin{equation*}
 \sigma(A_{\mathrm{lad}})=[-2,2]+\{-1,1\}=[-3,3].
\end{equation*}

For selfadjoint \(S,T\),
\[
 \|S-T\|\le a,
 \quad \operatorname{dist}(E,\sigma(T))>a
 \Longrightarrow
 \|(T-E)^{-1}(S-T)\|<1
 \Longrightarrow E\notin\sigma(S).
\]
Applying this in both directions to \eqref{eq:square-ladder-comparison},
with \(\Sigma=\sigma(A_{4,5})\), yields
\begin{equation*}
 [-3,3]=\sigma(A_{\mathrm{lad}})\subset \Sigma+[-2,2].
\end{equation*}
Since \(\pm r_\square\in\Sigma\) and \(r_\square\le5\),
\[
 [-r_\square-2,-r_\square+2]\cup[-3,3]
 \cup[r_\square-2,r_\square+2]
 =[-r_\square-2,r_\square+2].
\]
Together with \(\Sigma\subset[-r_\square,r_\square]\), this gives
\begin{equation*}
 \Sigma+[-2,2]=[-r_\square-2,r_\square+2].
\end{equation*}
For \(d\ge1\), the tensor-sum formula now gives
\[
 \sigma(A_{\mathcal X_d})
 =\Sigma+[-2d,2d]
 =[-r_\square-2d,r_\square+2d],
\]
which is \eqref{eq:square-free-spectrum}.  The positivity of
\(n_\lambda\) then follows from \Cref{cor:free-criterion}.
\end{proof}

\subsection{Proof of the first theorem}\label{sec:bounded-applications}
We put $A=A_{4,5}$, $K=K_\square$ and $r=r_\square$. We bound the arcsine
mass of $K$ by proving that it has at most four components and that its
capacity is bounded below by squared shortest-path counts.

We let $\mu_0$ be the free local spectral probability measure and $F$ its
distribution function:
\[
 \mu_0(S)=\ip{\delta_o}{\one_S(A)\delta_o},\qquad
 F(t)=\mu_0(( -\infty,t]).
\]
Transitivity implies $\supp\mu_0=K$: if a spectral projection has zero
diagonal at $o$, it has zero diagonal everywhere and hence vanishes.

\begin{lemma}\label{bare:lem:labels}
For $t\in(-r,r)\setminus K$,
\begin{equation*}
 F(t)\in\{1/4,1/2,3/4\}.
\end{equation*}
Hence, $K$ is a union of at most four compact intervals,
allowing components to degenerate to points.
\end{lemma}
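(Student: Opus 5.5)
The plan is to treat the statement as a gap-labelling result for the group $W_5$ and compute $F(t)$ as the von Neumann trace of a spectral projection in the reduced group $C^*$-algebra. The finiteness of the components of $K$ then follows from monotonicity of $F$.

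\textbf{Step 1: $F(t)$ is the trace of a projection in $C^*_r(W_5)$.} By the identification $\mathscr J A\mathscr J^{-1}=\sum_{i=1}^5L_{s_i}$ with $\mathscr J\delta_o=\delta_o$, the operator $A$ is unitarily equivalent to an element $a=\sum_iL_{s_i}$ of $C^*_r(W_5)$, and
\[
 \mu_0(S)=\ip{\delta_o}{\one_S(a)\delta_o}.
\]
Fix $t\in(-r,r)\setminus K$. Since $K$ is closed, there is a continuous function $\phi$ that equals $1$ on $(-\infty,t]\cap K$ and $0$ on $(t,\infty)\cap K$. Then $P=\one_{(-\infty,t]}(a)=\phi(a)$ is a projection lying in $C^*_r(W_5)$, and
\[
 F(t)=\tau(P),\qquad \tau(x)=\ip{\delta_o}{x\delta_o},
\]
where $\tau$ is the canonical trace.

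\textbf{Step 2: $\tau(P)\in\tfrac14\Z$.} This is the step I expect to be the main obstacle. $W_5$ is word hyperbolic (it is already used in the rapid-decay estimate), so the Baum--Connes conjecture holds for it (Mineyev--Yu, Lafforgue). For groups satisfying Baum--Connes, the trace of every projection in $C^*_r(\Gamma)$ lies in the additive subgroup of $\mathbb Q$ generated by $1/|H|$, where $H$ runs over the finite subgroups of $\Gamma$. This is the $L^2$-index/trace computation of L\"uck (the "modified trace conjecture"). The finite subgroups of the right-angled Coxeter group $W_5$ are conjugate to special subgroups generated by pairwise commuting generators. Since the commutation graph is a five-cycle, these are $\{e\}$, $\langle s_i\rangle$ and $\langle s_i,s_{i+1}\rangle\cong(\Z/2)^2$, of orders $1,2,4$, which gives $\tau(P)\in\tfrac14\Z$. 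If I wanted to avoid quoting Baum--Connes, I would first try a direct route: build the proper cocompact action on the Davis complex (the square complex mentioned above), and use the equivariant index theorem for the orbifold quotient, whose cell stabilizers are exactly the groups above. I expect citing the literature to be the cleaner option.

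\textbf{Step 3: exclude the endpoints $0$ and $1$.} Since $\supp\mu_0=K$ and $\pm r\in K$ (because $K=-K$ and $r=\|A\|\in K$), the point $-r<t$ gives $\mu_0([-r,t])>0$. Indeed, $t$ lies in an open gap, so a neighbourhood of the point of $K$ just below $t$ carries positive mass; a neighbourhood of $-r$ certainly does. Likewise $\mu_0((t,r])>0$. Hence $0<F(t)<1$, and with Step 2 this gives $F(t)\in\{1/4,1/2,3/4\}$.

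\textbf{Step 4: at most four components.} Let $t_1<t_2$ be points in two different bounded gaps. Some point of $K$ lies between them, and since $\supp\mu_0=K$ the interval $(t_1,t_2]$ carries positive $\mu_0$-mass, so $F(t_1)<F(t_2)$. Gaps in $(-r,r)$ therefore receive distinct labels from the three-element set $\{1/4,1/2,3/4\}$. So there are at most three interior gaps, and since $K\subset[-r,r]$ contains both endpoints $\pm r$, $K$ has at most four components (some possibly single points).
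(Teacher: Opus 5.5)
Your Steps 1, 3 and 4 match the paper's argument: functional calculus puts $P_t=\one_{(-\infty,t]}(A)$ in the reduced group $C^*$-algebra, $\supp\mu_0=K$ gives $0<F(t)<1$, and monotonicity of $F$ gives distinct labels to distinct gaps. The gap is in Step 2, which carries the whole content of the lemma.

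What Baum--Connes actually yields, via L\"uck's theorem, is the \emph{modified} trace conjecture. It places the trace of a projection in the \emph{subring} $\Lambda^{W_5}\subset\mathbb Q$ generated by the inverses $1/|H|$ of orders of finite subgroups, not in the additive subgroup they generate. With finite subgroups of orders $1,2,4$ this ring is $\mathbb Z[1/2]$, the dyadic rationals. So you only get $F(t)\in\mathbb Z[1/2]\cap(0,1)$. That gives neither $F(t)\in\{1/4,1/2,3/4\}$ nor any finite bound on the number of gaps, and Step 4 collapses. The additive-subgroup statement you quote is the original Baum--Connes trace conjecture. It is false in general (Roy's counterexample), so it cannot be deduced from Baum--Connes alone, and attributing it to L\"uck conflates the two versions.

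What is missing is a group-specific input. You need that $K_0(C^*_r(W_5))$ is generated by the classes of the projections $|H|^{-1}\sum_{h\in H}\chi(h)\lambda_h$, for finite special subgroups $H$ and characters $\chi$ of $H$. One way to get this is to show that the Bredon homology of the Davis complex with representation-ring coefficients is concentrated in degree zero, so that all of $K_0^{W_5}(\underline E W_5)$ comes from the bottom filtration. Granting Baum--Connes, that is what produces the additive group $\tfrac14\mathbb Z$.

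The paper obtains this directly by citing \cite[Corollary~4.9]{RaumSkalski}. That result gives $\tau(P)\in\tfrac14\mathbb Z$ for every projection $P$ because the largest clique of the five-cycle commutation graph has two vertices. Your alternative "direct route" through an equivariant index theorem on the orbifold quotient has the same difficulty. You would still have to show that the higher-dimensional part of the quotient contributes no denominators beyond $4$.
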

\begin{proof}
We use the right regular realization
$\mathcal A=\mathscr J^{-1}C_r^*(W_5)\mathscr J$ from
\Cref{sec:group-ac}, with canonical trace
$\tau(T)=\langle\delta_o,T\delta_o\rangle$. Since the largest clique of
the five-cycle commutation graph has two vertices,
\cite[Corollary~4.9]{RaumSkalski} gives
\[
 \tau(P)\in\tfrac14\Z\qquad(P=P^*=P^2\in\mathcal A).
\]
If $t\in(-r,r)\setminus K$, continuous functional calculus gives
$P_t=\one_{(-\infty,t]}(A)\in\mathcal A$, and
$F(t)=\tau(P_t)$.  Since $K=\supp\mu_0$ and $-r<t<r$, both $P_t$ and
$I-P_t$ are nonzero; faithfulness of $\tau$ gives $0<F(t)<1$.  Hence
$F(t)\in\{1/4,1/2,3/4\}$.  Distinct bounded gaps have distinct values of
$F$, so there are at most three bounded gaps and therefore at most four
spectral components.
\end{proof}

\begin{lemma}\label{lem:capacity-geodesics}
We let $A$ be the unit-weight adjacency of an infinite connected
vertex-transitive graph of bounded degree, and let $K=\sigma(A)$.
If $N_{ox}$ counts shortest paths, then
\begin{equation}\label{eq:capacity-geodesics}
 \lcap K\ge\limsup_{n\to\infty}
 \left(\sum_{d(o,x)=n}N_{ox}^2\right)^{1/(2n)}.
\end{equation}
\end{lemma}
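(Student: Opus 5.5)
We plan to apply the capacity machinery of \Cref{lem:finite-band-jensen} to the free Green function. The difficulty is that $K=\sigma(A)$ need not be a finite union of intervals, so we would rather use a simpler extremal bound. Fix $n$ and consider the Cauchy transforms
\[
 G_{ox}(z)=\langle\delta_o,(A-z)^{-1}\delta_x\rangle=\int_K\frac{d\mu_{ox}(t)}{t-z},\qquad \mu_{ox}(S)=\langle\delta_o,\one_S(A)\delta_x\rangle .
\]
As in \eqref{eq:full-green-leading}, $(A^j)_{ox}=0$ for $j<n$ and $(A^n)_{ox}=N_{ox}$ when $d(o,x)=n$. The polynomial moments of $\mu_{ox}$ therefore encode the path counts.

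The cleanest route is the Chebyshev/Fekete characterization: $\lcap K=\lim_n\|T_n\|_K^{1/n}$, where $T_n$ is the monic Chebyshev polynomial of $K$. For any monic polynomial $p$ of degree $n$, the spectral theorem gives $\|p(A)\|=\|p\|_K$. For $d(o,x)=n$, the entry $\langle\delta_o,p(A)\delta_x\rangle$ equals $(A^n)_{ox}=N_{ox}$, since the lower powers have zero $(o,x)$ entry. Hence
\[
 \|p\|_K^2=\|p(A)\|^2\ge\|p(A)\delta_o\|^2\ge\sum_{d(o,x)=n}|\langle\delta_x,p(A)\delta_o\rangle|^2=\sum_{d(o,x)=n}N_{ox}^2 ,
\]
using that $A$ is real symmetric. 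Taking $p=T_n$ yields $\|T_n\|_K\ge(\sum_{d(o,x)=n}N_{ox}^2)^{1/2}$. Taking $n$-th roots and the limit then gives $\lcap K=\lim\|T_n\|_K^{1/n}\ge\limsup_n(\sum N_{ox}^2)^{1/(2n)}$.

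The main step to justify is the identity $\lcap K=\lim_n\|T_n\|_K^{1/n}$ for an arbitrary compact real set $K$ (Fekete--Szeg\H{o}). In fact only the inequality $\|T_n\|_K\ge\lcap K^{\,n}$ is needed in the reverse direction, and we actually need $\lcap K\ge\limsup\|T_n\|_K^{1/n}$. That direction follows from the Bernstein--Walsh lemma, or directly from the fact that the Chebyshev constant equals capacity \cite{SaffTotik}; we will cite it. Vertex-transitivity is not essential here, except that it makes $o$ arbitrary. Boundedness of the degree ensures that $A$ is bounded, so $K$ is compact.
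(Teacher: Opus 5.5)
Your proposal is correct and is essentially the paper's argument. Both proofs bound the Chebyshev numbers from below using $\|q\|_K\ge\|q(A)\delta_o\|\ge(\sum_{d(o,x)=n}N_{ox}^2)^{1/2}$ for every monic $q$ of degree $n$, and then cite $\lim_n t_n(K)^{1/n}=\lcap K$; the paper routes the first step through the $L^2(\mu_0)$ norm and the monic orthogonal polynomial, while you use $\|q(A)\|=\|q\|_K$.

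One slip in your last paragraph: Bernstein--Walsh gives the opposite inequality $\|T_n\|_K\ge(\lcap K)^n$. The direction you actually need, $\limsup_n\|T_n\|_K^{1/n}\le\lcap K$, has to come from the Fekete--Szeg\H{o} theorem (Fekete polynomials and the transfinite diameter), that is, from the citation you fall back on.
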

\begin{proof}
We let $p_n$ be the monic orthogonal polynomial of degree $n$ for the free
local measure $\mu_0$.  We first note why such a polynomial exists in
every degree.  If
\[
 q(t)=t^n+\sum_{j=0}^{n-1}a_jt^j,
\]
then, for any vertex $x$ with $d(o,x)=n$, finite propagation gives
\[
 \ip{\delta_x}{A^j\delta_o}=0\quad(0\le j<n),
 \qquad
 \ip{\delta_x}{A^n\delta_o}=N_{ox}>0.
\]
Hence
\begin{equation}\label{eq:capacity-leading-row}
 \ip{\delta_x}{q(A)\delta_o}=N_{ox}.
\end{equation}
Thus every monic polynomial of degree $n$ has positive $L^2(\mu_0)$ norm,
so Gram--Schmidt produces $p_n$.

Applying \eqref{eq:capacity-leading-row} to $p_n$ and summing the squared
matrix elements over the distance sphere yields
\begin{align}
 \int_K|p_n(t)|^2\,d\mu_0(t)
 &=\|p_n(A)\delta_o\|_2^2 \ge\sum_{d(o,x)=n}
      |\ip{\delta_x}{p_n(A)\delta_o}|^2 =\sum_{d(o,x)=n}N_{ox}^2.
 \notag
\end{align}
We now let
\[
 t_n(K)=\inf\{\|q\|_{L^\infty(K)}:
                 q\text{ monic of degree }n\}.
\]
The polynomial $p_n$ minimizes the $L^2(\mu_0)$ norm among monic
degree-$n$ polynomials.  Therefore, for every monic $q$ of degree $n$,
\[
 \|p_n\|_{L^2(\mu_0)}
 \le\|q\|_{L^2(\mu_0)}
 \le\|q\|_{L^\infty(K)}\mu_0(K)^{1/2}
 =\|q\|_{L^\infty(K)}.
\]
Taking the infimum over $q$ gives
\begin{equation}\label{eq:capacity-chebyshev-bound}
 \left(\sum_{d(o,x)=n}N_{ox}^2\right)^{1/2}
 \le \|p_n\|_{L^2(\mu_0)}
 \le t_n(K).
\end{equation}
We raise \eqref{eq:capacity-chebyshev-bound} to the power $1/n$ and take
$\limsup$.  The Chebyshev characterization
\[
 \lim_{n\to\infty}t_n(K)^{1/n}=\lcap K
\]
from \cite[Appendix~B]{SimonCapacity} then gives exactly
\eqref{eq:capacity-geodesics}.
\end{proof}

\begin{lemma}\label{lem:square-capacity}
The free spectrum of $\G_{4,5}$ satisfies
\begin{equation}\label{eq:square-capacity}
 \lcap K\ge\sqrt{2+\sqrt[3]{2}+\sqrt[3]{4}}>\frac{11}{5}.
\end{equation}
\end{lemma}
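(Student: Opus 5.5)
The plan is to apply \Cref{lem:capacity-geodesics} to $\G_{4,5}$, so that the statement reduces to a growth-rate computation for
\[
 Q_n\coloneq\sum_{d(o,x)=n}N_{ox}^2 .
\]
\emph{Goal.} Prove
\[
 \limsup_{n\to\infty}\frac1n\log Q_n\ \ge\ \log x_*,\qquad
 x_*=2+\sqrt[3]{2}+\sqrt[3]{4}.
\]
\emph{Algebra of $x_*$.} Put $c=2^{1/3}$ and $y=c+c^2$. Then
\[
 y^3=c^3+3c^4+3c^5+c^6=6+6y .
\]
Hence $x_*=y+2$ is the largest root of
\[
 (x-2)^3-6(x-2)-6=x^3-6x^2+6x-2 .
\]
So the target is a linear transfer recursion, or a lower-bounding one, whose Perron root is the largest root of this cubic. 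The final numerical claim is then elementary. Numerically $x_*\approx 4.8473>4.84=(11/5)^2$; equivalently, the cubic is negative at $x=4.84$ and increasing beyond it, which gives $\sqrt{x_*}>11/5$.

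\emph{Counting pairs of geodesics.} $Q_n$ counts ordered pairs of reduced words of length $n$ representing the same element. I would set up a finite-state recursion for such pairs, using the descent-set machinery from the proof of \Cref{bare:lem:path-growth}: the update rule \eqref{bare:eq:descent-update}, the fact $|D(w)|\in\{1,2\}$, and the normal-form rule that reduced words differ only by commuting swaps. A pair of geodesics to $w'$ at distance $n+1$ falls into one of two types according to its final letters.

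\emph{Same final letter $a\in D(w')$.} The pair is a pair of geodesics to $w'a$, extended by $a$.

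\emph{Distinct final letters.} The letters are the two commuting descents $a\ne b$ of a vertex $w'$ with $|D(w')|=2$. The pair splits into geodesics to $w'a$ and to $w'b$. Each of these ends, after commuting, in a geodesic through $w'ab$.

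So the states would be the descent type ($1$ or $2$) of the endpoint for ``diagonal'' pairs, together with a third ``split'' state recording how the two geodesics separate at a commuting square. The incoming-edge counts come from the table in \Cref{bare:lem:path-growth}: from state $1$ there are $2$ successors of each type, and from state $2$ there are $1$ and $2$. The number of admissible commuting pairs $\{a,b\}$ avoiding the descents of a vertex two steps back is $3$ for type~$1$ and $2$ for type~$2$. From these I would write a $3\times3$ nonnegative transfer matrix $T$ with $Q_n\ge c\,\mathbf 1^{\mathsf T}T^n v_0$, check that $T$ is primitive, and compute its characteristic polynomial, aiming for $x^3-6x^2+6x-2$. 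Perron--Frobenius then gives $Q_n^{1/n}\to x_*$ along the recursion, and \eqref{eq:capacity-geodesics} yields $\lcap K\ge\sqrt{x_*}$.

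\emph{Main obstacle.} The hard step is the split-pair term. The crude bound $N_{w'a}N_{w'b}\ge N_{w'ab}^2$, combined with $N_{w'}=N_{w'a}+N_{w'b}$, gives only the cubic $x^3-4x^2-2x+2$. Its root is about $4.35$, which is too small for the claimed bound $4.84$. So the correlation between $N_{w'a}$ and $N_{w'b}$ must be tracked exactly, as a genuine third coordinate $P_n=\sum N_{w'a}N_{w'b}$ over type-$2$ vertices, instead of being discarded.

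\emph{Recursion for the split term.} For this I would express $N_{w'a}$ through the descents of $w'a$. These are $a$'s commuting partner $b$, possibly together with one more letter. The product $N_{w'a}N_{w'b}$ then satisfies its own linear recursion in the three quantities (diagonal sums of type $1$, of type $2$, and cross products). I would first carry out this case analysis by hand on the five-cycle commutation graph. If an exact closed recursion does not emerge, I would fall back on a lower-bounding recursion that keeps enough of the cross terms for its Perron root to exceed $4.84$, which is all that \eqref{eq:square-capacity} needs.
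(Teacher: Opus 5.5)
Your framework is the paper's. You apply \Cref{lem:capacity-geodesics}, track the squared path counts $U_n,V_n$ over one- and two-descent vertices together with the cross term $C_n=\sum N_{zs}N_{zt}$ over two-descent vertices $z$ with $D(z)=\{s,t\}$, and identify the Perron root of $x^3-6x^2+6x-2$. Everything you assert along the way is correct. This includes the successor table, the counts $3$ and $2$ of commuting pairs avoiding $D(w)$, and the crude cubic $x^3-4x^2-2x+2$ with root near $4.35$. Your sign check at $4.84$ is a fine substitute for the paper's Perron comparison.

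The gap is that the step carrying the content of the lemma, the third row of the transfer matrix, is never derived. You leave it as a case analysis to be done, with an unspecified fallback. Nothing you wrote fixes the coefficient $c$ of $C_n$ in that row. For general $c$ the characteristic polynomial is $(z-2)\bigl[(z-2)(z-c)-4\bigr]-2(z-c)-6$, which equals your target only for $c=2$ and gives your crude cubic for $c=0$. Also, your remark that the geodesics to $w'a$ end through $w'ab$ fails whenever $w'a$ has a second descent. That second predecessor is precisely the correction that must be kept.

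The missing computation goes as follows.
\begin{itemize}
\item Write a two-descent vertex at distance $n+2$ as $z=wst$, with $|w|=n$ and $\{s,t\}$ a commuting pair disjoint from $D(w)$. By \eqref{bare:eq:descent-update}, and since adjacent letters of the five-cycle have no common neighbour, this is a bijection with $D(z)=\{s,t\}$. It gives $N_{zs}N_{zt}=N_{wt}N_{ws}$.
\item Again by \eqref{bare:eq:descent-update}, $N_{ws}=N_w+N_{wsd}$ if $s$ commutes with some $d\in D(w)$, and $N_{ws}=N_w$ otherwise.
\item The letters of $S_\square\setminus D(w)$ form a path: four letters if $|D(w)|=1$, three if $|D(w)|=2$. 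The letters of the first kind are exactly its two endpoints, and these are not adjacent.
\item Hence, with $a=N_w$ and $a',a''$ the two corrections, the inner sum is $3a^2+a(a'+a'')$, respectively $2a^2+a(a'+a'')$. No product $a'a''$ of two corrections occurs, which is why the recursion closes linearly.
\item Each mixed product $aa'$ equals $N_{p_1}N_{p_2}$ for the two predecessors $p_1,p_2$ of a two-descent vertex at distance $n+1$. It is produced once from each predecessor, so summing over $w$ gives $C_{n+1}=3U_n+2V_n+2C_n$.
\end{itemize}
The term $2C_n$ is exactly what your bound $N_{w'a}N_{w'b}\ge N_{w'ab}^2$ discards. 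With $(U_1,V_1,C_1)=(5,0,5)$ and $M^2>0$, Perron--Frobenius gives $(U_n+V_n)^{1/n}\to2+\sqrt[3]{2}+\sqrt[3]{4}$, and \eqref{eq:capacity-geodesics} concludes.
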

\begin{proof}
We use the descent sets $D(w)$ from \Cref{bare:lem:path-growth} and write
$N_w=N_{ow}$. For $n\ge1$, we put
\[
 U_n=\sum_{\substack{|w|=n\\|D(w)|=1}}N_w^2,\qquad
 V_n=\sum_{\substack{|w|=n\\|D(w)|=2}}N_w^2.
\]
If $D(z)=\{s,t\}$, its two predecessors are $zs,zt$, and
$N_z=N_{zs}+N_{zt}$. We define
\begin{equation}\label{eq:square-mixed-count}
 C_n=\sum_{\substack{|z|=n+1\\D(z)=\{s,t\}}}N_{zs}N_{zt},
\end{equation}
where each vertex $z$ is counted once with its unordered descent pair.
By \eqref{bare:eq:descent-update}, a vertex with one descent has two
one-predecessor successors, while a vertex with two descents has one.
In both cases there are two successors with two predecessors. Summing the squares
and using \eqref{eq:square-mixed-count} gives
\begin{align*}
 U_{n+1}&=2U_n+V_n,\\
 V_{n+1}&=2U_n+2V_n+2C_n.
\end{align*}

For $|w|=n$ and a commuting pair
$\{s,t\}\subset S_\square\setminus D(w)$,
\eqref{bare:eq:descent-update} gives
\[
 |wst|=n+2,\qquad D(wst)=\{s,t\}.
\]
Indeed, consecutive vertices of the five-cycle have no common neighbor.
Conversely, if $|z|=n+2$ and $D(z)=\{s,t\}$, the normal form in
\Cref{sec:geometry} gives $w=zst$ with $|w|=n$ and
$s,t\notin D(w)$. Thus these pairs are in bijection with the
two-predecessor vertices at distance $n+2$, and
\[
 C_{n+1}=\sum_{|w|=n}
 \sum_{\substack{\{s,t\}\subset S_\square\setminus D(w)\\st=ts}}
 N_{ws}N_{wt}.
\]
For fixed $w$, write $a=N_w$. If $|D(w)|=1$, the commutation graph on
$S_\square\setminus D(w)$ is a path on four generators. By
\eqref{bare:eq:descent-update}, the successor path counts in this order are
\[
 a+a',\quad a,\quad a,\quad a+a'',
\]
where $a',a''$ count paths to the other predecessors of the two end
successors. The inner sum is therefore $3a^2+a(a'+a'')$.
If $|D(w)|=2$, the complement is a path on three generators, with counts
\[
 a+a',\quad a,\quad a+a'',
\]
and the inner sum is $2a^2+a(a'+a'')$.
Summing over $w$, each product in \eqref{eq:square-mixed-count} occurs
twice in the mixed terms, once for each predecessor. Hence
\[
 C_{n+1}=3U_n+2V_n+2C_n.
\]
Equivalently,
\begin{equation*}
 \begin{pmatrix}U_{n+1}\\V_{n+1}\\C_{n+1}\end{pmatrix}
 =M\begin{pmatrix}U_n\\V_n\\C_n\end{pmatrix},
 \qquad
 M=\begin{pmatrix}2&1&0\\2&2&2\\3&2&2\end{pmatrix},
 \qquad
 (U_1,V_1,C_1)=(5,0,5).
\end{equation*}
The first two iterates are
\begin{equation*}
 (U_2,V_2,C_2)=(10,20,25),\qquad
 (U_3,V_3,C_3)=(40,110,120),
\end{equation*}
and
\begin{equation*}
 M^2=
 \begin{pmatrix}6&4&2\\14&10&8\\16&11&8\end{pmatrix}>0.
\end{equation*}
Hence $M$ is primitive.  Perron--Frobenius gives constants
$c_-,c_+>0$ such that, for all sufficiently large $n$,
\begin{equation*}
 c_-\operatorname{spr}(M)^{n-1}
 \le U_n+V_n
 \le c_+\operatorname{spr}(M)^{n-1},
\end{equation*}
and therefore
\begin{equation*}
 \left(\sum_{d(o,x)=n}N_{ox}^2\right)^{1/n}
 =(U_n+V_n)^{1/n}\longrightarrow\operatorname{spr}(M).
\end{equation*}
Moreover
\[
 \det(zI-M)=z^3-6z^2+6z-2,
 \qquad
 \operatorname{spr}(M)=2+\sqrt[3]{2}+\sqrt[3]{4}.
\]
Hence \Cref{lem:capacity-geodesics} yields
\[
 \lcap K\ge\sqrt{\operatorname{spr}(M)}
 =\sqrt{2+\sqrt[3]{2}+\sqrt[3]{4}}.
\]
Finally,
\[
 M\begin{pmatrix}20\\57\\61\end{pmatrix}
 =\begin{pmatrix}97\\276\\296\end{pmatrix}
 >\frac{121}{25}\begin{pmatrix}20\\57\\61\end{pmatrix},
\]
so Perron comparison gives \(\operatorname{spr}(M)>121/25\), hence
\(\lcap K>11/5\).
\end{proof}

\begin{lemma}\label{lem:capacity-arcsine}
For $0<\lambda\le10^{-4}$ and $B=r+\lambda$,
\begin{equation}\label{eq:capacity-arcsine-mass}
 \nu_B(K)>\frac{153}{250}.
\end{equation}
\end{lemma}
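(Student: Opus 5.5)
The plan is to convert the capacity lower bound of \Cref{lem:square-capacity} into an upper bound on the arcsine mass of the gaps $[-B,B]\setminus K$. The key identity is
\[
 \int g_K(E)\,d\nu_B(E)=\log\frac B2-\log\lcap K .
\]
It follows by integrating $g_K(z)=\int\log|z-t|\,d\omega_K(t)-\log\lcap K$ against $\nu_B$, using Fubini, and using that the logarithmic potential of $\nu_B$ equals $-\log(B/2)$ on $[-B,B]\supset K$. Every point of a finite union of nondegenerate intervals or points in the line is regular, so $g_K=0$ on $K$ and the integral lives on the gaps.

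With $B=r+\lambda\le\sqrt{22}+10^{-4}$ and $\lcap K>11/5$ from \Cref{lem:square-capacity}, we get
\[
 \int_{[-B,B]\setminus K}g_K\,d\nu_B<\log\frac{\sqrt{22}+10^{-4}}{2}-\log\frac{11}{5},
\]
which is a small explicit number, roughly $0.064$.

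Next I describe the gaps. By \Cref{bare:lem:labels}, together with $K=-K$ and $\pm r\in K$ from \Cref{bare:lem:radius}, the set $[-B,B]\setminus K$ consists of two end pieces $\pm(r,B]$ and at most three bounded gaps, placed symmetrically about $0$. The end pieces have arcsine mass $O(\sqrt{\lambda/B})$, which is at most about $0.003$ for $\lambda\le10^{-4}$. I bound this crudely and simply add it to the final mass count.

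For each bounded gap $J$, monotonicity of Green functions gives a comparison. Since $K\subset K_J:=[-B,B]\setminus J^\circ$, we have $g_K\ge g_{K_J}$ on $J$. After the Joukowski map of \eqref{eq:joukowski-map}, $K_J$ becomes the unit circle minus two symmetric arcs. Its Green function, equivalently the Green function of an interval with one gap, is given by an elliptic-type integral. A cruder but explicit lower bound also works, for instance $g_{K_J}(E)\ge$ the Green function of the single interval containing the larger component of $K_J$, evaluated on $J$.

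This gives a per-gap bound $\int_J g_K\,d\nu_B\ge\Phi(\nu_B(J))$, with $\Phi$ explicit, increasing, and roughly quadratic near $0$. If $\Phi$ is convex, then splitting a total gap mass $m$ among at most three gaps costs at least $3\Phi(m/3)$, and symmetry lets me pair the two off-centre gaps. Requiring $3\Phi(m/3)<0.064$ should then give $m+O(\sqrt\lambda)<97/250$, which is \eqref{eq:capacity-arcsine-mass}.

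The main obstacle is the quantitative step: producing a usable explicit lower bound $\Phi$ and checking that the numbers close with the tight margin $153/250$. If the crude comparison loses too much, I would first try the exact two-arc Green function on the circle. Its mean over the removed arc is computable through the harmonic measure and conformal map of a circular slit domain. I would also treat the symmetric configuration, a central gap together with a pair of gaps $\pm J$, by pushing forward under $t\mapsto t^2$. This map halves the number of gaps and satisfies $\lcap K=\bigl(\lcap\{t^2:t\in K\}\bigr)^{1/2}$, which reduces the problem to at most two gaps on $[0,B^2]$. There the extremal one-gap computation is standard, and the resulting inequality can be checked numerically with rigorous bounds.
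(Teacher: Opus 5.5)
Your identity $\int g_K\,d\nu_B=\log(B/2)-\log\lcap K$ and the comparison $g_K\ge g_{K_J}$ on each gap are both correct. The gap is that decoupling into single gaps loses too much for the margin $153/250$, so the quantitative step you flag as the obstacle does not close.

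Integrating your comparison gives exactly $\int_J g_{K_J}\,d\nu_B=\log(B/2)-\log\lcap K_J$. For a centred gap $J=(-\alpha,\alpha)$ one has $\lcap K_J=\frac12\sqrt{B^2-\alpha^2}$, so this equals $-\log\cos(\pi m/2)$ with $m=\nu_B(J)$. Hence no position-independent $\Phi$ can exceed $-\log\cos(\pi m/2)$.

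Write $\delta=\log(B/2)-\log\lcap K$. What you know is only $\delta<0.0640$. But $3\bigl(-\log\cos(0.39\pi/6)\bigr)\approx0.0630$, so the convexity step $3\Phi(m/3)<\delta$ yields at best $m<0.393$. You need the bounded-gap mass below $97/250-0.0046\approx0.383$. Even discarding the end pieces you would only get $\nu_B(K)>0.607$.

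The $t\mapsto t^2$ fallback with one-gap extremal bounds does not rescue this. The end gap $[0,\alpha^2)$ costs $-2\log\cos(\pi m_0/2)$. The interior gap costs at best $-\log\cos(\pi m_1)$, and this is sharp by the one-gap Dubinin--Karp bound. Maximizing $m_0+2m_1$ under total cost $<2\delta$ gives about $0.390$, still too large. To leading order in small gaps your bound agrees with the sharp one, both giving $m\approx\frac2\pi\sqrt{6\delta}$. The loss sits at the next order, which is exactly where the margin lies.

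What is missing is a bound that treats all gaps jointly, i.e. keeps their interaction. The paper does this with the multi-interval Dubinin--Karp inequality $\lcap E\le\frac r2\bigl(\sin\frac{\pi\nu_r(E)}2\bigr)^{1/(k-1)}$, for $E\subset[-r,r]$ with $k$ components containing $\pm r$, applied with $k\le4$ from \Cref{bare:lem:labels}.

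This inequality is attained by the preimage of an interval $[c,1]$ under the rescaled Chebyshev polynomial $T_6$. That set has three equal gaps at $0$ and $\pm r\cos(\pi/6)$ and is already symmetric, so symmetry alone cannot improve on it.

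It gives $\sin(\pi\nu_r(K)/2)\ge(2\lcap K/r)^3>(22/25)^{3/2}$, hence $\nu_r(K)>0.618$. Passing from $\nu_r$ to $\nu_B$ then costs $1-\nu_B([-r,r])<0.0046$, just as in your end-piece estimate.
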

\begin{proof}
Suppose first that $E\subset[-r,r]$ is a union of $k\ge2$
nondegenerate intervals and contains both endpoints.  The scaled form of
\cite[Theorem~3]{DubininKarp} gives
\begin{equation}\label{eq:capacity-upper-arcsine}
 \lcap E\le\frac r2
 \left(\sin\frac{\pi\nu_r(E)}2\right)^{1/(k-1)}.
\end{equation}
Since \Cref{bare:lem:labels} gives $k\le4$, and the sine lies in $[0,1]$,
we may replace the exponent in \eqref{eq:capacity-upper-arcsine} by $1/3$.
For $k=1$ the same bound is immediate, and degenerate components follow by
approximation; see \cite[Appendix~A]{SimonCapacity}.

We apply this to $E=K$.  By \eqref{eq:square-capacity} and
\eqref{bare:eq:radius},
\begin{equation}\label{eq:capacity-unshifted-mass}
 \nu_r(K)>
 \frac2\pi\arcsin\left((22/25)^{3/2}\right)>0.618.
\end{equation}
Indeed, \eqref{eq:capacity-upper-arcsine} implies
\[
 \sin\frac{\pi\nu_r(K)}2
 \ge \left(\frac{2\lcap K}{r}\right)^3
 >\left(\frac{22}{5\sqrt{22}}\right)^3
 =(22/25)^{3/2}.
\]

It remains to replace $\nu_r$ by $\nu_B$.  On $[-r,r]$ the density of
$\nu_B$ is bounded above by that of $\nu_r$, and hence
\begin{equation*}
 \nu_B(K)\ge\nu_r(K)-\bigl(1-\nu_B([-r,r])\bigr).
\end{equation*}
Moreover,
\begin{align*}
 1-\nu_B([-r,r])
 &=\frac2\pi\arccos\frac r{r+\lambda}
 =\frac2\pi\arctan\sqrt{2\lambda/r+(\lambda/r)^2}\\
 &\le\frac2\pi\sqrt{2\lambda/r+(\lambda/r)^2}<0.0046,
\end{align*}
where we used $r\ge4$ and $\lambda\le10^{-4}$.  Combining this with
\eqref{eq:capacity-unshifted-mass} gives
\[
 \nu_B(K)>0.618-0.0046=0.6134>\frac{153}{250}.
\]
\end{proof}

\begin{lemma}\label{lem:capacity-elementary}
With $h,\eta_{\rm geo}$ from \eqref{bare:eq:entropies},
$4\le r\le\sqrt{22}$, and $0<\lambda\le10^{-4}$,
\begin{equation}\label{eq:capacity-elementary}
 h<\frac{77}{80},\qquad
 \eta_{\rm geo}>\frac{491}{400},\qquad
 \log\frac{r+\lambda}{2}<\frac{341}{400}.
\end{equation}
\end{lemma}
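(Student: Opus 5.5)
The three inequalities in \eqref{eq:capacity-elementary} are purely numerical, and I will verify each one separately using elementary monotone estimates. No earlier structural result is needed except the radius bound $r\le\sqrt{22}$ from \Cref{bare:lem:radius}.

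\emph{First inequality.} Since $\frac{3+\sqrt5}{2}=\varphi^2$ with $\varphi=\frac{1+\sqrt5}{2}$, we have $h=2\log\varphi$. It therefore suffices to show $\varphi<e^{77/160}$. Using $\sqrt5<2.2361$ gives $\varphi<1.61805$, and hence $\log\varphi<0.48124$. On the other side, $77/160=0.48125$. Because this margin is razor thin, I will bound $\log\varphi$ rigorously through the series
\[
 \log\varphi=2\operatorname{artanh}\frac{\varphi-1}{\varphi+1},
\]
with $\frac{\varphi-1}{\varphi+1}=\frac{1}{\sqrt5}\cdot\frac{1}{\varphi}$. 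Equivalently, I can compare $\varphi^{160}$ against $e^{77}$ via rational enclosures of $e$. I expect this to be the main obstacle: the true value is $h\approx0.962424$ while $77/80=0.9625$, so the slack is only about $8\times10^{-5}$. I will therefore carry enough terms of the artanh series (four or five suffice) with explicit rational upper bounds on the remaining tail.

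\emph{Second inequality.} Here $\eta_{\rm geo}=\log(2+\sqrt2)$, and I need $\log(2+\sqrt2)>1.2275$. Since $\sqrt2>1.41421$, we get $2+\sqrt2>3.41421$. For the exponential side, $e^{1.2275}<3.4128$, which I will establish by computing $e^{1.2275}=e\cdot e^{0.2275}$ with $e<2.71829$ and a truncated Taylor series for $e^{0.2275}$ plus a geometric tail bound. The slack here is about $1.5\times10^{-3}$ in the argument of the logarithm, so this step is routine.

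\emph{Third inequality.} Using $r\le\sqrt{22}<4.6905$ and $\lambda\le10^{-4}$, we get $(r+\lambda)/2<2.34535$. It then suffices to show $e^{0.8525}>2.3454$. I will do this by bounding $e^{0.8525}$ from below with a Taylor polynomial, since all terms are positive: the partial sum through degree $6$ already gives about $2.3454$, and I will include a couple more terms to secure the inequality. The value $e^{0.8525}\approx2.3455$ leaves a small but comfortable margin once the bound $\sqrt{22}<4.69042$ is used.

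In all three cases the approach is the same: replace each transcendental quantity by a rational enclosure of a few digits' precision, then compare. The only step requiring real care is the first inequality, because the margin there is tiny.
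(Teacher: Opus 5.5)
Your approach is the paper's approach: the lemma is a direct numerical check of three explicit constants. The paper simply records $h=0.96242\ldots$, $\eta_{\rm geo}=1.22794\ldots$ and $\log\frac{\sqrt{22}+10^{-4}}{2}=0.85239\ldots$. Your margin estimates are correct, and so are the enclosures you quote for the second and third inequalities ($e^{1.2275}\approx3.41269$, and the degree-$6$ Taylor sum $\approx2.34543$ for $e^{0.8525}$). Turning these into rational enclosures is a valid, more rigorous version of the same check.

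There is an algebra error in the step you single out as delicate. Since $\varphi-1=\varphi^{-1}$ and $\varphi+1=\varphi^{2}$, one has $\frac{\varphi-1}{\varphi+1}=\varphi^{-3}=\sqrt5-2\approx0.23607$. It is not $\frac{1}{\sqrt5\,\varphi}=\frac{5-\sqrt5}{10}\approx0.27639$. With your value the series would return $\log\varphi\approx0.5675$, and the inequality would appear to fail.

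With $y=\sqrt5-2$ you need $\operatorname{artanh}y<77/320=0.240625$, and the slack is only about $1.9\times10^{-5}$. The three terms $y+y^3/3+y^5/5$ plus the tail bound $y^7/(7(1-y^2))$ give $\operatorname{artanh}y<0.24061$, provided you use $\sqrt5<2.23607$. The cruder bound $\sqrt5<2.2361$ from your first line is \emph{not} sufficient if plugged into $y=\sqrt5-2$: it shifts $\operatorname{artanh}y$ by about $3.4\times10^{-5}$.

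Alternatively, write $y=\frac{s-1}{s+3}$, where $s\ge\sqrt5$ is a rational upper bound. This expression is increasing in $s$ and much less sensitive, so $s=2.2361$ then suffices.
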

\begin{proof}
These bounds follow directly from the explicit constants:
\[
 h=\log\frac{3+\sqrt5}{2}=0.96242\ldots<\frac{77}{80},
 \qquad
 \eta_{\rm geo}=\log(2+\sqrt2)=1.22794\ldots>\frac{491}{400}.
\]
Also $r\le\sqrt{22}$ and $\lambda\le10^{-4}$, so
\[
 \log\frac{r+\lambda}{2}
 \le\log\frac{\sqrt{22}+10^{-4}}2
 =0.85239\ldots<\frac{341}{400}.
\]
\end{proof}

The following proofs also give the quantitative bound
\begin{equation}\label{eq:bounded-energy-mass}
 \nu_{B_{d,\lambda}^\square}(D_{d,\lambda})>
 \begin{cases}
 31/20000,&d=0,\\
 1/100,&d\ge1,
 \end{cases}
\end{equation}
in the respective disorder ranges of \Cref{thm:square-bounded}.

\begin{proof}[Proof of \Cref{thm:square-bounded} for $d=0$]
We take $B=r+\lambda$ and $I=K$.  The bound $\|H\|\le B$ follows from
\eqref{eq:model} and \eqref{eq:density-upper}, and $K\Subset(-B,B)$ since
$\lambda>0$.  By \Cref{cor:free-criterion}, $n_\lambda>0$ almost
everywhere on $K$.

We use the selected endpoints from \Cref{bare:lem:selection} and the
translation estimates in \Cref{lem:product-geometry}.  Writing
$m=\nu_B(K)$, \eqref{eq:selected-mass} and
\eqref{bare:eq:improved-criterion} give
\begin{equation}\label{eq:capacity-bare-conclusion}
 \nu_B(D_{0,\lambda})
 \ge\frac{\gamma m+\kappa-\log(B/2)}{\gamma}
 \ge\frac{\eta_{\rm geo}-h(1-m)-\log(B/2)}{\gamma}.
\end{equation}
By \eqref{eq:capacity-arcsine-mass} and
\eqref{eq:capacity-elementary}, the numerator is bounded below by
\[
 \frac{491}{400}
 -\frac{77}{80}\frac{97}{250}
 -\frac{341}{400}
 =\frac{31}{20000}.
\]
Since $0<\gamma<1$ by \eqref{bare:eq:selection},
\eqref{eq:capacity-bare-conclusion} proves
\eqref{eq:bounded-energy-mass} for $d=0$.  The assertions about the local
a.c. density and the singular projection follow from
\Cref{prop:selected-endpoints}.
\end{proof}

For $d\ge1$, \Cref{lem:square-free-spectrum} fills the free gaps, while
interleaving hyperbolic and Euclidean steps gives additional shortest-path
multiplicity.

\begin{proof}[Proof of \Cref{thm:square-bounded} for $d\ge1$]
We fix $d\ge1$ and $0<\lambda\le1/5$, and put
\[
 C=r_\square+\lambda,\qquad B=C+2d,\qquad I=I_d^\square.
\]
Then $\|H\|\le B$ by \eqref{eq:model}, while
\Cref{lem:square-free-spectrum} and \Cref{cor:free-criterion} give
$n_\lambda>0$ almost everywhere on $I$.  Its arcsine mass is
\begin{equation}\label{eq:square-product-mass}
 \mu_d:=\nu_B(I)
 =\frac2\pi\arcsin\frac{r_\square+2d}{r_\square+2d+\lambda}
 >\frac56.
\end{equation}
Indeed, \eqref{bare:eq:radius} gives $r_\square+2d\ge6$, so the fraction
inside the inverse sine is at least $30/31$, and
$30/31>\sin(5\pi/12)$.

We let $F_n\subset S_n(o)$, $n\in\mathscr R$, be the sets from
\Cref{bare:lem:selection}, with rates $\gamma,\kappa$ satisfying
\eqref{bare:eq:selection}.  We choose
\begin{equation}\label{eq:square-product-optimum}
 q_0=\frac{C}{C+2d},\qquad q_i=\frac2{C+2d}\quad(1\le i\le d),
 \qquad \sum_{i=0}^dq_i=1.
\end{equation}
For $n\in\mathscr R$, choose $n_i=\lfloor q_i n/q_0\rfloor$, put
$R=n+\sum_{i=1}^dn_i$, and set
\[
 \mathcal F_R=\{(n_1,\ldots,n_d)\}\times F_n
 \subset S_R((0,o)).
\]
Since distance in the Cartesian product is additive,
$n/R\to q_0$ and $n_i/R\to q_i$.  Interleaving a hyperbolic geodesic
with the positive Euclidean coordinate steps gives
\begin{equation}\label{eq:square-product-rates}
 N_{(0,o),((n_1,\ldots,n_d),x)}
 \ge N_{ox}\frac{R!}{n!n_1!\cdots n_d!}.
\end{equation}
Different factor paths and different interleavings give distinct shortest
paths.  Stirling's formula, \eqref{eq:square-product-rates}, and
\eqref{eq:bare-endpoint-rates} therefore give
\begin{equation}\label{eq:square-product-entropy}
 \gamma_d=q_0\gamma,\qquad
 \kappa_d=q_0\kappa+\mathsf H(q),\qquad
 \mathsf H(q)=-\sum_{i=0}^dq_i\log q_i.
\end{equation}
The estimates in \Cref{lem:product-geometry} apply to these sets; for
fixed $d$, the Euclidean factor in \eqref{eq:product-mixing} is only
polynomial.  Thus the hypotheses of \Cref{prop:selected-endpoints} hold
along these radii.

For the proportions in \eqref{eq:square-product-optimum},
\begin{equation}\label{eq:square-product-cancellation}
 \mathsf H(q)=\log(B/2)-q_0\log(C/2).
\end{equation}
Substituting \eqref{eq:square-product-entropy} and
\eqref{eq:square-product-cancellation} into \eqref{eq:selected-mass}, and
then using \eqref{bare:eq:improved-criterion}, gives
\begin{equation}\label{eq:square-product-final-mass}
 \nu_B(D_{d,\lambda})
 \ge\frac{\gamma\mu_d+\kappa-\log(C/2)}{\gamma}
 \ge\frac{\eta_{\rm geo}-h(1-\mu_d)-\log(C/2)}{\gamma}.
\end{equation}
Now
\begin{equation*}
 \eta_{\rm geo}>\frac65,\qquad h<1,\qquad \log(C/2)<1.
\end{equation*}
The first two inequalities follow from the explicit values in
\eqref{bare:eq:entropies}.  For the last one,
$C/2\le(\sqrt{22}+1/5)/2<2.45<e$.  Together with
\eqref{eq:square-product-mass}, these bounds give
\[
 \eta_{\rm geo}-h(1-\mu_d)-\log(C/2)
 >\frac65-\frac16-1=\frac1{30}.
\]
Since $0<\gamma<1$, \eqref{eq:square-product-final-mass} is greater than
$1/30$, and hence in particular greater than $1/100$.  This proves
\eqref{eq:bounded-energy-mass} for $d\ge1$.  The density and singular
projection conclusions follow from \Cref{prop:selected-endpoints}, and
\Cref{lem:square-free-spectrum} gives $D_{d,\lambda}\subset\sigma(A_d)$.
\end{proof}

The full-support lower bound makes the averaged density positive on a
spectral enclosure for $H$.  We apply \Cref{prop:equilibrium-selected} to
that enclosure and use the selected-path rates.

We use the endpoint growth rate $\gamma$ from
\Cref{bare:lem:selection}.  The proof gives
\begin{equation}\label{eq:full-support-mass}
 \omega_{\Sigma_{d,\lambda}}(D_{d,\lambda})
 \ge\frac{\log(2+\sqrt2)-\log((r_\square+\lambda)/2)}{\gamma}>0.
\end{equation}
Here $\omega_{\Sigma_{d,\lambda}}$ is the equilibrium measure defined in
\Cref{sec:model}.

\begin{proof}[Proof of \Cref{thm:full-support-bounded}]
We put
\[
 K_d=\sigma(A_d),\qquad
 \Sigma=\Sigma_{d,\lambda}=K_d+[-\lambda,\lambda].
\]
If $\operatorname{dist}(E,K_d)>\lambda$, then
\[
 H_{\lambda,\omega}-E
 =(A_d-E)\left[I+(A_d-E)^{-1}V_\omega\right],
 \qquad
 \|(A_d-E)^{-1}V_\omega\|<1,
\]
so
\begin{equation*}
 \sigma(H_{\lambda,\omega})\subset\Sigma\qquad\text{a.s.}
\end{equation*}
Only gaps of $K_d$ longer than $2\lambda$ survive this thickening, so
$\Sigma$ is a finite union of compact intervals.  By
\eqref{eq:product-radius} and monotonicity of capacity,
\begin{equation*}
 \lcap\Sigma\le\frac{r_\square+\lambda+2d}{2}.
\end{equation*}
Because $b=1$, \Cref{cor:shifted-dos} gives
\[
 n_\lambda(E)>0
 \quad\text{for a.e. }E\in K_d+(-\lambda,\lambda).
\]
The finite complement of this open set in $\Sigma$ is
$\omega_\Sigma$-null, so \Cref{prop:equilibrium-selected} applies with
$I=S=\Sigma$.

For $d=0$, $\gamma+\kappa=\eta_{\rm geo}$, and therefore
\begin{align*}
 \omega_{\Sigma_{0,\lambda}}(D_{0,\lambda})
 &\ge 1-\frac{\log\lcap\Sigma_{0,\lambda}-\kappa}{\gamma}\\
 &\ge\frac{\gamma+\kappa-\log((r_\square+\lambda)/2)}{\gamma}
 =\frac{\eta_{\rm geo}-\log((r_\square+\lambda)/2)}{\gamma}.
\end{align*}
For $d\ge1$, set
\[
 C=r_\square+\lambda,\qquad B=C+2d,
 \qquad q_0=\frac CB,\qquad q_i=\frac2B.
\]
The product endpoint construction gives
\[
 \gamma_d=q_0\gamma,
 \qquad
 \kappa_d=q_0\kappa+\mathsf H(q),
 \qquad
 \mathsf H(q)=\log(B/2)-q_0\log(C/2).
\]
Hence
\begin{align*}
 \omega_{\Sigma_{d,\lambda}}(D_{d,\lambda})
 &\ge1-\frac{\log(B/2)-\kappa_d}{\gamma_d}\\
 &=\frac{\gamma+\kappa-\log(C/2)}{\gamma}
 =\frac{\eta_{\rm geo}-\log((r_\square+\lambda)/2)}{\gamma}.
\end{align*}
This is \eqref{eq:full-support-mass}.  Finally,
\[
 \eta_{\rm geo}-\log\frac{r_\square+\lambda}{2}
 =\log\frac{2(2+\sqrt2)}{r_\square+\lambda},
\]
which is positive exactly under \eqref{eq:full-support-window}.  The
spectral conclusions follow from \Cref{prop:equilibrium-selected}.
\end{proof}

\section{Stability theorem}\label{sec:interval-stability}
We fix a finite $d\ge0$. Throughout this section,
\[
 X:=V(\mathcal X_d),\qquad A=A_d,\qquad
 Q=5+2d,\qquad r=\|A\|=r_\square+2d.
\]
We identify $X$ with $\Gamma_d=\mathbb Z^d\times W_5$ as in
\eqref{eq:square-reflections}; products of vertices below are group products.
\subsection{Finite-ball tools for the interval theorem}\label{sec:bounded-cauchy}
For an interval result we need a bound uniform in energy. Cauchy averaging
on a finite ball gives an imaginary potential. A strict random-walk entropy
bound then yields purely a.c. spectrum. Section~\ref{sec:law-stability}
compares the chosen law with the Cauchy law.

\begin{lemma}\label{bc:lem:finite-jensen}
We let $\Lambda$ be finite.  We fix arbitrary bounded real potentials $w$ on
$\Lambda$ and $Z$ on $\Lambda^c$, and replace $w_x$ on $\Lambda$ by
$w_x+C_x$, where the $C_x$ are independent with density $c_\eps$.
We let $H^{\rm hyb}$ have these potentials on $\Lambda$ and $Z$ outside,
and write $G^{\rm hyb}_{ox}(z)=\langle\delta_o,(H^{\rm hyb}-z)^{-1}\delta_x\rangle$.
We let $\mathbb E_{\Lambda,C}$ integrate just the $C_x$, $x\in\Lambda$.
For $z=E+i\eta$, $\eta>0$, we put
\[
 R_\Lambda^w(z)=
 \bigl(A+\diag(w\one_\Lambda+Z\one_{\Lambda^c})-i\eps P_\Lambda-z\bigr)^{-1}.
\]
Then
\begin{equation}\label{bc:eq:finite-jensen}
 \E_{\Lambda,C}\log|G^{\rm hyb}_{ox}(z)|
 \ge\log|R_\Lambda^w(z)_{ox}|.
\end{equation}
The variables $w$ and $Z$ are fixed during this integration.
\end{lemma}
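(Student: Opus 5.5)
The plan is to integrate out the Cauchy variables one at a time, using the two facts available for a single site: the Green function entry is a fractional-linear function of that site's potential, and it is analytic in the potential on a half-plane. Order $\Lambda=\{x_1,\dots,x_m\}$. For $0\le j\le m$, let $H^{(j)}$ be the operator with complex potentials $w_{x_k}-i\eps$ at $x_1,\dots,x_j$, real potentials $w_{x_k}+C_{x_k}$ at the remaining sites of $\Lambda$, and $Z$ outside. Thus $H^{(0)}=H^{\rm hyb}$ and $(H^{(m)}-z)^{-1}=R^w_\Lambda(z)$. It suffices to prove the one-step inequality
\[
 \E_{C_{x_{j+1}}}\log|G^{(j)}_{ox}(z)|\ge\log|G^{(j+1)}_{ox}(z)|,
\]
with all other variables frozen. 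Applying it inductively, with Fubini/Tonelli on the iterated integrals, gives \eqref{bc:eq:finite-jensen}.

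For the one-step bound, fix $u=x_{j+1}$ and write the potential there as $w_u+\zeta$. The other complex entries $-i\eps$ have negative imaginary part, so $H^{(j)}_\zeta-z$ is invertible whenever $\ImPart\zeta\le0$, by the dissipativity argument used in the Ward identity \eqref{eq:Ward}. Indeed, for every $\psi$,
\[
 \ImPart\langle\psi,(H-z)\psi\rangle\le-\eta\|\psi\|^2 .
\]
This invertibility extends to $\ImPart\zeta<\eta$. The rank-one identity \eqref{eq:rank-one-resolvent} then shows that
\[
 g(\zeta)=G_{ox}(\zeta)=\frac{a+b\zeta}{1+c\zeta}
\]
is analytic in a neighbourhood of the closed lower half-plane, with no zeros or poles there except the zeros of the numerator. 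It is also bounded by $1/\eta$, since $\|(H-z)^{-1}\|\le\eta^{-1}$ on that region.

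The function $\log|g|$ is therefore subharmonic and bounded above on the lower half-plane. Integrating against $c_\eps(v)\,dv$ means taking the Poisson average at the point $-i\eps$:
\[
 \int c_\eps(v)\,\phi(v)\,dv=\phi(-i\eps)
\]
for bounded harmonic $\phi$ in the lower half-plane, since $c_\eps(v)$ is the Poisson kernel at $-i\eps$. The sub-mean-value inequality for subharmonic functions bounded above, combined with the harmonic majorant, then gives
\[
 \int c_\eps(v)\log|g(v)|\,dv\ge\log|g(-i\eps)|=\log|G^{(j+1)}_{ox}(z)| .
\]

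To make the subharmonic step rigorous I would argue concretely rather than cite. Write $g$ as a ratio of affine functions. A factor whose root lies in the closed upper half-plane contributes a harmonic $\log|\cdot|$ in the lower half-plane. Its Poisson integral is evaluated exactly by the Cauchy identity stated in the introduction, or equivalently by
\[
 \int c_\eps(v)\log|v-\alpha|\,dv=\log|{-i\eps}-\alpha|\quad\text{for }\ImPart\alpha\ge0 .
\]
A numerator root $\alpha$ in the open lower half-plane instead gives
\[
 \int c_\eps(v)\log|v-\alpha|\,dv=\log|{-i\eps}-\bar\alpha|\ge\log|{-i\eps}-\alpha| ,
\]
because reflecting $\alpha$ across the real axis leaves $|v-\alpha|$ unchanged for real $v$ and moves the root away from $-i\eps$. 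The denominator root lies in the upper half-plane, so it contributes an exact equality. Constant factors are trivial, and the degenerate case $g\equiv0$ makes both sides $-\infty$.

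I expect the main obstacle to be locating the pole correctly and handling the degenerate cases. The key point is that the denominator $1+c\zeta$ cannot vanish for $\ImPart\zeta\le0$, which is exactly the invertibility statement from the dissipativity bound. Everything else is explicit one-variable computation followed by an induction over the $m$ sites.
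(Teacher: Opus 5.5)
Your proposal is correct. Its skeleton matches the paper's proof: site-by-site replacement of the Cauchy variables by $-i\eps$, the dissipativity bound giving invertibility with $\|(\cdot)^{-1}\|\le\eta^{-1}$ for $\ImPart\zeta\le0$, and iterated integration justified by Tonelli applied to $\log(\eta^{-1})-\log|G|\ge0$.

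The two routes differ only in how the one-site inequality $\int c_\eps\log|g|\,dv\ge\log|g(-i\eps)|$ is proved.
\begin{itemize}
\item The paper composes with the Cayley map $v(\zeta)=-i\eps(1+\zeta)/(1-\zeta)$, which pushes circle measure to $c_\eps(v)\,dv$. It then notes $f\circ v\in H^\infty(\D)$ and invokes the Hardy-space Jensen inequality of \Cref{lem:boundary-jensen}, which rests on inner--outer factorization.
\item You use the rank-one (Sherman--Morrison) form $g=(a+b\zeta)/(1+c\zeta)$. Invertibility on $\{\ImPart\zeta<\eta\}$ places the pole in $\{\ImPart\zeta\ge\eta\}$. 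You then evaluate the Poisson integral of each $\log|v-\alpha|$ exactly, reflecting a lower-half-plane zero; that reflection is exactly the Blaschke factor in the paper's factorization.
\end{itemize}

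What each buys:
\begin{itemize}
\item Your version is fully elementary. It gives $\log|g|\in L^1(c_\eps\,dv)$ directly, and the identification of boundary values on $\R$ is automatic because $g$ is analytic across the real axis.
\item The paper's version needs only boundedness and analyticity in the lower half-plane, so it would survive non-rank-one dependence, and it reuses a lemma already proved.
\end{itemize}

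One small point: the Cauchy identity from the introduction only gives the $\alpha$-derivative of $\int c_\eps(v)\log|v-\alpha|\,dv$. Deducing your log formula from it needs a normalization, for example letting $\alpha=iT$ with $T\to\infty$. Alternatively, cite the Poisson representation of $\log|\zeta-\alpha|$ directly.
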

\begin{proof}
We let $\mathcal J\subset\Lambda$ be the sites already replaced by
$-i\eps$, and leave the next variable $v$ at $j\notin\mathcal J$ free.
With all remaining real variables fixed, we write
$K_{\mathcal J}(v)=H_0+vP_{\{j\}}-i\eps P_{\mathcal J}$, where $H_0$
is selfadjoint. For $\ImPart v\le0$ and $z=E+i\eta$,
\[
 \ImPart\langle h,(K_{\mathcal J}(v)-z)h\rangle
 =-\eps\|P_{\mathcal J}h\|^2+(\ImPart v)|h(j)|^2-\eta\|h\|^2
 \le-\eta\|h\|^2.
\]
The same argument for the adjoint proves invertibility, with
\[
 \|(K_{\mathcal J}(v)-z)^{-1}\|\le\eta^{-1}.
\]
Thus $f(v)=\langle\delta_o,(K_{\mathcal J}(v)-z)^{-1}\delta_x\rangle$
is analytic in the lower half-plane. The Cayley map satisfies
\[
 v(\zeta)=-i\eps\frac{1+\zeta}{1-\zeta},\qquad
 v(0)=-i\eps,\qquad v_*(dm)=c_\eps(v)\,dv,
 \qquad f\circ v\in H^\infty(\D).
\]
Lemma~\ref{lem:boundary-jensen} gives
\begin{equation}\label{bc:eq:one-site-jensen-step}
 \int_\R\log|f(v)|c_\eps(v)\,dv\ge\log|f(-i\eps)|.
\end{equation}
A zero center value gives the trivial lower bound $-\infty$.
At every step $\log|f|\le\log(\eta^{-1})$, so Tonelli applied to
$\log(\eta^{-1})-\log|f|\ge0$ permits successive extended integrals.
Choose an ordering $\Lambda=\{j_1,\ldots,j_N\}$ and let $G^{(k)}_{ox}$
be the resolvent entry after the first $k$ variables have been replaced by
$-i\eps$, while the remaining $N-k$ variables are still real Cauchy
variables.  Applying \eqref{bc:eq:one-site-jensen-step} conditionally gives
\begin{equation*}
 \E_{C_{j_{k+1}}}\log|G^{(k)}_{ox}(z)|
 \ge\log|G^{(k+1)}_{ox}(z)|,
 \qquad 0\le k<N.
\end{equation*}
Successive integration yields the monotone chain
\begin{align}
 \E_{C_{j_1},\ldots,C_{j_N}}\log|G^{(0)}_{ox}(z)|
 &\ge \E_{C_{j_2},\ldots,C_{j_N}}\log|G^{(1)}_{ox}(z)| \ge\cdots\ge\log|G^{(N)}_{ox}(z)|
 =\log|R_\Lambda^w(z)_{ox}|,
 \notag
\end{align}
which is \eqref{bc:eq:finite-jensen}.
\end{proof}

\subsubsection{Entropy and finite-block criteria}
Simple random walk chooses each of the $Q$ neighbors with probability
$1/Q$ at each step. For its $n$-step endpoint law and entropy, we set
$0\log0=0$ and define
\begin{equation}\label{bc:eq:walks}
 p_n(o,x)=Q^{-n}(A^n)_{ox},\qquad
 H_n=-\sum_xp_n(o,x)\log p_n(o,x),\qquad
 h_{\rm rw}=\lim_{n\to\infty}\frac{H_n}{n}=\inf_{n\ge1}\frac{H_n}{n}.
\end{equation}
For a vertex set $F$, we write $p_n(o,F)=\sum_{x\in F}p_n(o,x)$.
The entropy limit follows from $H_{m+n}\le H_m+H_n$. Counting one walk
and using $\|A/Q\|=r/Q$ give
\begin{equation}\label{bc:eq:prob-bounds}
 Q^{-n}\le p_n(o,x)\le(r/Q)^n\quad\text{on its support},
 \qquad h_{\rm rw}\ge\log(Q/r)>0.
\end{equation}

\begin{proposition}\label{bc:prop:block}
We let $H$ be a covariant iid bounded-disorder model on $\mathcal X_d$
whose common single-site law has a bounded density, and let $I$ be a
nondegenerate compact interval on which its averaged local
spectral density is positive almost everywhere. We put
$T_E(x,y)=\mathbb E\log^-|G_{xy}(E+i0)|$ for $x\ne y$ and $T_E(x,x)=0$.
Assume that, for constants $C_{\rm gl},C_+<\infty$,
\begin{align}
 T_E(x,y)&\le T_E(x,z)+T_E(z,y)+C_{\rm gl},\label{bc:eq:generic-glue}\\
 \E\log^+|G^C_{xy}(E+i0)|&\le C_+\label{bc:eq:generic-plus}
\end{align}
for almost every $E\in I$, all vertices, and the full and one-vertex
deleted graphs used below.
Suppose that for some odd $m$ there are deterministic nonnegative numbers
$b_x$, $x\in S:=\supp p_m(o,\cdot)$, such that
\begin{equation*}
 T_E(o,x)+C_{\rm gl}\le b_x\quad \text{ for }x\in S,\qquad
 \sum_{x\in S}p_m(o,x)b_x<m h_{\rm rw}
\end{equation*}
for almost every $E\in I$.  Then, almost surely,
\[
 I\subset\sigma_{\ac}(H),\qquad \one_I(H)P_{\sing}(H)=0,
\]
and the local a.c. density is positive for almost every $E\in I$,
simultaneously at every vertex.
\end{proposition}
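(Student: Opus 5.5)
The plan is to reduce the proposition to \Cref{prop:selected-ac-point} and \Cref{lem:no-singular}. To do so we build, from the $m$-step block, deterministic endpoint sets $F_R\subseteq S_R(o)$ that do not depend on $E$. For almost every $E\in I$ these sets should satisfy the growth condition \eqref{eq:selected-growth} and the pointwise cost inequality $\mathcal L_F(E)<\gamma$. The Poincar\'e inequality \eqref{eq:Poincare} and the translation bound \eqref{eq:selected-mixing} come from \Cref{lem:product-geometry}, and the latter holds for any $F\subset B_R(o)$. The hypothesis $n_\lambda>0$ on $I$ is assumed. So once the sets are built, \Cref{prop:selected-ac-point} gives $\PP\{\ImPart G_{oo}(E+i0)>0\}=1$ for a.e.\ $E\in I$. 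Transitivity and \Cref{lem:no-singular} then give positive local a.c.\ densities and $\one_I(H)P_{\sing}(H)=0$. Finally, $I\subset\sigma_{\ac}(H)$ follows because the a.c.\ density is positive a.e.\ on $I$ and $\sigma_{\ac}$ is closed.

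\emph{Step 1: from Green costs to tunneling costs.} For $x\ne o$, \eqref{eq:one-site} and \eqref{eq:pair-identities} give $G_{ox}=\tau_{ox}G_{oo}G^{(o)}_{xx}$. Taking $\log^-$ and using \eqref{bc:eq:generic-plus}, we get
\[
 \E\ell_{ox}(E)\le\E\log^-|\tau_{ox}(E)|\le T_E(o,x)+2C_+
\]
for a.e.\ $E$. Hence it suffices to control $T_E$.

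\emph{Step 2: concatenated blocks.} Let $Z_1,\dots,Z_k$ be i.i.d.\ with law $p_m(o,\cdot)$ on $S$, and put $Y_k=Z_1\cdots Z_k$. Then $Y_k$ has law $p_{km}(o,\cdot)$, and since $m$ is odd, $o\notin S$. Covariance gives $T_E(gx,gy)=T_E(x,y)$. Iterating the glue inequality \eqref{bc:eq:generic-glue} along $o,Z_1,Z_1Z_2,\dots$ therefore gives
\[
 T_E(o,Y_k)\le\sum_{i}\bigl(T_E(o,Z_i)+C_{\rm gl}\bigr)\le\sum_i b_{Z_i}
\]
(up to one extra $C_{\rm gl}$). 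For each endpoint $y$, let $c_k(y)$ be the minimum of $\sum_i b_{z_i}$ over block sequences with $z_1\cdots z_k=y$; this quantity is deterministic and independent of $E$. With $\beta:=\sum_{x\in S}p_m(o,x)b_x<mh_{\rm rw}$, we have $\E c_k(Y_k)\le k\beta$. Fix $\delta>0$ small enough that $\beta(1+\delta)<mh_{\rm rw}(1-2\delta)$. Markov's inequality then gives $p_{km}\{c_k>k\beta(1+\delta)\}\le1/(1+\delta)$.

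\emph{Step 3: entropy selection (main obstacle).} We need many endpoints with small cost. For this we invoke the Shannon theorem for random walks on groups (Kaimanovich--Vershik, valid on $\Gamma_d$ with the finitely supported step law): $-\frac1n\log p_n(o,X_n)\to h_{\rm rw}$ in probability. It implies that, outside a set of $p_{km}$-mass $o(1)$, one has $p_{km}(o,y)\le e^{-kmh_{\rm rw}(1-\delta)}$. The good set $\{c_k\le k\beta(1+\delta)\}$ therefore contains a subset $G_k$ of mass at least $\delta/(2(1+\delta))$ consisting of such $y$. Split $G_k$ by distance $R\le km$ and pick the shell carrying the most mass; call it $F_R=G_k\cap S_R(o)$. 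Then
\[
 M_R\ge c_\delta\,e^{kmh_{\rm rw}(1-\delta)}/(km+1).
\]
Since $M_R\le Q^R$, this forces $R\ge c\,k$, so $R\to\infty$. Pass to a subsequence along which $\log M_R/R\to\gamma>0$ and along which the radii are distinct, and take these as $\mathscr R$. Along these radii, Step 1 gives
\[
 \frac1{RM_R}\sum_{x\in F_R}\E\ell_{ox}(E)\le\frac{k\beta(1+\delta)+O(1)}{R}<\frac{kmh_{\rm rw}(1-2\delta)}{R}\le\frac{\log M_R}{R}-\frac{\delta'k}{R},
\]
so $\mathcal L_F(E)<\gamma$ for a.e.\ $E\in I$. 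The delicate points are, first, that the Shannon theorem is needed in this concentration form and not just as an entropy limit; and second, that normalizing by $R$ instead of $km$ does no harm, because both sides carry the same $1/R$ and $R\gtrsim k$. With the sets in hand, Steps 1--3 feed directly into \Cref{prop:selected-ac-point} and \Cref{lem:no-singular} as described at the start.
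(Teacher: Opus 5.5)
Your proposal is correct. Its overall architecture matches the paper's: concatenate $m$-blocks using covariance and the glue inequality to get a deterministic, $E$-independent cost bound; select a deterministic shell of endpoints; convert Green costs to tunneling costs via $G_{oy}=\tau_{oy}G_{oo}G^{(o)}_{yy}$ and \eqref{bc:eq:generic-plus}; then apply \Cref{prop:selected-ac-point} and \Cref{lem:no-singular}.

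The genuine difference is your Step 3, where you decouple cost from cardinality. Markov's inequality bounds the block cost by $k\beta(1+\delta)$ on a set of mass at least $\delta/(1+\delta)$. The Shannon theorem of Kaimanovich--Vershik/Derriennic then places all but $o(1)$ of the $p_{km}$-mass at points with $p_{km}(o,y)\le e^{-kmh_{\rm rw}(1-\delta)}$. That theorem does apply, since simple random walk on $\Gamma_d$ is a right random walk with finitely supported step law, and the $m$-blocks multiply to the $km$-step law.

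The paper never uses concentration of $-\frac1n\log p_n(o,X_n)$. Its argument runs as follows:
\begin{itemize}
\item It takes the infimal cost $L_0$ over block decompositions of any length.
\item It uses only $H_n\ge nh_{\rm rw}$, which follows from $h_{\rm rw}=\inf_nH_n/n$, to obtain the entropy surplus $\sum_yp_n(o,y)[-\log p_n(o,y)-L_0(y)]\ge2\alpha n$.
\item It extracts a set $\mathcal G_n$ of mass at least $c_0$ on which $L_0(y)\le-\log p_n(o,y)-\alpha n$ holds pointwise.
\item It bins $\mathcal G_n$ by the value of $-\log p_n(o,y)$ in windows of width $\alpha n/8$, so that in the heaviest bin the cardinality $\gtrsim e^{a_nn}$ and the cost $\le(a_n+\Delta-\alpha)n$ are matched.
\end{itemize}

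Your route is conceptually transparent: the selected endpoints are typical, and their cost lies strictly below their typical information content. The price is an imported ergodic-theoretic limit theorem. The paper's binning is elementary and self-contained, and it works without any concentration of $-\log p_n$.

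Your remaining bookkeeping checks out: $R\ge\log M_R/\log Q\gtrsim k$, the surplus $\delta'k/R\ge\delta'/m$ survives normalization by $R$, tie-breaking keeps the sets deterministic, and the compact support can be rescaled into the form \eqref{eq:model}. Markov's inequality needs $\beta>0$, but if $\beta=0$ then $c_k\equiv0$ and the step is trivial.
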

\begin{proof}
We put
\[
 \bar b=\sum_{x\in S}p_m(o,x)b_x<mh_{\rm rw}
\]
and choose $\alpha>0$ so that
\begin{equation*}
 \bar b\le m(h_{\rm rw}-2\alpha).
\end{equation*}
We define the cost of reaching $y$ by block increments as
\[
 L_0(y)=\inf\left\{\sum_{j=1}^k b_{x_j}:
          k\ge1,\ x_j\in S,\ x_1\cdots x_k=y\right\},\qquad L_0(o)=0.
\]  If $y=x_1\cdots x_k$ with $x_j\in S$, covariance and
\eqref{bc:eq:generic-glue} give inductively
\begin{align}
 T_E(o,y)
 &\le\sum_{j=1}^kT_E(o,x_j)+(k-1)C_{\rm gl}\notag\\
 &\le\sum_{j=1}^k(b_{x_j}-C_{\rm gl})+(k-1)C_{\rm gl}
 =\sum_{j=1}^kb_{x_j}-C_{\rm gl}
 \le\sum_{j=1}^kb_{x_j}.
 \notag
\end{align}
Taking the infimum over all block representations yields
\begin{equation}\label{eq:block-concatenated-cost}
 T_E(o,y)\le L_0(y)
 \qquad\text{for a.e. }E\in I.
\end{equation}

We take $n=km$.  We let $X_1,\dots,X_k$ be independent block increments with
common law $p_m(o,\cdot)$.  Their product has law $p_n(o,\cdot)$, and the
particular block representation gives
$L_0(X_1\cdots X_k)\le\sum_j b_{X_j}$.  Hence
\begin{equation*}
 \sum_y p_n(o,y)L_0(y)
 \le k\bar b\le n(h_{\rm rw}-2\alpha).
\end{equation*}
Since $H_n\ge nh_{\rm rw}$,
\begin{equation}\label{bc:eq:entropy-surplus}
 \sum_y p_n(o,y)
 \bigl[-\log p_n(o,y)-L_0(y)\bigr]
 \ge2\alpha n.
\end{equation}
We define
\[
 \mathcal G_n:=
 \{y\in\supp p_n(o,\cdot):L_0(y)\le-\log p_n(o,y)-\alpha n\}.
\]
Since $L_0\ge0$ and $-\log p_n(o,y)\le n\log Q$, the surplus
\eqref{bc:eq:entropy-surplus} gives, with $q_n=p_n(o,\mathcal G_n)$,
\[
 2\alpha n\le\alpha n(1-q_n)+n\log Q\,q_n.
\]
Thus
\begin{equation*}
 p_n(o,\mathcal G_n)\ge c_0:=\frac{\alpha}{\log Q-\alpha}>0.
\end{equation*}
We put $\Delta=\alpha/8$. We partition $\mathcal G_n$ into bins of width
$\Delta n$ according to $-\log p_n(o,y)$. By \eqref{bc:eq:prob-bounds},
there are at most $1+\lceil\log Q/\Delta\rceil$ bins. A bin $\mathcal H_n$ of
maximal probability therefore satisfies, for a fixed $c>0$,
\[
 p_n(o,\mathcal H_n)\ge c,\qquad
 a_n n\le-\log p_n(o,y)<(a_n+\Delta)n\quad(y\in \mathcal H_n).
\]
We choose $F_n=\mathcal H_n\cap S_{R_n}(o)$ of maximal cardinality. We break ties by
the smallest bin index and then the smallest radius, so all choices
are deterministic. Since $\mathcal H_n\subset B_n(o)$, \eqref{bc:eq:prob-bounds} gives
\begin{equation}\label{eq:block-bin-cardinality}
 |F_n|\ge\frac{|\mathcal H_n|}{n+1}\ge\frac{ce^{a_n n}}{n+1},\qquad
 |F_n|\ge\frac{c(Q/r)^n}{n+1}.
\end{equation}
In particular, $|B_R(o)|\le C Q^R$ and \eqref{eq:block-bin-cardinality} imply
\[
 \frac{\log(Q/r)}{\log Q}\,n-O(\log n)\le R_n\le n.
\]
Membership in $\mathcal G_n$ and \eqref{eq:block-concatenated-cost} give
\[
 \operatorname*{ess\,sup}_{E\in I}
 \frac1{|F_n|}\sum_{y\in F_n}T_E(o,y)\le(a_n+\Delta-\alpha)n.
\]
Subtracting this from the logarithm of \eqref{eq:block-bin-cardinality},
\[
 \log|F_n|-\operatorname*{ess\,sup}_{E\in I}
 \frac1{|F_n|}\sum_{y\in F_n}T_E(o,y)
 \ge\frac{7\alpha n}{8}-O(\log n).
\]
The sequence $R_n^{-1}\log|F_n|$ is bounded above and away from zero.
We pass to a subsequence on which $R_n$ strictly increases and this sequence
converges to $\gamma>0$. Dividing the last inequality by $R_n$ proves,
for example with $c_1=\alpha/2$,
\begin{equation*}
 \frac{\log|F_n|}{R_n}\longrightarrow\gamma,\qquad
 \limsup_n\operatorname*{ess\,sup}_{E\in I}
 \frac1{R_n|F_n|}\sum_{y\in F_n}T_E(o,y)\le\gamma-c_1.
\end{equation*}

The two-site Grushin identity
$G_{oy}=\tau_{oy}G_{oo}G^{(o)}_{yy}$ and
\eqref{bc:eq:generic-plus} give
\begin{equation*}
 \E\log^-|\tau_{oy}|
 \le T_E(o,y)+2C_+.
\end{equation*}
After averaging over $F_n$ and dividing by $R_n$,
\begin{align}
 \mathcal L_{F,R_n}(E)
 &\le\frac1{R_n|F_n|}\sum_{y\in F_n}T_E(o,y)+\frac{2C_+}{R_n},
 \notag\\
 \limsup_n\operatorname*{ess\,sup}_{E\in I}\mathcal L_{F,R_n}(E)
 &\le\gamma-c_1<\gamma.
 \label{eq:block-strict-selected-cost}
\end{align}
The bounded single-site law can be written in the form
\eqref{eq:model}--\eqref{eq:density-upper} after rescaling its compact
support.  Moreover \eqref{eq:product-mixing} implies
\begin{equation*}
 C(1+R)^s(2R+1)^{d/2}
 \le C'(1+R)^{s+d/2},
\end{equation*}
so \eqref{eq:selected-mixing} holds for the deterministic sets $F_n$.
Together with \eqref{eq:product-poincare},
\eqref{eq:block-strict-selected-cost} is the strict hypothesis of
\Cref{prop:selected-ac-point}.  Thus every local a.c. density is positive
for almost every $E\in I$, and \Cref{lem:no-singular} removes singular
spectrum.  Since the sets $F_n$ are deterministic, the conclusion holds
simultaneously at all vertices.
\end{proof}

\subsection{Perturbative stability}\label{sec:law-stability}
We now allow bounded independent noise to be added to the
Cauchy variable. If $\nu$ is the probability distribution of
that noise, the sum has density
\[
 \psi_{\eps,\nu}(t)\coloneq(c_\eps*\nu)(t)
 =\int_{\mathbb R}c_\eps(t-y)\,\dd\nu(y).
\]
This is convolution of a density with a probability measure.
The two geometric margins below bound the size and mean of
the added noise. The resulting estimates are uniform over
all noise distributions satisfying these bounds.

\begin{theorem}\label{thm:law-stability}
Fix a finite integer $d\ge0$ and $J=[a,b]$ with
$r_\square+2d<a<b<5+2d$.  We put $r=r_\square+2d$ and $Q=5+2d$.
Choose $u,v\ge0$ such that
\begin{equation*}
 r+u<a,\qquad b+v<Q.
\end{equation*}
For every $K\ge2$ there is $\eps_0>0$ such that, for each
$0<\eps\le\eps_0$, there is $t_*>0$ with the following property.
Let $\nu$ be a probability measure satisfying
\begin{equation}\label{st:eq:intro-noise}
 \supp\nu\subset[-u,u],\qquad
 \left|\int y\,\dd\nu(y)\right|\le v,
\end{equation}
and let $f$ be a compactly supported probability density satisfying
\begin{equation*}
 0\le f\le K\psi_{\eps,\nu}\quad\text{a.e.},\qquad
 d_{\rm TV}(f,\psi_{\eps,\nu})<t_*.
\end{equation*}
Assume also that, for some $\ell>b-r$ and $c_f>0$,
\begin{equation}\label{st:eq:intro-floor}
 f(t)\ge c_f\quad\text{for a.e. }t\in[-\ell,\ell].
\end{equation}
For iid potentials with density $f$, the operator
$H_f=A_d+\diag(V_x)$ then satisfies, almost surely,
\begin{equation*}
 \begin{gathered}
 J\cup(-J)\subset\sigma_{\ac}(H_f),\qquad
 \one_{J\cup(-J)}(H_f)P_{\sing}(H_f)=0,\\
 \frac{d\mu_{x,f}^{\ac}}{dE}(E)>0
 \quad\text{for a.e. }E\in J\cup(-J),\quad\text{simultaneously for all }x.
 \end{gathered}
\end{equation*}
The constants can be chosen uniformly over all $\nu$ satisfying
\eqref{st:eq:intro-noise}, including asymmetric choices of $f$ and $\nu$.
\end{theorem}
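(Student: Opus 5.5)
The plan is to verify the hypotheses of \Cref{bc:prop:block} for the law $f$, with $I$ equal to $J$ and then to $-J$, and to make every constant uniform over the admissible noise laws $\nu$. The argument has four parts.

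\emph{Density positivity.} Here there is nothing delicate. Since $f\ge c_f$ on $[-\ell,\ell]$ with $\ell>b-r$, and since $\pm r\in\sigma(A_d)$, \Cref{cor:shifted-dos} (with $v_0=0$, $a=\ell$) shows that the averaged density is positive a.e.\ on $\sigma(A_d)+(-\ell,\ell)$. This set contains $[r,r+\ell)\supset J$, and by symmetry it also contains $-J$.

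\emph{The logarithmic input.} This is the heart of the argument. Write $V_x=C_x+Y_x$, where $C_x\sim c_\eps$ and $Y_x\sim\nu$; this is the reference law $\psi_{\eps,\nu}$. Fix an odd $m$ and take $\Lambda=B_m(o)$. Apply \Cref{bc:lem:finite-jensen}, conditioning on the $Y$'s and on the exterior. This gives
\[
 \E\log|G_{ox}|\ge\E\log|R^{Y}_\Lambda(z)_{ox}|,
\]
where the comparison operator has potential $Y_x-i\eps$ on $\Lambda$. Because $|Y_x|\le u$ and $\eps$ is small, I then expand $R_\Lambda^Y(E)_{ox}$ in a Neumann series in $(A+Y)/E$ around $E\in J$. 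This should work since $|E|\ge a>r+u\ge\|A+Y\|$. The exterior potentials $Z$ are unbounded (Cauchy tails), so I would localize: replace the exterior by deletion through a Schur complement, at the cost of an $O(1)$ error controlled by \eqref{eq:apriori}-type fractional moments.

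Two estimates should then follow:
\[
 (R)_{ox}\approx E^{-m-1}\Bigl[(A^m)_{ox}+\textstyle\sum_x(\text{mean corrections}\le v\text{ per step})\Bigr],
\qquad
 \log^-|R_{ox}|\le (m+1)\log E-\log\bigl((A^m)_{ox}\bigr)+m\log\frac{Q}{Q-v}+o(m).
\]
Averaging against $p_m(o,x)=Q^{-m}(A^m)_{ox}$ and setting $b_x=-\log(Q^{-m}(A^m)_{ox})+m\log\frac{b}{Q-v}+O(1)$ gives
\[
 \sum_x p_m b_x\approx H_m-m\log\frac{Q-v}{b}.
\]
This is strictly below $mh_{\rm rw}$ once $m$ is large, because $b+v<Q$. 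That strict margin is exactly the room the proposition needs.

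\emph{The gluing and positive-log hypotheses.} For \eqref{bc:eq:generic-plus} I would use \eqref{eq:apriori} together with a Fatou argument. For \eqref{bc:eq:generic-glue} I would use the two-site identity $G_{xy}=\tau G_{xx}G^{(x)}_{yy}$ iterated through the intermediate vertex $z$. Both need fractional-moment bounds that are uniform in $f$; these come from $f\le K\psi_{\eps,\nu}\le K\|c_\eps\|_\infty$.

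\emph{Passing from $\psi_{\eps,\nu}$ to $f$.} The finite-block quantity $\sum_x p_m(o,x)\,\E\log^-|G_{ox}|$ depends on the law only through finitely many sites, up to the truncation just described. Its integrand, truncated at level $L$, is bounded. The tail beyond $L$ is uniformly small by the $\theta$-moment bounds, which are uniform under $f\le K\psi$. Coupling the finitely many sites in total variation then changes the truncated average by at most $|B_{m}|\,t_*\,L$. Choosing $m$ first, then $L$, then $t_*$ preserves the strict inequality. The same comparison transfers the glue constant, which by the previous step is uniform in $f$, so the perturbed law still satisfies all hypotheses of \Cref{bc:prop:block}.

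The case $-J$ follows from bipartiteness: conjugation by $(-1)^{|x|}$ maps $H_f$ to $-(A_d-\diag V)$. The reflected law obeys the same hypotheses, with $\nu$ reflected and the mean bound unchanged.

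The main obstacle I expect is the exterior: the Cauchy-distributed exterior potentials enter the comparison operator $R_\Lambda^Y$, and its matrix element has to be bounded below in a way that is uniform in them. My first attempt would be to restrict to $x$ in the interior of $\Lambda$ and bound the exterior contribution via the Schur complement, whose imaginary part has a definite sign. Failing that, I would take $\Lambda$ to be a much larger ball than the block and use the exponential decay of the Neumann series outside the spectrum to make the boundary effects $O(e^{-cm})$ relative to the main term.
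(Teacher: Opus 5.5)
Your density-positivity step and the reduction of $-J$ to $J$ by the bipartite sign match the paper. The transfer from $\psi_{\eps,\nu}$ to $f$, however, rests on a false premise. The block costs $\E\log^-|G_{ox}(E+i0)|$ do not depend on the law through finitely many sites; on $J$ the Green function is genuinely nonlocal. So a total-variation coupling of $|B_m(o)|$ coordinates cannot carry a bound proved for the all-$\psi_{\eps,\nu}$ law over to $f$. Nor does \eqref{eq:apriori} supply an ``$O(1)$ Schur-complement localization'': fractional moments control large $|G|$, i.e.\ $\log^+$, not the small values you need.

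The missing idea is a lower bound for the comparison kernel that is \emph{uniform over an arbitrary real exterior potential $Z$}. That uniformity lets you work with a hybrid law: $\psi_{\eps,\nu}$ only on $\Lambda=B_R(o)$, and $f$ outside. In the paper this is \Cref{st:lem:shield}. The imaginary potential $-i\eps P_\Lambda$ gives $\|P_\Lambda R^w_\Lambda P_\Lambda\|\le\eps^{-1}$ independently of $Z$. Combined with the geometric resolvent identity on $B_{R-1}(o)$ and the Neumann decay $((r+u)/a)^{R-1}$ from $o$ to the inner boundary, this bounds the $Z$-dependent correction by $\frac14(b+u)^{-m-1}\le\frac14F_w(E)$. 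That requires $R$ well beyond $m$, roughly $R\ge m\log(b+u)/\log\frac a{r+u}$, so $\Lambda=B_m(o)$ does not work. Only after this are the $N=|B_R(o)|$ changed coordinates coupled, with error $2L\sqrt{Nt}$ by Cauchy--Schwarz; hence $t_*$ must be of order $(|B_R(o)|L^2)^{-1}$.

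Your truncation at level $L$ also needs a uniform lower-tail estimate for $\log|G_{ox}|$ with $d(o,x)\le m$. That does not come from $\theta$-moments. It comes from sublevel bounds for the multiaffine cofactor polynomials under bounded, heavy-tailed densities: the Cauchy-tail estimate \eqref{st:eq:affine-cauchy} replaces \Cref{lem:poly}, whose variables lie in $[-1,1]$.

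You also do not address how $\eps$ and $m$ must be linked. After \Cref{bc:lem:finite-jensen} the reference kernel is evaluated at $E+i(\eta+\eps)$, so length-$k$ walk terms acquire phases of order $k\eps$. \Cref{st:lem:phase} controls the resulting loss only by $D\eps(m+1)F_w(E)$, so $\eps m$ must be small. Yet the additive constants in $b_x$ are not $O(1)$: $C_+(\eps,K)$ and $C_{\rm gl}(\eps,K)$ grow like $\log(1/\eps)$ and must stay below $m\delta/4$. Since every $\eps\le\eps_0$ must be handled, $m$ has to depend on $\eps$; the paper takes the least odd $m\ge\eps^{-1/2}$. Your ordering ``$m$, then $L$, then $t_*$'' with ``$\eps$ small'' does not achieve this.

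Three smaller points:
\begin{itemize}
\item Your per-step factor $\log\frac{Q}{Q-v}$ is asserted rather than derived, and an expansion in $(A+Y)/E$ is not sign-definite. The derivable margin is $\delta=\log(Q/(b+v))$. \Cref{st:lem:averaged-walk} obtains it by making the Neumann series entrywise positive, keeping only length-$m$ walks, and applying AM--GM over walks and Jensen to the concave $y\mapsto\log(E-y)$; this is how only the mean bound $v$ enters.
\item Your gluing through $\tau$ is not spelled out, and $\tau$ has no evident submultiplicativity. The paper instead decomposes at $z$, writing $G_{xy}=G_{xz}G_{zy}\,p(V_z)$ with $p$ quadratic and $|\{t\in\R:|p(t)|<s\}|\le8s$.
\end{itemize}
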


We prove the block inequality in \Cref{bc:prop:block} uniformly on $J$,
using logarithmic bounds and the imaginary potential from
\Cref{bc:lem:finite-jensen}.

We fix $J=[a,b]$, $u,v,K,\nu,f$ as in \Cref{thm:law-stability}.
The density $c_\eps$ is defined in \Cref{sec:further-results},
and $\psi_{\eps,\nu}$ is defined above. We write $H_f=A+\diag(V_x)$ for iid density $f$,
$G^f_{xy}(z)=\langle\delta_x,(H_f-z)^{-1}\delta_y\rangle$, and
\[
 T^f_{E,\eta}(x,y)=\mathbb E\log^-|G^f_{xy}(E+i\eta)|,\qquad
 T^f_E(x,y)=\mathbb E\log^-|G^f_{xy}(E+i0)|\quad(x\ne y),
\]
with both costs set to zero for $x=y$. A hybrid law replaces $f$ by
$\psi_{\eps,\nu}$ at a finite set of sites, with all coordinates still
independent. The superscript $\mathrm{hyb}$ denotes this operator and
its Green function. Each expectation uses the stated product law.

If $|y|\le u$, then
\[
 \frac{c_\eps(t-y)}{c_\eps(t)}
 \le 2+2u^2/\eps^2=:D_{\eps,u}.
\]
Indeed, $t^2\le2(t-y)^2+2y^2$.  Hence every density $\rho$ equal either
to $f$ or to $\psi_{\eps,\nu}$ obeys
\begin{equation}\label{st:eq:uniform-controls}
 \rho\le KD_{\eps,u}c_\eps,\qquad
 \|\rho\|_\infty\le\frac K{\pi\eps},
\end{equation}
and, for $0<s<1$,
\[
 \int |t|^s\rho(t)\,dt
 \le K\left(\int|z|^s c_\eps(z)\,dz+u^s\right).
\]
For every fixed $0<s<1$,
$\log^2(1+|t|)\le C_s(1+|t|^s)$.  Thus the same bounds give
\begin{equation}\label{st:eq:uniform-log-moment}
 \sup_\rho\int_\R\log^2(1+|t|)\rho(t)\,dt<\infty,
\end{equation}
where the supremum is over the single-site laws $f$ and
$\psi_{\eps,\nu}$ occurring below.

\begin{lemma}\label{st:lem:logs}
For fixed $\eps,u,K,b,Q$ and finite $m$, there is $L\ge1$, independent
of $\nu$, the support of $f$, and the number of sites whose law is
changed from $f$ to $\psi_{\eps,\nu}$, such that
\begin{equation}\label{st:eq:log-two}
 \sup_{\substack{|E|\le b,\ 0<\eta\le1\\1\le d(x,y)\le m}}
 \|\log|G_{xy}(E+i\eta)|\|_{L^2}\le L.
\end{equation}
One may take
\[
 F_{\eps,K}=8\sqrt{\frac{2K}{\pi\eps}},\qquad
 C_+(\eps,K)=2\log(1+F_{\eps,K}),\qquad
 C_{\rm gl}(\eps,K)=1+\log^+\frac{8K}{\pi\eps}.
\]
For every retained graph $C$,
$\E\log^+|G^C_{xy}|\le C_+(\eps,K)$.  For the iid law $f$, we define
$T_E(x,x)=0$ and
$T_E(x,y)=\E\log^-|G_{xy}(E+i0)|$ for $x\ne y$.  At almost every energy,
\begin{equation}\label{st:eq:glue}
 T_E(x,y)\le T_E(x,z)+T_E(z,y)+C_{\rm gl}(\eps,K).
\end{equation}
The expected logarithmic parts converge to their boundary values at
almost every energy.
\end{lemma}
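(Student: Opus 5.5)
The lemma has three parts: an $L^2$ bound on $\log|G_{xy}|$ at bounded distance, a uniform positive-log bound, and the gluing inequality. All three rest on the uniform controls \eqref{st:eq:uniform-controls}--\eqref{st:eq:uniform-log-moment}. These hold with the same constants for $f$ and for $\psi_{\eps,\nu}$, so any mixture of the two laws over an arbitrary set of sites is covered. This is the source of the independence from $\nu$, from $\supp f$, and from the number of hybrid sites.

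\emph{Positive part.} Fix the remaining sites and apply the rank-two averaging behind \eqref{eq:apriori} (one-site averaging when $x=y$) in the variables $V_x,V_y$. This gives $\E|G^C_{xy}|^{1/2}\le F_{\eps,K}$, using only $\|\rho\|_\infty\le K/(\pi\eps)$. Here $F_{\eps,K}$ is the explicit constant of the standard Aizenman--Molchanov weak-$L^1$ estimate at $\theta=1/2$. Jensen in the form $\E\log(1+|G|)\le 2\log(1+\E|G|^{1/2})$, obtained from concavity of $t\mapsto\log(1+t^2)$ near our range (or by a crude layer-cake argument), then gives $\E\log^+|G^C_{xy}|\le C_+$. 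The argument is uniform in $C$, $\eta$ and $E$, and so passes to boundary values by Fatou.

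\emph{$L^2$ bound for $d(x,y)\le m$.} The upper tail of $\log|G|$ is controlled by fractional moments as above, squared through $\log^2 t\le C(t^\theta+t^{-\theta})$. For the lower tail, expand $G_{xy}$ along a geodesic. Iterating \eqref{eq:pair-identities} gives $G_{xy}=\tau_{xy}G_{xx}G^{(x)}_{yy}$. The diagonal factors are handled by \eqref{eq:inverse-log}, whose proof uses only the fractional-moment bound and the log-moment \eqref{st:eq:uniform-log-moment} in place of boundedness of the potential. For $\tau_{xy}$, write it as a multiaffine ratio in the potentials on a fixed finite ball around the geodesic. Its leading coefficient comes from the path count $N_{xy}\ge1$, which lets us normalize the numerator and apply \Cref{lem:poly}, in its version for bounded density on all of $\R$ with $\log(1+|t|)$ moments. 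This is the same step as in \Cref{lem:edge}. Together these give $L$ depending only on $m,\eps,K,u,b$.

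\emph{Gluing.} I expect \eqref{st:eq:glue} to be the main obstacle, since it is a lower bound on $|G_{xy}|$ in terms of $|G_{xz}||G_{zy}|$. The plan is to resample the potential at $z$ and use the rank-one formula at $z$: with $v=V_z$,
\[
 G_{xy}(v)=G_{xy}^{(z)}+\frac{G^{(z)}_{xz}\text{-type numerator}}{v-\Sigma_z},
\]
so that $G_{xz}G_{zy}/G_{zz}=G_{xy}-G^{(z)}_{xy}$. Conditioning on everything except $v$, $G_{xy}$ is a Möbius function of $v$ whose residue term equals $G_{xz}G_{zy}G_{zz}^{-1}$ times a $v$-independent factor. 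The one-site argument of \Cref{lem:finite-resampling} then bounds $\E_v\log^-|G_{xy}|$ by $\log^-$ of the residue coefficient plus a constant $1+\log^+(8K/(\pi\eps))$, coming from the density bound $K/(\pi\eps)$ via $\int\log^-|v-c|\rho\,dv\le 1+\log^+\|\rho\|_\infty$. It remains to control $\log^-|G_{zz}^{-1}|=\log^+|G_{zz}|$ and to remove the $v$-dependence of $G_{xz}G_{zy}$. Both are handled by taking expectations and absorbing terms through the positive-part bound, which fixes $C_{\rm gl}$ up to the stated value. If this bookkeeping produces extra $C_+$ terms, I would enlarge $C_{\rm gl}$ accordingly; the later use in \Cref{bc:prop:block} only needs some finite constant. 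Convergence of the expected logarithmic parts as $\eta\downarrow0$ follows from a.e. boundary convergence and uniform integrability, which the $L^2$ bound supplies at fixed distance and Fatou supplies in general.
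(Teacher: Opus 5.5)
The gluing step, as you set it up, does not close. Condition on all potentials except $v=V_z$, and put $\alpha_j=\langle\delta_j,R^zb_z\rangle$ and $k=G^{(z)}_{xy}/(\alpha_x\alpha_y)$. Then $G_{xy}=\alpha_x\alpha_y\bigl(k+(v-\Sigma_z)^{-1}\bigr)$.

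Your intermediate claim is that the $v$-average of $\log^-|G_{xy}|$ is at most $\log^-|\alpha_x\alpha_y|$ plus a constant. This cannot hold uniformly in the exterior configuration. Suppose $|k|$ is small and $\Sigma_z-1/k$ lies near the bulk of the law of $V_z$. Then $|v-\Sigma_z|\approx|k|^{-1}$, and $|G_{xy}|\approx|\alpha_x\alpha_y|\,|k|^2\,|v-(\Sigma_z-1/k)|$, an excess of about $2\log(1/|k|)$.

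Splitting off $\log^+|G_{zz}|$ as a separate $C_+$-term does not help. In this regime $|G_{zz}|\approx|k|$ is small, so the compensation sits in $\log^-|G_{zz}|$, which your bookkeeping discards. Meanwhile $\log^-|1+k(v-\Sigma_z)|$ has no uniform bound.

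The missing idea is to keep the $V_z$-dependence of $G_{xz}$ and $G_{zy}$ and factor
\[
 G_{xy}=G_{xz}G_{zy}\,p(V_z),\qquad p(t)=(t-\Sigma_z)\bigl(1+k(t-\Sigma_z)\bigr).
\]
The roots of $p$ are $1/|k|$ apart and its leading coefficient is $k$. Hence $|p(t)|\ge\frac12\min_j|t-r_j|$ and $|\{t\in\R:|p(t)|<s\}|\le 8s$, uniformly in $k$ and $\Sigma_z$; for $k=0$, $p$ is affine.

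This gives the pointwise bound $\log^-|G_{xy}|\le\log^-|G_{xz}|+\log^-|G_{zy}|+\log^-|p(V_z)|$. Together with $\|f\|_\infty\le K/(\pi\eps)$, it yields \eqref{st:eq:glue} with exactly $C_{\rm gl}=1+\log^+\frac{8K}{\pi\eps}$ and no $C_+$ terms. Nothing needs to be removed, because $T_E(x,z)$ and $T_E(z,y)$ already average over $V_z$. To define $k$ you also need $\alpha_x\alpha_y\ne0$ almost surely; this follows from the $L^2$ bound with $m\ge\max\{d(x,z),d(y,z)\}$.

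Your $L^2$ argument follows the paper's strategy but has three loose points.

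\emph{Unbounded variables.} \Cref{lem:poly} is proved for variables in $[-1,1]$, and its first step uses boundedness. For Cauchy-tailed variables, the case $|a|\ll|b|$ must be handled through the tail of the law, as in \eqref{st:eq:affine-cauchy}--\eqref{st:eq:poly-small-value}. The unbounded version you invoke is not in the paper and needs such an argument.

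\emph{The normalizing coefficient.} It does not come from $N_{xy}$. That is the leading coefficient in $1/z$, and contributions of different geodesics can cancel at a fixed energy. The correct choice is the monomial in all variables off a single shortest path. Its coefficient is $\pm1$ because the path is chordless and all neighbors of its nonterminal vertices are retained, so those rows and columns carry no exterior correction.

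\emph{The determinant.} Unlike in \Cref{lem:edge}, the determinant $\det\mathcal M$ of the retained effective matrix does not cancel. You need an $L^2$ bound on $\log^+|\det\mathcal M|$, from Hadamard's inequality, the fractional moments of the exterior entries, and \eqref{st:eq:uniform-log-moment}. Applying this directly to $G_{xy}=P_{xy}/\det\mathcal M$, as the paper does, also makes your detour through $\tau_{xy}G_{xx}G^{(x)}_{yy}$ unnecessary.

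Finally, the Jensen step in the positive part should read $\log(1+|G|)\le2\log(1+|G|^{1/2})$, followed by concavity of $s\mapsto\log(1+s)$. The map $t\mapsto\log(1+t^2)$ is convex on $[0,1]$, so it cannot be used.
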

\begin{proof}
We put $M_\eps=\frac{K}{\pi\eps}.$
All single-site laws appearing in the finite replacements have density at
most $M_\eps$. For every probability density $\rho$ with this bound,
\cite[equation~(7)]{Tautenhahn} gives
\[
 \int_\R|v-\zeta|^{-1/2}\rho(v)\,dv\le4\sqrt{M_\eps},
 \qquad \zeta\in\C.
\]
The conditional one- and two-site estimates in
\cite[equation~(6) and Lemma~3.1]{Tautenhahn}, used in
\eqref{eq:apriori}, require only independence and these individual
integration bounds. They therefore apply to the hybrid laws as well,
including the Cauchy-convolution coordinates, and give
\begin{equation}\label{st:eq:fractional-half}
 \sup_{C,x,y,|E|\le b,\ 0<\eta\le1}
 \E|G^C_{xy}(E+i\eta)|^{1/2}
 \le F_{\eps,K}:=8\sqrt{\frac{2K}{\pi\eps}}.
\end{equation}
Hence
\begin{align}
 \E\log^+|G^C_{xy}|
 &\le2\E\log(1+|G^C_{xy}|^{1/2})
 \le2\log(1+F_{\eps,K})=:C_+(\eps,K),
 \notag\\
 \E(\log^+|G^C_{xy}|)^2
 &\le C\bigl(1+\E|G^C_{xy}|^{1/2}\bigr)
 \le C_{\eps,K}.
 \notag
\end{align}

For the negative logarithm, we first let \(x\sim y\), retain
\[
 F=\{x\}\cup N(x),\qquad |F|\le Q+1,
\]
and condition on \(F^c\).  The Grushin problem with auxiliary space
$\ell^2(F)$ gives
\begin{equation*}
 E_{F,-+}=\mathcal D_F(E+i\eta)-\diag(V_F),
 \qquad
 \mathcal M(V_F):=-E_{F,-+}=\diag(V_F)-\mathcal D_F(E+i\eta).
\end{equation*}
We set $q=|F|$ and
$P_{xy}(V_F)=\langle\delta_x,\adj(\mathcal M(V_F))\delta_y\rangle$.
Since $G_F=\mathcal M(V_F)^{-1}$,
\[
 G_{xy}=\frac{P_{xy}(V_F)}{\det\mathcal M(V_F)}.
\]
Both polynomials are multiaffine. We write $[\mathfrak m]P$ for the
coefficient of a monomial $\mathfrak m$ in $P$. Since $A_{xF^c}=0$,
\eqref{eq:grushin-effective} leaves the $x$ row of the effective matrix
unchanged. Its cofactor expansion gives
\[
 \left[\prod_{v\in F\setminus\{x,y\}}V_v\right]P_{xy}
 =\pm A_{xy}=\pm1.
\]
We set $V_j=\eps U_j$, let $c_*$ be the largest coefficient modulus of
$P_{xy}(\eps U)$, and put
\[
 c_*\ge\eps^{q-2},\qquad P(U)=P_{xy}(\eps U)/c_*.
\]
By \eqref{st:eq:uniform-controls}, the scaled variables have densities
bounded by $D c_1$, where $D=KD_{\eps,u}$. We first prove the required
one-variable estimate. If $U$ has this density bound,
$M=\max\{|a|,|b|\}>0$, and $0<s\le1/4$, then
\[
 \begin{aligned}
 |a|\ge\sqrt{s}M
 &\ \Longrightarrow\
 \PP\{|aU+b|<sM\}\le\frac{2D}{\pi}\frac{sM}{|a|}
 \le\frac{2D}{\pi}\sqrt{s},\\
 |a|<\sqrt{s}M
 &\ \Longrightarrow\
 \{|aU+b|<sM\}\subseteq\{|U|>(1-s)/\sqrt{s}\}.
 \end{aligned}
\]
The second case uses $|b|=M$; the Cauchy tail satisfies
$\PP\{|U|>R\}\le2D/(\pi R)$. Increasing the constant for $s>1/4$ gives
\begin{equation}\label{st:eq:affine-cauchy}
 \PP\{|aU+b|<s\max(|a|,|b|)\}\le C_D\sqrt{s},\qquad 0<s\le1.
\end{equation}
We let $F_q(t)$ be the supremum of $\PP\{|P(U)|<e^{-t}\}$ over normalized
multiaffine polynomials in $q$ variables and independent laws bounded
by $D c_1$. We write
\[
 P(U)=A(U')U_q+B(U'),\qquad M(U')=\max\{|A(U')|,|B(U')|\}.
\]
One of $A,B$ has largest coefficient modulus one. Splitting at
$M(U')=e^{-t/2}$ and applying \eqref{st:eq:affine-cauchy} conditionally yields
\[
 F_q(t)\le F_{q-1}(t/2)+C_De^{-t/4} \text{ with }q\ge2,\qquad
 F_1(t)\le C_De^{-t/2}.
\]
Induction therefore proves
\begin{equation}\label{st:eq:poly-small-value}
 \PP\{|P(U)|<e^{-t}\}\le C_{q,D}e^{-t/2^q},\qquad t\ge0.
\end{equation}
By \eqref{st:eq:poly-small-value} and the layer-cake formula,
\begin{equation}\label{st:eq:poly-log-square}
 \E(\log^-|P|)^2
 =2\int_0^\infty t\,\PP\{|P|<e^{-t}\}\,dt
 \le2C_{q,D}\int_0^\infty te^{-t/2^q}\,dt<\infty.
\end{equation}
All constants depend only on $\eps,u,K,Q$.

For the denominator, Hadamard's inequality gives
\begin{equation}\label{st:eq:det-hadamard}
 |\det\mathcal M(V_F)|
 \le(1+\|\mathcal D_F\|)^{|F|}
      \prod_{j=1}^{|F|}(1+|V_j|).
\end{equation}
Apart from the bounded terms $zI_F-A_F$, every entry of $\mathcal D_F$
is a sum of at most $Q^2$ exterior resolvent entries. Thus
\eqref{st:eq:fractional-half} implies
\begin{equation}\label{st:eq:exterior-log-square}
 \E\log^2(1+\|\mathcal D_F\|)\le C.
\end{equation}
Together with \eqref{st:eq:uniform-log-moment} and
\eqref{st:eq:det-hadamard}, this gives a uniform $L^2$ bound for
$\log^+|\det\mathcal M(V_F)|$. Since
\[
 \log^-|G_{xy}|
 \le\log^-|P(U)|+\log^-c_*+\log^+|\det\mathcal M(V_F)|,
\]
we obtain
\begin{equation}\label{st:eq:edge-log-square}
 \sup_{x\sim y,\ |E|\le b,\ 0<\eta\le1}
 \E|\log|G_{xy}(E+i\eta)||^2\le C.
\end{equation}
For $1<d(x,y)=\ell\le m$, we choose a shortest path
$\gamma=(x_0=x,\ldots,x_\ell=y)$ and put
\[
 F=\bigcup_{j=0}^{\ell-1}(\{x_j\}\cup N(x_j)),\qquad
 |F|\le\ell(Q+1),\qquad \{x_0,\ldots,x_\ell\}\subset F.
\]
Every nonterminal path vertex has all its neighbors in $F$, so
\eqref{eq:grushin-effective} has zero exterior correction in those rows
and columns. A shortest path is chordless. Selecting the diagonal
factors off $\gamma$ leaves the endpoint cofactor of this chain, whose
successive rows force the consecutive path edges. Thus
\[
 \left[\prod_{v\in F\setminus\{x_0,\ldots,x_\ell\}}V_v\right]P_{xy}
 =\pm\prod_{j=0}^{\ell-1}A_{x_jx_{j+1}}=\pm1,\qquad
 c_*\ge\eps^{|F|-\ell-1}.
\]
We apply \eqref{st:eq:poly-log-square}--\eqref{st:eq:exterior-log-square}
with at most $m(Q+1)$ variables. Together with
\eqref{st:eq:edge-log-square}, this gives
\begin{equation}\label{st:eq:path-log-square}
 \sup_{\substack{|E|\le b,\ 0<\eta\le1\\1\le d(x,y)\le m}}
 \E|\log|G_{xy}(E+i\eta)||^2\le L^2,
\end{equation}
which is \eqref{st:eq:log-two}.

For gluing, we fix distinct vertices $x,y,z$ and put $\zeta=E+i\eta$.
We condition outside $z$ and set
\[
 R^z=(H_{X\setminus\{z\}}-\zeta)^{-1},\quad \Sigma=\Sigma_z(\zeta),\quad
 \alpha_j=\langle\delta_j,R^zb_z\rangle\quad(j=x,y).
\]
The one-site Grushin equations give
\[
 G_{jz}=-\frac{\alpha_j}{V_z-\Sigma},\qquad
 G_{xy}=G^{(z)}_{xy}+\frac{\alpha_x\alpha_y}{V_z-\Sigma}.
\]
For this fixed triple we choose
\[
 m_{x,y,z}\ge\max\{d(x,z),d(y,z)\}.
\]
Equation~\eqref{st:eq:path-log-square}, with $m=m_{x,y,z}$, makes
$G_{xz}$ and $G_{yz}$ nonzero almost surely for every fixed $\eta>0$.
Hence $\alpha_x\alpha_y\ne0$ almost surely.  The constant
$C_{\rm gl}(\eps,K)$ obtained below is independent of $m_{x,y,z}$. We put
$k=G^{(z)}_{xy}/(\alpha_x\alpha_y)$; then
\begin{equation*}
 G_{xy}=G_{xz}G_{zy}\,p(V_z),
 \qquad
 p(t)=(t-\Sigma)\bigl(1+k(t-\Sigma)\bigr),
\end{equation*}
whence
\begin{equation}\label{st:eq:glue-log}
 \log^-|G_{xy}|
 \le\log^-|G_{xz}|+\log^-|G_{zy}|+\log^-|p(V_z)|.
\end{equation}
If \(k\ne0\) and \(r_1,r_2\) are the roots of \(p\), then
\[
 p(t)=k(t-r_1)(t-r_2),
 \qquad |k|\,|r_1-r_2|=1,
\]
and therefore
\[
 \max_j|t-r_j|\ge\tfrac12|r_1-r_2|,\qquad
 |p(t)|\ge\tfrac12\min_j|t-r_j|.
\]
Thus $\{|p(t)|<s\}\cap\R$ lies in the union of the two real slices
$\{|t-r_j|<2s\}\cap\R$, each of length at most $4s$. For $k=0$,
$p(t)=t-\Sigma$ gives the same bound:
\begin{equation}\label{st:eq:quadratic-sublevel}
 |\{t\in\R:|p(t)|<s\}|\le8s.
\end{equation}
Combining \eqref{st:eq:quadratic-sublevel} with
$\|f\|_\infty\le K/(\pi\eps)$ gives
\begin{equation}\label{st:eq:p-small-prob}
 \PP\{|p(V_z)|<s\mid V_{X\setminus\{z\}}\}
 \le\min\left\{1,\frac{8K}{\pi\eps}s\right\}.
\end{equation}
For $A>0$, the elementary integral is
\[
 \int_0^\infty\min\{1,Ae^{-t}\}\,dt
 =\begin{cases}A,&A\le1,\\1+\log A,&A\ge1,\end{cases}
 \qquad\le1+\log^+A.
\]
Thus \eqref{st:eq:p-small-prob} gives
\begin{align}
 \E[\log^-|p(V_z)|\mid V_{X\setminus\{z\}}]
 &=\int_0^\infty\PP\{|p(V_z)|<e^{-t}\mid\cdot\}\,dt\le1+\log^+\frac{8K}{\pi\eps}
 =:C_{\rm gl}(\eps,K).
 \notag
\end{align}
Taking expectations gives the gluing inequality at $E+i\eta$.
With the diagonal cost defined to be zero, it also holds when vertices coincide.

For each fixed hybrid law, scalar Cauchy-transform boundary values
exist a.e. \cite[Chapter~II]{Duren}. Fubini and countability of the
vertex pairs and triples give a deterministic full-measure energy set. At such an
energy, \eqref{st:eq:path-log-square} and Fatou imply
\[
 \PP\{G_{xy}(E+i0)=0\}=0,\qquad
 \E|\log|G_{xy}(E+i0)||^2\le L^2.
\]
Indeed, on a zero boundary value the negative logarithm diverges, which
would contradict the uniform second-moment bound. We put
$Z_\eta=\log|G_{xy}(E+i\eta)|$ and $Z_0=\log|G_{xy}(E+i0)|$. Then
\[
 Z_\eta\longrightarrow Z_0\quad\text{a.s.},\qquad
 \sup_{0<\eta\le1}\E\bigl(|Z_\eta|\one_{\{|Z_\eta|>T\}}\bigr)
 \le\frac{L^2}{T}\longrightarrow0.
\]
Uniform integrability therefore gives
\begin{equation*}
 \E\log^\pm|G_{xy}(E+i\eta)|
 \longrightarrow\E\log^\pm|G_{xy}(E+i0)|.
\end{equation*}
Taking boundary limits in \eqref{st:eq:glue-log} after expectation proves
\eqref{st:eq:glue}; the other bounds pass to the boundary in the same way.
\end{proof}

For a real sequence $w$, we write $|w|\le u$ when $\sup_x|w_x|\le u$, and
put $A_w=A+\diag(w)$.  Then $\|A_w\|\le r+u<a$.  We write
\[
 G_w^0(z)=(A_w-z)^{-1},\qquad F_w(E)=-G_w^0(E)_{ox}>0\quad(E>r+u).
\]
We set $B_w=A+\diag(w+u)$.  It has nonnegative entries and
$\|B_w\|\le r+2u$, so
\begin{equation}\label{st:eq:positive-series}
 F_w(E)=\sum_{k\ge0}(B_w^k)_{ox}(E+u)^{-k-1}.
\end{equation}
Thus $F_w(E)\ge(E+u)^{-d(o,x)-1}$.

\begin{lemma}\label{st:lem:phase}
There is $D<\infty$, depending only on $d,J,u$, such that for
$1\le d(o,x)\le m$, $E\in J$, $t>0$, and every $|w|\le u$,
\begin{equation}\label{st:eq:phase}
 |G_w^0(E+it)_{ox}-G_w^0(E)_{ox}|
 \le Dt(m+1)F_w(E).
\end{equation}
\end{lemma}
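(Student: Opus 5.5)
Our plan is to write the difference as an integral of the derivative in the imaginary direction and exploit the positivity of the series \eqref{st:eq:positive-series}. By the resolvent identity,
\[
 G_w^0(E+it)-G_w^0(E)=it\,G_w^0(E+it)G_w^0(E),
\]
so $|G_w^0(E+it)_{ox}-G_w^0(E)_{ox}|\le t\sum_y|G_w^0(E+it)_{oy}|\,|G_w^0(E)_{yx}|$. Since $E\in J$ lies at distance at least $a-(r+u)>0$ from $\sigma(A_w)$, both resolvents are analytic and admit Neumann expansions around $E+u$ in powers of the nonnegative matrix $B_w$.

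The first key step is a domination of entries. Writing $A_w-z=B_w-(z+u)$, we obtain
\[
 -G_w^0(z)=\sum_{k\ge0}B_w^k(z+u)^{-k-1},
\]
valid whenever $|z+u|>\|B_w\|$. This covers $z=E+it$ for every $t>0$, since $|E+it+u|\ge E+u>r+2u\ge\|B_w\|$ because $E\ge a>r+u$. Then $|(G_w^0(E+it))_{oy}|\le F^{(oy)}_w(E):=-G_w^0(E)_{oy}$ entrywise, because $|E+it+u|^{-k-1}\le(E+u)^{-k-1}$ and the coefficients are nonnegative. Consequently
\[
 |G_w^0(E+it)_{ox}-G_w^0(E)_{ox}|\le t\,\bigl(G_w^0(E)^2\bigr)_{ox}=t\sum_k(k+1)(B_w^k)_{ox}(E+u)^{-k-2},
\]
and the last expression is simply $-\partial_E F_w(E)$.

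The main obstacle is to compare $\sum_k(k+1)(B_w^k)_{ox}(E+u)^{-k-2}$ with $(m+1)F_w(E)$. The terms with $k\le 2m$ each cost at most a factor $(2m+1)/(E+u)$, which is fine. For large $k$ the factor $k+1$ is not bounded by $m+1$, so we need the tail to be dominated by the low-order term. We use $(B_w^k)_{ox}\le\|B_w\|^k\le(r+2u)^k$, so the tail is at most $\sum_{k>2m}(k+1)\theta^k(E+u)^{-2}$ with $\theta=(r+2u)/(E+u)\le(r+2u)/(a+u)<1$. This is bounded by $C\,(k_0\theta^{k_0})$ with $k_0=2m$, and it should be compared with the lower bound $F_w(E)\ge(E+u)^{-m-1}\ge(b+u)^{-m-1}$. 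That comparison requires $\theta^{2m}(b+u)^{m+1}\lesssim 1$, i.e.\ $\theta^2(b+u)\le1$, which can fail. If it does, we instead choose the cutoff $k_0=cm$ with $c$ large enough, depending only on $d,J,u$, that $\theta^{c}(b+u)<1$. The head then contributes a factor $(cm+1)/(E+u)$, which is still $O(m+1)$, and the tail is $O(1)\cdot F_w(E)$ uniformly in $m\ge1$, using the polynomial-times-geometric bound $k\theta^k\le C'\theta'^k$ with $\theta<\theta'<1$. Combining the head and tail bounds gives \eqref{st:eq:phase} with $D$ depending only on $a,b,r,u$, hence only on $d,J,u$.
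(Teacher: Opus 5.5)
Your argument is correct, and it reaches exactly the paper's intermediate quantity. The resolvent identity plus the entrywise domination $|G_w^0(E+it)_{oy}|\le -G_w^0(E)_{oy}$ gives $|G_w^0(E+it)_{ox}-G_w^0(E)_{ox}|\le t\sum_k(k+1)(B_w^k)_{ox}(E+u)^{-k-2}=-t\,\partial_EF_w(E)$. The paper gets the same bound termwise from the Neumann series, using $|(1+is)^{-k-1}-1|\le(k+1)s$.

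The two routes differ in how they prove $\sum_k(k+1)\pi_k\le C(m+1)$, where $\pi_k=(B_w^k)_{ox}(E+u)^{-k-1}/F_w(E)$. You truncate at $k_0\approx cm$ and control the tail with the crude bound $(B_w^k)_{ox}\le(r+2u)^k$ against $F_w(E)\ge(b+u)^{-m-1}$. This forces the extra choice of $c$ with $\theta'^c(b+u)<1$.

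The paper instead fixes a comparison energy $t_0\in(r+u,a)$ and sets $\rho_E=(E+u)/(t_0+u)>1$. It observes that $\sum_k\pi_k\rho_E^{k+1}=F_w(t_0)/F_w(E)\le c_0(E+u)^{q+1}$. Jensen's inequality for the convex map $s\mapsto\rho_E^s$ then bounds the mean $\sum_k\pi_k(k+1)$ by the logarithm of this exponential moment. This Chernoff-type step needs no cutoff or case distinction, and it gives the bound directly in terms of $q=d(o,x)$.

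Both arguments rest on the same two inputs. The first is an exponential upper bound on $(B_w^k)_{ox}$ relative to $(E+u)^k$, which the paper packages as $F_w(t_0)\le c_0$. The second is the geodesic lower bound $F_w(E)\ge(E+u)^{-d(o,x)-1}$. Your version is more hands-on; the paper's is shorter and gives cleaner constants.

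One small point to state explicitly: passing from $(E+u)^{-q-1}$ to $(E+u)^{-m-1}$ uses $E+u\ge1$. This holds here since $E\ge a>r\ge4$.
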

\begin{proof}
We fix a number $t_0$ with $r+u<t_0<a$
and put $c_0=(t_0-r-u)^{-1}$.  Since $\|A_w\|\le r+u$, the resolvent
bound gives
\begin{equation}\label{st:eq:phase-resolvent-bound}
 0<F_w(t_0)\le c_0.
\end{equation}
On the other hand, if $q=d(o,x)$, then the first nonzero term in the
positive series \eqref{st:eq:positive-series} occurs by order
$q$, and its coefficient is at least one.  Hence
\begin{equation*}
 F_w(E)\ge(E+u)^{-q-1}.
\end{equation*}

For fixed $E$ we normalize the positive terms in
\eqref{st:eq:positive-series} by setting
\begin{equation}\label{st:eq:phase-probabilities}
 \pi_k=\frac{(B_w^k)_{ox}(E+u)^{-k-1}}{F_w(E)},
 \qquad \pi_k\ge0,
 \qquad \sum_{k\ge0}\pi_k=1.
\end{equation}
We write
\[
 \rho_E=\frac{E+u}{t_0+u}>1.
\]
Then \eqref{st:eq:phase-resolvent-bound}--\eqref{st:eq:phase-probabilities}
give
\begin{align}
 \sum_{k\ge0}\pi_k\rho_E^{k+1}
 &=\frac{F_w(t_0)}{F_w(E)}\le c_0(E+u)^{q+1}.
 \label{st:eq:phase-exponential-moment}
\end{align}
Because $E\in[a,b]$, both $\rho_E$ and $E+u$ stay in compact intervals
bounded away from one and zero, respectively.  Taking logarithms and
using Jensen's inequality in \eqref{st:eq:phase-exponential-moment},
\[
 \rho_E^{\sum_k\pi_k(k+1)}
 \le\sum_k\pi_k\rho_E^{k+1},
\]
we obtain
\begin{equation}\label{st:eq:phase-first-moment}
 \sum_{k\ge0}\pi_k(k+1)
 \le C_1(q+1)\le C_1(m+1),
\end{equation}
where $C_1$ depends only on $d,J,u$.

We now use the same Neumann series at the complex point $E+it$:
\[
 -G_w^0(E+it)_{ox}
 =\sum_{k\ge0}(B_w^k)_{ox}(E+u+it)^{-k-1}.
\]
Subtracting the series at $t=0$ and factoring out $F_w(E)$ gives
\begin{align*}
 \frac{|G_w^0(E+it)_{ox}-G_w^0(E)_{ox}|}{F_w(E)}
 &\le\sum_k\pi_k
   \left|\left(1+\frac{it}{E+u}\right)^{-k-1}-1\right|\le\frac{t}{E+u}\sum_k\pi_k(k+1).
\end{align*}
The last inequality follows by integrating the derivative of
$(1+is)^{-k-1}$ from $0$ to $t/(E+u)$.  Since $E+u\ge a-u>0$,
\eqref{st:eq:phase-first-moment} proves \eqref{st:eq:phase} with
$D=C_1/(a-u)$.
\end{proof}

We let $R\ge2$, $\Lambda=B_R(o)$ and $D_0=B_{R-1}(o)$. We define the inner
vertex boundary by
$\partial D_0=\{x\in D_0:N(x)\cap D_0^c\ne\varnothing\}$.
We keep $w$ on $\Lambda$ and
an arbitrary bounded real potential $Z$ on $\Lambda^c$.  We let
$R_\Lambda^w(z)$ denote the resolvent of
\[
 A+\diag(w\one_\Lambda+Z\one_{\Lambda^c})-i\eps P_\Lambda.
\]
We extend $w$ by zero outside $\Lambda$ when defining $A_w$.

\begin{lemma}\label{st:lem:shield}
We set
\begin{equation}\label{st:eq:shield-def}
 S_R=\frac Q{a-r-u}
 \left(\frac1\eps+\frac1{a-r-u}\right)
 \left(\frac{r+u}{a}\right)^{R-1}.
\end{equation}
For $x\in D_0$, $E\ge a$ and $\eta>0$,
\[
 |R_\Lambda^w(E+i\eta)_{ox}-G_w^0(E+i(\eta+\eps))_{ox}|\le S_R.
\]
This bound is uniform in $w$ and $Z$.
\end{lemma}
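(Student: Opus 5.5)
The plan is to compare the two resolvents through the geometric resolvent identity across the edge boundary of $D_0$, not through the plain second resolvent identity. The plain identity would involve the arbitrary exterior potential $Z$, and we want a bound uniform in $Z$.

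Put $z=E+i\eta$, $K_1=A+\diag(w\one_\Lambda+Z\one_{\Lambda^c})-i\eps P_\Lambda$ (so $R_\Lambda^w=(K_1-z)^{-1}$), and $K_2=A_w-i\eps$ (so $G_w^0(E+i(\eta+\eps))=(K_2-z)^{-1}$). The key observation is that the Dirichlet restrictions of $K_1$ and $K_2$ to $D_0=B_{R-1}(o)$ coincide. Both equal $A_{D_0}+\diag(w)-i\eps$ there, because $D_0\subset\Lambda$.

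Call this common restriction $K_{D_0}$ and write $\Gamma$ for the hopping between $D_0$ and $D_0^c$. For $o,x\in D_0$ and $j=1,2$, the geometric resolvent identity gives
\[
 (K_j-z)^{-1}_{ox}=(K_{D_0}-z)^{-1}_{ox}
 -\sum_{\substack{u\in\partial D_0,\ v\notin D_0\\ u\sim v}}
 (K_{D_0}-z)^{-1}_{ou}\,(K_j-z)^{-1}_{vx}.
\]
Every such $v$ lies in $S_R(o)\subset\Lambda$. Subtracting the two identities cancels the first term. We then bound the remaining double sum by Cauchy--Schwarz. Each $u$ has at most $Q$ neighbors $v$, and each $v$ has at most $Q$ neighbors $u$. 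Hence
\[
 |(K_1-z)^{-1}_{ox}-(K_2-z)^{-1}_{ox}|
 \le Q\,\bigl\|P_{\partial D_0}(K_{D_0}-z)^{-1}\delta_o\bigr\|
 \Bigl(\|P_\Lambda(K_1-z)^{-1}\delta_x\|+\|(K_2-z)^{-1}\delta_x\|\Bigr).
\]
We use the complex symmetry $(K_j-z)^{-1}_{vx}=(K_j-z)^{-1}_{xv}$ here, which holds because all matrices involved are complex symmetric.

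The three factors are estimated as follows.

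\textbf{First factor.} Take $h=(K_1-z)^{-1}\delta_x$ with $x\in\Lambda$. Then
\[
 \ImPart\langle h,(K_1-z)h\rangle\le-\eps\|P_\Lambda h\|^2 ,
\]
while the left side equals $-\ImPart\overline{h(x)}$ up to sign. This gives $\eps\|P_\Lambda h\|^2\le|h(x)|\le\|P_\Lambda h\|$, so $\|P_\Lambda h\|\le1/\eps$. This is the Ward-type argument already used in \Cref{bc:lem:finite-jensen}. It is uniform in $Z$ precisely because we only need the part of $h$ inside $\Lambda$, and that is the reason for shielding at $\partial D_0$ rather than at $\partial\Lambda$.

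\textbf{Second factor.} $A_w$ is selfadjoint with $\|A_w\|\le r+u$. The spectral parameter $E+i(\eta+\eps)$ has real part at least $a$, so
\[
 \|(K_2-z)^{-1}\|\le\frac1{a-r-u}.
\]

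\textbf{Third factor (decay).} Set $\zeta=E+i(\eta+\eps)$, so $|\zeta|\ge a>r+u\ge\|A_w|_{D_0}\|$. The Neumann series
\[
 (K_{D_0}-z)^{-1}=-\sum_{k\ge0}\zeta^{-k-1}(A_w|_{D_0})^k
\]
converges. Every vertex of $\partial D_0$ is at distance $R-1$ from $o$, and diagonal factors do not move, so only $k\ge R-1$ contribute to $P_{\partial D_0}(\cdot)\delta_o$. Therefore
\[
 \bigl\|P_{\partial D_0}(K_{D_0}-z)^{-1}\delta_o\bigr\|
 \le\sum_{k\ge R-1}\frac{(r+u)^k}{a^{k+1}}
 =\frac1{a-r-u}\Bigl(\frac{r+u}{a}\Bigr)^{R-1}.
\]

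Multiplying the three bounds gives exactly $S_R$ in \eqref{st:eq:shield-def}, and none of them depends on $w$ (beyond $|w|\le u$) or on $Z$.

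The main obstacle I expect is justifying the geometric resolvent identity and the complex-symmetry step for the non-selfadjoint $K_1$, which also has possibly unbounded-looking exterior data. Invertibility of $K_1-z$ and $K_{D_0}-z$ follows from the numerical-range bound $\ImPart\langle h,(\cdot-z)h\rangle\le-\eta\|h\|^2$, exactly as in \Cref{bc:lem:finite-jensen}. The identity itself is then the second resolvent identity applied to $K_j$ and $K_{D_0}\oplus K_{j,D_0^c}$, which differ by the bounded operator $\Gamma$.
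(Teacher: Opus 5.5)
Your argument is correct and is essentially the paper's proof. It uses the same ingredients: the dissipation bound $\|P_\Lambda R^w_\Lambda P_\Lambda\|\le\eps^{-1}$, the Neumann-series decay from $o$ to $\partial D_0$, and the geometric resolvent identity across $\partial D_0$ with a boundary block of norm at most $Q$. The only differences are bookkeeping. You subtract the two identities against the common Dirichlet reference in one step, whereas the paper compares each resolvent separately with $((A_w)_{D_0}-E-i(\eta+\eps))^{-1}$ and adds the two errors. Your complex-symmetry remark is also unnecessary, since $v\mapsto(K_j-z)^{-1}_{vx}$ is already the vector $(K_j-z)^{-1}\delta_x$.
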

\begin{proof}
We let
\[
 \mathcal H_\Lambda
 =A+\diag(w\one_\Lambda+Z\one_{\Lambda^c})-i\eps P_\Lambda.
\]
The resolvent in the statement is $(\mathcal H_\Lambda-E-i\eta)^{-1}$.
If
$f$ is supported in $\Lambda$ and
$h=R_\Lambda^w(E+i\eta)f$, then
\[
 (\mathcal H_\Lambda-E-i\eta)h=f.
\]
Taking the imaginary part of the scalar product with $h$ gives
\begin{equation}\label{st:eq:shield-dissipation}
 \eta\|h\|^2+\eps\|P_\Lambda h\|^2
 =\ImPart\ip f h.
\end{equation}
Since $f=P_\Lambda f$, Cauchy--Schwarz in
\eqref{st:eq:shield-dissipation} yields
\begin{equation}\label{st:eq:shield-compression}
 \|P_\Lambda R_\Lambda^w(E+i\eta)P_\Lambda\|
 \le(\eps+\eta)^{-1}\le\eps^{-1}.
\end{equation}
This estimate is independent of the exterior potential $Z$.

We next quantify propagation from the root to the inner boundary.  We put
\[
 T=(A_w)_{D_0},\qquad \|T\|\le r+u<a.
\]
For $E\ge a$, the Dirichlet resolvent has the convergent Neumann series
\[
 (T-E-i(\eta+\eps))^{-1}
 =-\sum_{k\ge0}
   T^k(E+i(\eta+\eps))^{-k-1}.
\]
Every vertex in $\partial D_0$ is at graph distance at least $R-1$ from
$o$, so
\[
 P_{\partial D_0}T^k\delta_o=0
 \text{ for }0\le k<R-1.
\]
Hence
\begin{align}
 \|P_{\partial D_0}(T-E-i(\eta+\eps))^{-1}\delta_o\|
 &\le\sum_{k\ge R-1}\frac{\|T\|^k}{E^{k+1}}\le\frac1{a-r-u}
      \left(\frac{r+u}{a}\right)^{R-1}.
 \label{st:eq:shield-boundary-row}
\end{align}

We let $C=A_{D_0,D_0^c}:\ell^2(D_0^c)\to\ell^2(D_0)$ be the boundary
adjacency block.  Since the graph has degree $Q$,
$\|C\|\le Q$.  We apply the geometric resolvent identity to compare the
pinned resolvent with the Dirichlet resolvent on $D_0$:
\begin{align}
 &P_{D_0}R_\Lambda^w(E+i\eta)P_{D_0}
 -(T-E-i(\eta+\eps))^{-1}\notag\\
 &\qquad=
 -(T-E-i(\eta+\eps))^{-1}
 C\,P_{X\setminus D_0}R_\Lambda^w(E+i\eta)P_{D_0}.
 \label{st:eq:shield-geometric-resolvent}
\end{align}
The outer endpoint of every edge represented by $C$ still lies in the
shell $\Lambda\setminus D_0$.  Hence the relevant second resolvent factor
is controlled by \eqref{st:eq:shield-compression}.  Evaluating
\eqref{st:eq:shield-geometric-resolvent} between $\delta_o$ and
$\delta_x$, $x\in D_0$, and using
\eqref{st:eq:shield-boundary-row} gives the error
\begin{equation}\label{st:eq:shield-first-error}
 \frac Q{\eps(a-r-u)}
 \left(\frac{r+u}{a}\right)^{R-1}.
\end{equation}

We finally compare the Dirichlet reference resolvent on $D_0$ with the
whole-space reference $G_w^0(E+i(\eta+\eps))$.  The same geometric
resolvent identity applies, but now both resolvents are bounded by
$(a-r-u)^{-1}$.  This contributes at most
\begin{equation}\label{st:eq:shield-second-error}
 \frac Q{(a-r-u)^2}
 \left(\frac{r+u}{a}\right)^{R-1}.
\end{equation}
Adding \eqref{st:eq:shield-first-error} and
\eqref{st:eq:shield-second-error} gives exactly
\eqref{st:eq:shield-def}.
\end{proof}

We put
\[
 p_m(o,x)=Q^{-m}(A^m)_{ox},\qquad
 H_m=-\sum_xp_m(o,x)\log p_m(o,x),\qquad
 h_{\rm rw}=\lim_{m\to\infty}H_m/m.
\]
The proof of \eqref{bc:eq:walks} gives
$h_{\rm rw}=\inf_mH_m/m$, and \eqref{bc:eq:prob-bounds} gives
$Q^{-m}\le p_m(o,x)\le(r/Q)^m$ on its support.
We set
\[
 \beta=b+v<Q,
 \qquad \delta=\log(Q/\beta)>0.
\]

\begin{lemma}\label{st:lem:averaged-walk}
We let $w_x$, $x\in B_m(o)$, be iid with law $\nu$, and let $\mathbb E_\nu$
average these variables. We extend $w$ outside the ball with $|w|\le u$.  If $p_m(o,x)>0$ and $E\in J$,
then
\begin{equation}\label{st:eq:averaged-walk}
 \E_\nu\log F_w(E)
 \ge\log(A^m)_{ox}-(m+1)\log\beta
 =\log p_m(o,x)+m\delta-\log\beta.
\end{equation}
\end{lemma}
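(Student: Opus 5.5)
The plan is to expand $F_w(E)=\bigl((E-A_w)^{-1}\bigr)_{ox}$ as a positive sum over walks, keep only the walks of length exactly $m$, and then apply two elementary convexity inequalities: the AM--GM inequality for the logarithm of a sum, and Jensen's inequality in the independent variables $w_y$.

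\emph{Step 1 (walk expansion).} Write $W=\diag(w)$, with $w$ extended outside $B_m(o)$ so that $|w|\le u$. For $E\in J$ we have $E-w_y\ge a-u>r\ge0$ at every vertex. Hence $E-W$ is invertible with positive diagonal, and
\[
 \|A(E-W)^{-1}\|\le\frac{r}{a-u}<1 ,
\]
because $r+u<a$. The Neumann series
\[
 (E-A-W)^{-1}=\sum_{n\ge0}(E-W)^{-1}\bigl(A(E-W)^{-1}\bigr)^n
\]
therefore converges. Its $(o,x)$ entry is a sum over nearest-neighbour walks $\pi=(\pi_0=o,\pi_1,\dots,\pi_n=x)$ of the products $\prod_{j=0}^n(E-w_{\pi_j})^{-1}$, and every such term is strictly positive. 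This is the expansion behind \eqref{st:eq:positive-series}, organized by edge steps instead of by the shifted operator $B_w$.

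\emph{Step 2 (restrict to length $m$ and apply AM--GM).} Discard every walk whose length differs from $m$. Exactly $N:=(A^m)_{ox}\ge1$ walks of length $m$ join $o$ to $x$, because $p_m(o,x)>0$. Concavity of the logarithm gives
\[
 \log F_w(E)\ge\log\sum_{|\pi|=m}\prod_{j=0}^m\frac1{E-w_{\pi_j}}
 \ge\log N-\frac1N\sum_{|\pi|=m}\sum_{j=0}^m\log(E-w_{\pi_j}).
\]
Every vertex $\pi_j$ on such a walk lies in $B_m(o)$, so only the random coordinates enter the right-hand side.

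\emph{Step 3 (average over $\nu$).} Take $\mathbb E_\nu$ and use linearity; repeated visits to the same vertex cause no difficulty. The function $s\mapsto-\log(E-s)$ is convex on $[-u,u]$, so Jensen's inequality and \eqref{st:eq:intro-noise} give
\[
 \mathbb E_\nu\bigl[-\log(E-w_y)\bigr]\ge-\log\Bigl(E-\int s\,\dd\nu(s)\Bigr)\ge-\log(E+v)\ge-\log(b+v)=-\log\beta .
\]
Here $E+v>0$ and $E\le b$. Each of the $m+1$ terms on a walk contributes at least $-\log\beta$, so
\[
 \mathbb E_\nu\log F_w(E)\ge\log(A^m)_{ox}-(m+1)\log\beta .
\]
The second form of the bound follows from $\log(A^m)_{ox}=\log p_m(o,x)+m\log Q$ together with $\delta=\log(Q/\beta)$.

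The main point requiring care is Step 1. It uses convergence of the Neumann series and positivity of every term for all $|w|\le u$, both supplied by the margin $r+u<a$. It also uses that Jensen's inequality is applied with $E+v$ rather than $E+u$ in the denominator. This is exactly why the hypothesis is on the mean of $\nu$, while the support bound is used only for convergence. Everything else is routine.
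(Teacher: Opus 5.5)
Your proof is correct and follows the paper's argument essentially step for step. Both expand $F_w(E)$ as a Neumann series over walks, retain the $(A^m)_{ox}$ walks of length $m$, apply AM--GM, and then apply Jensen to the concave map $y\mapsto\log(E-y)$, using only the mean bound $v$. The only cosmetic difference is that the paper symmetrizes with $D_E^{-1/2}AD_E^{-1/2}$ where you use $A(E-W)^{-1}$, and the two give the same walk sum.
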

\begin{proof}
We put $D_E=\diag(E-w_x)$ and $K_E=D_E^{-1/2}AD_E^{-1/2}$. Since
$E>r+u$ and $|w|\le u$,
\[
 \|K_E\|\le\frac r{E-u}<1,\qquad
 E-A_w=D_E^{1/2}(I-K_E)D_E^{1/2}.
\]
The Neumann expansion therefore gives
\[
 (E-A_w)^{-1}=D_E^{-1/2}\sum_{k\ge0}K_E^kD_E^{-1/2},\qquad
 F_w(E)=\sum_{k\ge0}\sum_{\substack{\gamma:o\to x\\|\gamma|=k}}
             \prod_{j=0}^k(E-w_{\gamma_j})^{-1}.
\]
Here $\gamma$ runs over nearest-neighbor walks. Each summand is
nonnegative; retaining $k=m$ yields
\begin{equation}\label{st:eq:walk-fixed-length}
 F_w(E)\ge
 \sum_{\substack{\gamma_0=o,\ \gamma_m=x\\
                   \gamma_j\sim\gamma_{j+1}\ (0\le j<m)}}
 \prod_{j=0}^m(E-w_{\gamma_j})^{-1}.
\end{equation}
The number of summands is precisely $(A^m)_{ox}$.  We denote these positive
summands by $a_1,\dots,a_N$, $N=(A^m)_{ox}$.  The arithmetic--geometric
mean inequality gives
\[
 \log\left(\sum_{\ell=1}^N a_\ell\right)
 \ge \log N+\frac1N\sum_{\ell=1}^N\log a_\ell.
\]
Applying this to \eqref{st:eq:walk-fixed-length}, then averaging over the
iid background variables $w$, yields
\begin{align}
 \E_\nu\log F_w(E)
 &\ge \log(A^m)_{ox}
 -\frac1{(A^m)_{ox}}
   \sum_{\gamma:o\to x,\ |\gamma|=m}
   \sum_{j=0}^m\E_\nu\log(E-w_{\gamma_j})\notag\\
 &=\log(A^m)_{ox}
 -(m+1)\int\log(E-y)\,d\nu(y).
 \label{st:eq:walk-average-log}
\end{align}
Each marginal has law $\nu$; linearity of expectation counts repeated
visits with their multiplicity.

Since $E\ge a>r+u$ and $|y|\le u$, the function
$y\mapsto\log(E-y)$ is concave on the support of $\nu$.  Jensen's
inequality and \eqref{st:eq:intro-noise} imply
\begin{align}
 \int\log(E-y)\,d\nu(y)
 &\le \log\left(E-\int y\,d\nu(y)\right)\le\log(E+v)\le\log(b+v)=\log\beta.
 \label{st:eq:walk-jensen}
\end{align}
We insert \eqref{st:eq:walk-jensen} into
\eqref{st:eq:walk-average-log}.  Since
$(A^m)_{ox}=Q^mp_m(o,x)$ and $\delta=\log(Q/\beta)$,
\begin{align*}
 \E_\nu\log F_w(E)
 &\ge \log p_m(o,x)+m\log Q-(m+1)\log\beta\\
 &=\log p_m(o,x)+m\delta-\log\beta,
\end{align*}
which is \eqref{st:eq:averaged-walk}.  Replacing $\nu$ by its reflection
changes the sign of its mean but preserves the bound
$|\int y\,d\nu(y)|\le v$, so the same proof applies to the reflected law.
\end{proof}

\begin{proposition}\label{st:prop:comparison}
We choose an odd $m$ such that $2D\eps(m+1)\le1/2$, with $D$ from
\Cref{st:lem:phase}.  We choose $R\ge m+2$ so that
\begin{equation}\label{st:eq:R-choice}
 S_R\le\tfrac14(b+u)^{-m-1}.
\end{equation}
We put $N=|B_R(o)|$, let $L$ be given by \Cref{st:lem:logs}, and set
$t=d_{\rm TV}(f,\psi_{\eps,\nu})$.  For almost every $E\in J$ and every
$x\in\supp p_m(o,\cdot)$,
\begin{equation}\label{st:eq:comparison}
 T_E^f(o,x)\le-\log p_m(o,x)-m\delta+\log(4\beta)
 +C_+(\eps,K)+2L\sqrt{Nt}.
\end{equation}
The same estimate holds at $E+i\eta$ for $0<\eta\le\eps$.
\end{proposition}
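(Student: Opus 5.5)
The plan is to compare the iid law $f$ with a hybrid law, take the Cauchy expectation exactly by \Cref{bc:lem:finite-jensen}, and then bound the resulting deterministic resolvent entry from below by \Cref{st:lem:shield,st:lem:phase,st:lem:averaged-walk}. I work first at $z=E+i\eta$ with $0<\eta\le\eps$ and pass to the boundary only at the end.

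\emph{Step 1: change of law on the ball.} Let $\Lambda=B_R(o)$, with $N=|\Lambda|$. The hybrid law replaces $f$ by $\psi_{\eps,\nu}$ at the sites of $\Lambda$ and keeps $f$ outside. Choose a maximal coupling of each replaced coordinate; the two configurations then differ on an event $\mathcal E$ with $\PP(\mathcal E)\le Nt$. Put $X=\log^-|G^f_{ox}|$ and $Y=\log^-|G^{\rm hyb}_{ox}|$. \Cref{st:lem:logs} gives $L^2$ norms at most $L$, uniformly over hybrid laws and for $1\le d(o,x)\le m$. Cauchy--Schwarz on $\mathcal E$ then gives
\[
 T^f_{E,\eta}(o,x)\le \E Y+\|X\one_{\mathcal E}\|_1+\|Y\one_{\mathcal E}\|_1\le\E Y+2L\sqrt{Nt}.
\]

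\emph{Step 2: the Cauchy average becomes an imaginary potential.} Write each hybrid variable on $\Lambda$ as $w_y+C_y$, where the $w_y$ are iid with law $\nu$ and the $C_y$ have density $c_\eps$. Then
\[
 \E Y=\E\log^+|G^{\rm hyb}_{ox}|-\E\log|G^{\rm hyb}_{ox}|,
\]
and the first term is at most $C_+(\eps,K)$ by \Cref{st:lem:logs}. For the second term, condition on $w$ and on the exterior potential $Z$. \Cref{bc:lem:finite-jensen} then gives $\E_{\Lambda,C}\log|G^{\rm hyb}_{ox}(z)|\ge\log|R^w_\Lambda(z)_{ox}|$.

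\emph{Step 3: deterministic lower bound.} Since $d(o,x)\le m\le R-2$, the vertex $x$ lies in $D_0$. \Cref{st:lem:shield} and then \Cref{st:lem:phase}, used with imaginary part $\eta+\eps\le2\eps$, give
\[
 |R^w_\Lambda(E+i\eta)_{ox}|\ge F_w(E)\bigl(1-2D\eps(m+1)\bigr)-S_R\ge\tfrac12F_w(E)-S_R .
\]
Because $E\le b$, we have $F_w(E)\ge(b+u)^{-m-1}$, so the choice \eqref{st:eq:R-choice} gives $S_R\le F_w/4$ and hence $|R^w_\Lambda|\ge F_w/4$. \Cref{st:lem:averaged-walk} requires $w$ iid with law $\nu$ on $B_m(o)\subset\Lambda$ and $|w|\le u$ elsewhere, which holds. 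Averaging over $w$ with that lemma yields
\[
 \E\log|G^{\rm hyb}_{ox}|\ge\log p_m(o,x)+m\delta-\log(4\beta).
\]
Combining Steps 1--3 gives \eqref{st:eq:comparison} at $E+i\eta$.

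\emph{Step 4: boundary values.} Let $\eta\downarrow0$. By the last assertion of \Cref{st:lem:logs}, the expected logarithms converge to their boundary values at almost every energy, through uniform integrability from the uniform $L^2$ bound. This gives the estimate for $T^f_E$.

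I expect Step 1 to be the delicate point. The coupling bound needs the $L^2$ control of $\log|G|$ to hold uniformly for every partially replaced law, whatever the number of replaced sites. That uniformity is exactly what \Cref{st:lem:logs} asserts, so Step 1 should go through, but it is the step to check carefully. A secondary point is to apply Jensen in the Cauchy variables conditionally on $w$ and $Z$ before averaging over $w$; this is legitimate because the bound in \Cref{bc:lem:finite-jensen} holds for arbitrary fixed bounded $w$ and $Z$.
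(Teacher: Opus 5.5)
Your proposal is correct and follows the paper's proof essentially step for step: the maximal coupling on $B_R(o)$ with the uniform $L^2$ bound from \Cref{st:lem:logs}, the replacement of the Cauchy variables by $-i\eps$ via \Cref{bc:lem:finite-jensen}, and the shield and phase lemmas giving $|R^w_\Lambda(E+i\eta)_{ox}|\ge F_w(E)/4$. It then uses the averaged-walk bound and uniform integrability for $\eta\downarrow0$, exactly as the paper does. The only differences are cosmetic: you couple $\log^-$ directly and apply the $C_+$ bound to the hybrid law rather than to $f$, which \Cref{st:lem:logs} permits. You should also state explicitly that $x\ne o$, which holds because $m$ is odd and the graph is bipartite.
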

\begin{proof}
We replace the law only in $\Lambda=B_R(o)$ by $\psi_{\eps,\nu}$ and keep
the exterior $f$ variables unchanged.  We couple the $N=|\Lambda|$ changed
coordinates so that each pair differs with probability $t$; this is a
maximal coupling. If $\mathcal M$ is the event that at least one
coordinate differs, then the union bound gives
\begin{equation*}
 \PP(\mathcal M)\le Nt.
\end{equation*}
On $\mathcal M^c$ the two operators, and hence the relevant resolvent
entries, agree.  We put
\[
 Z_1=\log|G^f_{ox}(E+i\eta)|,
 \qquad Z_2=\log|G^{\rm hyb}_{ox}(E+i\eta)|.
\]
By \Cref{st:lem:logs}, $\|Z_1\|_2,\|Z_2\|_2\le L$.  Therefore
\begin{align}
 |\E Z_1-\E Z_2|
 &=|\E[(Z_1-Z_2)\one_{\mathcal M}]|\le(\|Z_1\|_2+\|Z_2\|_2)\PP(\mathcal M)^{1/2}
 \le2L\sqrt{Nt}.
 \label{st:eq:coupling}
\end{align}

In the hybrid ball we write each potential as
\[
 V_x=C_x+w_x,
\]
where $C_x$ is a full Cauchy variable of scale $\eps$ and $w_x\sim\nu$.
We condition on the exterior variables and on the entire bounded background
$w$.  Lemma~\ref{bc:lem:finite-jensen} replaces the $C_x$ in the finite
ball by $-i\eps$ and gives
\begin{equation}\label{st:eq:hybrid-jensen-lower}
 \E_{C,\Lambda}\log|G^{\rm hyb}_{ox}(E+i\eta)|
 \ge\log|R_\Lambda^w(E+i\eta)_{ox}|.
\end{equation}
To bound the reference kernel, we apply \Cref{st:lem:phase} with
$t=\eta+\eps$. Since $0<\eta\le\eps$,
\begin{equation*}
 D(\eta+\eps)(m+1)
 \le2D\eps(m+1)\le\frac12,
\end{equation*}
and hence
\begin{equation}\label{st:eq:comparison-phase-loss}
 |G_w^0(E+i(\eta+\eps))_{ox}-G_w^0(E)_{ox}|
 \le\frac12F_w(E),
 \qquad 0<\eta\le\eps.
\end{equation}
By \Cref{st:lem:shield} and \eqref{st:eq:R-choice},
\begin{equation}\label{st:eq:comparison-shield-loss}
 |R_\Lambda^w(E+i\eta)_{ox}
   -G_w^0(E+i(\eta+\eps))_{ox}|
 \le\frac14(b+u)^{-m-1}.
\end{equation}
For $p_m(o,x)>0$, $(A^m)_{ox}\ge1$ and $B_w\ge A$ entrywise.  The
$m$-th term of \eqref{st:eq:positive-series} therefore gives
\begin{equation*}
 F_w(E)
 \ge (B_w^m)_{ox}(E+u)^{-m-1}
 \ge (A^m)_{ox}(b+u)^{-m-1}
 \ge(b+u)^{-m-1}.
\end{equation*}
Combining this with
\eqref{st:eq:comparison-phase-loss}--\eqref{st:eq:comparison-shield-loss}
and $G_w^0(E)_{ox}=-F_w(E)$ yields
\begin{align}
 |R_\Lambda^w(E+i\eta)_{ox}|
 &\ge F_w(E)-\frac12F_w(E)
       -\frac14(b+u)^{-m-1}\ge\frac14F_w(E).
 \label{st:eq:reference-quarter}
\end{align}
Taking logarithms in \eqref{st:eq:reference-quarter}, inserting into
\eqref{st:eq:hybrid-jensen-lower}, and averaging over $w$ gives
\begin{align}
 \E\log|G^{\rm hyb}_{ox}(E+i\eta)|
 &\ge \E_\nu\log F_w(E)-\log4\ge \log p_m(o,x)+m\delta-\log(4\beta),
 \label{st:eq:hybrid-log-lower}
\end{align}
where the second line is \Cref{st:lem:averaged-walk}.

The coupling estimate \eqref{st:eq:coupling} transfers
\eqref{st:eq:hybrid-log-lower} back to the target law:
\begin{equation}\label{st:eq:target-log-lower}
 \E\log|G^f_{ox}(E+i\eta)|
 \ge \log p_m(o,x)+m\delta-\log(4\beta)-2L\sqrt{Nt}.
\end{equation}
We finally use the identity
\[
 \log^-s=\log^+s-\log s,
 \qquad s>0.
\]
Taking expectations, applying the positive-log bound from
\Cref{st:lem:logs}, and inserting \eqref{st:eq:target-log-lower} gives
\begin{align*}
 T^f_{E,\eta}(o,x)
 &=\E\log^-|G^f_{ox}(E+i\eta)|\le-\log p_m(o,x)-m\delta+
      \log(4\beta)+C_+(\eps,K)+2L\sqrt{Nt},
\end{align*}
which is \eqref{st:eq:comparison} for $0<\eta\le\eps$.  Because $m$ is odd and the graph is bipartite, $p_m(o,o)=0$, so every
endpoint in the block estimate is off diagonal.  For almost every $E\in J$,
\Cref{st:lem:logs} gives
\begin{equation*}
 \sup_{0<\eta\le\eps}\E|\log|G^f_{ox}(E+i\eta)||^2\le L^2,
 \qquad
 T_E^f(o,x)=\lim_{\eta\downarrow0}T_{E,\eta}^f(o,x).
\end{equation*}
Thus letting $\eta\downarrow0$ in the preceding bound proves
\eqref{st:eq:comparison} for the real boundary values.
\end{proof}

\subsubsection{Stability proof}
\begin{proof}[Proof of \Cref{thm:law-stability}]
We let $m$ be the smallest odd integer satisfying $m\ge\eps^{-1/2}$ and put
\[
 K_\eps=C_{\rm gl}(\eps,K)+C_+(\eps,K)+\log(4\beta)+1.
\]
Then $K_\eps=O(\log(1/\eps))$.  Since $m$ is the smallest odd integer
with $m\ge\eps^{-1/2}$,
\begin{equation*}
 \eps^{-1/2}\le m\le\eps^{-1/2}+2,
 \qquad
 \eps(m+1)\le\sqrt\eps+3\eps\longrightarrow0,
 \qquad
 \frac{K_\eps}{m}=O\!\left(\sqrt\eps\log\frac1\eps\right)\longrightarrow0.
\end{equation*}
Together with $H_m/m\to h_{\rm rw}$, this gives, for all sufficiently small
$\eps$,
\begin{equation*}
 2D\eps(m+1)\le\tfrac12,\qquad
 H_m/m\le h_{\rm rw}+\delta/4,\qquad
 K_\eps/m\le\delta/4.
\end{equation*}
We choose $R$ by \eqref{st:eq:R-choice}, then fix $L$ from
\Cref{st:lem:logs}, and set
\begin{equation}\label{st:eq:tstar-definition}
 N=|B_R(o)|<\infty,
 \qquad
 t_*=(16NL^2)^{-1}>0.
\end{equation}
These choices depend only on $d,J,u,v,K,\eps$ and are uniform over
all $\nu$ satisfying \eqref{st:eq:intro-noise}.

For $x\in\supp p_m(o,\cdot)$ we define
\[
 b_x=-\log p_m(o,x)-m\delta+K_\eps.
\]
The probability bound in \eqref{bc:eq:prob-bounds} gives
$-\log p_m(o,x)\ge m\log(Q/r)$.  Since
$\beta=b+v>r$, we have $\delta=\log(Q/\beta)<\log(Q/r)$, and therefore
\begin{equation*}
 b_x\ge m\log\frac{Q}{r}-m\log\frac{Q}{\beta}
 =m\log\frac\beta r\ge0.
\end{equation*}
We set $t=d_{\rm TV}(f,\psi_{\eps,\nu})<t_*$. By
\eqref{st:eq:tstar-definition},
\begin{equation*}
 2L\sqrt{Nt}<\frac12.
\end{equation*}
Thus \Cref{st:prop:comparison} and the definition of $K_\eps$ yield, for
almost every $E\in J$,
\begin{align}
 T_E^f(o,x)+C_{\rm gl}(\eps,K)
 &\le -\log p_m(o,x)-m\delta+K_\eps-\frac12=b_x-\frac12<b_x.
 \label{st:eq:block-pointwise}
\end{align}
Averaging over the block endpoints gives
\begin{align}
 \sum_xp_m(o,x)b_x
 &=H_m-m\delta+K_\eps\le m\left(h_{\rm rw}+\frac\delta4\right)-m\delta
      +m\frac\delta4
 =m\left(h_{\rm rw}-\frac\delta2\right),
 \label{st:eq:block}\\
 mh_{\rm rw}-\sum_xp_m(o,x)b_x
 &\ge\frac{m\delta}{2}>0.
 \notag
\end{align}
Equations \eqref{st:eq:block-pointwise}--\eqref{st:eq:block} are exactly
the strict finite-block inequalities required in \Cref{bc:prop:block}.
The density floor \eqref{st:eq:intro-floor} and
\Cref{cor:shifted-dos} give positivity of the averaged local density on
$J$, since $r\in\sigma(A)$ and $\ell>b-r$.

The hypotheses of \Cref{bc:prop:block} hold with
$C_{\rm gl}=C_{\rm gl}(\eps,K)$ and $C_+=C_+(\eps,K)$ by
\Cref{st:lem:logs} and \eqref{st:eq:block}.  Hence the spectrum is purely
a.c. on $J$ and every local a.c. density is positive there almost
everywhere.

For $-J$, we put $f^-(t)=f(-t)$ and $\nu^-(S)=\nu(-S)$ for Borel sets $S$.
The same hypotheses and constants apply. For $H(V)=A+\diag(V)$, the bipartite sign unitary
$(\mathcal J\psi)(x)=(-1)^{d(o,x)}\psi(x)$ satisfies
$-\mathcal J H(V)\mathcal J^*=H(-V)$. Thus the conclusion holds on $-J$;
intersect the two probability-one events.
\end{proof}

\begin{proof}[Proof of \Cref{thm:stability_under_change_density}]
Apply \Cref{thm:law-stability} with $\nu=\delta_0$, the unit
point mass at zero, $u=v=0$, and $K=2$.
\end{proof}

\begin{proof}[Proof of \Cref{thm:bounded-cauchy}]
We apply \Cref{thm:law-stability} with $\nu=\delta_0$, the unit point mass
at zero, hence $u=v=0$, and
$K=2$. Define the retained Cauchy mass by
\[
 q_{\eps,M}\coloneq\int_{-M}^M c_\eps(v)\,\dd v
 =\frac2\pi\arctan(M/\eps).
\]
For $M\ge\eps$,
\[
 \varrho_{\eps,M}\le2c_\eps,\qquad
 d_{\rm TV}(\varrho_{\eps,M},c_\eps)
 =1-q_{\eps,M}\le\frac{2\eps}{\pi M}.
\]
After choosing $\eps$ sufficiently small, we choose $M$ so large that this
total-variation distance is below the tolerance in
\Cref{thm:law-stability} and $M>b-r$.  Then
$\varrho_{\eps,M}$ has a positive lower bound on some
$[-\ell,\ell]$ with $\ell>b-r$, so all hypotheses of that theorem hold.
This proves \eqref{bc:eq:conclusion}--\eqref{bc:eq:local-density}.
\end{proof}


\begin{thebibliography}{99}

\bibitem{AizenmanMolchanov}
M.~Aizenman and S.~Molchanov,
\emph{Localization at large disorder and at extreme energies: an elementary derivation},
Comm. Math. Phys. \textbf{157} (1993), no.~2, 245--278.
\doi{10.1007/BF02099760}.

\bibitem{ASFH}
M.~Aizenman, J.~H.~Schenker, R.~M.~Friedrich, and D.~Hundertmark,
\emph{Finite-volume fractional-moment criteria for Anderson localization},
Comm. Math. Phys. \textbf{224} (2001), no.~1, 219--253.
\doi{10.1007/s002200100441}.

\bibitem{AWtree}
M.~Aizenman and S.~Warzel,
\emph{Resonant delocalization for random Schr\"odinger operators on tree graphs},
J. Eur. Math. Soc. \textbf{15} (2013), no.~4, 1167--1222.
\doi{10.4171/JEMS/389}.

\bibitem{AWBook}
M.~Aizenman and S.~Warzel,
\emph{Random Operators: Disorder Effects on Quantum Spectra and Dynamics},
Graduate Studies in Mathematics, vol.~168,
American Mathematical Society, Providence, RI, 2015.

\bibitem{AW}
M.~Aizenman and S.~Warzel,
\emph{Boosted Simon--Wolff spectral criterion and resonant delocalization},
Comm. Pure Appl. Math. \textbf{69} (2016), no.~11, 2195--2218.
\doi{10.1002/cpa.21625}; \arxiv{1412.3681v2}.

\bibitem{ASWstability}
M.~Aizenman, R.~Sims, and S.~Warzel,
\emph{Stability of the absolutely continuous spectrum of random Schr\"odinger operators on tree graphs},
Probab. Theory Related Fields \textbf{136} (2006), no.~3, 363--394.
\doi{10.1007/s00440-005-0486-8}.

\bibitem{ChenMaciejkoBoettcher}
A.~Chen, J.~Maciejko, and I.~Boettcher,
\emph{Anderson localization transition in disordered hyperbolic lattices},
Phys. Rev. Lett. \textbf{133} (2024), 066101.
\doi{10.1103/PhysRevLett.133.066101}.

\bibitem{BridsonHaefliger}
M.~R.~Bridson and A.~Haefliger,
\emph{Metric Spaces of Non-Positive Curvature},
Grundlehren der mathematischen Wissenschaften, vol.~319,
Springer-Verlag, Berlin, 1999.
\doi{10.1007/978-3-662-12494-9}.

\bibitem{CarberyWright}
A.~Carbery and J.~Wright,
\emph{Distributional and $L^q$ norm inequalities for polynomials over
convex bodies in $\mathbb R^n$},
Math. Res. Lett. \textbf{8} (2001), no.~3, 233--248.
\doi{10.4310/MRL.2001.v8.n3.a1}.

\bibitem{CarmonaLacroix}
R.~Carmona and J.~Lacroix,
\emph{Spectral Theory of Random Schr\"odinger Operators},
Probability and Its Applications,
Birkh\"auser, Boston, 1990.
\doi{10.1007/978-1-4612-4488-2}.

\bibitem{ChristiansenSimonZinchenko}
J.~S.~Christiansen, B.~Simon, and M.~Zinchenko,
\emph{Finite gap Jacobi matrices, I. The isospectral torus},
Constr. Approx. \textbf{32} (2010), 1--65.
\arxiv{0810.3273}.

\bibitem{DavisOkun}
M.~W.~Davis and B.~Okun,
\emph{Vanishing theorems and conjectures for the $\ell^2$-homology of
right-angled Coxeter groups},
Geom. Topol. \textbf{5} (2001), 7--74.
\doi{10.2140/gt.2001.5.7}; \arxiv{math/0102104}.

\bibitem{DG}
B.~Datta and S.~Gupta,
\emph{Semi-regular tilings of the hyperbolic plane},
Discrete Comput. Geom. \textbf{65} (2021), 531--553.
\doi{10.1007/s00454-019-00156-0}.

\bibitem{deLaHarpe}
P.~de~la~Harpe,
\emph{Groupes hyperboliques, alg\`ebres d'op\'erateurs et un th\'eor\`eme de Jolissaint},
C. R. Acad. Sci. Paris S\'er. I Math. \textbf{307} (1988), no.~14, 771--774.

\bibitem{DubininKarp}
V.~N.~Dubinin and D.~Karp,
\emph{Two-sided bounds for the logarithmic capacity of multiple intervals},
J. Anal. Math. \textbf{113} (2011), 227--239.
\arxiv{0905.3283}.

\bibitem{Duren}
P.~L.~Duren,
\emph{Theory of $H^p$ Spaces},
Pure and Applied Mathematics, vol.~38, Academic Press, New York, 1970.

\bibitem{FroeseHaslerSpitzer}
R.~Froese, D.~Hasler, and W.~Spitzer,
\emph{Transfer matrices, hyperbolic geometry and absolutely continuous spectrum for some discrete Schr\"odinger operators on graphs},
J. Funct. Anal. \textbf{230} (2006), no.~1, 184--221.
\doi{10.1016/j.jfa.2005.04.004}.

\bibitem{Garnett}
J.~B.~Garnett,
\emph{Bounded Analytic Functions},
revised first edition, Graduate Texts in Mathematics, vol.~236,
Springer, New York, 2007.
\doi{10.1007/0-387-49763-3}.

\bibitem{Green}
E.~R.~Green,
\emph{Graph products of groups},
Ph.D. thesis, University of Leeds, 1990.

\bibitem{HJL}
O.~H\"aggstr\"om, J.~Jonasson, and R.~Lyons,
\emph{Explicit isoperimetric constants and phase transitions in the
random-cluster model},
Ann. Probab. \textbf{30} (2002), no.~1, 443--473.
\doi{10.1214/aop/1020107775}; \arxiv{math/0008191}.

\bibitem{HislopMuller}
P.~D.~Hislop and P.~M\"uller,
\emph{A lower bound for the density of states of the lattice Anderson model},
Proc. Amer. Math. Soc. \textbf{136} (2008), no.~8, 2887--2893.
\doi{10.1090/S0002-9939-08-09361-1}.

\bibitem{Jolissaint}
P.~Jolissaint,
\emph{Rapidly decreasing functions in reduced $C^*$-algebras of groups},
Trans. Amer. Math. Soc. \textbf{317} (1990), no.~1, 167--196.
\doi{10.1090/S0002-9947-1990-0943303-2}.

\bibitem{Klein}
A.~Klein,
\emph{Extended states in the Anderson model on the Bethe lattice},
Adv. Math. \textbf{133} (1998), no.~1, 163--184.
\doi{10.1006/aima.1997.1688}.

\bibitem{KleinSadel}
A.~Klein and C.~Sadel,
\emph{Absolutely continuous spectrum for random Schr\"odinger operators on the Bethe strip},
Math. Nachr. \textbf{285} (2012), no.~1, 5--26.
\doi{10.1002/mana.201100019}.

\bibitem{KellerLenzWarzel}
M.~Keller, D.~Lenz, and S.~Warzel,
\emph{Absolutely continuous spectrum for random operators on trees of finite cone type},
J. Anal. Math. \textbf{118} (2012), 363--396.
\doi{10.1007/s11854-012-0040-4}.

\bibitem{LiPengWangHu}
T.~Li, Y.~Peng, Y.~Wang, and H.~Hu,
\emph{Anderson transition and mobility edges on hyperbolic lattices with randomly connected boundaries},
Commun. Phys. \textbf{7} (2024), 371.
\doi{10.1038/s42005-024-01848-7}.

\bibitem{LiawTreil}
C.~Liaw and S.~Treil,
\emph{Matrix measures and finite rank perturbations of self-adjoint operators},
J. Spectr. Theory \textbf{10} (2020), no.~4, 1173--1210.
\doi{10.4171/JST/324}.
Theorem and lemma numbers refer to \arxiv{1806.08856v3}.

\bibitem{Marx}
C.~A.~Marx,
\emph{Continuity of spectral averaging},
Proc. Amer. Math. Soc. \textbf{139} (2011), no.~1, 283--291.
\doi{10.1090/S0002-9939-2010-10629-9}; \arxiv{1009.3694}.

\bibitem{RaumSkalski}
S.~Raum and A.~Skalski,
\emph{Classifying right-angled Hecke $C^*$-algebras via $K$-theoretic
invariants},
Adv. Math. \textbf{407} (2022), 108559.
\href{https://arxiv.org/abs/2103.02962}{arXiv:2103.02962v4}.
See Corollary~4.9 (the undeformed reduced group algebra).

\bibitem{SadelTreeStrip}
C.~Sadel,
\emph{Absolutely continuous spectrum for random Schr\"odinger operators on tree-strips of finite cone type},
Ann. Henri Poincar\'e \textbf{14} (2013), no.~4, 737--773.
\doi{10.1007/s00023-012-0203-y}.

\bibitem{SaffSurvey}
E.~B.~Saff,
\emph{Logarithmic potential theory with applications to approximation theory},
Surv. Approx. Theory \textbf{5} (2010), 165--200.
\arxiv{1010.3760}.

\bibitem{SaffTotik}
E.~B.~Saff and V.~Totik,
\emph{Logarithmic Potentials with External Fields},
Grundlehren der mathematischen Wissenschaften, vol.~316,
Springer-Verlag, Berlin, 1997.
\doi{10.1007/978-3-662-03329-6}.

\bibitem{SewardKoopman}
B.~Seward,
\emph{The Koopman representation and positive Rokhlin entropy},
Int. Math. Res. Not. IMRN \textbf{2023} (2023), no.~1, 350--371.
\doi{10.1093/imrn/rnab268}.

\bibitem{SimonCapacity}
B.~Simon,
\emph{Equilibrium measures and capacities in spectral theory},
Inverse Probl. Imaging \textbf{1} (2007), no.~4, 713--772.
\arxiv{0711.2700}.

\bibitem{SimonRankOne}
B.~Simon,
\emph{Spectral analysis of rank one perturbations and applications},
in \emph{Mathematical Quantum Theory II: Schr\"odinger Operators},
CRM Proceedings \& Lecture Notes, vol.~8,
American Mathematical Society, Providence, RI, 1995, pp.~109--149.

\bibitem{SimonTrace}
B.~Simon,
\emph{Trace Ideals and Their Applications},
2nd ed., Mathematical Surveys and Monographs, vol.~120,
American Mathematical Society, Providence, RI, 2005.
\doi{10.1090/surv/120}.

\bibitem{SimonAveraging}
B.~Simon,
\emph{Spectral averaging and the Krein spectral shift},
Proc. Amer. Math. Soc. \textbf{126} (1998), no.~5, 1409--1413.
\doi{10.1090/S0002-9939-98-04261-0}.

\bibitem{Tautenhahn}
M.~Tautenhahn,
\emph{Localization criteria for Anderson models on locally finite graphs},
J. Stat. Phys. \textbf{144} (2011), no.~1, 60--75.
\doi{10.1007/s10955-011-0248-1}; \arxiv{1008.4503}.

\bibitem{SjostrandZworski}
J.~Sj\"ostrand and M.~Zworski,
\emph{Elementary linear algebra for advanced spectral problems},
Ann. Inst. Fourier (Grenoble) \textbf{57} (2007), no.~7, 2095--2141.
\doi{10.5802/aif.2328}.

\bibitem{Teschl}
G.~Teschl,
\emph{Mathematical Methods in Quantum Mechanics: With Applications to Schr\"odinger Operators},
2nd ed., Graduate Studies in Mathematics, vol.~157,
American Mathematical Society, Providence, RI, 2014.

\bibitem{Wegner}
F.~Wegner,
\emph{Bounds on the density of states in disordered systems},
Z. Phys. B \textbf{44} (1981), 9--15.
\doi{10.1007/BF01292646}.

\bibitem{Woess}
W.~Woess,
\emph{Random Walks on Infinite Graphs and Groups},
Cambridge Tracts in Mathematics, vol.~138,
Cambridge University Press, Cambridge, 2000.
\doi{10.1017/CBO9780511470967}.

\end{thebibliography}
\end{document}